\newcommand\blfootnotea[1]{%
  \begingroup
  \renewcommand\thefootnote{}\footnote{#1}%
  \endgroup
}
\definecolor{green}{rgb}{0.0, 0.5, 0.0}
\crefname{equation}{equation}{equations}
\crefname{lemma}{lemma}{lemmata}
\crefname{claim}{claim}{claims}
\crefname{theorem}{theorem}{theorems}
\crefname{proposition}{proposition}{propositions}
\crefname{corollary}{corollary}{corollaries}
\crefname{claim}{claim}{claims}
\crefname{remark}{remark}{remarks}
\crefname{definition}{definition}{definitions}
\crefname{fact}{fact}{facts}
\crefname{question}{question}{questions}
\crefname{condition}{condition}{conditions}
\crefname{algorithm}{algorithm}{algorithms}
\crefname{assumption}{assumption}{assumptions}
\newtheorem{theorem}{Theorem}[section]
\newtheorem{lemma}[theorem]{Lemma}
\newtheorem{corollary}[theorem]{Corollary}
\newtheorem{claim}[theorem]{Claim}
\newtheorem{definition}[theorem]{Definition}
\newtheorem{fact}[theorem]{Fact}
\theoremstyle{definition}
\newtheorem{assumption}[theorem]{Assumption}
\newtheorem{condition}[theorem]{Condition}
\newtheorem{remark}[theorem]{Remark}
\newcommand{\eps}{\epsilon}
\newcommand{\poly}{\mathrm{poly}}
\newcommand{\polylog}{\mathrm{polylog}}
\newcommand{\dtv}{d_\mathrm{TV}}
\newcommand{\Ind}{\mathds{1}}
\newcommand{\1}{\Ind}
\newcommand{\trace}{\operatorname{tr}}
\newcommand{\cov}{\operatorname{Cov}}
\newcommand{\Var}{\operatorname{Var}}
\def\R{\mathbb R}
\def\Z{\mathbb Z}
\newcommand{\cA}{\mathcal{A}}
\newcommand{\cB}{\mathcal{B}}
\newcommand{\cD}{\mathcal{D}}
\newcommand{\cE}{\mathcal{E}}
\newcommand{\cF}{\mathcal{F}}
\newcommand{\cI}{\mathcal{I}}
\newcommand{\cN}{\mathcal{N}}
\newcommand{\cS}{\mathcal{S}}
\newcommand{\cU}{\mathcal{U}}
\newcommand{\cX}{\mathcal{X}}
\newcommand{\cZ}{\mathcal{Z}}
\newcommand{\bA}{\vec{A}}
\newcommand{\bB}{\vec{B}}
\newcommand{\bC}{\vec{C}}
\newcommand{\bI}{\vec{I}}
\newcommand{\bM}{\vec{M}}
\newcommand{\bU}{\vec{U}}
\newcommand{\bV}{\vec{V}}
\DeclareMathOperator*{\pr}{\mathbf{Pr}}
\DeclareMathOperator*{\E}{\mathbf{E}}
\newcommand\snorm[2]{\left\| #2 \right\|_{#1}}
\def\d{\mathrm{d}}
\newcommand{\tr}{\mathrm{tr}}
\DeclarePairedDelimiter\abs{\lvert}{\rvert}
\let\vec\mathbf
\newcommand{\failp}{\tau}
\newcommand{\outerl}{K}
\newcommand{\innerl}{L}
\def\colorful{0}
\newcommand{\new}[1]{{\color{red} #1}}
\newcommand{\new}[1]{{#1}}
\title{Streaming Algorithms for High-Dimensional Robust Statistics\blfootnotea{Authors are in alphabetical order.  Part of this work was done while a subset of the authors were visiting the Simons Institute for the Theory of Computing.}}
\author{
Ilias Diakonikolas\thanks{Supported by NSF Medium Award CCF-2107079,
NSF Award CCF-1652862 (CAREER), a Sloan Research Fellowship, and
a DARPA Learning with Less Labels (LwLL) grant.}\\
University of Wisconsin-Madison\\
{\tt ilias@cs.wisc.edu}\\
\and
Daniel M. Kane\thanks{Supported by NSF Medium Award CCF-2107547,
NSF Award CCF-1553288 (CAREER), and a Sloan Research Fellowship.}\\
University of California, San Diego\\
{\tt dakane@cs.ucsd.edu}
\and
Ankit Pensia\thanks{Supported by NSF grants NSF Award CCF-1652862 (CAREER), DMS-1749857, 
and CCF-1841190.}\\
University of Wisconsin-Madison\\
{\tt ankitp@cs.wisc.edu}\\
\and
Thanasis Pittas\thanks{Supported by NSF Award CCF-1652862 (CAREER) and 
NSF Award DMS-2023239 (TRIPODS).}\\
University of Wisconsin-Madison\\
{\tt pittas@wisc.edu}\\
}
\begin{document}

\maketitle

\begin{abstract}
We study high-dimensional robust statistics tasks in the streaming model. 
A recent line of work obtained computationally efficient algorithms 
for a range of high-dimensional robust estimation tasks. 
Unfortunately, all previous algorithms require storing the entire dataset, 
incurring memory at least quadratic in the dimension. 
In this work, we develop the first efficient streaming algorithms 
for high-dimensional robust statistics with near-optimal 
memory requirements (up to logarithmic factors). 
Our main result is for the task of high-dimensional robust mean estimation 
in (a strengthening of) Huber's contamination model. 
We give an efficient single-pass 
streaming algorithm for this task 
with near-optimal error guarantees and %
space complexity nearly-linear in the dimension.
As a corollary, we obtain streaming algorithms with near-optimal space complexity
for several more complex tasks, 
including robust covariance estimation, robust regression, 
and more generally robust stochastic optimization.
\end{abstract}

\newpage
{
  \hypersetup{linktoc=all,linkcolor=black}
  \tableofcontents
}

\setcounter{page}{0}

\thispagestyle{empty}

\newpage

\section{Introduction} %
\label{sec:introduction}

This work studies high-dimensional learning in the presence of a constant
fraction of arbitrary outliers. Outlier-robust learning in high dimensions is motivated by 
pressing machine learning (ML) applications, including ML
security~\cite{Barreno2010,BiggioNL12, SteinhardtKL17, TranLM18, DKK+19-sever} 
and exploratory analysis of datasets with natural outliers~\cite{RP-Gen02, Pas-MG10, Li-Science08}.
This field %
has its roots in robust statistics, a branch of 
statistics initiated in the 60s %
with the pioneering works of Tukey and Huber~\cite{Tukey60, Huber64}. 
Early work %
developed minimax optimal estimators for various robust estimation tasks, albeit
with runtimes exponential in the dimension. 
A recent line of work in computer science, 
starting with~\cite{DKKLMS16, LaiRV16}, developed polynomial time 
robust estimators for a range of high-dimensional statistical tasks.
Algorithmic high-dimensional robust statistics is by now a relatively mature field, 
see, e.g.,~\cite{DK19-survey, DKKL+21} for surveys. 

This recent progress notwithstanding, 
even for the basic task of mean estimation, 
previous robust estimators 
require the entire dataset in main memory. 
This space requirement can be a major bottleneck 
in large-scale applications,
where an algorithm has access to a very large stream of data. 
Indeed, practical machine learning methods are typically 
simple iterative algorithms that make a single pass over the data 
and require a small amount of storage --- 
with stochastic gradient descent 
being the prototypical example~\cite{Bottou10large-scalemachine,BottCurtNoce16}.
Concretely, in prior applications of robust statistics in data analysis~\cite{DKK+17} 
and data poisoning defenses~\cite{DKK+19-sever}, the storage requirements of the
underlying algorithms were observed to significantly hinder scalability.
This discussion motivates the following natural question:
\begin{center}
{\em Can we develop efficient robust estimators in the {\em streaming model} \\ 
with (near-) optimal space complexity?}
\end{center}
We emphasize that this broad question is meaningful and interesting 
even ignoring computational considerations.
While any method requires space complexity $\Omega(d)$, 
where $d$ is the dimension of the problem (to store a single sample), 
it is not obvious that a matching upper bound exists. 
We note that it is relatively simple to design $O(d)$-memory streaming algorithms 
with sample complexity exponential in $d$. But it is by no means clear
whether there exists an estimator with near-linear space requirements and
$\poly(d)$ sample complexity (independent of its runtime).

\subsection{Our Results} \label{sec:results}

In this work, we initiate a systematic investigation 
of high-dimensional robust statistics in the streaming model.
We start by focusing on the most basic task --- that of robust mean estimation.
Our main result is the first space-efficient streaming algorithm for robust mean estimation 
under natural distributional assumptions.
Our computationally efficient algorithm makes a single pass over the data, 
uses near-optimal space, and matches the error guarantees of 
previous polynomial-time algorithms for the problem.

Given this result, we leverage 
the fact that several robust statistics tasks can be reduced to robust mean estimation
to obtain near-optimal space, single-pass streaming algorithms 
for more complex statistical tasks.

To formally state our contributions, we require some basic definitions.
We start with the standard streaming model.

\begin{definition}[Single-Pass Streaming Model]\label{def:streaming}
Let $S$ be a fixed set. In the {\em one-pass streaming model}, 
the elements of $S$ are revealed one at a time to the algorithm, 
and the algorithm is allowed a single pass over these points.
\end{definition}

Our robust estimators work 
in the following contamination model,
where the adversary can corrupt the true distribution in total variation distance 
(for distributions $P$ and $Q$, we use $\dtv(P,Q)$ to denote their total variation distance).

\begin{definition}[TV-contamination] \label{def:oblivious}
Given a parameter $\eps < 1/2$ and a distribution class $\mathcal{D}$, 
the adversary specifies a distribution $D'$ 
such that there exists $D \in \mathcal{D}$ with $\dtv(D,D') \leq \eps$. 
Then the algorithm draws i.i.d.\ samples from $D'$.
We say that the distribution $D'$ is an $\eps$-corrupted version 
of the distribution $D$ in total variation distance.
\end{definition}

The distribution $D'$ in \Cref{def:oblivious} 
can be adversarially selected (and can even depend on our learning algorithm). 
Since Huber's contamination model~\cite{Huber64} only allows additive errors, 
TV-contamination is a stronger model. %

\subsubsection*{Streaming Algorithm for Robust Mean Estimation}

The main result of this paper is the following 
(see~\Cref{th:main} %
for a more general statement):

\begin{theorem}[Streaming Robust Mean Estimation]\label{main-thm-intro}
Let $\cD$ be a distribution family on $\R^d$ and $0 < \eps < \epsilon_0$ 
for a sufficiently small constant $\eps_0>0$. 
Let $P$ be an $\epsilon$-corrupted version of $D$ 
in total variation distance for some $D \in \mathcal{D}$ 
with unknown mean $\mu_D$.
There is a single-pass streaming algorithm that, 
given $\eps$ and  $\mathcal{D}$, reads a stream of $n$ i.i.d.\ samples from $P$, 
runs in sample near-linear time, uses %
memory $d \, \polylog(d/\eps)$, 
and outputs an estimate $\widehat{\mu}$ that, 
with probability at least $9/10$, satisfies the following: 
\vspace{-0.1cm}
\begin{enumerate}[leftmargin=*]
\item If $\cD$ is the family of distributions with identity-bounded covariance, 
then $n =\tilde{O}\left(d^2/\eps\right)$ and $\|\widehat{\mu} - \mu_D\|_2 = O(\sqrt{\eps})$.
\item If $\cD$ is the family of identity-covariance subgaussian distributions, 
then $n =\tilde{O}\left( d^2/\eps^2\right) $ and
$\|\widehat{\mu} - \mu_D\|_2 = O(\eps\sqrt{\log(1/\eps)})$. 
\end{enumerate}
\end{theorem}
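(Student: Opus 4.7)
The plan is to emulate the offline filter-based robust mean estimator in the streaming model. In the offline setting, the filter iteratively (i) computes the empirical covariance $\hat\Sigma$ of the current (possibly reweighted) sample set, (ii) if $\opnorm{\hat\Sigma}$ is close to its uncorrupted value outputs the empirical mean, and otherwise (iii) finds the top eigenvector $v$ of $\hat\Sigma$ and either downweights samples with large $|\langle v, X_i - \hat\mu\rangle|$ or re-centers $\hat\mu$ along $v$, then iterates. Two features of this procedure are hostile to single-pass streaming: storing $\hat\Sigma$ costs $\Omega(d^2)$ bits, and a filter step reasons about samples after they have been read. I would therefore aim for an algorithm that, in a single pass, implicitly realizes the filter by combining (a) a streaming top-eigenvector computation via power iteration and (b) an online update to $\hat\mu$ that has the same net effect as removing the outliers pushing $\hat\mu$ along $v$.

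Concretely, I would split the stream into $K \cdot L$ disjoint chunks with $K, L = \polylog(d/\eps)$, each of size roughly $\tilde O(d^2/(\eps KL))$. An outer loop indexed by $k = 1, \dots, K$ maintains the current mean estimate $\hat\mu_k$ (initialized to $0$). For each $k$, an inner loop of length $L$ approximates the top eigenvector of $\Sigma_k \approx \E_P[(X - \hat\mu_k)(X - \hat\mu_k)^\top]$ by power iteration: starting from a random unit vector $v^{(0)}$, each inner step reads a fresh chunk sequentially and maintains the $O(d)$-sized running sum $v^{(\ell+1)} \propto \sum_i (X_i - \hat\mu_k)\langle X_i - \hat\mu_k, v^{(\ell)}\rangle$, using $O(d)$ memory. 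After $L$ steps we obtain an estimated top eigenpair $(v_k, \lambda_k)$; if $\lambda_k$ lies within the stability threshold of the uncorrupted variance we halt and output $\hat\mu_k$, otherwise we update $\hat\mu_{k+1} = \hat\mu_k + \eta_k v_k$, where $\eta_k$ is the signed excess of the mean of $\langle v_k, X\rangle$ over its uncorrupted value, estimated on a further fresh chunk. Intuitively, $\eta_k v_k$ shifts $\hat\mu_k$ opposite to the displacement induced by outliers along the worst direction, mimicking the effect of hard filtering without ever revisiting a sample.

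The hardest part is balancing sample freshness against convergence. Because samples cannot be reused, each inner and outer step must be driven by an independent chunk, so $K$ and $L$ must be $\polylog$ while the analysis has to show fast convergence in both loops. I would need to prove: (i) $L = O(\log(d/\eps))$ streaming power-iteration steps on appropriately sized fresh batches suffice to approximate the top eigenvector to inverse-polynomial accuracy whenever the spectral gap exceeds a stability threshold, using matrix Bernstein concentration combined with a gap-free power-iteration analysis; and (ii) each outer update strictly decreases the potential $\|\hat\mu_k - \mu_D\|_2^2$ by a constant factor, driving the error down to the optimal rate $O(\sqrt\eps)$ (respectively $O(\eps\sqrt{\log(1/\eps)})$ in the subgaussian case) in $K = O(\log(d/\eps))$ outer rounds, via the stability of the inlier distribution $D$. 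The subtle point is that the power iteration operates on an implicit, corrupted second-moment matrix whose relevant structural guarantees hold only in expectation over the full corrupted distribution, not per fresh batch, so the per-batch deviation of the estimated eigenvector from the population one must be controlled carefully via concentration inequalities tailored to the stability assumption, and the step size $\eta_k$ must be chosen so that the update remains contracting even when $v_k$ is only an approximate top eigenvector.
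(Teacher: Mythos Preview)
Your proposal has a genuine gap at its core: replacing the filter step by a mean shift $\hat\mu_{k+1}=\hat\mu_k+\eta_k v_k$ does \emph{not} mimic filtering. Filtering removes (or downweights) outliers, which simultaneously fixes the mean \emph{and} shrinks the covariance; a recentering step does only the former. Concretely, take $D=\cN(0,\bI_d)$ and put all outliers at $(C/\sqrt{\eps})e_1$ for a large constant $C$. For \emph{any} centering point $b$, the second moment $\E_P[(X-b)(X-b)^\top]$ has $e_1$-eigenvalue at least $(1-\eps)+\eps(C/\sqrt\eps - b_1)^2$, which stays of order $C^2$ unless $b_1$ is near $C/\sqrt\eps$. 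So your stopping test ``$\lambda_k$ below the stability threshold'' never fires at the correct mean $b=0$. Moreover, the top eigenvector of the second moment is dictated by the outliers, not by the current mean error: if $\hat\mu_k=5\sqrt\eps\,e_2$ the error direction is $e_2$, yet $v_k=e_1$, and your update moves $\hat\mu_k$ only along $e_1$. The claimed contraction of $\|\hat\mu_k-\mu_D\|_2^2$ therefore fails, and the algorithm neither halts nor converges.

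There is a second, independent obstacle. Even if one reinstates an actual downweighting step, using only the single top eigenvector is known to require $\Omega(d)$ outer iterations in the worst case (the paper notes this explicitly), which would force either $\Omega(d)$ passes or a sample blow-up. The paper's fix is to filter along \emph{all} large-variance directions at once: it maintains a weight function $w_t$ on $\R^d$, scores points by $\|\bM_t(x-\mu_t)\|_2^2$ with $\bM_t=\bB_t^{\log d}$ (so that every near-top eigendirection of $\bB_t$ contributes), and downweights points with large scores. The potential $\phi_t=\tr(\bB_t^{2\log d})$ then drops by a constant factor per round, giving $\polylog(d/\eps)$ iterations. The streaming implementation keeps only a JL sketch $\bU_t$ of $\bM_t$ (so $O(d\,\polylog)$ memory), computes each $\bM_t z$ as a product of $\log d$ fresh-sample estimates $\widehat\bB_{t,k}$ (your idea of using independent chunks for each matrix--vector product is exactly right and is retained), and controls the JL-sketch size via a VC/cover argument rather than pointwise over all of $\R^d$. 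What you need to add to your plan is an explicit weight function carried across rounds and a multi-directional score; without those two ingredients, neither the termination nor the $\polylog$ iteration bound goes through.
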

\vspace{-0.1cm}

We note that the above error guarantees %
are information-theoretically optimal, even in absence 
of resource constraints. %
While prior work had obtained efficient robust mean estimators 
matching these error guarantees~\cite{DKKLMS16, DKK+17, SCV18},
all previous algorithms with dimension-independent error incurred space complexity 
$\Omega(d^2)$. %

\subsubsection*{Beyond Robust Mean Estimation}

Using the algorithm of \Cref{main-thm-intro} as a black-box,
we obtain the first efficient single-pass streaming algorithms 
with near-optimal space complexity for a range of
more complex statistical tasks. These contributions are presented in detail in 
\Cref{sec:applications-main}. %
Here we highlight some of these results.

Our first application is a streaming algorithm for robust covariance estimation.

\begin{theorem}[Robust Gaussian Covariance Estimation]\label{thm:covariance_application_better_error-main}
Let $Q$ be a distribution on $\R^d$ with $\dtv(Q,\cN(0,\vec \Sigma))\leq \eps$ 
and assume $ \frac{1}{\kappa}\bI_d \preceq \vec \Sigma \preceq \bI_d$.
There is a single-pass streaming algorithm that uses $n= (d^4/\eps^2) \polylog(d,\kappa,1/\eps)$
samples from $Q$, runs in time $n d^2 \polylog\left(d,\kappa,1/\eps\right)$, 
uses memory $d^2 \polylog\left(d,\kappa,1/\eps \right)$, 
and outputs a matrix $\widehat{\vec \Sigma}$ such that 
$\| \vec \Sigma^{-1/2}\widehat{\vec \Sigma}\vec \Sigma^{-1/2} - \bI_{d} \|_F = O(\eps \log(1/\eps))$ 
with probability at least $9/10$. 
\end{theorem}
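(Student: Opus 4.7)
The plan is to reduce robust covariance estimation to robust mean estimation in dimension $d^2$, and then invoke \Cref{main-thm-intro} as a black box. Concretely, for each stream element $x_i$, form the vectorized outer product $y_i := \mathrm{vec}(x_i x_i^\top) \in \R^{d^2}$. Since the push-forward map $x \mapsto \mathrm{vec}(xx^\top)$ can only decrease total variation distance, the induced distribution on $y$ is still $\eps$-close in TV to the push-forward of $\cN(0,\vec \Sigma)$, whose mean is $\mathrm{vec}(\vec \Sigma)$. Thus estimating $\vec \Sigma$ to Frobenius error $\eta$ is equivalent to estimating this mean to $\ell_2$ error $\eta$ in $\R^{d^2}$.

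Next I would verify the distributional assumptions needed to invoke the subgaussian case of \Cref{main-thm-intro} on the $y_i$'s, after a suitable normalization. Because $\vec \Sigma \preceq \bI_d$, the random vector $y$ satisfies $\|\cov(y)\|_{\mathrm{op}} = O(\|\vec \Sigma\|_{\mathrm{op}}^2) = O(1)$, and because $x$ is Gaussian, $y - \E[y]$ is a quadratic polynomial in a Gaussian and hence hypercontractive/subexponential with parameters controlled by $\kappa$ (via $\vec \Sigma \succeq \bI_d/\kappa$). The standard move is to precondition: let $\vec T$ denote the covariance of $y$ under $\cN(0,\vec \Sigma)$ (which has a closed form in terms of $\vec \Sigma$) and work with $\widetilde y_i = \vec T^{-1/2}(y_i - c)$ for a crude centering $c$. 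The resulting distribution has identity covariance, and Gaussian hypercontractivity yields the subgaussianity hypothesis of \Cref{main-thm-intro}(2) up to logarithmic factors. Running the streaming mean estimator in $\R^{d^2}$ on the $\widetilde y_i$'s then uses $d^2 \polylog(d, \kappa, 1/\eps)$ memory, $(d^2)^2/\eps^2 = d^4/\eps^2$ samples (up to polylogs), and returns $\widehat m$ with $\|\widehat m - \E[\widetilde y]\|_2 = O(\eps \sqrt{\log(1/\eps)})$.

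I would then translate this estimate back: unvectorize $\vec T^{1/2}\widehat m + c$ and symmetrize to obtain $\widehat{\vec \Sigma}$. The error in Mahalanobis-Frobenius norm $\|\vec \Sigma^{-1/2}\widehat{\vec \Sigma}\vec \Sigma^{-1/2} - \bI_d\|_F$ follows because, in the preconditioned coordinates, the quadratic form $zz^\top - \bI_d$ (where $z = \vec \Sigma^{-1/2}x$) is exactly the natural object being mean-estimated; the extra $\sqrt{\log(1/\eps)}$ factor separating the target rate $O(\eps \log(1/\eps))$ from the raw subgaussian rate $O(\eps \sqrt{\log(1/\eps)})$ comes from the Orlicz-$\psi_1$ (sub\-exponential) rather than $\psi_2$ tails of quadratic Gaussian forms, which is standard in the non-streaming analysis.

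The main obstacle I expect is the preconditioning step: we do not know $\vec \Sigma$, so we cannot literally form $\vec T^{-1/2}$ on the fly. Two natural fixes are available in a single pass. First, the bound $\bI_d/\kappa \preceq \vec \Sigma \preceq \bI_d$ lets us use the trivial preconditioner $\vec T^{-1/2} \approx \bI_{d^2}$ at the cost of a $\poly(\kappa)$ factor in the covariance, which is absorbed into $\polylog(\kappa)$ in the final memory and sample bounds because \Cref{main-thm-intro} depends polylogarithmically on the relevant scale. Second, and more carefully, one can thread a low-memory, online estimate of $\vec \Sigma$ through the stream (bootstrapped from the bounded-covariance case in part (1) of \Cref{main-thm-intro}) and use it as the preconditioner in a single pass, refining in a geometric schedule. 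Verifying that this bootstrap does not inflate memory beyond $d^2 \polylog(d,\kappa,1/\eps)$ and that the error propagation yields the claimed $O(\eps \log(1/\eps))$ Frobenius bound is the technical heart of the argument.
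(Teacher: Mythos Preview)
Your high-level reduction (form $y = \mathrm{vec}(xx^\top)$ and run the streaming mean estimator in $\R^{d^2}$) matches the paper. But two of the three key technical moves you propose do not go through, and the one that does is precisely where the real content lies.

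First, you cannot invoke the subgaussian case of \Cref{main-thm-intro}. The random vector $y - \E[y]$ is a degree-two polynomial of a Gaussian; it has subexponential ($\psi_1$) tails, not subgaussian ($\psi_2$) tails, and no amount of whitening by $\vec T^{-1/2}$ changes this. The paper never invokes \Cref{main-thm-intro}(2) here. Instead it uses the general stability-based theorem (\Cref{th:main}) and proves directly that when the (preconditioned) inlier distribution has exponentially decaying tails and covariance $\tau$-close to identity, it is $(\eps,\,O(\sqrt{\tau\eps} + \eps\log(1/\eps)))$-stable (this is \Cref{cl:translate_lemma35}). That stability parameter, not subgaussianity, is what produces the $O(\eps\log(1/\eps))$ rate once $\tau$ has been driven down to $O(\eps\log^2(1/\eps))$.

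Second, your ``first fix'' is incorrect: the polylogarithmic dependence in \Cref{main-thm-intro} is in the \emph{sample and memory} bounds, not in the \emph{error}. With the trivial preconditioner you are in the bounded-covariance regime and the estimator returns error $O(\sqrt{\eps})$ in Frobenius norm --- this is exactly \Cref{thm:cov-est}, which the paper states separately and explicitly notes falls short of the $O(\eps\log(1/\eps))$ target. Nothing about $\kappa$ rescues this.

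Your ``second fix'' is the right one and is what the paper does, following \cite{cheng2019faster}: run a sequence of phases, each on a fresh block of the stream, where phase $i+1$ multiplies incoming samples by $\widehat{\vec\Sigma}_i^{-1/2}$ before forming $y$. The key lemma driving the geometric refinement is the stability bound above: if the current preconditioner achieves $\|\vec\Sigma^{-1/2}\widehat{\vec\Sigma}_i\vec\Sigma^{-1/2} - \bI_d\|_2 \le \tau_i$, then the next call to the mean estimator achieves error $O(\sqrt{\tau_i\eps} + \eps\log(1/\eps))$, so $\tau_{i+1} = O(\sqrt{\tau_i\eps} + \eps\log(1/\eps))$, which converges to $O(\eps\log(1/\eps))$ in $\polylog(\kappa,1/\eps)$ phases. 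This is the ``technical heart'' you allude to; the paper handles it by plugging \Cref{alg:streaming} into Algorithm~1 of \cite{cheng2019faster} and re-verifying their Lemmas~3.4--3.5 with the streaming estimator in place of the batch one.
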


\noindent See Theorem~\ref{thm:covariance_application_better_error} for a more detailed statement.

Our second application is for the general problem of robust stochastic optimization. 
Here we state two concrete results
for robust linear and logistic regression (see \Cref{thm:lr} and \Cref{thm:log-reg} for more detailed statements).
Both of these statements are special cases of a streaming algorithm for robust stochastic convex optimization 
(see \Cref{cor:robust-gd}).

\begin{theorem}[Streaming Robust Linear Regression] \label{thm:lr-simple}
Let $D$ be the distribution of $(X,Y)$
defined by $Y = X^T \theta^* + Z$, where $X \sim \cN(0,\bI_d) $, 
$Z \sim \cN(0,1)$ independent of $X$, and $\|\theta^*\|_2\leq r$. 
Let $P$ be an $\eps$-corruption of $D$ in total variation distance.
There is a single-pass streaming algorithm that uses 
$n = (d^2/\eps) \,\polylog\left(d(1+r)/\eps\right)$ samples from $P$, 
runs in time $n d \, \polylog (d(1+r)/\eps)$, uses memory $d \, \polylog (dr/\eps )$, 
and outputs an estimate $\widehat{\theta} \in \R^d$ such $\| \widehat{\theta} - \theta^* \|_2 = O(  \sqrt{\eps} )$ 
with high probability.
\end{theorem}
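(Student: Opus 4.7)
The plan is to derive the theorem by instantiating the general streaming robust stochastic convex optimization framework (\Cref{cor:robust-gd}) on the squared loss, using \Cref{main-thm-intro} as a black-box subroutine for robust gradient estimation. Concretely, consider the population loss $L(\theta) = \tfrac12 \E_{(X,Y)\sim D}[(Y - X^\top\theta)^2]$. Using $\E[XX^\top] = \bI_d$ and $\E[XY] = \theta^*$ (since $X\perp Z$), one computes $\nabla L(\theta) = \theta - \theta^*$. Thus a single clean gradient step with unit step size starting from any $\theta$ would land exactly at $\theta^*$, and any inexact step incurs error equal to the gradient-estimation error.

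First I would run projected gradient descent on $L$, where at iteration $t$ I use a fresh batch drawn from the stream to robustly estimate the gradient $\nabla L(\theta_t) = \E[X(X^\top\theta_t - Y)]$. The per-sample gradient at $\theta_t$ under the clean distribution $D$ is $g_{\theta_t}(X,Y) = XX^\top(\theta_t-\theta^*) - XZ$, and Gaussian fourth-moment computations give $\cov(g_{\theta_t}) \preceq C(\|\theta_t-\theta^*\|_2^2 + 1)\,\bI_d$. Since the adversary corrupts $(X,Y)$ in TV distance and pushforwards do not increase TV distance, the induced distribution of $g_{\theta_t}$ is an $\eps$-corruption (in TV) of a distribution with identity-bounded covariance after rescaling by the factor $\sqrt{\|\theta_t-\theta^*\|_2^2 + 1}$. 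Applying the bounded-covariance case of \Cref{main-thm-intro} on a batch of $\tilde O(d^2/\eps)$ samples therefore yields $\widehat g_t$ with
\[
\|\widehat g_t - \nabla L(\theta_t)\|_2 \;\leq\; c\sqrt{\eps}\,\bigl(\|\theta_t-\theta^*\|_2 + 1\bigr),
\]
using memory $d\,\polylog(d/\eps)$ per iteration.

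With step size $\eta=1$ the update gives $\theta_{t+1}-\theta^* = -(\widehat g_t - \nabla L(\theta_t))$, so starting from $\theta_0=0$ with $\|\theta_0-\theta^*\|_2\leq r$ the error obeys the contraction $\|\theta_{t+1}-\theta^*\|_2 \leq c\sqrt{\eps}\,(\|\theta_t-\theta^*\|_2+1)$. For $\eps$ below a small constant this shrinks geometrically by a factor $\Theta(\sqrt{\eps})$ until reaching the fixed point $\|\theta_t-\theta^*\|_2 = O(\sqrt{\eps})$, requiring only $T = O(\log r/\log(1/\eps)) = \polylog(dr/\eps)$ iterations. Each iteration reuses the same $O(d)$ words of working memory (we only keep $\theta_t$ and the sub-routine's $d\,\polylog(d/\eps)$ state), so total memory is $d\,\polylog(dr/\eps)$ and total samples are $T \cdot \tilde O(d^2/\eps) = (d^2/\eps)\polylog(d(1+r)/\eps)$, matching the stated bounds. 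A standard confidence-boosting wrapper (running $O(\log T)$ parallel copies of the mean estimator and taking coordinatewise medians, or equivalently running the estimator with failure probability $1/(10T)$) handles the union bound over iterations at the cost of only a $\log T$ overhead absorbed in the polylog.

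The main obstacle is correctly propagating the scale parameter: the covariance bound on $g_{\theta_t}$ degrades with $\|\theta_t-\theta^*\|_2$, which is initially as large as $r$, so naively applying \Cref{main-thm-intro} gives error $O(\sqrt{\eps}\,r)$ rather than $O(\sqrt{\eps})$. The fix is to feed the robust mean estimator a known upper bound $R_t$ on $\|\theta_t-\theta^*\|_2$ (equivalently, rescale the samples by $1/\sqrt{R_t^2+1}$ before passing them in), and to update $R_t$ across iterations using the contraction above. A secondary care point is verifying that after corruption the rescaled gradient distribution still lies in the identity-bounded-covariance class to which \Cref{main-thm-intro} applies, which follows because the clean rescaled covariance is $\preceq \bI_d$ and TV-contamination preserves the hypothesis of being a TV-perturbation of such a distribution.
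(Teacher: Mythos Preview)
Your approach is correct and follows the same high-level strategy as the paper: run gradient descent on the squared loss, using the streaming robust mean estimator of \Cref{main-thm-intro} on the per-sample gradients at each step, exploiting the covariance bound $\cov[\nabla f(\theta)] \preceq C(\|\theta-\theta^*\|_2^2 + 1)\bI_d$ (this is exactly \Cref{lem:bounded_cov_linear_reg} specialized to $\sigma=\xi=1$). The one substantive technical difference is in how the unknown gradient-covariance scale is handled. The paper routes through \Cref{cor:robust-gd}, which internally invokes Lepski's method (\Cref{cor:low_memory_adaptive}) at every gradient step to search a geometric grid of candidate scales in $[\beta,\, 2\alpha r+\beta]$; you instead maintain an explicit certified bound $R_t\geq\|\theta_t-\theta^*\|_2$, rescale the samples by $1/\sqrt{R_t^2+1}$ before calling the mean estimator, and propagate $R_{t+1}=c\sqrt{\eps}(R_t+1)$ via the contraction. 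Your route is more elementary and saves the $\log(1+\alpha r/\beta)$ Lepski overhead per step (both are absorbed into the stated $\polylog$ anyway), but it relies on the special structure here that the population Hessian is exactly $\bI_d$, so a single exact gradient step lands at $\theta^*$ and the next-step error equals the gradient-estimation error; the paper's Lepski-based \Cref{cor:robust-gd} is what lets the same machinery apply uniformly to general strongly convex smooth losses where one cannot track $\|\theta_t-\theta^*\|_2$ so directly.
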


\begin{theorem}[Streaming Robust Logistic Regression] 
\label{thm:log-reg-simple}
Consider the following model: 
Let $(X,Y) \sim D$, where $X \sim \cN(0,\bI_d)$, 
$Y \mid X \sim \text{Bern}(p)$, for $p = 1/(1+ e^{-X^T\theta^*})$,
and $\|\theta^*\|_2=O(1)$. 
Let $P$ be an $\epsilon$-corruption of $D$ in total variation distance.  
There is a single-pass streaming algorithm that uses $n =  (d^2/\eps) \,\polylog\left(d/\eps \right)$ 
samples from $P$, 
runs in time $ n d \, \polylog (d/\eps  )$, uses memory $d \, \polylog (d/\eps  )$, 
and outputs an estimate $\widehat{\theta} \in \R^d$ such 
$\| \widehat{\theta} - \theta^* \|_2 = O( \sqrt{\eps})$ with high probability.    
\end{theorem}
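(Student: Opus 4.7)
The plan is to reduce robust logistic regression to robust stochastic convex optimization, then invoke the streaming robust SCO framework (\Cref{cor:robust-gd}), which in turn is powered by the streaming robust mean estimator of \Cref{main-thm-intro}. Concretely, view $\theta^*$ as the minimizer of the population logistic loss
\[
L(\theta) = \E_{(X,Y)\sim D}\!\left[\log\!\bigl(1+e^{X^T\theta}\bigr) - Y\, X^T\theta\right],
\]
whose gradient is $\nabla L(\theta) = \E_D[(\sigma(X^T\theta)-Y)\,X]$ with $\sigma$ the sigmoid. The key observation is that for each fixed $\theta$, the random vector $g_\theta(X,Y) := (\sigma(X^T\theta)-Y)\,X$ is a \emph{deterministic function} of $(X,Y)$, so any $\eps$-TV-corruption of $D$ induces an $\eps$-TV-corruption of the law of $g_\theta$. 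Hence robustly estimating $\nabla L(\theta)$ from a corrupted stream reduces to streaming robust mean estimation of $g_\theta$.

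Next I would verify the two structural conditions required to invoke the SCO framework, on the bounded domain $\mathcal{K} = \{\theta: \|\theta\|_2 \leq R\}$ for a sufficiently large constant $R$ (so that $\theta^*\in\mathcal{K}$). First, bounded gradient covariance: since $|\sigma(X^T\theta)-Y|\leq 1$ and $X\sim\cN(0,\bI_d)$, we have $\cov(g_\theta) \preceq \E[XX^T] = \bI_d$, uniformly for all $\theta\in\mathcal{K}$, which lands us in the bounded-covariance regime of \Cref{main-thm-intro}. Second, strong convexity: the population Hessian is $\nabla^2 L(\theta) = \E[\sigma(X^T\theta)(1-\sigma(X^T\theta))\,XX^T]$, and since $\|\theta\|_2=O(1)$ the scalar $X^T\theta$ is a one-dimensional Gaussian of bounded variance, so with constant probability $|X^T\theta|\leq C$, on which event $\sigma'(X^T\theta)=\Omega(1)$. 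Restricting expectation to this event and using rotational symmetry of the Gaussian in the subspace orthogonal to $\theta$ yields $\nabla^2 L(\theta) \succeq \alpha\,\bI_d$ for some $\alpha=\Omega(1)$, giving $\alpha$-strong convexity on $\mathcal{K}$.

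With both conditions in hand, I would run projected (sub)gradient descent over $\mathcal{K}$: at iteration $t$, draw a fresh batch from the stream, form the empirical gradient vectors $\{g_{\theta_t}(X_i,Y_i)\}$, and invoke the streaming robust mean estimator to produce $\widehat{g}_t$ with $\|\widehat{g}_t - \nabla L(\theta_t)\|_2 = O(\sqrt{\eps})$. Standard convex optimization analysis with $\alpha$-strongly convex objective and $O(\sqrt{\eps})$-inexact gradients yields $\|\theta_T-\theta^*\|_2 = O(\sqrt{\eps}/\alpha) = O(\sqrt{\eps})$ after $T = \polylog(d/\eps)$ iterations. The total memory is dominated by one invocation of \Cref{main-thm-intro} at a time, giving $d\,\polylog(d/\eps)$; the total sample complexity is $T$ times the per-iteration sample complexity $\tilde O(d^2/\eps)$ of the bounded-covariance case, giving $(d^2/\eps)\,\polylog(d/\eps)$ overall; runtime is sample-near-linear times $d$ for the gradient evaluations.

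The main obstacle is arguably the strong-convexity argument: one must show that the Hessian lower bound holds \emph{uniformly} over $\theta\in\mathcal{K}$, with a constant independent of the ambient dimension $d$. Another subtle point is the reuse/partitioning of the data stream across gradient-descent iterations: since we have only one pass, batches must be disjoint, which forces a union bound over $T$ iterations and fixes $T=\polylog(d/\eps)$ as both the achievable number of steps and the exponential decay needed for the $O(\sqrt{\eps})$ accuracy. Everything else (Lipschitzness of the loss on $\mathcal{K}$, Gaussian tail bounds, and translation of TV-corruption through the map $(X,Y)\mapsto g_\theta(X,Y)$) is routine.
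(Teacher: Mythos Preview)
Your proposal is correct and follows essentially the same route as the paper: reduce to robust stochastic convex optimization via the logistic loss, bound the gradient covariance by $O(1)\cdot \bI_d$, establish $\Omega(1)$-strong convexity of the population risk on a constant-radius ball via a Hessian lower bound, and run projected gradient descent for $\polylog(d/\eps)$ iterations, each calling the streaming robust mean estimator on a fresh batch. The paper phrases the covariance bound and the strong-convexity argument under slightly more general moment/anti-concentration assumptions (its \Cref{as:for-log-reg} and \Cref{lem:bounded_cov_lemma}), whereas you exploit the Gaussian structure directly, but the substance is the same.
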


Finally, in \Cref{sec:applications-main}, we include an additional 
application to distributed non-convex optimization in the streaming setting.

\begin{remark}[Bit complexity] 
For simplicity of presentation, in the main body of the paper, 
we consider the model of computation 
where the algorithms can store and manipulate real numbers exactly. 
We show in \Cref{app:bit-complexity} that our algorithms can tolerate errors due to finite precision. 
In particular, all our algorithms (including \Cref{alg:streaming}) can be implemented 
in the word RAM model with $d \, \polylog(d/\eps)$ bits.
\end{remark}

\subsection{Overview of Techniques}
\label{sec:overview_of_techniques}

In this section, we provide a brief overview of our approach to establish \Cref{main-thm-intro}.
We start by recalling how robust mean estimation algorithms typically work without space constraints.
A standard tool in the literature is the filtering technique of \cite{DKKLMS16,DKK+17,DK19-survey}.
The idea of the filtering method is the following: Given a set $S$ of corrupted samples, 
by %
analyzing spectral properties of the covariance of $S$, 
we can either certify that the sample mean of $S$ 
is close to the true mean of the distribution, or can construct a filter. 
The filter is a method for selecting some elements of $S$ to remove, 
with the guarantee that it will remove more outliers than inliers. 
If we can efficiently construct a filter, our algorithm can then remove 
the selected samples from $S$, obtaining 
a cleaner dataset and repeat the process. 
Eventually, this procedure must terminate, 
giving an accurate estimate of the true mean.

We proceed to explain how to implement the filtering method in a streaming model. 
We start with the easier case where the dataset is stored in read-only-memory, 
or more generally in a {\em multi-pass} streaming setting.
At each round of the algorithm, one has a subset $S'$ of the original dataset $S$ 
that needs to be maintained (in particular, the set of samples that has survived the filters applied thus far). 
To do this na\"ively would require $n = |S|$ many bits of memory, which is too much for us.
A more inventive strategy would be the following: 
instead of storing these subsets $S'$ explicitly,  
store them implicitly by instead storing enough information 
to reconstruct the filters used to obtain $S'$. 
This seems like a productive idea, as most filters are relatively simple. 
For example, a commonly used filter is to remove all points 
$x \in S$ for which $v^Tx > t$, for some vector $v$ and scalar $t$. 
One could store enough information to apply this filter 
by merely storing $(v, t)$, which would take $O(d)$ bits of information.
Unfortunately, most filtering algorithms may require $\Omega(d)$ many 
iterations before attaining their final answer.
Consequently, the sets $S'$ one needs to store are not just the result of applying a single filter, 
but instead the result of iteratively applying $\Omega(d)$ of them. 
In order to store all of these extra filters, one would need $\Omega(d^2)$ bits.
(For the sake of this intuitive description, we focused on 
``hard-thresholding'' filters. Our algorithm will actually use a soft-thresholding filter, 
assigning weights to each point.)

To circumvent this first obstacle, one requires as a starting point a filtering algorithm that is guaranteed
to terminate after a small (namely, at most poly-logarithmic) number of iterations. 
Recent work~\cite{dong2019quantum,DKKLT21} has obtained such algorithms.
Here we generalize and simplify the filtering method of~\cite{DKKLT21}. 
This allows us to obtain an algorithm with space complexity $d \, \polylog(d/\eps)$ that 
works in the {\em multi-pass streaming model}, where $\polylog(d/\eps)$ passes over the same dataset are allowed.

To obtain a {\em single-pass} streaming algorithm, new ideas are required.
In the single-pass setting, we cannot implicitly store a subset of the full dataset $S$; 
once we access some points from $S$, we will never be able to see them again. 
To deal with this issue, we will need to slightly alter our way of thinking about the algorithm.
Instead of being given a set $S$ of samples, an $\eps$-fraction of which have been corrupted, 
we instead adopt the view of having sample access to a distribution $P$, 
which is $\eps$-close in total variation distance to the inlier distribution $G$. 
Given this point of view, instead of a filter defining a procedure for removing samples from $S$ 
and outputting a subset $S'$, we think of it as a rejection sampling procedure 
that replaces $P$ with a cleaner distribution $P'$.

This shift in perspective comes with new technical challenges. 
In particular, when constructing the next round of filters, 
we will need to compute quantities pertaining to the current distribution $P$ of the data points. 
In the setting of the multiple-pass model, this imposed no problem; 
these quantities could be calculated exactly. 
This is no longer possible when we merely have sample access to $P$.
The best one can hope for is to approximate these quantities 
to sufficient precision for the rest of our analysis to carry over.
However, the natural estimators for some required quantities (e.g., powers of the covariance matrix) 
would need to access the data multiple times.
Circumventing this issue requires non-trivial technical work. 
Roughly speaking, instead of iterating over the same dataset to approximate 
the desired quantities, we show that it suffices to iterate over statistically identical datasets.

\subsection{Prior and Related Work} \label{ssec:related}

Since the dissemination of~\cite{DKKLMS16, LaiRV16}, there has been an explosion of research
in algorithmic aspects of robust statistics. We now have efficient robust estimators 
for a range of more complex problems, 
including covariance estimation~\cite{DKKLMS16, CDGW19}, 
sparse estimation tasks~\cite{BDLS17, DKKPS19-sparse, CDK+21}, 
learning graphical models~\cite{CDKS18-bn, DiakonikolasKSS21}, 
linear regression~\cite{KlivansKM18, DKS19, pensia2020robust},
stochastic optimization~\cite{PSBR18, DKK+19-sever}, 
and robust clustering/learning various 
mixture models~\cite{HL18-sos, KSS18-sos, DKS18-list, DKKLT21a, DKKLT21, BakshiDHKKK20, LM20, BDJKKV20}.
The reader is referred to \cite{DK19-survey} for a detailed overview. 
We reiterate that all previously developed algorithms work 
in the batch setting, i.e., require the entire dataset in memory. 

For the problem of robust mean estimation,~\cite{dong2019quantum, DKKLT21} gave filtering-based
algorithms with a poly-logarithmic number of iterations. The former algorithm 
relies on the matrix multiplicative weights framework, while the latter is based on first principles. 
Our starting point in \Cref{sec:near_linear_main_body} 
can be viewed as a generalization and further simplification of the ideas in~\cite{DKKLT21}. 
Specifically, our algorithm works under the stability condition (Definition~\ref{def:stability1}), 
which broadly generalizes 
the bounded covariance assumption used in~\cite{DKKLT21}.

In the context of robust supervised learning 
(including, e.g., our robust linear regression application), 
low-space streaming algorithms are known
in weaker contamination models that only allow {\em label} corruptions, see,
e.g.,~\cite{PesmeF20, ShahWS20, DiakonikolasKTZ20}. We emphasize that 
the contamination model of \Cref{def:oblivious} is significantly more challenging, 
and no low-space streaming algorithms were previously known in this model.

Finally, we note that recent work~\cite{TsaiPBR21} studies streaming algorithms for heavy-tailed 
stochastic optimization. While the goal of developing low-space streaming algorithms 
is qualitatively similar to the goal of our work, the algorithmic results in~\cite{TsaiPBR21} 
have no implications in the corrupted setting studied in this work.

\subsection{Organization} \label{ssec:org}
The structure of this paper is as follows: 
In \Cref{sec:prelim}, we record the notation and 
technical background that will be used throughout the paper.
In \Cref{sec:near_linear_main_body}, we design a filter-based algorithm for robust mean estimation
under the stability condition with a poly-logarithmic number of iterations.
In \Cref{sec:low-memory-main}, we build on the algorithm from Section~\ref{sec:near_linear_main_body}, 
to obtain our single-pass streaming algorithm for robust mean estimation under the stability condition.
Finally, in \Cref{sec:applications-main}, we obtain our streaming algorithms for more complex
robust estimation tasks. To facilitate the flow of the presentation, some proofs
of intermediate lemmas are deferred to the Appendix.

\section{Preliminaries} \label{sec:prelim}

\subsection{Notation and Basic Facts} \label{ssec:basics}

\paragraph{Basic Notation} 
We use $\Z_+$ to denote the set of positive integers. 
For $n \in \Z_+$, we  denote $[n] := \{1,\ldots,n\}$ and 
use $\cS^{d-1}$ for the $d$-dimensional unit sphere. 
For a vector $v$, we let $\|v\|_2$ denote its $\ell_2$-norm. 
We use boldface letters for matrices. 
We use $\mathbf{I}_d$ to denote the $d \times d$ identity matrix.
For a matrix $\bA$, we use $\|\bA\|_F$ and $\|\bA\|_{2}$ 
to denote the Frobenius and spectral norms respectively.
For $\bA \in \R^{m \times n}$, we use $\bA^\flat$ to denote 
the $nm$-dimensional vector obtained by concatenating the rows of $\bA$. 
We say that a symmetric $d \times d$ matrix $\bA$ is PSD (positive semidefinite), 
and write $\bA \succeq 0$, if for all vectors $x \in \R^d$ 
we have that $x^T \bA x \geq 0$. We denote  
$\lambda_{\max}(\bA) := \max_{u \in \cS^{d-1}} x^T\bA x$. 
We write $\bA\preceq \bB$ when $\bB-\bA$ is PSD. 
For a matrix $\bA \in \R^{d \times d}$, $\tr(\bA)$  denotes the trace of the matrix A.
We use $\otimes$ to denote the Kronecker product. 
 For the sake of conciseness, we sometimes use $x= a \pm b$ 
 as a shorthand for $a-b \leq x \leq a+ b$. 
 We use $a \lesssim b$, to denote that there exists 
 an absolute universal constant $C>0$ (independent of the variables or parameters on which $a$ and $b$
depend) such that $a \leq C b$.  Similarly, we use the notation $a \gtrsim b$ to denote that $b \lesssim a$.
We use $c,c',C,C'$ to denote absolute constants that may change from line to line, 
whereas we use constants $C_1,C_2,C_3, \dots$ to denote fixed absolute constants 
that are important for our algorithms.  
We use $\tilde{O}(\cdot)$ to ignore poly-logarithmic factors in all variables 
appearing inside the parentheses. For the sake of simplicity, 
we sometimes omit rounding non-integer quantities to integer ones. 
For example, we treat logarithmic factors as integers 
when they appear in the sample complexity or number of iterations of an algorithm.
{We use $\poly(\cdot)$  to indicate a quantity that is polynomial in its arguments. Similarly, $\polylog(\cdot)$ denotes a quantity that is polynomial in the logarithm of its arguments}.

\paragraph{Probability Notation}
For a random variable $X$, we use $\E[X]$ for its expectation.	
For a set $S$, we use $\cU(S)$ to denote the uniform distribution on $S$.
We use $\cN(\mu,\vec \Sigma)$ to denote the Gaussian distribution 
with mean $\mu$ and covariance matrix $\vec \Sigma$. 
For a distribution $D$ on $\R^d$, we denote  $\mu_D = \E_{X\sim D}[X]$ 
and $\vec \Sigma_D = \E_{X\sim D}[(X-\mu_D)(X-\mu_D)^T]$. 
Moreover, given a weight function $w:\R^d \to [0,1]$, 
we define the re-weighted distribution $D_w$ to be 
$D_w(x):=D(x)w(x)/\int_{\R^d} w(x)D(x)\d x$. 
We use $\mu_{w,D} = \E_{X\sim D_w}[X]$ for its mean and 
$\overline{\vec  \Sigma}_{w,D}^\mu = \E_{X\sim D_w}[(X-\mu)(X-\mu)^T]$ 
for the second moment that is centered with respect to $\mu$ 
(we will often drop $\mu$ from the notation when it is clear from the context). 
We use $\1\{x \in E\}$ to denote the indicator function of the set $E$.

\paragraph{Basic Facts}
We will use the following two basic facts. 
\begin{fact}\label{Fact:normReln} 
Let $x \in \R^d$ and $p \geq 1$. 
Then $\|x\|_{p+1} \leq \|x\|_p \leq \|x\|_{p+1}d^{\frac{1}{p(p+1)}}$.
\end{fact}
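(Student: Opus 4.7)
The plan is to prove the two inequalities separately, both of which are standard consequences of basic $\ell_p$-norm inequalities. The main obstacle, if any, is just bookkeeping the Hölder exponents so that the exponent $\frac{1}{p(p+1)}$ comes out correctly.

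For the left inequality $\|x\|_{p+1} \leq \|x\|_p$, I would argue by homogeneity: both sides scale the same way in $x$, so we may assume $\|x\|_p = 1$. Then every coordinate satisfies $|x_i| \leq 1$, and since $p+1 > p \geq 1$, we have $|x_i|^{p+1} \leq |x_i|^p$ for each $i$. Summing over $i$ gives $\|x\|_{p+1}^{p+1} \leq \|x\|_p^p = 1$, and taking $(p+1)$-th roots yields $\|x\|_{p+1} \leq 1 = \|x\|_p$, as desired.

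For the right inequality $\|x\|_p \leq \|x\|_{p+1}\, d^{1/(p(p+1))}$, I would apply Hölder's inequality with conjugate exponents $q = \frac{p+1}{p}$ and $q' = p+1$ (so that $1/q + 1/q' = 1$). Writing $|x_i|^p = |x_i|^p \cdot 1$, Hölder gives
\[
\sum_{i=1}^d |x_i|^p \;\leq\; \Bigl(\sum_{i=1}^d |x_i|^{pq}\Bigr)^{1/q} \Bigl(\sum_{i=1}^d 1^{q'}\Bigr)^{1/q'} \;=\; \Bigl(\sum_{i=1}^d |x_i|^{p+1}\Bigr)^{p/(p+1)} d^{1/(p+1)}.
\]
Hence $\|x\|_p^p \leq \|x\|_{p+1}^p \, d^{1/(p+1)}$, and taking $p$-th roots yields $\|x\|_p \leq \|x\|_{p+1} \, d^{1/(p(p+1))}$.

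Combining the two bounds gives the claimed two-sided inequality. Neither direction involves any real difficulty beyond choosing the right Hölder exponents and invoking homogeneity, so no single step poses a genuine obstacle.
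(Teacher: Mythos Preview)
Your proof is correct. The paper itself does not supply a proof of this fact (it is stated as a standard inequality without justification), so there is nothing to compare against; your homogeneity argument for the left inequality and H\"older argument for the right inequality are the standard way to establish this.
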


\begin{restatable}{fact}{TRACEINEQ} \label{fact:trace_PSD_ineq} 
\label{fact:trace_ineqq}
The following results hold:
\begin{enumerate}
    \item If $\bA,\bB,\bC$ are symmetric $d \times d$ matrices 
satisfying $\bA \succeq 0$ and $\bB  \preceq \bC$, 
we have that $\tr(\bA \bB) \leq \tr(\bA \bC)$.
\item   (\cite{JamLT20})   Let $\vec A$ and $\vec B$ be PSD matrices satisfying $0 \preceq \vec B \preceq \vec A$. Then for any positive integer $p$, we have that $\tr(\vec B^p) \leq \tr(\vec A^{p-1} \vec B)$.
\end{enumerate}
\end{restatable}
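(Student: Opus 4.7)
For part 1, I would use the standard conjugation trick: since $\bA$ is symmetric and PSD, the square root $\bA^{1/2}$ is well-defined and PSD, and by cyclicity of trace,
\[
\tr\bigl(\bA(\bC - \bB)\bigr) = \tr\bigl(\bA^{1/2}(\bC - \bB)\bA^{1/2}\bigr).
\]
The right-hand side is the trace of a PSD matrix (a congruence transform of $\bC - \bB \succeq 0$), hence nonnegative, which yields $\tr(\bA \bB) \leq \tr(\bA \bC)$.

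For part 2, the plan is to reduce the inequality to the Lieb--Thirring inequality, namely $\tr\bigl((\bX \bY)^p\bigr) \leq \tr(\bX^p \bY^p)$ for PSD matrices $\bX, \bY$ and positive integer $p$. First, by a continuity argument I may assume $\bA$ is invertible: otherwise, replace $\bA$ by $\bA + \eps \bI_d$ (which still satisfies $\bB \preceq \bA + \eps \bI_d$) and let $\eps \to 0^+$, noting that both sides of the target inequality are continuous in $\bA$. Next, I introduce $\bC := \bA^{-1/2} \bB \bA^{-1/2}$; the hypothesis $0 \preceq \bB \preceq \bA$ translates into $0 \preceq \bC \preceq \bI_d$, and we have the identity $\bB = \bA^{1/2} \bC \bA^{1/2}$.

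The chain will then be as follows. By cyclicity of trace, $\tr(\bB^p) = \tr\bigl((\bA^{1/2} \bC \bA^{1/2})^p\bigr) = \tr\bigl((\bC \bA)^p\bigr)$. Applying Lieb--Thirring with $\bX = \bC$ and $\bY = \bA$ gives $\tr\bigl((\bC \bA)^p\bigr) \leq \tr(\bC^p \bA^p) = \tr(\bA^p \bC^p)$. Since $0 \preceq \bC \preceq \bI_d$, we have $\bC^p \preceq \bC$ for every $p \geq 1$; invoking part 1 with the PSD matrix $\bA^p$ then yields $\tr(\bA^p \bC^p) \leq \tr(\bA^p \bC)$. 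A final use of cyclicity (together with the definition of $\bC$) shows $\tr(\bA^p \bC) = \tr(\bA^{p-1} \bB)$, completing the chain and giving $\tr(\bB^p) \leq \tr(\bA^{p-1}\bB)$.

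The main (and essentially unavoidable) obstacle is the Lieb--Thirring step: since $\bB \preceq \bA$ does \emph{not} imply $\bB^{p-1} \preceq \bA^{p-1}$ for $p \geq 3$ (the map $t \mapsto t^{p-1}$ fails to be operator monotone for $p \geq 3$), no naive exponent-monotonicity argument can succeed, and some nontrivial trace inequality such as Lieb--Thirring is needed. The remaining pieces -- reduction to the invertible case, the elementary fact $\bC^p \preceq \bC$ on $[0,\bI_d]$, and part 1 -- are routine bookkeeping.
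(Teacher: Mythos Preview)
Your proof is correct. For part 1, the paper uses the spectral decomposition of $\bA$ (writing $\tr(\bA\bB)=\sum_i \lambda_i\, v_i^T\bB v_i$) rather than your square-root conjugation, but the two arguments are equivalent and equally elementary.

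For part 2, the paper gives no proof at all and simply cites \cite{JamLT20}, so your argument is genuinely additional content. Your route---reduce to invertible $\bA$, substitute $\bC=\bA^{-1/2}\bB\bA^{-1/2}$, apply Lieb--Thirring to get $\tr((\bC\bA)^p)\le\tr(\bC^p\bA^p)$, then use $\bC^p\preceq\bC$ together with part 1---is sound, and each step checks out (in particular, $\tr((\bA^{1/2}\bC\bA^{1/2})^p)=\tr((\bC\bA)^p)$ holds for integer $p$ by cyclicity, and $\tr(\bA^p\bC)=\tr(\bA^{p-1}\bB)$ since powers of $\bA$ commute). Your remark that naive operator monotonicity fails for $t\mapsto t^{p-1}$ when $p\ge 3$, forcing a nontrivial trace inequality, is also accurate. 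The only cost of your approach versus the paper's is that you import Lieb--Thirring as a black box, whereas the paper imports the entire inequality as a black box; in that sense your write-up is strictly more informative.
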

\begin{proof}
We provide the proof of the first claim below; The second claim is proved in \cite[Lemma 7]{JamLT20}.
Since $\bA$ is PSD, we can consider its spectral decomposition 
$\bA = \sum_{i=1}^d \lambda_i v_i v_i^T$, 
where $\lambda_i \geq 0$. Using the linearity of trace operator, we have that
\begin{align*}
    \tr(\bA \bB) = \sum_{i=1}^d \lambda_i \tr(v_i v_i^T \bB)
    =\sum_{i=1}^d \lambda_i \tr( v_i^T \bB v_i)
    \leq \sum_{i=1}^d \lambda_i \tr( v_i^T \bC v_i)
    =\sum_{i=1}^d \lambda_i \tr( v_i v_i^T \bC )
    = \tr(\bA \bC) \;,
\end{align*}
where the inequality uses that $\bB  \preceq \bC$ and $\lambda_i \geq 0$.
\end{proof}

We will use the notion of total variation distance, defined below.

\begin{definition}
Let $P,Q$ be two probability distributions on $\R^d$. 
The total variation distance between $P$ and $Q$, 
denoted by $\dtv(P,Q)$, is defined as
$ \dtv(P,Q) = \sup_{A \subseteq \R^d} | P(A)-Q(A)|$.
For continuous distributions $P,Q$ with densities $p,q$, 
we have that $\dtv(P,Q) = \frac{1}{2} \int_{\R^d} |p(x)-q(x) | \d x$.
\end{definition}
Whenever $\dtv(P,Q) =\eps$, it is sometimes helpful to consider the decomposition below.

\begin{restatable}{fact}{DTVFact} \label{fact:dtv-decompose}
Let a domain $\cX$. For any $\eps \in [0,1]$ and for any two distributions $D_1,D_2$ on $\cX$ 
with $\dtv(D_1,D_2) = \eps$, there exist distributions $D,Q_1,Q_2$ such that 
$        D_1 = (1-\eps) D + \eps Q_1 $ and $\ D_2 = (1-\eps) D + \eps Q_2 $.
\end{restatable}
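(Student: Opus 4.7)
The plan is to build the decomposition by taking the ``common part'' of $D_1$ and $D_2$ pointwise and then normalizing. Concretely, I would first pick a common dominating $\sigma$-finite measure $\mu$ on $\cX$ (for instance $\mu = (D_1+D_2)/2$), write $p_i = dD_i/d\mu$ for $i=1,2$, and define the overlap density $m(x) := \min(p_1(x), p_2(x))$.

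The next step is to identify $\int m\, d\mu$ with $1-\eps$. This is a standard equivalent characterization of total variation distance: starting from $\dtv(D_1,D_2) = \tfrac12 \int |p_1 - p_2|\, d\mu$, using $|p_1-p_2| = p_1 + p_2 - 2\min(p_1,p_2)$ and the fact that $p_1, p_2$ integrate to $1$, one obtains $\int m\, d\mu = 1 - \dtv(D_1,D_2) = 1-\eps$. I would then define $D$, $Q_1$, $Q_2$ by the densities
\[
\frac{dD}{d\mu} = \frac{m}{1-\eps}, \qquad \frac{dQ_i}{d\mu} = \frac{p_i - m}{\eps}, \quad i=1,2,
\]
assuming $0 < \eps < 1$; these are nonnegative by construction and integrate to $1$ by the computation above (for $Q_i$, note $\int (p_i - m)\, d\mu = 1 - (1-\eps) = \eps$). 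The required identities $D_1 = (1-\eps)D + \eps Q_1$ and $D_2 = (1-\eps)D + \eps Q_2$ then follow immediately from the definitions by checking densities against $\mu$.

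Finally, I would dispose of the two corner cases: if $\eps = 0$, take $D = D_1 = D_2$ and $Q_1 = Q_2$ arbitrary; if $\eps = 1$, the first summand vanishes and one can take $Q_i = D_i$ with $D$ arbitrary. There is no real obstacle here: the only mildly delicate point is ensuring the argument works for arbitrary (not necessarily continuous or discrete) distributions on $\cX$, which the choice of a common dominating measure $\mu$ handles uniformly. Every other step is a direct verification from the densities.
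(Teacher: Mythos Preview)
Your proposal is correct and is essentially identical to the paper's own argument: the paper also sets $D(x)=\min\{D_1(x),D_2(x)\}/(1-\eps)$ and $Q_i(x)=(D_i(x)-D_j(x))^+/\eps$, which is exactly your $m/(1-\eps)$ and $(p_i-m)/\eps$. If anything, you are slightly more careful in introducing a common dominating measure and handling the corner cases $\eps\in\{0,1\}$.
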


This decomposition can be achieved by the following choice of $Q_1$, $Q_2$, and $D$: 
  \begin{align*}
    Q_1(x) =
        \begin{cases}
           \frac{D_1(x) - D_2(x)}{\eps} \;, &\text{if $D_1(x)> D_2(x)$} \\
           0\;, &\text{otherwise}
        \end{cases},\,\,\,
    Q_2(x) =
        \begin{cases}
           \frac{D_2(x) - D_1(x)}{\eps} \;, &\text{if $D_2(x)> D_1(x)$} \\
           0\;, &\text{otherwise}
        \end{cases},
  \end{align*}
and $D(x) = \min\{D_1(x), D_2(x) \}/(1-\eps)$. 
In light of \Cref{fact:dtv-decompose}, the adversary that performs corruption 
in total variation distance can be thought of as ``both additive and subtractive'' adversary.

\paragraph{Concentration Inequalities}
We will also require following standard results regarding concentration of random variables:

\begin{fact}[\cite{Ver10}] \label{fact:ver_cov}
Consider a distribution $D$ on $\R^d$ that has zero mean 
and is supported in an $\ell_2$-ball of radius $R$ from the origin. 
Denote by $\vec \Sigma$ its covariance matrix and denote by 
$\vec{\Sigma}_N=(1/n) \sum_{i=1}^N X_iX_i^T$  the empirical covariance matrix 
using $N$ samples $X_i \sim D$. There is a constant $C$ such that for any $0<\eps'<1$ 
and $0<\tau <1$, if $N > C \eps'^{-2} \| \vec \Sigma \|_2^{-1} R^2  \log(d/\tau)$, we have that 
$\| \vec \Sigma - \vec \Sigma_N \|_2 \leq \eps' \| \vec \Sigma \|_2$, 
with probability at least $1-\tau$.
\end{fact}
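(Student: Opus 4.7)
The plan is to prove this as a direct application of the matrix Bernstein inequality. First, I would rewrite the quantity of interest as a sum of independent centered matrices: define $\vec Y_i = X_i X_i^T - \vec\Sigma$, so that $\vec\Sigma_N - \vec\Sigma = (1/N)\sum_{i=1}^N \vec Y_i$, and each $\vec Y_i$ is symmetric with $\E[\vec Y_i] = 0$. The goal then reduces to showing $\|\sum_i \vec Y_i\|_2 \leq \eps' N \|\vec\Sigma\|_2$ with high probability.

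Next, I would verify the two ingredients that matrix Bernstein requires, namely an almost-sure spectral bound on each summand and a bound on the matrix variance. For the almost-sure bound, since $\|X_i\|_2 \leq R$ we have $\|X_i X_i^T\|_2 = \|X_i\|_2^2 \leq R^2$, and $\|\vec\Sigma\|_2 \leq R^2$ (testing on any unit $v$, $v^T\vec\Sigma v = \E[(v^T X)^2] \leq R^2$), so $\|\vec Y_i\|_2 \leq 2R^2$. For the matrix variance, the key identity is
\[
\E[\vec Y_i^2] = \E[(X_i X_i^T)^2] - \vec\Sigma^2 = \E[\|X_i\|_2^2\, X_i X_i^T] - \vec\Sigma^2 \preceq R^2\, \vec\Sigma,
\]
where I used that $\|X_i\|_2^2 \leq R^2$ and that $\vec\Sigma^2 \succeq 0$. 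Hence $\bigl\| \sum_i \E[\vec Y_i^2]\bigr\|_2 \leq N R^2 \|\vec\Sigma\|_2$.

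Now I would invoke the standard matrix Bernstein bound: with $L := 2R^2$ and $v := N R^2 \|\vec\Sigma\|_2$,
\[
\Pr\!\Bigl[\bigl\|\sum_i \vec Y_i\bigr\|_2 \geq t\Bigr] \leq 2d\,\exp\!\Bigl(-\tfrac{t^2/2}{v + L t/3}\Bigr).
\]
Choosing $t = \eps' N\|\vec\Sigma\|_2$ and using $\eps' < 1$ so that the $L t/3$ term is dominated by $v$, the exponent simplifies to a constant times $\eps'^2 N \|\vec\Sigma\|_2 / R^2$. Setting this failure probability to $\tau$ and solving for $N$ yields the stated threshold $N \gtrsim \eps'^{-2} \|\vec\Sigma\|_2^{-1} R^2 \log(d/\tau)$.

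The only non-routine step is the variance bound $\E[\vec Y_i^2] \preceq R^2 \vec\Sigma$; once that PSD inequality is in hand, everything else is plugging into matrix Bernstein. An alternative route, which would add the only other mild technical wrinkle, would be to avoid matrix Bernstein by passing through an $\eps'$-net of $\mathcal{S}^{d-1}$ combined with scalar Bernstein applied to $\sum_i (v^T X_i)^2 - v^T \vec\Sigma v$; this uses the same two quantitative facts (almost-sure bound $R^2$ and variance $\leq R^2 v^T \vec\Sigma v \leq R^2 \|\vec\Sigma\|_2$) and pays an extra $\log$ factor absorbed by the $\log(d/\tau)$.
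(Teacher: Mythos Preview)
Your argument is correct. Note, however, that the paper does not give its own proof of this statement: it is stated as a Fact with a citation to \cite{Ver10} and used as a black box. The matrix-Bernstein route you outline is exactly the standard derivation of this bound (and is essentially what the cited reference does), so there is nothing to compare against in the paper itself.
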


\begin{fact}[Quadratic Polynomials of a Gaussian]
\label{fact:var-quadratic}
The Gaussian random variable satisfies the following properties:
\begin{enumerate}
    \item For every $\beta>0$, $\pr_{X \sim \cN(0,\bI_d)}[|\|X\|^2-d| > \beta] \leq 2e^{-c \beta^2/d}$, 
where $c>0$ is a universal constant.
  
    \item 
  If $\vec A$ is a PSD matrix, then for any $\beta > 0$, it holds 
  $\pr_{z \sim \cN(0,\bI)}[ z^\top \vec A z \geq \beta \tr(\vec A) ] \geq
    1-\sqrt{e \beta}$.

\end{enumerate}
\end{fact}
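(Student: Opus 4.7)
Both statements are standard concentration facts about Gaussian quadratic forms, and I would prove them by the classical MGF / Chernoff route applied to the spectral decomposition of the quadratic form.

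For part (1), I observe that $\|X\|^2 = \sum_{i=1}^d X_i^2$ is a sum of $d$ i.i.d.\ $\chi^2_1$ random variables, which are sub-exponential. Writing $Y_i = X_i^2 - 1$, one has the closed-form MGF $\E[e^{sY_i}] = e^{-s}(1-2s)^{-1/2}$ for $s<1/2$. A direct Taylor expansion of $-s - \tfrac12\log(1-2s)$ near $s=0$ gives $\log \E[e^{s Y_i}] \leq s^2$ for $|s| \leq 1/4$, i.e.\ the $Y_i$ are centered sub-exponential with uniformly bounded parameters. Applying Markov's inequality to $e^{s\sum_i Y_i}$ and optimizing $s$ over a small neighborhood of $0$ then yields the standard Bernstein-type bound $\Pr[|\|X\|^2 - d| > \beta] \leq 2 \exp(-c \beta^2 / d)$ in the regime of interest (equivalently, the Laurent--Massart inequality can be invoked as a black box).

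For part (2), I use the spectral decomposition $\vec A = \sum_i \lambda_i v_i v_i^T$ with $\lambda_i \geq 0$ and rotational invariance of the standard Gaussian to write $z^\top \vec A z = \sum_i \lambda_i g_i^2$ with $g_i \sim \cN(0,1)$ i.i.d. Since I need a lower tail for a nonnegative random variable, I apply Markov's inequality to the exponential $e^{-t z^\top \vec A z}$ for $t>0$:
\begin{equation*}
\Pr[z^\top \vec A z < \beta \tr(\vec A)] \;\leq\; e^{t \beta \tr(\vec A)} \, \E\bigl[e^{-t z^\top \vec A z}\bigr] \;=\; e^{t \beta \tr(\vec A)} \prod_i (1+2 t \lambda_i)^{-1/2}.
\end{equation*}
The elementary inequality $\prod_i (1+x_i) \geq 1 + \sum_i x_i$ for $x_i \geq 0$ gives $\prod_i (1 + 2 t \lambda_i) \geq 1 + 2 t \tr(\vec A)$, hence
\begin{equation*}
\Pr[z^\top \vec A z < \beta \tr(\vec A)] \;\leq\; e^{t \beta \tr(\vec A)} \bigl(1 + 2 t \tr(\vec A)\bigr)^{-1/2}.
\end{equation*}
Now I optimize over $t>0$. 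Writing $T = \tr(\vec A)$ and setting the derivative of the log of the right-hand side to zero yields $1 + 2tT = 1/\beta$, which is feasible exactly when $\beta < 1$ (the claimed conclusion is trivial when $\beta \geq 1$ since $1 - \sqrt{e\beta} \leq 0$). Plugging this back gives the bound $e^{(1-\beta)/2} \sqrt{\beta} \leq \sqrt{e \beta}$, so
$\Pr[z^\top \vec A z \geq \beta \tr(\vec A)] \geq 1 - \sqrt{e \beta}$, as desired.

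\textbf{Main obstacle.} Neither part is genuinely difficult; both are textbook MGF-plus-Chernoff arguments. The only place where one has to be slightly careful is the product inequality $\prod(1+x_i) \geq 1 + \sum x_i$ in part (2) --- this is what reduces the anticoncentration bound to a one-dimensional statement in terms of $\tr(\vec A)$ alone, without any knowledge of how the mass of $\vec A$ is distributed across its eigenvalues. Everything else is routine algebra.
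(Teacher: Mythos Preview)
The paper does not prove this fact; it is stated in the preliminaries as a standard concentration inequality without proof or citation, so there is no ``paper's proof'' to compare against. Your approach is correct and is the standard one. For part (2), the MGF/Chernoff argument with the product bound $\prod_i(1+2t\lambda_i)\ge 1+2t\tr(\vec A)$ and the optimization $1+2tT=1/\beta$ gives exactly $e^{(1-\beta)/2}\sqrt{\beta}\le\sqrt{e\beta}$; the case $\beta\ge 1/e$ is vacuous since then $1-\sqrt{e\beta}\le 0$. For part (1), your identification with the Laurent--Massart/Bernstein bound is right; the one caveat worth flagging is that the sub-Gaussian tail $2e^{-c\beta^2/d}$ as stated cannot hold for all $\beta>0$ (the correct two-sided bound is $2\exp(-c\min(\beta^2/d,\beta))$, switching to sub-exponential for $\beta\gtrsim d$). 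This is an imprecision in the paper's statement rather than a gap in your argument, and the paper only invokes the fact in the sub-Gaussian regime anyway.
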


\begin{fact}[\cite{achlioptas2003database}] \label{fact:jl} 
Let $0<\gamma<1$ and $u_1,\ldots, u_N \in \R^d$. 
Let $z_j$ for $j \in [L]$ drawn from the uniform distribution on $\{\pm 1\}^d$. 
There exists a constant $C>0$ such that, if $L > C \log(N/\gamma)$, 
then, with probability at least $1-\gamma$, 
we have that  $ 0.8 \|u_i\|^2 \leq \frac{1 }{L}\sum_{j=1}^L (z_j^Tu_i)^2 \leq 1.2 \|u_i\|^2$ for all $i \in [N]$.
\end{fact}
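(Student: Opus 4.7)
The plan is to fix one vector $u \in \R^d$, show the desired two-sided bound on $\frac{1}{L}\sum_{j=1}^L (z_j^T u)^2$ with failure probability at most $\gamma/N$, and then take a union bound over the $N$ vectors $u_1,\dots,u_N$. By homogeneity we may assume $\|u\|_2 = 1$ throughout the per-vector analysis.

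First I would compute the mean. Since the coordinates of $z$ are independent with $\E[z_i] = 0$ and $z_i^2 = 1$,
\[
\E\bigl[(z^T u)^2\bigr] \;=\; \sum_{i,j} u_i u_j\, \E[z_i z_j] \;=\; \sum_i u_i^2 \;=\; \|u\|_2^2\,,
\]
so each summand $X_j := (z_j^T u)^2$ in the empirical average has mean $\|u\|_2^2 = 1$, and the $X_j$ are i.i.d. The issue is controlling the deviation of the average of $L$ such variables.

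The key step is a concentration bound for a single $X_j$. Here I would use the fact that $z^T u = \sum_i z_i u_i$ is a sum of independent bounded symmetric random variables; by Hoeffding's lemma each $z_i u_i$ is subgaussian with parameter $u_i^2$, so $z^T u$ is subgaussian with parameter $\|u\|_2^2 = 1$. Squaring a subgaussian random variable yields a subexponential one: concretely $X_j - 1 = (z^T u)^2 - 1$ is subexponential with parameters bounded by absolute constants. Writing out $X_j - 1 = \sum_{i\neq k} u_i u_k z_i z_k$ makes this explicit as a homogeneous Rademacher chaos of order $2$, which satisfies a Hanson--Wright-type tail $\Pr[|X_j - 1| \ge t] \le 2\exp(-c\min(t, t^2))$.

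Next I would apply Bernstein's inequality (or equivalently the standard tail bound for averages of subexponential variables) to $\frac{1}{L}\sum_{j=1}^L (X_j - 1)$: for $t = 0.2$, we get
\[
\Pr\!\left[\,\Bigl|\tfrac{1}{L}\sum_{j=1}^L X_j - 1\Bigr| > 0.2 \right] \;\le\; 2\exp(-c' L)
\]
for an absolute constant $c' > 0$. Choosing $C$ large enough so that $L > C \log(N/\gamma)$ forces the right-hand side to be at most $\gamma/N$. A union bound over $i \in [N]$ (after rescaling by $\|u_i\|_2^2$ to return to general vectors) then gives the claim with probability at least $1 - \gamma$.

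The only subtle step is justifying the subexponential tail for $X_j - 1$; everything else is bookkeeping. I would handle this either by quoting Hanson--Wright for Rademacher chaos applied to the rank-one matrix $u u^T$ (whose Frobenius and operator norms are both $\|u\|_2^2$), or, if one wants a self-contained argument, by bounding the moment generating function $\E[\exp(\lambda (z^T u)^2)]$ directly for $|\lambda|$ smaller than an absolute constant, using the decoupling identity $\E[\exp(\lambda(z^Tu)^2)] = \E_{g\sim\cN(0,1)}\E_z[\exp(\sqrt{2\lambda}\, g\, z^T u)] = \E_g\prod_i \cosh(\sqrt{2\lambda}\,g\,u_i) \le \E_g\exp(\lambda g^2 \|u\|_2^2)$, which is finite for $\lambda < 1/(2\|u\|_2^2)$ and delivers the required subexponential MGF bound.
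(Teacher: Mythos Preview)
The paper does not prove this statement; it is quoted as a known fact from \cite{achlioptas2003database} and used as a black box. Your argument is correct and is essentially the standard proof: mean computation, subexponentiality of $(z^Tu)^2-\|u\|_2^2$ via the fact that $z^Tu$ is $\|u\|_2^2$-subgaussian (or Hanson--Wright for the rank-one quadratic form), Bernstein for the average, and a union bound over the $N$ vectors.

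One small remark on the self-contained alternative you sketch at the end: the Gaussian-decoupling identity $\E[\exp(\lambda(z^Tu)^2)] = \E_g\bigl[\prod_i \cosh(\sqrt{2\lambda}\,g\,u_i)\bigr]$ is only valid for $\lambda \ge 0$, so as written it controls the upper tail only. This is harmless because your primary route---``square of a subgaussian is subexponential'' or Hanson--Wright---is two-sided and already suffices; just be aware that the MGF shortcut would need a separate (easy) argument for $\lambda<0$ if you wanted it to stand on its own.
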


\subsection{Stability Condition and Its Properties} \label{sec:more_stabilityfacts}

Our results will hold for every distribution satisfying the following key property~\cite{DK19-survey}.

\begin{restatable}[($\eps,\delta$)-stable distribution]{definition}{StableDistr} \label{def:stability1}
Fix $0<\eps<1/2$ and $\delta\geq \eps$. 
A distribution $G$ on $\R^d$ is $(\eps,\delta)$-stable with respect to $\mu \in \R^d$ 
if for any weight function $w:\R^d \to [0,1]$ with $\E_{X \sim G}[w(X)] \geq 1-\eps$ we have that 
    \begin{align*}
        \snorm{2}{ \mu_{w,G} -  \mu} \leq \delta \quad \text{and} \quad
        \snorm{2}{\overline{\vec \Sigma}_{w,G} - \vec{I}_d} \leq \delta^2/\eps \;.
    \end{align*}
\end{restatable}

We call a set of points $S$ $(\eps,\delta)$-stable when the uniform distribution on $S$ is stable :

\begin{restatable}[($\eps,\delta$)-stable set]{definition}{StableSet} \label{def:stability2}
Fix $0<\eps<1/2$ and $\delta\geq \eps$. 
A finite set $S_0 \subset \R^d$ is $(\eps,\delta)$-stable with respect to $\mu \in \R^d$ 
if the empirical distribution $\cU(S_0)$ is $(\eps,\delta)$-stable with respect to $\mu$.
\end{restatable}

We begin by stating some examples of stable distributions (see \cite{DK19-survey} for more details). 
If $G$ is a subgaussian distribution with identity covariance, 
then $G$ is $(\epsilon, \delta)$-stable with $\delta= O(\eps \sqrt{\log(1/\eps)})$.
If $G$ is a distribution with covariance at most identity, 
i.e., $\vec \Sigma_G \preceq \bI_d$, then $G$ is $(\eps, \delta)$-stable with $\delta = O(\sqrt{\eps})$.
Interpolating these two results, we have that if $G$ is a distribution 
with identity covariance and bounded $k$-th moment for $k\geq 4$, 
i.e., $(\E_{X \sim G}[|v^T(X - \mu)|^k])^{1/k} = O(1)$, 
then $G$ is $(\eps,\delta)$-stable with $\delta = O(\eps^{1-1/k})$.
Furthermore, it is known that $\poly(d/\eps)$ i.i.d.\ samples 
from these distributions also yields a set that contains a large stable subset 
(see, for example, \cite{DK19-survey,DiakP20,dong2019quantum,DKKLMS16}):
\begin{fact}[\cite{DK19-survey}]\label{fact:stabilityfromsamples}
A set of $O(d/(\eps^2\log(1/\eps)))$ i.i.d.\ samples from 
an identity covariance subgaussian distribution %
is $(\epsilon,O(\eps\sqrt{\log(1/\eps)}))$-stable with respect to $\mu$ with high probability. 
Similarly, a set of $\tilde{O}(d/\eps)$ i.i.d.\ samples from a distribution $X$ with $\cov[X] \preceq \bI_d$ 
contains a large subset $S$, which is $O(\epsilon, O(\sqrt{\epsilon}))$-stable 
with respect to its mean $\E[X]$ with high probability.
\end{fact}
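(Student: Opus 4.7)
The plan is to prove both parts by a truncation argument combined with an $\eps$-net over $\cS^{d-1}$, after reducing worst-case weight functions to extremal ones.

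\textbf{Reduction to hard-threshold weights.} Fix a direction $v \in \cS^{d-1}$. Among all $w : \R^d \to [0,1]$ with $\E_G[w(X)] \geq 1-\eps$, the one maximizing $|v^T(\mu_{w,G}-\mu)|$ (or $v^T(\overline{\vec\Sigma}_{w,G} - \bI_d)v$) is, up to boundary effects, the indicator of the complement of the extremal tail $\{x : v^T(x-\mu) > t_v\}$, where $t_v$ is the $(1-\eps)$-quantile. Thus it suffices to control the quantities $\E_G[|v^T(X-\mu)| \I\{|v^T(X-\mu)| > t_v\}]$ and $\E_G[(v^T(X-\mu))^2\I\{\cdot\}]$ together with their empirical analogues.

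\textbf{Subgaussian case.} For identity-covariance subgaussian $G$, one has $t_v = O(\sqrt{\log(1/\eps)})$, and integrating the subgaussian tail gives $\E[|v^T(X-\mu)|\I\{|v^T(X-\mu)|>t_v\}] = O(\eps\sqrt{\log(1/\eps)})$ and $\E[(v^T(X-\mu))^2\I\{\cdot\}] = O(\eps\log(1/\eps))$, matching the required $\delta = O(\eps\sqrt{\log(1/\eps)})$ and $\delta^2/\eps = O(\log(1/\eps))$. To pass to the empirical distribution I would use a $1/4$-net $\cN \subset \cS^{d-1}$ of size $e^{O(d)}$: for each $v \in \cN$, the truncated linear and quadratic statistics are bounded random variables (size $\polylog(1/\eps)$), so Bernstein yields deviation $O(\eps\sqrt{\log(1/\eps)})$ once $N = \Theta(d/(\eps^2\log(1/\eps)))$. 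The union bound over $\cN$ is absorbed by the $e^{O(d)}$ factor, and standard net-to-sphere arguments upgrade the pointwise bounds to sphere-uniform bounds on $\mu_{w,G} - \mu$ and $\overline{\vec\Sigma}_{w,G} - \bI_d$.

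\textbf{Bounded-covariance case.} Here we cannot use tail integrals. Instead, I would first discard the $\Theta(\eps)$-fraction of samples with largest norm (or, more carefully, with largest $v^TX$ projection along the worst $v$), obtaining a subset $S$. By Chebyshev, this truncation at level $\sqrt{d/\eps}$ loses at most an $O(\eps)$-fraction of mass and guarantees pointwise boundedness $|v^TX_i| \le \sqrt{d/\eps}$ on $S$. On this truncated set, the remaining variance mass satisfies $\E[(v^TX)^2\I\{(v^TX)^2 \le d/\eps\}] \leq 1$, and Hoeffding--Bernstein over the net (with $N = \tilde O(d/\eps)$) yields $\|\mu_{w,S} - \E[X]\|_2 = O(\sqrt{\eps})$ and $\|\overline{\vec\Sigma}_{w,S} - \bI_d\|_2 = O(1)$ uniformly over $w$ with $\E w \ge 1-\eps$, which is exactly $(\eps, O(\sqrt{\eps}))$-stability.

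\textbf{Main obstacle.} The delicate step is establishing these bounds uniformly over \emph{all} admissible weight functions $w$, not just one. The reduction to hard-threshold $w$ for each fixed $v$ is what makes this tractable: it collapses the would-be $2^N$ subset union bound into an $e^{O(d)}$ union bound over the net, which is exactly what the sample complexity $\tilde O(d/\eps^2)$ (resp.\ $\tilde O(d/\eps)$) can absorb. Making this reduction rigorous for soft weights — writing a general $w$ as a mixture of half-space indicators satisfying the mass constraint — is the technical heart of the argument, but it is standard and carried out in \cite{DK19-survey,DiakP20}.
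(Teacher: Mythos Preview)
The paper does not supply its own proof of this statement: it is recorded as a Fact with a citation to \cite{DK19-survey} (and implicitly \cite{DiakP20,dong2019quantum,DKKLMS16}) and is used as a black box. So there is no in-paper argument to compare against.

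That said, your sketch is indeed the standard route taken in the cited references. A couple of technical points you should be aware of if you intend to flesh it out. First, in the bounded-covariance part, truncating by Euclidean norm at level $\sqrt{d/\eps}$ gives a pointwise bound $|v^T(X_i-\mu)| \leq \sqrt{d/\eps}$, but applying Bernstein to $(v^T(X-\mu))^2$ then uses only the crude fourth-moment proxy $(d/\eps)\cdot 1$, which is exactly what makes $N=\tilde O(d/\eps)$ tight; you should check this bound explicitly rather than simply asserting ``Hoeffding--Bernstein yields''. Second, the reduction to hard-threshold $w$ is clean for the mean condition but requires a bit more care for the covariance condition: for the upper deviation $v^T(\overline{\vec\Sigma}_{w,G}-\bI_d)v$ the extremal $w$ removes the \emph{smallest} values of $(v^T(X-\mu))^2$, not the largest, and one also has to control the renormalization factor $1/\E_G[w]$. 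None of this is hard, but your sketch collapses these cases together. Finally, note that in the bounded-covariance setting the fact only asserts stability of a \emph{large subset} (after truncation), not of the full sample set; your write-up acknowledges this but should make explicit that the discarded points may be inliers, which is why the conclusion is about a subset.
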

The basic fact regarding stability, 
which is the starting point of many robust estimation algorithms, 
is that any slight modification of a stable distribution 
can not perturb the mean by a large amount, 
unless it significantly changes its covariance 
(see, for example, \cite{DKKLMS16,LaiRV16,DK19-survey}). 
Here we require a slightly different statement than that of \cite{DK19-survey}, 
and hence provide a proof in \Cref{sec:technical_details_regarding_stability} for completeness.
\begin{restatable}[Certificate Lemma]{lemma}{LemCert} \label{lem:certificate}
Let $G$ be an $(\eps,\delta)$-stable distribution 
with respect to $\mu \in \R^d$, for some $0<\eps<1/3$ and $\delta \geq \eps$. 
Let $P$ be a distribution with $\dtv(P,G) \leq \eps$. 
Denoting by $\mu_P,\mathbf{\Sigma}_P$ the mean and covariance of $P$, 
if $\lambda_{\max}(\mathbf{\Sigma}_P) \leq 1+\lambda$, 
for some $\lambda \geq 0$, 
then $\|\mu_P-\mu\|_2 = O(\delta + \sqrt{\eps \lambda})$.
\end{restatable}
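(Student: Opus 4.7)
The plan is to combine the total variation decomposition of \Cref{fact:dtv-decompose} with the two-sided stability bound on the centered second moment. First I will use \Cref{fact:dtv-decompose} to write $P = (1-\eps)M + \eps Q_P$ and $G = (1-\eps)M + \eps Q_G$ for a common distribution $M$; since $(1-\eps)M(x) \le G(x)$ pointwise, the weight $w(x) := (1-\eps)M(x)/G(x)$ lies in $[0,1]$, satisfies $\E_{X\sim G}[w(X)] = 1-\eps$, and yields $G_w = M$. Applying the stability hypothesis to this $w$ then gives $\|\mu_M - \mu\|_2 \le \delta$ together with the two-sided bound $(1-\delta^2/\eps)\vec I_d \preceq \overline{\vec\Sigma}_M^\mu \preceq (1+\delta^2/\eps)\vec I_d$, where I write $\overline{\vec\Sigma}_M^\mu := \E_{X\sim M}[(X-\mu)(X-\mu)^T]$.

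Next, set $a := \|\mu_P-\mu\|_2$ and assume $a > 2\delta$ (otherwise the conclusion is immediate). Taking $v := (\mu_P-\mu)/a$ and pairing it with the identity $\mu_P-\mu = (1-\eps)(\mu_M-\mu)+\eps(\mu_{Q_P}-\mu)$, the bound $|v^T(\mu_M-\mu)| \le \delta$ yields $\eps\, v^T(\mu_{Q_P}-\mu) \ge a - (1-\eps)\delta \ge a/2$. On the second-moment side, the mixture identity $\overline{\vec\Sigma}_P^\mu = (1-\eps)\overline{\vec\Sigma}_M^\mu + \eps\,\overline{\vec\Sigma}_{Q_P}^\mu$, the lower stability bound on $\overline{\vec\Sigma}_M^\mu$, and the PSD inequality $\overline{\vec\Sigma}_{Q_P}^\mu \succeq (\mu_{Q_P}-\mu)(\mu_{Q_P}-\mu)^T$ will together yield
\[
v^T\overline{\vec\Sigma}_P^\mu v \ \ge\ (1-\eps)(1-\delta^2/\eps) + \eps\bigl(v^T(\mu_{Q_P}-\mu)\bigr)^2 \ \ge\ (1-\eps)(1-\delta^2/\eps) + \frac{a^2}{4\eps}.
\]
On the other hand, $\overline{\vec\Sigma}_P^\mu = \vec\Sigma_P + (\mu_P-\mu)(\mu_P-\mu)^T$ and the hypothesis on $\lambda_{\max}(\vec\Sigma_P)$ give the matching upper bound $v^T\overline{\vec\Sigma}_P^\mu v \le 1+\lambda+a^2$.

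Combining the two estimates and rearranging will produce $a^2\bigl(\tfrac{1}{4\eps}-1\bigr) \le \lambda + \eps + \delta^2/\eps$. For $\eps$ below a small absolute constant, this solves to $a^2 = O(\eps\lambda + \eps^2 + \delta^2)$, and the hypothesis $\delta \ge \eps$ then yields $a = O(\delta + \sqrt{\eps\lambda})$, as claimed. The main subtlety will be the second-moment step: because the hypothesis is on $\vec\Sigma_P$ (centered at $\mu_P$) and not $\overline{\vec\Sigma}_P^\mu$, an extra $+a^2$ term appears on the upper-bound side and must be absorbed on the lower-bound side via the $1/(4\eps)$ coefficient. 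This forces me to use the full two-sided stability bound on $\overline{\vec\Sigma}_M^\mu$, since discarding the lower-bound half and using only $\overline{\vec\Sigma}_M^\mu \succeq 0$ would leave a stray $O(\sqrt{\eps})$ term that cannot in general be absorbed into $\delta + \sqrt{\eps\lambda}$ when $\delta \ll \sqrt{\eps}$.
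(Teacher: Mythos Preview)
Your argument is correct and follows essentially the same strategy as the paper: decompose $P=(1-\eps)M+\eps Q_P$ via \Cref{fact:dtv-decompose}, recognize $M=G_w$ for an admissible weight so that stability controls both $\mu_M$ and $\overline{\vec\Sigma}_M^\mu$, and then run a second-moment argument along a well-chosen direction. The one substantive difference is bookkeeping: the paper picks $v$ along $\mu_M-\mu_{Q_P}$ and works with the mixture-covariance identity for $\vec\Sigma_P$ (centered at $\mu_P$), obtaining $\|\mu_M-\mu_{Q_P}\|_2\lesssim \delta/\eps+\sqrt{\lambda/\eps}$ and then $\|\mu_P-\mu\|_2\le\delta+\eps\|\mu_M-\mu_{Q_P}\|_2$; you instead center at $\mu$, pick $v=(\mu_P-\mu)/a$, and bound $a$ directly, at the cost of the extra $+a^2$ on the upper-bound side. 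That extra term is what produces your coefficient $\tfrac{1}{4\eps}-1$, which vanishes at $\eps=1/4$ and so, as written, does not cover the full stated range $\eps<1/3$. This is easily repaired (e.g., assume $a>3\delta$ instead of $a>2\delta$, yielding coefficient $\tfrac{4}{9\eps}-1>\tfrac13$ for all $\eps<1/3$); the paper's centering avoids the issue altogether because the factor $\eps(1-\eps)$ in the mixture-variance identity stays bounded below for all $\eps<1/2$.
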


Given \Cref{fact:dtv-decompose}, we can essentially think of an $\eps$-corrupted version 
of a stable distribution as a mixture of a stable distribution with a noise distribution, 
as shown below (see \Cref{sec:technical_details_regarding_stability} for a proof).

\begin{restatable}{lemma}{LemDTVStab}  \label{lem:dtv-stab}
For any $0<\eps<1/2$ and $\delta \geq \eps$, if a distribution $G$ is $(2\eps,\delta)$-stable 
with respect to $\mu \in \R^d$, and $P$ is an $\eps$-corrupted version of $G$ in total variation distance, 
there exist distributions $G_0$ and $B$ such that 
$P=(1-\eps)G_0 + \eps B$ and $G_0$ is $(\eps,\delta)$-stable with respect to $\mu$. 
\end{restatable}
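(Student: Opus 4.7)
The plan is to decompose $P$ using \Cref{fact:dtv-decompose} and then show that the ``shared mass'' component can serve as $G_0$, while the remaining mass can be packaged (together with any slack coming from $\dtv(P,G)$ being strictly less than $\eps$) into $B$. The stability of $G_0$ will follow by expressing any admissible reweighting of $G_0$ as a reweighting of $G$ with weight mass at least $1-2\eps$, at which point we invoke the assumed $(2\eps,\delta)$-stability of $G$.

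Concretely, let $\eps' := \dtv(P,G) \leq \eps$. By \Cref{fact:dtv-decompose}, there exist distributions $D, Q_1, Q_2$ with
\[
G = (1-\eps')D + \eps' Q_1, \qquad P = (1-\eps')D + \eps' Q_2,
\]
where $D(x) = \min\{P(x),G(x)\}/(1-\eps')$. Set $G_0 := D$ and $B := \tfrac{1}{\eps}\big((\eps-\eps')D + \eps' Q_2\big)$; since $\eps'\le\eps$, $B$ is a valid distribution, and a direct substitution gives $P = (1-\eps)G_0 + \eps B$. Note also that $G_0 = G_w$ is a reweighting of $G$ under the weight $w(x) := \min\{1, P(x)/G(x)\} \in [0,1]$, with $\E_{X\sim G}[w(X)] = \int \min\{P,G\}\,dx = 1-\eps'$.

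For the stability of $G_0$, fix any weight function $w':\R^d\to [0,1]$ with $\E_{X\sim G_0}[w'(X)] \geq 1-\eps$. Writing out the densities, for any measurable $f$,
\[
\E_{X\sim G_0}[f(X) w'(X)] \;=\; \frac{1}{1-\eps'}\,\E_{X\sim G}[f(X)\, w(X)\, w'(X)].
\]
Applied with $f\equiv 1$, this yields $\E_{X\sim G}[w(X)w'(X)] = (1-\eps')\,\E_{X\sim G_0}[w'(X)] \geq (1-\eps')(1-\eps) \geq 1-2\eps$. Thus $w'' := w\cdot w' \in [0,1]$ is an admissible weight for the $(2\eps,\delta)$-stability of $G$, and the identities above (applied with $f(X)=X$ and with $f(X)=(X-\mu)(X-\mu)^T$) give $\mu_{w',G_0} = \mu_{w'',G}$ and $\overline{\vec\Sigma}_{w',G_0}^{\mu} = \overline{\vec\Sigma}_{w'',G}^{\mu}$. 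Applying $(2\eps,\delta)$-stability of $G$ to $w''$ then yields $\|\mu_{w',G_0}-\mu\|_2\le \delta$ and $\|\overline{\vec\Sigma}_{w',G_0}^{\mu} - \bI_d\|_2 \le \delta^2/(2\eps) \le \delta^2/\eps$, which is exactly $(\eps,\delta)$-stability of $G_0$.

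The only subtle point is matching the ``$\eps$'' in the statement with the (possibly smaller) true TV distance $\eps'$: this is why one has to absorb the slack $\eps-\eps'$ into $B$, and why the assumed stability parameter on $G$ must be $2\eps$ rather than $\eps$ (the composed weight has mass only $(1-\eps')(1-\eps)$ under $G$, which one bounds by $1-2\eps$). With these bookkeeping details handled, everything else is an essentially mechanical density-chase.
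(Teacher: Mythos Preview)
Your proof is correct and follows essentially the same approach as the paper's: both set $G_0$ to be the normalized shared-mass component $\min\{P,G\}$, express any admissible reweighting of $G_0$ as a reweighting of $G$ with mass at least $(1-\eps')(1-\eps)\ge 1-2\eps$, and then invoke the $(2\eps,\delta)$-stability of $G$. You are in fact slightly more careful than the paper in explicitly absorbing the slack $\eps-\eps'$ into $B$ when $\dtv(P,G)<\eps$.
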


We continue with some technical claims related to stability 
that we prove in \Cref{sec:technical_details_regarding_stability}.
{Let $G$ be an $(\eps,\delta)$-stable distribution with respect 
to $\mu$ and  $w$ a weight function with $\E_{X \sim G}[w(X)] \geq 1- \eps$.
Denoting by $G_w$ the re-weighted distribution 
$G_w(x):=G(x)w(x)/\int_{\R^d} w(x)G(x)\d x$, the stability of $G$ directly implies 
that $1-\delta^2/\eps \leq \E_{X \sim G_{w}}[(v^T(X-\mu))^2] \leq 1+\delta^2/\eps$.}
We require a generalization of this fact for a matrix $\bU$ in place of $v$ and an {arbitrary vector $b$ in place of $\mu$}: 

\begin{restatable}{lemma}{ClTriangle} \label{cl:triangle}
Fix  $0<\eps<1/2$ and $\delta\geq \eps$. 
Let $w:\R^d \to [0,1]$ such that $\E_{X \sim G}[w(X)] \geq 1-\eps$ 
and let $G$ be an $(\eps,\delta)$-stable distribution with respect to $\mu \in \R^d$. 
For any matrix $\vec U \in \R^{d \times d}$ and any vector $b \in \R^d$, we have that 
\begin{align*}
\E_{X \sim G_w}\left[ \|\vec U(X-b)\|_2^2 \right] = 
\|\vec U\|_F^2 (1\pm \delta^2/\eps) + \|\vec U (\mu - b)\|_2^2 \pm 2\delta \, \|\vec U\|_F^2 \| \mu - b\|_2 \;.
\end{align*}
\end{restatable}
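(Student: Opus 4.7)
The plan is to reduce the expectation to quantities that stability directly controls, by expanding the squared norm around $\mu$ rather than $b$. Specifically, I would write
\[
\vec U(X-b) = \vec U(X-\mu) + \vec U(\mu-b),
\]
and expand the squared $\ell_2$-norm into three pieces: a quadratic term $\|\vec U(X-\mu)\|_2^2$, a cross term $2(X-\mu)^T \vec U^T \vec U (\mu - b)$, and a deterministic term $\|\vec U(\mu-b)\|_2^2$. Taking expectation with respect to $G_w$, the third term is unchanged, the cross term becomes $2(\mu_{w,G}-\mu)^T \vec U^T\vec U (\mu-b)$, and the quadratic term becomes $\tr(\vec U^T \vec U \, \overline{\vec\Sigma}_{w,G}^{\mu})$.

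Next I would invoke the stability hypothesis on $G$. Since $\E_{X\sim G}[w(X)] \geq 1-\eps$, Definition~\ref{def:stability1} gives $\|\mu_{w,G}-\mu\|_2\leq \delta$ and $\|\overline{\vec\Sigma}_{w,G}^{\mu}-\vec I_d\|_2\leq \delta^2/\eps$. For the quadratic term, the PSD sandwich $(1-\delta^2/\eps)\vec I_d \preceq \overline{\vec\Sigma}_{w,G}^{\mu} \preceq (1+\delta^2/\eps)\vec I_d$ combined with Fact~\ref{fact:trace_PSD_ineq}(1) applied to $\vec A = \vec U^T\vec U \succeq 0$ yields
\[
\tr(\vec U^T \vec U \, \overline{\vec\Sigma}_{w,G}^{\mu}) = \|\vec U\|_F^2 \,(1\pm \delta^2/\eps).
\]

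For the cross term, I would apply Cauchy--Schwarz to obtain
\[
\bigl|2(\mu_{w,G}-\mu)^T \vec U^T\vec U (\mu-b)\bigr| \leq 2\|\mu_{w,G}-\mu\|_2 \, \|\vec U^T\vec U\|_2 \, \|\mu-b\|_2,
\]
and then use $\|\vec U^T\vec U\|_2 = \|\vec U\|_2^2 \leq \|\vec U\|_F^2$ together with the stability bound $\|\mu_{w,G}-\mu\|_2\leq \delta$. This gives the $\pm 2\delta \|\vec U\|_F^2 \|\mu-b\|_2$ error term in the statement. Combining the three estimates and noting that $\|\vec U(\mu-b)\|_2^2$ appears unchanged from the expansion gives the claimed equality.

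I don't expect a serious obstacle here: the only non-obvious choice is to center at $\mu$ rather than at $b$ (so that stability applies), and the only slight looseness is bounding the operator norm of $\vec U^T\vec U$ by $\|\vec U\|_F^2$ in the cross term, which matches the form the lemma advertises.
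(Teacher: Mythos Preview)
Your proposal is correct and follows essentially the same approach as the paper: expand around $\mu$, then control the quadratic term via the stability bound on $\overline{\vec\Sigma}_{w,G}^{\mu}$ and the cross term via the stability bound on $\mu_{w,G}-\mu$. The only cosmetic differences are that the paper handles the quadratic term by writing out the spectral decomposition of $\vec U^T\vec U$ and applying stability direction-by-direction (which amounts to re-deriving Fact~\ref{fact:trace_PSD_ineq}(1) inline), and bounds the cross term via a trace inequality $|\tr(\vec U^T\vec U\,(\mu-b)(\mu_{w,G}-\mu)^T)|\leq \tr(\vec U^T\vec U)\,\|(\mu-b)(\mu_{w,G}-\mu)^T\|_2$ rather than your Cauchy--Schwarz plus operator-norm bound; both routes yield the same final estimate.
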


We use this to show \Cref{cor:shift}, which will be required when proving correctness of our algorithm. 
Although its exact role will become clearer later on, 
the corollary will be relevant to our analysis because we will filter out outliers 
using scores of the form $\|\vec U(x-b)\|_2^2$ for each point $x$.

\begin{restatable}{corollary}{CorShift} \label{cor:shift}
Fix  $0<\eps<1/2$ and $\delta\geq \eps$. 
Let $G$ be an $(\eps,\delta)$-stable distribution with respect to $\mu \in \R^d$. 
Let a matrix $\bU \in \R^{d \times d}$ and a function $w:\R^d \to [0,1]$ 
with $\E_{X \sim G}[w(X)] \geq 1-\eps$. For the function $\tilde{g}(x) = \|\vec U(x-b)\|_2^2$, we have that
\begin{align*}
(1-\eps) \|\vec U\|_F^2(1-\delta^2/\eps -2 \delta\|b - \mu \|_2) 
\leq \E_{X \sim G}[w(X)\tilde{g}(X)] \leq \|\vec U\|_F^2 \left(1+\delta^2/\eps + \|b -  \mu \|^2_2 + 2\delta\|b -  \mu \|_2\right) \;.
\end{align*}
\end{restatable}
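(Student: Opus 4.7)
The plan is to reduce the statement to \Cref{cl:triangle} via the identity
\[
\E_{X\sim G}[w(X)\tilde g(X)] \;=\; \E_{X\sim G}[w(X)] \cdot \E_{X\sim G_w}[\tilde g(X)],
\]
which is immediate from the definition $G_w(x) = G(x)w(x)/\int w(y)G(y)\,\d y$. Denoting $\alpha := \E_{X\sim G}[w(X)]$, the hypothesis gives $1-\eps \le \alpha \le 1$, so I need only bound $\E_{X\sim G_w}[\tilde g(X)]$ and multiply by $\alpha$.

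Applying \Cref{cl:triangle} with the given $\vec U$ and $b$ yields
\[
\E_{X\sim G_w}[\tilde g(X)] \;=\; \|\vec U\|_F^2(1 \pm \delta^2/\eps) + \|\vec U(\mu-b)\|_2^2 \pm 2\delta\|\vec U\|_F^2 \|\mu-b\|_2.
\]
For the upper bound, I take the ``$+$'' sign on both $\pm$ terms and control the cross term $\|\vec U(\mu-b)\|_2^2$ by the submultiplicative chain $\|\vec U(\mu-b)\|_2 \le \|\vec U\|_2\|\mu-b\|_2 \le \|\vec U\|_F\|\mu-b\|_2$, yielding
\[
\E_{X\sim G_w}[\tilde g(X)] \;\le\; \|\vec U\|_F^2\bigl(1 + \delta^2/\eps + \|\mu-b\|_2^2 + 2\delta\|\mu-b\|_2\bigr).
\]
Multiplying by $\alpha \le 1$ preserves the inequality since the right-hand side is nonnegative, giving the claimed upper bound.

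For the lower bound, I take the ``$-$'' sign and simply drop the nonnegative term $\|\vec U(\mu-b)\|_2^2 \ge 0$ to obtain
\[
\E_{X\sim G_w}[\tilde g(X)] \;\ge\; \|\vec U\|_F^2\bigl(1 - \delta^2/\eps - 2\delta\|b-\mu\|_2\bigr).
\]
Multiplying by $\alpha \ge 1-\eps$ then yields the claimed bound whenever the bracketed factor is nonnegative (which is the regime of interest, since otherwise the bound is vacuous).

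There is no substantive obstacle here; the corollary is essentially a bookkeeping consequence of \Cref{cl:triangle}, together with the rescaling between $G_w$ and the unnormalized weighted measure $w\cdot G$, plus the operator/Frobenius inequality. The only point requiring a brief word of care is the sign-sensitivity of multiplying by $\alpha$ in the lower bound, which is handled by discarding the nonnegative $\|\vec U(\mu-b)\|_2^2$ contribution before the multiplication.
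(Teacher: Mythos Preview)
Your proof is correct and follows essentially the same route as the paper: both reduce to \Cref{cl:triangle} via the identity $\E_{X\sim G}[w(X)\tilde g(X)] = \E_{X\sim G}[w(X)]\cdot \E_{X\sim G_w}[\tilde g(X)]$, then bound $\|\vec U(\mu-b)\|_2^2$ by $\|\vec U\|_F^2\|\mu-b\|_2^2$ for the upper bound and drop it for the lower bound. Your remark on the sign case in the lower bound (the stated bound being vacuous when the bracketed factor is negative, since the left-hand side is nonnegative) is a nice extra clarification that the paper leaves implicit.
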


\section{Filtering Algorithm with Small Number of Iterations} %
\label{sec:near_linear_main_body}
In this section, we develop a filtering algorithm (in the batch setting) 
that terminates in $\polylog(d/\epsilon)$ iterations for any stable set.
This leads to an algorithm that runs in near-linear time, i.e., $nd \, \polylog(nd/\eps)$, 
generalizing the results of \cite{dong2019quantum,DKKLT21}.
Crucially, this algorithm will form the building block of our streaming 
algorithm in \Cref{sec:low-memory-main}. 
We remark that the algorithm of this section 
works even against the \emph{strong-contamination model} 
(\Cref{def:strongadv} below), 
where the outliers may not be i.i.d.\ samples from any distribution, 
but are allowed to be completely arbitrary. 

\begin{definition}[Strong Contamination Model]\label{def:strongadv}
Given a parameter $0<\eps<1/2$ and a class of distributions $\mathcal{D}$, 
the strong adversary operates as follows: The algorithm specifies 
a number of samples $n$, then the adversary draws a set  of $n$ i.i.d.\ samples 
from some $D \in \mathcal{D}$ and after inspecting them, 
removes up to $\eps n$ of them and replaces them with arbitrary points. 
The resulting set is given as input to the learning algorithm. 
We call a set $\eps$-corrupted if it has been generated by the above process.
\end{definition}

The main result of this section is the following.

\begin{theorem}\label{thm:near-linear-only}
Let $d \in \Z_+$, $0<\failp <1$, $0<\eps<\eps_0$ for a sufficiently 
small constant $\eps_0$, and $\delta \geq \eps$. 
Let $S_0$ be a set of $n$ points that is $(C\eps,\delta)$-stable with respect to the (unknown) 
vector $\mu \in \R^d$, for a sufficiently large constant $C>0$. 
Let $S$ be an $\eps$-corrupted version of $S_0$ in the strong contamination model.
There exists an algorithm that given $\eps,\delta,\failp$, and $S$, 
runs in time $n d   \,\polylog\left(d,n,1/\eps,1/\failp  \right)$, 
and outputs a vector $\widehat{\mu}$ such that, with probability at least $1-\failp$,
it holds $\|\mu - \widehat{\mu}\|_2 = O(\delta)$.
\end{theorem}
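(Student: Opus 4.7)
The plan is to extend the iterative filtering framework so that it terminates after only $\polylog(d/\eps)$ rounds rather than the usual $\Omega(d)$. Throughout, maintain a weight function $w: S \to [0,1]$ initialized to $w \equiv 1$, and work with the recentered deviation matrix $\vec M_w := \E_{X \sim \cU(S)_w}[(X-\mu_w)(X-\mu_w)^T] - \vec I_d$, where $\mu_w$ is the weighted empirical mean. The Certificate Lemma (\Cref{lem:certificate}), combined with \Cref{lem:dtv-stab}, guarantees that once at most $O(\eps)$ inlier mass has been removed \emph{and} $\lambda_{\max}(\vec M_w) \leq O(\delta^2/\eps)$, the output $\widehat \mu := \mu_w$ satisfies $\|\widehat\mu - \mu\|_2 = O(\delta)$, so the algorithmic goal is simply to drive $\lambda_{\max}(\vec M_w)$ down fast while not damaging the inlier weight.

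The algorithm iterates as follows. In each round, run a few rounds of power iteration on $\vec M_w$ to estimate $\lambda_{\max}(\vec M_w)$; each mat-vec product $\vec M_w v$ takes $O(nd)$ time, since $\vec M_w$ is represented implicitly through the reweighted samples. If $\lambda_{\max}(\vec M_w) \leq C_0\,\delta^2/\eps$ for a suitable constant $C_0$, halt and return $\mu_w$. Otherwise, fix an even integer $p = \Theta(\log(d/\eps))$, compute scores $g(x) := \bigl\langle x - \mu_w,\, (\vec M_w^+)^p\, (x - \mu_w)\bigr\rangle$ (where $\vec M_w^+$ denotes the positive part of $\vec M_w$, since negative eigenvalues of $\vec M_w$ are automatically $O(\delta^2/\eps+\eps)$ under the stability hypothesis), and apply the soft update $w(x) \gets w(x)\bigl(1 - g(x)/g_{\max}\bigr)$, which fully eliminates the worst point in each round.

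For correctness, track the trace potential $\Phi_w := \tr\bigl((\vec M_w^+)^{p+1}\bigr)$. An easy identity gives $\sum_{x \in S} w(x) g(x) \approx |S_w|\bigl(\Phi_w + \tr((\vec M_w^+)^p)\bigr)$, while \Cref{cor:shift} with $\vec U = (\vec M_w^+)^{p/2}$ bounds the inlier contribution $\sum_{x \in S_0} w(x) g(x)$ by $O\bigl(|S_0|\,\tr((\vec M_w^+)^p)\bigr)$. When the stopping test fails, $\Phi_w \gtrsim \lambda_{\max}(\vec M_w) \cdot \tr((\vec M_w^+)^p) \gg (\delta^2/\eps)\,\tr((\vec M_w^+)^p)$, so outliers account for the bulk of the total score. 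A standard two-sided accounting (as in \cite{DKKLT21}) then yields (i) the inlier mass removed per round is strictly smaller than the outlier mass removed, so the cumulative inlier loss stays $O(\eps)$ and the stability-derived bounds continue to apply; and (ii) $\Phi_w$ shrinks by a $(1-\Omega(1))$ multiplicative factor per round. Since $\Phi_w \leq d \cdot \poly(n)$ initially and $\Phi_w \geq (\delta^2/\eps)^{p+1}$ whenever the stopping test fails, the process terminates within $\polylog(d,n,1/\eps)$ rounds. The main obstacle of this step is the delicate trade-off between the potential decrease and the inlier-mass bookkeeping, which is what forces the choice $p = \Theta(\log(d/\eps))$.

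For the running time the bottleneck is recomputing all $n$ scores each round: forming $(\vec M_w^+)^{p/2}$ explicitly would cost $\Omega(d^2)$. The remedy is a Johnson--Lindenstrauss sketch (\Cref{fact:jl}): draw $L = O(\log(n/\failp))$ random sign vectors $z_1,\dots,z_L \in \{\pm 1\}^d$, precompute $u_j := (\vec M_w^+)^{p/2} z_j$ by $p/2$ successive mat-vec products of cost $O(nd)$ each, and approximate $\tilde g(X_i) := \frac{1}{L}\sum_{j=1}^L \langle u_j, X_i - \mu_w\rangle^2$. By \Cref{fact:jl}, $\tilde g$ preserves every score within a constant factor simultaneously over $i \in [n]$ with high probability, which is more than sufficient slack for the potential-decrease argument to go through with $\tilde g$ in place of $g$. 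Isolating $\vec M_w^+$ from $\vec M_w$ amounts to extracting the subspace of eigenvalues above the threshold $C_0\delta^2/\eps$, which can be done with $O(\log d)$ power iterations at cost $O(nd\log d)$. Aggregating over the $\polylog(d,n,1/\eps)$ rounds, the total runtime is $n d \cdot \polylog(d,n,1/\eps,1/\failp)$, matching the claim.
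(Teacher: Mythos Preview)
Your high-level plan matches the paper's: raise a shifted covariance to power $p=\Theta(\log d)$, score points by the resulting quadratic form, track a trace potential of order $p{+}1$, and accelerate per-round scoring via a JL sketch. But two technical ingredients are missing, and each breaks the argument as written.

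First, without \emph{thresholding} the scores, the safety claim (inliers lose less weight than outliers) fails in the regime $\delta^2/\eps \ll \lambda_{\max}(\vec M_w) \lesssim 1$. \Cref{cor:shift} indeed gives the inlier contribution to $\sum_x w(x)g(x)$ as $\approx (1-\eps)|S|\,\tr((\vec M_w^+)^p)$, while your own identity gives the total as $\approx |S|\bigl(\Phi_w+\tr((\vec M_w^+)^p)\bigr)$. Since $\Phi_w\le \lambda_{\max}\cdot\tr((\vec M_w^+)^p)$, once $\lambda_{\max}<1-2\eps$ the inliers carry the majority of the score mass and your $(1-g/g_{\max})$ update removes more inlier than outlier weight; your sentence ``$\Phi_w\gg(\delta^2/\eps)\tr((\vec M_w^+)^p)$, so outliers account for the bulk'' conflates $\delta^2/\eps$ with $1$. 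The paper handles this by working with the thresholded score $\tau_t(x)=g_t(x)\,\1\{g_t(x)>C_3\|\vec M_t\|_F^2\lambda_t/\eps\}$ and proving (\Cref{lem:goodpartsmall}, via \Cref{lem:smallsetweight}) that the inlier contribution above threshold is only $O(\lambda_t\|\vec M_t\|_F^2)$, which is the correct scale for the comparison.

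Second, your potential $\Phi_w=\tr((\vec M_w^+)^{p+1})$ with $\vec M_w=\vec\Sigma_w-\vec I_d$ is not monotone under downweighting: the \emph{normalized} covariance $\vec\Sigma_w$ can increase when the denominator $\E[w]$ shrinks. The paper's $\vec B_t=(\E[w_t])^2\vec\Sigma_t-(1-C_1\delta^2/\eps)\vec I_d$ is engineered precisely so that $\vec B_{t+1}\preceq\vec B_t$ (\Cref{cl:decreasing}, via the symmetrized form $(\E[w])^2\vec\Sigma=\tfrac12\E[w(X)w(Y)(X-Y)(X-Y)^T]$), and this monotonicity is what licenses the key step $\tr(\vec B_{t+1}^{p+1})\le\tr(\vec B_t^{p}\vec B_{t+1})$ through \Cref{fact:trace_PSD_ineq}(2), tying the new potential to the just-downweighted scores. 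Without an analogue of this, there is no mechanism connecting your update to $\Phi_{w'}$. Relatedly, the shift $-(1-C_1\delta^2/\eps)\vec I_d$ makes $\vec B_t$ PSD outright (\Cref{cl:PSD}), so no positive-part extraction is needed; your suggestion to isolate $\vec M_w^+$ with $O(\log d)$ power iterations does not work, since power iteration yields one eigenvector at a time rather than a spectral projector onto an entire eigenspace.
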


\noindent We note that \Cref{thm:near-linear-only} applies to any stable set. 
By \Cref{fact:stabilityfromsamples}, we directly obtain
(i) an $O(\eps \sqrt{\log(1/\eps)})$-accurate estimator 
given $O(d/(\eps^2/\log(1/\eps)))$ many $\eps$-corrupted samples 
from an identity covariance subgaussian distribution; 
and (ii) an $O(\sqrt{\eps})$-accurate estimator 
for any distribution $X\sim D$ with $\cov[X] \preceq \bI_d$, 
given $\tilde{O}(d/\eps)$ many $\eps$-corrupted samples.

\subsection{Setup and Algorithm Description}\label{ssec:near_linear_alg}

The pseudocode of the algorithm
establishing \Cref{thm:near-linear-only} is presented in \Cref{alg:near-linear-only}.  
We will define the necessary notation as needed (see the pseudocode for details). 
First, we assume that the distribution over the input samples is of the form 
$P=(1-\eps)G+\eps B$, where $G$ is the uniform distribution over the stable set of inliers 
and $B$ is the uniform distribution on the outliers. Although this mixture may seem 
to suggest that the adversary only adds points, it is without loss of generality. 
Indeed, in the case that the adversary also removes points, 
we can think of $G$ as the distribution of the remaining inliers 
(which continues to be stable with slightly worse parameters; see \Cref{lem:dtv-stab}).

We begin with a high-level explanation of \Cref{alg:near-linear-only}. 
At each iteration $t$, we assign a weight $w_t(x) \in [0,1]$ to each point $x$. 
Let $P_t$ be the distribution on $S$, weighted according to $w_t$.
Let $\mu_t$ and $\vec \Sigma_t$ be the mean and covariance of $P_t$, respectively. 
We want to assign scores to each point, using spectral properties of $\vec \Sigma_t$ 
and the stability of inliers, so that the scores over outliers are more than those of inliers.
Essentially, if a direction $v$ has variance larger than $1 + \Omega(\delta^2/\epsilon)$, 
then the stability of inliers implies that this must be due to outliers. 
Thus, we can assign scores based on the values $(v^T(x - \mu_t))^2$ 
that have provably more mass on outliers than inliers. 
The filters proposed in \cite{DKKLMS16,DKK+17} assigned scores based on a single direction, 
the leading eigenvector of $\vec \Sigma_t$, and can take as many as $\Omega(d)$ iterations 
(see \Cref{sec:overview_of_techniques}). 

To reduce the number of iterations, we need to filter in all directions 
of large variance simultaneously. 
Letting $\vec B_t \approx \vec \Sigma_t - (1 - C_1 \delta^2/\epsilon)\vec I_d$,
we would like to filter along all directions where the eigenvalue of 
$\bB_t$ is within a constant factor from $\lambda_t:=\|\bB_t\|_2$,
not necessarily the leading eigenvector of $\vec B_t$.
As we show in \Cref{sec:correctness}, this can be approximately 
achieved by assigning scores for each point $x$ based on 
$g_t(x):=\|\vec M_t(x - \mu_t)\|_2^2$, where $\vec M_t = \vec B_t^{\log d}$.
At a high level, this happens because the spectrum of $\vec M_t$ 
is distributed across along {\it all} large eigenvectors of $\vec B_t$.

\begin{algorithm}[h]  
    \caption{Robust Mean Estimation in polylog iterations}\label{alg:near-linear-only}
    \begin{algorithmic}[1]   
    \State \textbf{Input:} $S=\{x_i\}_{i \in [n]},\delta, \eps$
    \State Let $C_1 \geq 22$, $C$ be a sufficiently large constant, $C_2=100C$ and $C_3 = 0.1$.
      \State Let $R=\sqrt{(d/\eps)(1+\delta^2/\eps)}$.
      \State Let $P=(1-\eps)G+\eps B$ be the empirical distribution on the points from $S$. \footnotemark
      \State Let {$\outerl = C \log d \log \left(dR/\eps \right)$, $\innerl =  C \log((n+d) \outerl/\failp)$.}
      
      \State \label{line:naive_approx}
      Obtain a na\"ive estimate $\widehat{\mu}$ of $\mu$ with $\|\widehat{\mu} - \mu\|_2 \leq 4R$.%
      \State \label{line:roughfilter} Initialize $w_1(x) \gets \1\{\|x - \widehat{\mu}\|_2 \leq 5R\}$ for all $x \in S$. 
      \For{ $t \in [\outerl]$}
        \State Let $P_{t}$ be the distribution of $P$ weighted by $w_t$.
        \State Let $\mu_t, \vec{\Sigma}_t$ be the mean and covariance of $P_t$.
        \State Let $\vec B_t = (\E_{X\sim P} [w_t(X)])^2 \vec \Sigma_t - \left(1 -  C_1 \frac{ \delta^2}{\epsilon} \right)\vec I_d$ 
        \State Let $\vec M_t = \vec B_t^{\log d}$.  \Comment{{$\vec M_t$ does not need to be explicitly calculated.}}
        \State Let $\lambda_t =\|\vec B_t\|_2$ 
        \State Find \ $\widehat{\lambda}_t \in [0.8 \lambda_t,1.2\lambda_t]$ by power iteration. \Comment{{See \Cref{remark:multiplication_impl} for efficient implementation.}} \label{line:poweriter} %
        \If{$ \widehat{\lambda}_t > C_2 \delta^2/\epsilon$}  \label{line:lambda}
        \For {$j \in [\innerl]$}   \label{line:jl}   %
              \State $z_{t,j} \sim \cU (\{ \pm 1\}^d)$, 
              \State $v_{t,j} \gets {\vec{M}}_t z_{t,j}$. \Comment{{See \Cref{remark:multiplication_impl} for efficient implementation.}}\label{line:calc_v}
        \EndFor
        \State Denote by $\bU_t$ the matrix having the vectors $\frac{1}{\sqrt{\innerl}}v_{t,j}$ for  $j \in [\innerl]$ as rows.
        \State Let $\tilde{g}_t(x)=\|\vec U_t(x-\mu_{t})\|^2_2$ and  $\tilde{\tau}_{t}(x)=\tilde g_{t}(x) \1\{\tilde{g}_{t}(x) > C_3\|\vec U_t\|_F^2 \widehat{\lambda}_t/\eps\}$  \label{line:thresholding},  %
        \State $\ell_{\max} \gets ({dR}/{\eps})^{C \log d}$, $T \gets 0.01 \widehat{\lambda}_t \|\vec U_t\|_F^2$.  \label{line:set_ell_max}
        \State $w_{t+1} \gets  \mathrm{DownweightingFilter}(P,w_t,\tilde \tau_t, R,T, \ell_{\max})$ \Comment{\Cref{alg:reweighting}}
        \EndIf
        \EndFor
      \State \textbf{return} $\mu_t$.
      \label{line:filter_end}
    \end{algorithmic}  
  \end{algorithm} 
 \footnotetext{Without loss of generality, outliers are within $O(R)$ from $\mu$ in $\ell_2$-norm. 
 This is ensured in  \Cref{line:roughfilter}, which removes only $\eps$-fraction of inliers (\Cref{cl:radius}). } \label{footnote:1}

\begin{algorithm}[h]  
    \caption{Downweighting Filter} 
      \label{alg:reweighting}
    \begin{algorithmic}[1]
      \State \textbf{Input}: 
      $P,w,\tilde{\tau}, R, T,\ell_{\max}$
      \State $r \gets C  dR^{2+4\log d}$.  %
      \State Let $w_\ell(x)= w(x)(1 - \tilde{\tau}(x)/r)^\ell$.
      \State Find the smallest $\ell \in  \{1,\ldots, \ell_{\max} \}$ satisfying $\E_{X \sim P}\left[ w_\ell(X) \tilde{\tau}(X)   \right] \leq 2T$ 
       using binary search. \label{line:filter-cond}
        \State \textbf{return} $w_\ell$.
    \end{algorithmic}  
  \end{algorithm}

Even though assigning scores based on $\vec M_t$, i.e., $g_t(x)$, 
reduces the number of iterations, computing $g_t(x)$ for all $x \in S$ is slow.
We thus use a Johnson-Lindenstrauss (JL) sketch of $\vec M_t$, 
denoted by $\vec U_t$. We denote by $\tilde{g}_t(x):= \|\vec U_t (x - \mu_t)\|_2$ 
the resulting scores.
We claim that the set $\{\tilde{g}_t(x)\}_{x \in S}$ 
can be calculated in time $\tilde{O}(nd)$ 
such that for each $x \in S$, $\tilde{g}_t(x) \approx g_t(x)$.
First, we will show (see \Cref{cl:jl-cons}) that
$\vec U_t$ can be as small as $\innerl \times d$, 
where {$\innerl$ is some $\polylog(nd/(\eps\tau))$}, 
which follows from the classical JL lemma 
(stating that $n$ points can be linearly embedded into a $\log n$-dimensional space).
Also, each row of $\vec U_t$ can be computed 
by repeatedly multiplying a vector  $\log d$ times by $\vec B_t$ (\Cref{line:calc_v}). By the following remark, all rows of $\bU_t$ can be computed in time $\tilde{O}(nd)$ and thus, each iteration of \Cref{alg:near-linear-only} 
runs in near-linear time.

{
\begin{remark}(Efficient Implementation)\label{remark:multiplication_impl}
Note that for any $v \in \R^d$, the vector $\vec B_t v$ can be calculated in  $O(nd)$ time. This is because $\vec \Sigma_t v =  \sum_{x \in S} w_t(x) (v^T(x-\mu_t))(x-\mu_t)/(\sum_{x \in S} w_t(x))$ which means that the result  can be computed 
in $O(nd)$ time by calculating $\mu_t$ and $v^T(x-\mu_t)$ first. Regarding \Cref{line:poweriter}, an approximate large eigenvector can be computed via power iteration, 
i.e., starting from a random Gaussian vector and multiplying by $\bB_t$ 
iteratively  $\log d$ many times (see, for example, \cite{BlumEtAl2020}). As mentioned above, each of these multiplications can be done in $O(nd)$ time.
\end{remark}
}

For the proof of correctness, we require that the JL 
and spectral approximations used by the algorithm are sufficiently accurate. 
We prove that the following event occurs with high probability.

\begin{restatable}[Deterministic Conditions For \Cref{alg:near-linear-only}]{condition}{CondNearLinear} %
\label{cond:near-linear}
For all $t \in [\outerl]$, the following hold:
\begin{enumerate}
    \item \emph{Spectral norm of $\bB_t$}:  $\widehat{\lambda}_t \in  [0.8 \lambda_t,  1.2 \lambda_t]$. \label{as:lambda_near_linear}
    \item \emph{Frobenius norm}:  $\| \bU_t \|_F^2 \in \left[0.8 \| \bM_t\|_F^2, 1.2 \| \bM_t\|_F^2\right]$.\label{as:frobenious_near_linear}
    \item \emph{Scores}: For all $x \in S$, $\tilde{g}_t(x) \in [0.8 g_t(x),  1.2 g_t(x)]$.\label{as:scores_approx}
\end{enumerate}%
\end{restatable}

\subsection{Establishing the Deterministic Conditions} \label{sec:deterministic_cond}

In this section, we establish that \Cref{cond:near-linear} holds with high probability. Regarding {
\Cref{as:lambda_near_linear} of this condition}, 
an approximate large eigenvector can be computed via power iteration as described in \Cref{remark:multiplication_impl}.
This gives us an algorithm that runs in time {$O(nd\log d\log(\outerl/\failp))$} 
and satisfies \Cref{as:lambda_near_linear} with probability $1-\failp$.

We now move to the other two conditions. 
The claim is that instead of using the matrix $\bM_t$ to calculate the scores, 
it suffices to store and use only a small set of random  projections $\{\bM_t z_{t,j}\}_{j\in [\innerl]}$. 
This is exactly the Johnson-Lindenstrauss sketch that is computed 
in \Cref{line:jl} of \Cref{alg:near-linear-only}. 
Using \Cref{fact:jl}, we get the following guarantees (see \Cref{sec:omitted_JL} for the proof).

\begin{restatable}{lemma}{clJLCons} \label{cl:jl-cons} 
Fix a set of $n$ points  $x_1,\ldots, x_n \in \R^d$. 
For $t\in [\outerl]$, define ${g}_t(x):= \|{\vec M}_t (x-\mu_{t})\|_2^2$ and let 
$\tilde{g}_t(x),v_{t,j}$ as in \Cref{alg:near-linear-only}. 
If $C$ is a sufficiently large constant and $\innerl = C \log((n+d) \outerl/\failp)$, with probability at least $1-\failp$,  
{for every $t \in [\outerl]$ we have the following:}
\begin{enumerate}
        \item $0.8{g}_t(x_i) \leq \tilde{g_t}(x_i) \leq 1.2 {g}_t(x_i)$ for every $i \in [n]$,
        \item $0.8 \|\vec M_t\|_F^2 \leq \left(\frac{1}{\innerl}\sum_{j=1}^{\innerl}\|v_{t,j}\|_2^2\right) \leq 1.2 \|\vec M_t\|_F^2$.
    \end{enumerate}
\end{restatable}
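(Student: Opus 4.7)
The plan is to prove both items simultaneously by applying the Johnson--Lindenstrauss concentration of Fact~\ref{fact:jl} once per outer iteration $t \in [\outerl]$ and then taking a union bound. I would first fix $t$ and condition on all randomness used in iterations $t' < t$, so that $\vec M_t$, $\mu_t$, and the shifted samples $\{x_i - \mu_t\}_{i \in [n]}$ are deterministic while $\{z_{t,j}\}_{j \in [\innerl]}$ remains i.i.d.\ uniform on $\{\pm 1\}^d$. Since $\vec B_t$ is symmetric, so is $\vec M_t = \vec B_t^{\log d}$, and the $j$-th row of $\vec U_t$ equals $\innerl^{-1/2}\, \vec M_t z_{t,j}$, which gives
\[
\tilde{g}_t(x_i) \;=\; \|\vec U_t(x_i - \mu_t)\|_2^2 \;=\; \frac{1}{\innerl}\sum_{j=1}^{\innerl}\bigl(z_{t,j}^T\, \vec M_t(x_i - \mu_t)\bigr)^2.
\]

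For the first item, I would set $u_i := \vec M_t(x_i - \mu_t)$, so $\|u_i\|_2^2 = g_t(x_i)$, and apply Fact~\ref{fact:jl} to the collection $\{u_1,\dots,u_n\}$ to obtain $\tilde g_t(x_i) \in [0.8,\,1.2]\,g_t(x_i)$ for every $i$. For the second item, let $m_k$ denote the $k$-th row of $\vec M_t$, so that $\|\vec M_t\|_F^2 = \sum_{k=1}^d \|m_k\|_2^2$ and, expanding $\|v_{t,j}\|_2^2 = \|\vec M_t z_{t,j}\|_2^2$ in the standard basis, $\|v_{t,j}\|_2^2 = \sum_{k=1}^d (z_{t,j}^T m_k)^2$. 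Applying Fact~\ref{fact:jl} to the $d$ vectors $\{m_k\}_{k \in [d]}$ (using the same samples $z_{t,j}$) yields per-$k$ approximations that, summed over $k$, give the desired $[0.8,\,1.2]\,\|\vec M_t\|_F^2$ bound on $\innerl^{-1}\sum_j \|v_{t,j}\|_2^2$.

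Both applications involve the same $\innerl$ Rademacher samples evaluated on a total of $n+d$ fixed vectors, so invoking Fact~\ref{fact:jl} with confidence parameter $\failp/\outerl$ requires $\innerl \gtrsim \log((n+d)\outerl/\failp)$, which matches the choice $\innerl = C \log((n+d)\outerl/\failp)$ for a sufficiently large constant $C$. A union bound over the $\outerl$ iterations then yields the overall failure probability at most $\failp$. The only subtle point is the conditioning step: because $\vec M_t$ and $\mu_t$ depend on the earlier Rademacher samples $\{z_{t',j}\}_{t' < t}$, one must be careful that the JL bound is applied to the random vectors $u_i$ and $m_k$. This is legitimate because $\{z_{t,j}\}_{j \in [\innerl]}$ is drawn independently of the history at iteration $t$, so Fact~\ref{fact:jl} applies with the claimed probability conditionally on every realization of the preceding randomness, and then integrates out.
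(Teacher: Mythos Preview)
Your proposal is correct and follows essentially the same approach as the paper: both apply Fact~\ref{fact:jl} per iteration to the vectors $\vec M_t(x_i-\mu_t)$ for item~1 and to the columns (equivalently rows, by symmetry) of $\vec M_t$ for item~2, then union-bound over $t\in[\outerl]$. Your explicit handling of the conditioning on prior iterations is a nice addition that the paper leaves implicit.
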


This concludes the proof {that \Cref{cond:near-linear} is satisfied with high probability.}

\subsection{Downweighting Filter}
\label{sec:downweighting_filter}
We use the following re-weighting procedure also used in \cite{dong2019quantum}.
Recall  {that $P$ denotes} the empirical distribution on the samples, 
which we write as $P=(1-\eps)G + \eps B$, 
where $G$ and $B$ are the {contributions} from the good and bad samples respectively. 
Roughly speaking, our filter guarantees two things when going from the weights $w(x)$ to $w'(x)$:
  \begin{enumerate}
      \item \emph{The weight removed from the outliers is greater than the weight removed from the inliers}.
      \item $\E_{X \sim P}[w'(X)\tilde{\tau}(X)] \leq 2 \E_{X \sim G}[w(X)\tilde{\tau}(X)]$, 
      i.e., \emph{the weighted mean of scores after filtering over both inliers 
      and outliers is at most twice the weighted mean of scores of inliers \emph{before} filtering}.
  \end{enumerate}
Regarding the first guarantee, since the fraction of outliers is at most $\eps$, 
this ensures that the filtered distribution $P_t$ will never be more than $O(\eps)$-far 
in  total variation distance from the initial (corrupted) distribution, 
and thus the condition of the certificate lemma that $\dtv(P,P_t)\leq O(\eps)$ will always be satisfied. 
The second guarantee ensures that the filtering step reduces the average score significantly. 
We prove the following in \Cref{sec:omitted_filter}.

\begin{restatable}{lemma}{Lemfilterguarantee}  \label{lem:filterguarantee}
Let $P=(1-\eps)G + \eps B$ be the empirical distribution on $n$ samples, 
as in \Cref{alg:near-linear-only}. 
If $(1-\eps)\E_{X \sim G}[w(X)\tilde{\tau}(X)] \leq  T$, $\|\tilde{\tau}\|_\infty \leq r$, and $\ell_{\max}>r/T$, 
then \Cref{alg:reweighting} modifies the weight function $w$ to $w'$ such that 
\begin{enumerate}
    \item $ (1-\eps) \E_{X \sim G}[w(X) - w'(X)] < \eps \E_{X \sim B}[w(X) - w'(X)]$,
    \item $\E_{X \sim P}\left[ w'(X) \tilde{\tau}(X)   \right] \leq 2T$,
\end{enumerate}
and the algorithm {terminates after} $O(\log(\ell_{\max}))$ iterations, 
each of which takes $O(n)$ time. 
\end{restatable}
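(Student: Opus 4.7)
The plan is to prove both guarantees via a single telescoping identity, combined with the minimality of the index $\ell^*$ returned by the binary search.

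The first observation is that $\phi(\ell):=\E_{X \sim P}[w_\ell(X)\tilde{\tau}(X)]$ is non-increasing in $\ell$, since $\|\tilde{\tau}\|_\infty \leq r$ forces $1-\tilde{\tau}(x)/r \in [0,1]$, so $w_\ell$ is pointwise non-increasing. This justifies the binary search and gives the runtime claim, since each of the $O(\log\ell_{\max})$ evaluations of $\phi$ is an $O(n)$ sum. I would then establish the existence of a valid $\ell \leq \ell_{\max}$ (and thus guarantee (2)) by the elementary pointwise inequality $y(1-y)^\ell \leq 1/(\ell+1)$ on $y \in [0,1]$ (which is maximized at $y=1/(\ell+1)$), yielding $\phi(\ell_{\max}) \leq r/(\ell_{\max}+1) < T < 2T$ from the hypothesis $\ell_{\max} > r/T$.

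The heart of the proof is guarantee (1). The key algebraic identity, which follows from $w_k - w_{k+1} = w_k\tilde{\tau}/r$ and telescoping, is
\[
  r\,\E_{X\sim Q}[w(X)-w_\ell(X)] \,=\, \sum_{k=0}^{\ell-1} \E_{X\sim Q}[w_k(X)\tilde{\tau}(X)]
\]
for any distribution $Q$. Applying it with $Q=P$, and invoking the minimality of $\ell^*$ to get $\phi(k) > 2T$ for $0 \leq k < \ell^*$, gives $r\,\E_P[w-w_{\ell^*}] > 2\ell^* T$. Applying the identity with $Q=G$, combined with $w_k \leq w$ and the lemma's hypothesis $(1-\eps)\E_G[w\tilde{\tau}] \leq T$, gives $r(1-\eps)\E_G[w-w_{\ell^*}] \leq \ell^* T$. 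Subtracting via $\eps\E_B[f] = \E_P[f]-(1-\eps)\E_G[f]$ yields
\[
  \eps\E_B[w-w_{\ell^*}] \,>\, \ell^* T/r \,\geq\, (1-\eps)\E_G[w-w_{\ell^*}],
\]
which is the desired strict inequality in (1).

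The main subtlety I expect to address is the edge case $\phi(0) \leq 2T$: there the binary search would return $\ell^* = 1$, but my argument that $\phi(k) > 2T$ for all $k < \ell^*$ fails at $k=0$. This regime never arises in the calls made by \Cref{alg:near-linear-only}, because the spectral gate $\widehat{\lambda}_t > C_2\delta^2/\eps$ together with the choice $T=0.01\,\widehat{\lambda}_t\|\vec U_t\|_F^2$ and the stability of the inliers (via \Cref{cor:shift}) force the outlier contribution to $\phi(0)$ to exceed $2T$ whenever the filter is invoked. Handling this cleanly is the only nonroutine point; everything else is the telescoping identity and a pointwise bound on $y(1-y)^\ell$.
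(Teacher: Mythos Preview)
Your approach is correct and is essentially the same as the paper's. The paper argues step-by-step: at each iteration $k$ before termination it shows $(1-\eps)\E_G[w_k-w_{k+1}]\le T/r$ and $\eps\E_B[w_k-w_{k+1}]>T/r$, which is exactly your telescoped identity read one term at a time; summing recovers your bounds on $\E_P[w-w_{\ell^*}]$ and $(1-\eps)\E_G[w-w_{\ell^*}]$. For termination the paper uses $(1-y)^\ell\le e^{-\ell y}$ and $ye^{-\ell y}\le 1/(e\ell)$ in place of your $y(1-y)^\ell\le 1/(\ell+1)$, a cosmetic difference. The edge case you flag ($\phi(0)\le 2T$) is also not handled in the paper's proof: it only establishes the per-step inequality for iterations ``for which the algorithm has not stopped yet,'' so if no step occurs the strict inequality in (1) degenerates to $0<0$. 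Your contextual resolution (the spectral gate rules this out in every call from \Cref{alg:near-linear-only}) is the right way to close it; the paper leaves this implicit.
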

We note that the two conditions $\|\tilde{\tau}\|_\infty \leq r$, $\ell_{\max}>r/T$ 
of \Cref{lem:filterguarantee} hold by our choice of $\ell_{\max}$ 
and $r$ inside \Cref{alg:near-linear-only} and \Cref{alg:reweighting} as follows. 
For $\|\tilde{\tau}\|_\infty$, we have the following upper bound
\begin{align} \label{eq:bound_on_tau}
    \tilde{\tau}_t(x) &\leq \tilde{g}_t(x) \leq \| \bU_t(x-\mu_t)\|_2^2
    \lesssim R^2 \|\bU_t\|_2^2 \lesssim R^2 \|\bM_t\|_F^2
    \lesssim R^2 \| \vec \Sigma_t\|_2^{2\log d} = O(d R^{2+4 \log d} )\;,
\end{align}
where we used the guarantee of our JL sketch that 
$\|\bU_t\|_2^2 \leq 1.2 \|\bM_t\|_F^2$ (\Cref{cl:jl-cons}).
A crude upper bound on $r/T$ follows from the following inequalities:
\begin{align*}
      \frac{r}{T} \lesssim \frac{ d R^{2+4 \log d}}{\widehat{\lambda}_t \| \bM_t\|_F^2}
    \lesssim \frac{ d R^{2+4 \log d} }{{\lambda}_t \| \bM_t\|_F^2} 
    \lesssim \left( \frac{dR}{\delta^2/\eps}  \right)^{O(\log d)} \;,
\end{align*}
where the first inequality uses the values {of $r$ and $T$ as set in the algorithm}, 
the second inequality uses  \Cref{as:lambda_near_linear,as:frobenious_near_linear} 
of the deterministic conditions of \Cref{cond:near-linear}, 
and  the last inequality uses the fact that $\|\vec M_t \|_F^2 \geq \|\vec M_t \|_2^2$ and $\|\vec M_t \|_2^2$ 
cannot be smaller than $ ({C_2}\delta^2/\eps)^{O(\log d)}$ 
(otherwise \Cref{line:lambda1} terminates the algorithm).

We now {use the guarantees of} \Cref{alg:reweighting} as follows. 
We first show that the weighted mean of the inliers' scores is small.
\begin{restatable}{lemma}{lemGoodpartsmall} \label{lem:goodpartsmall}
Under the setting of \Cref{alg:near-linear-only} and the deterministic \Cref{cond:near-linear}, 
we have that $\E_{X \sim G}[w_{t}(X)\tau_t(X)]$ and $ \E_{X \sim G}[w_{t}(X)\tilde{\tau}_t(X)]$ 
are {bounded from above by} $c \lambda_t \|\vec M_t\|_F^2 $
for some constant $c$ of the form $c=C/C_2$, 
where $C_2$ is the constant used in \Cref{line:lambda} and $C$ is some absolute constant.
\end{restatable}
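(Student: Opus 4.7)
The approach is to use the deterministic conditions of \Cref{cond:near-linear} to reduce to the unsketched quantity $\E_G[w_t g_t\, \mathbf{1}\{g_t > T'\}]$ for $T' \asymp \|\bM_t\|_F^2\, \lambda_t/\eps$, and then sandwich this using \Cref{cor:shift} (upper minus lower bound).

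First, since \Cref{as:scores_approx} gives $\tilde g_t(x) \leq 1.2\, g_t(x)$, and the threshold in the definition of $\tilde\tau_t$ satisfies $C_3\|\bU_t\|_F^2 \widehat\lambda_t/\eps \geq (C_3/2)\,\|\bM_t\|_F^2\, \lambda_t/\eps =: T'$ by \Cref{as:frobenious_near_linear,as:lambda_near_linear}, I would conclude $\tilde\tau_t(x) \leq 1.2\, g_t(x)\,\mathbf{1}\{g_t(x) > T'\}$, and likewise for $\tau_t$ (without the $1.2$ factor). Thus it suffices to upper bound $\E_G[w_t(X) g_t(X)\,\mathbf{1}\{g_t(X) > T'\}]$ by $(C/C_2)\,\lambda_t\,\|\bM_t\|_F^2$.

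Next, I would write this expectation as the difference $\E_G[w_t g_t] - \E_G[w_t' g_t]$, where $w_t'(x) := w_t(x)\mathbf{1}\{g_t(x)\leq T'\}$, and apply \Cref{cor:shift} (with $\bU = \bM_t$, $b = \mu_t$) on both terms. For the upper bound, \Cref{cor:shift} yields
\begin{align*}
\E_G[w_t(X) g_t(X)] \leq \|\bM_t\|_F^2 \bigl(1 + \delta^2/\eps + \|\mu_t - \mu\|_2^2 + 2\delta\|\mu_t - \mu\|_2\bigr),
\end{align*}
and each of the three extra summands is $O(\lambda_t/C_2)$: the entry condition of the if-branch, $\widehat\lambda_t > C_2\delta^2/\eps$, together with \Cref{as:lambda_near_linear}, gives $\delta^2/\eps \leq (1.2/C_2)\lambda_t$; \Cref{lem:certificate} applied to $P_t$ gives $\|\mu_t-\mu\|_2 = O(\delta + \sqrt{\eps\lambda_t})$, hence $\|\mu_t-\mu\|_2^2 + 2\delta\|\mu_t-\mu\|_2 = O(\delta^2 + \eps\lambda_t)$; and since $\lambda_t \geq C_2\eps$ (using $\delta\geq\eps$ so that $\delta^2/\eps\geq\eps$), both $\delta^2$ and $\eps$ are at most $\lambda_t/C_2$. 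Symmetrically, the lower bound of \Cref{cor:shift} gives $\E_G[w_t'(X) g_t(X)] \geq \|\bM_t\|_F^2\bigl(1 - O(\lambda_t/C_2)\bigr)$ as long as $\E_G[w_t']\geq 1 - C_0\eps$, with $C_0$ being the stability parameter of $G$. Subtracting these two estimates yields the target bound $O(\lambda_t/C_2)\,\|\bM_t\|_F^2$, which, combined with the factor $1.2$ from Step~1, closes the proof.

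The principal obstacle is verifying the hypothesis $\E_G[w_t'] \geq 1 - C_0\eps$ needed to invoke the lower bound of \Cref{cor:shift}. The algorithm's invariant from \Cref{lem:filterguarantee} ensures $\E_G[w_t]\geq 1-O(\eps)$, so it suffices to bound $\Pr_G[g_t(X) > T']$. Markov's inequality together with $\E_G[g_t(X)] = O(\|\bM_t\|_F^2)$ (itself a specialization of \Cref{cor:shift} with $w\equiv 1$) yields $\Pr_G[g_t > T'] \leq \E_G[g_t]/T' = O(\eps/\lambda_t)$, which is $O(\eps)$ when $\lambda_t$ is bounded below by a constant but only $O(1/C_2)$ in the borderline regime $\lambda_t = \Theta(C_2\eps)$; closing the argument in that regime requires a slightly finer treatment, e.g., by choosing $C_2$ large enough relative to $C_0$ and $\eps_0$, or by a short case split on whether $\lambda_t \geq 1$.
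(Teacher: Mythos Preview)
Your overall decomposition---write the tail expectation as $\E_G[w_t g_t] - \E_G[w_t' g_t]$ and sandwich via the two sides of \Cref{cor:shift}---is exactly what the paper does in its \Cref{lem:smallsetweight}. The obstacle you flag at the end is real, however, and neither of your two suggested patches closes it.

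In the regime where $\lambda_t$ sits just above the stopping threshold, say $\lambda_t \asymp C_2\delta^2/\eps$ with $\delta$ close to $\eps$, Markov only gives $\E_G[\1\{g_t > T'\}] \lesssim \eps/(C_3\lambda_t) \lesssim 1/(C_3 C_2)$, an absolute constant independent of $\eps$. Since the lower-bound side of \Cref{cor:shift} requires $\E_G[w_t'] \geq 1 - O(\eps)$ for a \emph{fixed} stability constant, and $\eps$ may be arbitrarily small, no choice of $C_2$ pushes that constant below $O(\eps)$ uniformly. The case split on $\lambda_t \geq 1$ handles only one branch; for $\lambda_t < 1$ you offer no argument, and the trivial bound $\E_G[w_t\tau_t] \leq \E_G[w_t g_t] = O(\|\bM_t\|_F^2)$ is far from the target $c\lambda_t\|\bM_t\|_F^2$.

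The paper repairs this with a short bootstrap (\Cref{cl:weight_of_nonzero}). Define $u^*$ as the largest threshold at which the super-level set $L_t^* := \{g_t > u^*\}$ still has weighted $G$-mass $\eps$. Because $L_t^*$ has mass $\eps$ \emph{by construction}, the hypothesis of \Cref{cor:shift} is met for $w_t\1\{x \notin L_t^*\}$, and your Step~2 goes through to give $\E_G[w_t g_t\1\{g_t>u^*\}] \leq (C'/C_2)\lambda_t\|\bM_t\|_F^2$. Dividing by the mass $\eps$, the conditional mean of $g_t$ over $L_t^*$ is at most $(C'/C_2)(\lambda_t/\eps)\|\bM_t\|_F^2$, so $u^*$ itself is at most this value. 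Since the algorithm sets $C_3 = C'/C_2$, its threshold is at least $u^*$; the algorithm's level set is therefore contained in $L_t^*$ and has mass $\leq \eps$. Your Step~2 then applies to it directly.
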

The proof is based on stability arguments from \Cref{sec:more_stabilityfacts} 
and can be found in \Cref{sec:omitted_proof_of_lemma}.

\begin{remark}\label{rem:value_of_c}
In our analysis, it will be important that the constant $c$ in  \Cref{lem:goodpartsmall} 
can be made sufficiently smaller than $1$, for example, $c < 0.01$.
 This can be achieved by choosing $C_2$ to be a large enough constant. 
\end{remark}

Using \Cref{alg:reweighting}, we get that the weighted sum of scores after filtering is also small.

\begin{restatable}{lemma}{AfterFiltering} \label{lem:AfterFiltering}
Under the setting of \Cref{alg:near-linear-only} and the deterministic \Cref{cond:near-linear}, 
we have that $\eps \E_{X \sim B}[w_{t+1}(X)\tilde{\tau}_t(X)] < c \lambda_t \|\vec M_t\|_F^2 $, 
with $c$ being of the form $c=C/C_2$, where $C_2$  is the constant used in  \Cref{line:lambda} 
and $C$ is some absolute constant.
Furthermore, $\eps \E_{X \sim B}[w_{t+1}(X)\tau_t(X)] < c \lambda_t \|\vec M_t\|_F^2$.
\end{restatable}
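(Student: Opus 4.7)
The proof is essentially a direct consequence of applying the downweighting filter (\Cref{lem:filterguarantee}) and then splitting the mixture $P=(1-\eps)G+\eps B$.

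First, I would verify the hypothesis $(1-\eps)\E_{X\sim G}[w_t(X)\tilde{\tau}_t(X)] \leq T$ of \Cref{lem:filterguarantee}, where $T = 0.01\,\widehat{\lambda}_t \|\vec U_t\|_F^2$ as defined in \Cref{line:set_ell_max}. By \Cref{lem:goodpartsmall}, $\E_{X\sim G}[w_t(X)\tilde{\tau}_t(X)] \leq c_0\, \lambda_t \|\vec M_t\|_F^2$ with $c_0 = C/C_2$, while the deterministic bounds in \Cref{as:lambda_near_linear,as:frobenious_near_linear} of \Cref{cond:near-linear} give $T \geq 0.01 \cdot 0.8^2 \, \lambda_t \|\vec M_t\|_F^2$. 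Taking $C_2$ sufficiently large (cf.\ \Cref{rem:value_of_c}) makes $c_0$ small enough that the inequality holds. The remaining hypotheses of \Cref{lem:filterguarantee}, namely $\|\tilde{\tau}_t\|_\infty \leq r$ and $\ell_{\max}>r/T$, are already verified by the calculations surrounding \eqref{eq:bound_on_tau}.

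Applying \Cref{lem:filterguarantee} then gives $\E_{X\sim P}[w_{t+1}(X)\tilde{\tau}_t(X)] \leq 2T$. Substituting $P=(1-\eps)G+\eps B$ and discarding the non-negative $G$-contribution yields
\begin{equation*}
\eps\,\E_{X\sim B}[w_{t+1}(X)\tilde{\tau}_t(X)] \;\leq\; 2T \;=\; 0.02\,\widehat{\lambda}_t \|\vec U_t\|_F^2 \;\leq\; 0.02 \cdot 1.2 \cdot 1.2\, \lambda_t \|\vec M_t\|_F^2,
\end{equation*}
where the final inequality uses the upper bounds in \Cref{as:lambda_near_linear,as:frobenious_near_linear}. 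The resulting absolute constant is at most $C/C_2$ for a suitably chosen $C$ (recalling that $C_2=100C$ with $C$ a large absolute constant), establishing the first claim.

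For the second claim, in which $\tilde{\tau}_t$ is replaced by $\tau_t$, I would compare the two scores pointwise using \Cref{cond:near-linear}. By \Cref{as:scores_approx}, $g_t(x) \leq 1.25\,\tilde{g}_t(x)$, and by \Cref{as:lambda_near_linear,as:frobenious_near_linear} the thresholds defining $\tau_t$ and $\tilde{\tau}_t$ agree up to a constant factor. A short case analysis on whether $\tilde{\tau}_t(x)$ is active shows $\tau_t(x) \lesssim \tilde{\tau}_t(x)$ on its active set, while on the residual set $\{\tilde{\tau}_t(x)=0,\tau_t(x)>0\}$ the value $g_t(x)$ is bounded by a constant multiple of the threshold $C_3\|\vec U_t\|_F^2\widehat{\lambda}_t/\eps$, so its contribution against $w_{t+1}\leq 1$ is absorbed into $c\,\lambda_t\|\vec M_t\|_F^2$ through the prefactor $\eps$. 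The main (minor) obstacle is lining up the threshold constants; this is resolved by the freedom in the absolute constant $C_3$. Combining yields $\eps\,\E_{X\sim B}[w_{t+1}(X)\tau_t(X)] \leq c\,\lambda_t\|\vec M_t\|_F^2$ with $c$ still of the form $C/C_2$.
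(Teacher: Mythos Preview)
Your proposal is correct and follows essentially the same route as the paper: the first claim is obtained from the stopping condition of the downweighting filter (after verifying its hypothesis via \Cref{lem:goodpartsmall}) by dropping the nonnegative $G$-part of $P=(1-\eps)G+\eps B$, and the second claim is deduced by a pointwise comparison of $\tau_t$ and $\tilde{\tau}_t$ that the paper packages as \Cref{cl:relation_tau_1}. The only cosmetic difference is that the paper isolates the case analysis as a separate claim, whereas you sketch it inline.
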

\begin{proof}
The first claim follows by the stopping condition of the algorithm, 
\Cref{lem:goodpartsmall}, and the fact that $w_{t+1}\leq w_t$.
We now prove the second conclusion {by relating $\tau$ to $\tilde{\tau}$:} 
Recall that we denote by $S$ the $\eps$-corrupted version of the original set of samples $S_0$.
Since $\tilde{g}_t(x)$ {is within a constant factor of} $g_t(x)$ (\Cref{cond:near-linear}) 
for all $x \in S$, the scores $\tilde{\tau}_t(x)$ and $\tau_t(x)$ 
are comparable (up to an additive term) as shown below.

\begin{restatable}{claim}{ClRelationTauFirst} \label{cl:relation_tau_1} 
In the setting of \Cref{alg:near-linear-only} and under the \Cref{cond:near-linear}, 
if $x \in S$, we have that $\tau_t(x) \leq  1.25 \tilde{\tau}_t(x) + 3 C_3 (\lambda_t/\eps) \tr(\bM_t^2)$, 
where $C_3$ is the constant used in \Cref{alg:near-linear-only}.
\end{restatable}

We prove \Cref{cl:relation_tau_1} in~\Cref{sec:omitted_correctness}. 
Using \Cref{cl:relation_tau_1}, we have the following  set of inequalities:
\begin{align*}
\eps \cdot \E_{X \sim B}[w_{t+1}(X)\tau_t(X)] 
&= \frac{1}{n}\sum_{x_i \in S\setminus S_0} w_{t+1}(x_i)\tau_t(x_i)   \\
&\leq 3 C_3\lambda_t\  \| \bM_t\|_F^2 +  \frac{1}{n} 1.25\sum_{i \in S\setminus S_0} w_{t+1}(x_i)\tilde{\tau}_t(x_i) \tag{using \Cref{cl:relation_tau_1}}\\
&= 3 C_3\lambda_t  \| \bM_t\|_F^2 + 1.25 \eps\E_{X \sim B}[w_{t+1}(X)\tilde{\tau}_t(X)]   \\
&\leq 3 C_3\lambda_t  \| \bM_t\|_F^2 + 1.25 c\lambda_t \| \bM_t\|_F^2 \tag{using the first part of \Cref{lem:AfterFiltering}}\\
&< 10 (C/C_2) \lambda_t \| \bM_t\|_F^2  \tag{using the value of $C_3$}\;.
\end{align*}

The last inequality above uses the fact that the constant $C_3$ 
is chosen to be $C_3=C/C_2$ in \Cref{alg:near-linear-only}, 
where $C$ is a sufficiently large constant.
\end{proof}

\subsection{Correctness of \Cref{alg:near-linear-only}: Proof of~\Cref{thm:near-linear-only}}  \label{sec:correctness}

The rest of this section is dedicated to proving  \Cref{thm:near-linear-only}. 
We first state the correctness of the na\"ive approximation step of  \Cref{line:naive_approx}, 
then record the invariants of the algorithm in \Cref{sec:invariants}, 
and finally show in \Cref{sec:mainproof} that it suffices for the number of iterations {$\outerl$ to be bounded by some  $\polylog(d,R,1/\eps)$}. %

The na\"ive approximation step of  \Cref{line:naive_approx} is based 
on the following folklore fact (see \Cref{sec:omitted_naiveprune} for more details).

\begin{restatable}{claim}{NAIVEPRUNE}\label{claim:naivepruning}
Let the fraction of outliers be $\eps<1/10$ and a parameter $0<\failp<1$. 
Let the distribution $P=(1-\eps)G+\eps B$. 
Let $R>0,\mu \in \R^d$ be such that $\pr_{X \sim G}[\|X-\mu \|_2>R]\leq \eps$. 
There is an estimator $\widehat{\mu}$ on $k=O(\log(1/\failp))$ samples from $P$ 
such that $\| \widehat{\mu} -\mu \|_2 \leq 4R$ with probability at least $1-\failp$. 
Furthermore, $\widehat{\mu}$ can be computed in time $O(k^2 d)$ and memory $O(kd)$.
\end{restatable}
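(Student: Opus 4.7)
The plan is to use the standard folklore strategy of clustering based on pairwise distances. First, I would draw $k = C \log(1/\failp)$ samples $X_1,\ldots,X_k$ from $P$ for a sufficiently large constant $C$, storing them in $O(kd)$ memory. Then, for each index $i \in [k]$, I would count $N_i := |\{j \in [k] : \|X_i - X_j\|_2 \leq 2R\}|$. This requires computing all $\binom{k}{2}$ pairwise distances, which takes $O(k^2 d)$ time. Finally, I would output $\widehat{\mu} := X_{i^*}$ where $i^* := \mathrm{argmax}_i N_i$.

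The key observation is that a sample drawn from $P$ lies within $R$ of $\mu$ with probability at least $(1-\eps)(1-\eps) \geq 1 - 2\eps \geq 0.8$, since with probability $1-\eps$ it is drawn from $G$, and conditioned on that, with probability at least $1-\eps$ it lies within $R$ of $\mu$ by assumption. Call such samples \emph{good}. By a standard Chernoff bound, for a large enough constant $C$ in the choice of $k$, with probability at least $1 - \failp$ the number of good samples among $X_1,\ldots,X_k$ is at least $0.7 k$.

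Condition on this event. Any two good samples are within $2R$ of each other by the triangle inequality, so for every good sample $X_i$, $N_i \geq 0.7 k$. Hence $N_{i^*} \geq 0.7 k$, and in particular the set $\{j : \|X_{i^*} - X_j\|_2 \leq 2R\}$ contains strictly more than $0.3 k$ elements, which exceeds the number of non-good samples. Therefore this set must contain at least one good sample $X_j$, and then a final triangle inequality gives
\[
\|X_{i^*} - \mu\|_2 \leq \|X_{i^*} - X_j\|_2 + \|X_j - \mu\|_2 \leq 2R + R = 3R \leq 4R,
\]
as claimed.

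There is no real obstacle here; the argument is completely elementary. The only point to be slightly careful about is making the two-layer bad-event accounting explicit (outlier samples and ``bad inliers'' falling outside the radius-$R$ ball both count against us), which is why the good-sample probability is $(1-\eps)^2$ rather than $1-\eps$. Once $\eps < 1/10$, the constants leave ample slack to drive the failure probability below $\failp$ with only $O(\log(1/\failp))$ samples.
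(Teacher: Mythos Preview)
Your proof is correct and follows essentially the same approach as the paper. The paper first shows via Hoeffding that at most a $1/4$-fraction of the $k=O(\log(1/\failp))$ samples fall outside the radius-$R$ ball, and then invokes a black-box ``NaivePrune'' procedure (\Cref{fact:naive_prune}) that returns a point within $4R$ of $\mu$; your argument simply unpacks that subroutine directly via the pairwise-distance counting step, which is exactly what NaivePrune does.
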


\Cref{cl:radius} below gives a valid upper bound on $R$ using the $(\eps,\delta)$-stability of the good distribution.
\begin{restatable}{claim}{CLRAD}\label{cl:radius}
If $R= \sqrt{\frac{d}{\eps}(1+\delta^2/\eps)}$, then  $\pr_{X \sim G}[\|X-\mu \|_2 > R]\leq \eps$.
\end{restatable}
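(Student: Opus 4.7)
The plan is to apply Markov's inequality to the random variable $\|X-\mu\|_2^2$, after using stability to upper bound its expectation.

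First, I would instantiate the $(\eps,\delta)$-stability hypothesis (\Cref{def:stability1}) with the trivial weight function $w \equiv 1$, which satisfies $\E_{X \sim G}[w(X)] = 1 \geq 1-\eps$. The second bound in the definition then yields $\|\overline{\vec \Sigma}_{w,G} - \vec I_d\|_2 \leq \delta^2/\eps$, where by the definition of $\overline{\vec \Sigma}_{w,G}$ (with centering at $\mu$) we have $\overline{\vec \Sigma}_{w,G} = \E_{X \sim G}[(X-\mu)(X-\mu)^T]$. In particular this means the matrix inequality $\overline{\vec \Sigma}_{w,G} \preceq (1 + \delta^2/\eps)\vec I_d$ holds.

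Second, I would take the trace of this inequality to pass from an operator-norm bound to a bound on the expected squared norm. Namely,
\[
\E_{X \sim G}\bigl[\|X-\mu\|_2^2\bigr] = \tr\bigl(\overline{\vec \Sigma}_{w,G}\bigr) \leq d\bigl(1 + \delta^2/\eps\bigr).
\]

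Finally, I would apply Markov's inequality to $\|X-\mu\|_2^2$:
\[
\pr_{X \sim G}\bigl[\|X-\mu\|_2 > R\bigr] = \pr_{X \sim G}\bigl[\|X-\mu\|_2^2 > R^2\bigr] \leq \frac{\E_{X \sim G}[\|X-\mu\|_2^2]}{R^2} \leq \frac{d(1+\delta^2/\eps)}{R^2} = \eps,
\]
where the last equality uses the prescribed value $R^2 = (d/\eps)(1+\delta^2/\eps)$. There is no real obstacle here; the only subtlety is remembering that the "covariance" in the stability definition is actually the second moment centered at $\mu$ (not at the empirical mean $\mu_{w,G}$), which is precisely what makes the trace computation yield $\E[\|X-\mu\|_2^2]$ directly.
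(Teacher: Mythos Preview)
Your proposal is correct and follows essentially the same approach as the paper: bound $\E_{X\sim G}[\|X-\mu\|_2^2]$ by $d(1+\delta^2/\eps)$ via stability with $w\equiv 1$, then apply Markov's inequality. The paper's proof is terser (it writes only the Markov step and leaves the trace bound implicit), but your more explicit justification of that bound is exactly the intended argument.
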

\begin{proof}
    By Markov's inequality, we have that
    \begin{align*}
        \pr_{X \sim G} \left[\|X - \mu\|^2_2 \geq \frac{d}{\eps}\left( 1+\delta^2/\eps \right)  \right] 
        \leq \eps \frac{\E_{X \sim G}[\|X - \mu\|^2_2   ]}{d( 1+\delta^2/\eps )} \leq \eps \;.
    \end{align*}

\end{proof}

\subsubsection{Invariants {of \Cref{alg:near-linear-only}}}
\label{sec:invariants}
Recall that the end goal is to obtain a filtered version, $P_t$, of $P$ 
that is not too far from $P$ in total variation distance {$\dtv(P,P_t) = O(\eps)$, 
and satisfies that $\| \bB_t \|_2 = O(\delta^2/\eps)$.} For the first condition to be satisfied, 
we ensure that the Downweighting filter removes more weight from $G$ than $B$ (\Cref{lem:filterguarantee}).  
Using this, we show that $\E_{X \sim G}[w_t(X)] \geq 1-O(\eps)$, 
which implies the bound on the total variation distance (\Cref{cl:total-var-small}). 
The proofs are deferred to \Cref{sec:omitted_correctness}.

\begin{restatable}{claim}{clinvariantweight}\label{cl:invariant-weight}
Under \Cref{cond:near-linear}, \Cref{alg:near-linear-only}  maintains the following invariant: 
    $\E_{X \sim G}[w_t(X)] \geq 1-3\eps$. In particular,     if $\eps \leq 1/8$, then $\dtv(P_t,P) \leq 9\eps$. 
\label{cl:total-var-small}
\end{restatable}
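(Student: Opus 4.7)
The plan is to prove the two statements separately: the invariant by induction on $t$ using a telescoping argument, and the TV bound by a direct mixture-decomposition computation.

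\textbf{Invariant $\E_{X\sim G}[w_t(X)] \geq 1-3\eps$.} I would proceed by induction on $t$. For the base case $t=1$, the initial weight $w_1(x)=\1\{\|x-\widehat{\mu}\|_2\leq 5R\}$ combined with \Cref{claim:naivepruning} (which gives $\|\widehat{\mu}-\mu\|_2\leq 4R$) and \Cref{cl:radius} (which gives $\pr_{X\sim G}[\|X-\mu\|_2>R]\leq \eps$) yields, via the triangle inequality, $\E_{X\sim G}[w_1(X)]\geq 1-\eps$. For the inductive step, one first checks that the preconditions of \Cref{lem:filterguarantee} are met at each iteration; this is where \Cref{cond:near-linear} enters, as it is used to relate $\widehat\lambda_t,\|\bU_t\|_F^2$ back to $\lambda_t,\|\bM_t\|_F^2$, so that \Cref{lem:goodpartsmall} combined with the algorithm's choices of $T$ and $r$ yields the required upper bound on $(1-\eps)\E_{X\sim G}[w_t(X)\tilde\tau_t(X)]$. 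Letting $\Delta_{G,t}:=\E_{X\sim G}[w_t(X)-w_{t+1}(X)]$ and $\Delta_{B,t}$ analogously, \Cref{lem:filterguarantee} then gives $(1-\eps)\Delta_{G,t}<\eps\Delta_{B,t}$ per iteration. Telescoping and using that weights lie in $[0,1]$, one obtains $\sum_t \Delta_{B,t}\leq \E_{X\sim B}[w_1(X)]\leq 1$, hence $\sum_t \Delta_{G,t}<\eps/(1-\eps)\leq 2\eps$. Combined with the base case this gives $\E_{X\sim G}[w_t(X)]\geq 1-\eps-2\eps=1-3\eps$ for all $t$.

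\textbf{TV bound $\dtv(P_t,P)\leq 9\eps$.} Setting $W_G=\E_{X\sim G}[w_t(X)]$, $W_B=\E_{X\sim B}[w_t(X)]$, and $W=(1-\eps)W_G+\eps W_B$, a direct computation produces the mixture decomposition $P_t=\alpha G_{w_t}+\beta B_{w_t}$ with $\alpha=(1-\eps)W_G/W$ and $\beta=\eps W_B/W$. Applying the triangle inequality to interpolate between $P_t$ and $P$ through the auxiliary distribution $(1-\eps)G_{w_t}+\eps B_{w_t}$ gives
\begin{align*}
\dtv(P_t,P)\leq |\alpha-(1-\eps)| + (1-\eps)\,\dtv(G_{w_t},G) + \eps\,\dtv(B_{w_t},B).
\end{align*}
Since $|W-W_G|=\eps|W_B-W_G|\leq \eps$ and $W\geq (1-\eps)(1-3\eps)\geq 1/2$ when $\eps\leq 1/8$, the first term is bounded by $2\eps$. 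For the component-wise TV terms, the key inequality is $\dtv(G_{w_t},G)\leq 1-W_G$, which follows from the elementary observation that $\min(1,w/W_G)\geq w$ whenever $w\in[0,1]$ and $W_G\leq 1$, so that $\int\min(G,G_{w_t})\,\d x\geq \E_{X\sim G}[w_t(X)]=W_G$. The same bound applied to $B$ gives $\dtv(B_{w_t},B)\leq 1-W_B$, and the last two terms together equal $1-W\leq 4\eps$. Summing yields $\dtv(P_t,P)\leq 6\eps\leq 9\eps$.

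\textbf{Main obstacle.} Neither step is technically deep; the substantive point is organizing the potential argument so that \Cref{lem:filterguarantee}'s per-iteration inequality telescopes into a global bound independent of the number of iterations $\outerl$. The most delicate (but routine) piece is verifying the preconditions of \Cref{lem:filterguarantee} at every iteration, in particular $(1-\eps)\E_{X\sim G}[w_t(X)\tilde\tau_t(X)]\leq T$ for the specific $T=0.01\widehat\lambda_t\|\bU_t\|_F^2$ used by the algorithm. This is handled by combining \Cref{lem:goodpartsmall}, whose constant $c$ can be taken arbitrarily small per \Cref{rem:value_of_c} by choosing $C_2$ sufficiently large, with the spectral and Frobenius-norm approximation guarantees of \Cref{cond:near-linear}.
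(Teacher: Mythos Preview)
Your proof of the invariant $\E_{X\sim G}[w_t(X)]\geq 1-3\eps$ is essentially the paper's argument: telescope the per-step losses, invoke \Cref{lem:filterguarantee} to compare the good-side loss to the bad-side loss, and cap the latter by $1$. The paper's write-up does not spell out the inductive verification of the filter's precondition $(1-\eps)\E_{X\sim G}[w_t\tilde\tau_t]\leq T$ at each step (it just cites \Cref{lem:filterguarantee}), but you are right that this is where \Cref{lem:goodpartsmall} and \Cref{cond:near-linear} are needed, and that the logical structure is inductive.

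For the TV bound you take a genuinely different route. The paper works directly with the integral $\int |P_t-P|$, splits it over $G$ and $B$, and bounds each piece via $|w_t(x)-L|\leq |1-w_t(x)|+|1-L|$ with $L=\E_{X\sim P}[w_t(X)]$; this yields $14\eps+4\eps$ for the two pieces and hence $\dtv\leq 9\eps$. You instead pass through the intermediate mixture $(1-\eps)G_{w_t}+\eps B_{w_t}$ and use the clean elementary inequality $\dtv(G_{w_t},G)\leq 1-\E_{X\sim G}[w_t(X)]$ (and likewise for $B$), which collapses the last two terms to $1-W\leq 4\eps$. Your decomposition is arguably more transparent and even gives the sharper constant $6\eps$; the paper's direct computation avoids having to name the reweighted components $G_{w_t},B_{w_t}$ but pays for it with a looser bound. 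Both are correct.
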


The following properties of $\bB_t$ as PSD operator will also be useful later on.

\begin{restatable}{claim}{ClPSD} \label{cl:PSD} 
{Under \Cref{cond:near-linear},} if $C_1 \geq 22$,
    $\vec B_t  \succeq (0.5C_1\delta^2/\eps) \vec{I}_d$ for every $t \in [\outerl]$.
\end{restatable}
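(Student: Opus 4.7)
\textbf{Proof plan for Claim \ref{cl:PSD}.} The plan is to reduce the claim to showing
\[
\alpha^2 \vec{\Sigma}_t \;\succeq\; \Bigl(1 - \tfrac{1}{2} C_1 \delta^2/\eps\Bigr) \vec I_d,
\]
where $\alpha := \E_{X\sim P}[w_t(X)]$, since subtracting $(1 - C_1\delta^2/\eps)\vec I_d$ from both sides recovers exactly $\vec B_t \succeq \tfrac{1}{2} C_1 \delta^2/\eps\, \vec I_d$. To get this PSD lower bound on the weighted covariance, the idea is to use the decomposition $P = (1-\eps) G + \eps B$ and exploit the stability of $G$.

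Writing $\alpha\vec\Sigma_t = \E_{X\sim P}[w_t(X)(X-\mu_t)(X-\mu_t)^\top]$ and splitting between $G$ and $B$, the $B$-term is PSD and can be dropped, leaving
\[
\alpha \vec\Sigma_t \;\succeq\; (1-\eps)\,\beta\, \E_{X\sim G_{w_t}}\!\bigl[(X-\mu_t)(X-\mu_t)^\top\bigr], \qquad \beta := \E_{X\sim G}[w_t(X)].
\]
Since $\mu_t$ is not the mean of $G_{w_t}$, the plan is to invoke the elementary identity $\E_D[(X-a)(X-a)^\top] = \mathrm{Cov}_D + (\E_D[X]-a)(\E_D[X]-a)^\top \succeq \mathrm{Cov}_D$, applied to $D = G_{w_t}$ and $a = \mu_t$. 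Then $\mathrm{Cov}_{G_{w_t}} = \overline{\vec\Sigma}^{\mu}_{w_t, G} - (\mu_{G_{w_t}} - \mu)(\mu_{G_{w_t}} - \mu)^\top$. By Claim~\ref{cl:invariant-weight} we have $\beta \geq 1 - 3\eps$, so since $G$ is $(C\eps,\delta)$-stable with $C\geq 3$, the stability conditions yield $\overline{\vec\Sigma}^{\mu}_{w_t, G} \succeq (1-\delta^2/\eps)\vec I_d$ and $\|\mu_{G_{w_t}} - \mu\|_2 \leq \delta$. Combining these (and using $\delta^2 \leq \delta^2/\eps$ since $\eps<1$) gives $\mathrm{Cov}_{G_{w_t}} \succeq (1 - 2\delta^2/\eps)\vec I_d$, and therefore
\[
\alpha^2 \vec\Sigma_t \;\succeq\; \alpha(1-\eps)\beta\,(1 - 2\delta^2/\eps)\,\vec I_d.
\]

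To finish, I would collect the constant factors. Since $\alpha \geq (1-\eps)\beta$ (because $\alpha = (1-\eps)\beta + \eps\E_B[w_t]$) and $\beta \geq 1-3\eps$, straightforward expansion gives $\alpha(1-\eps)\beta \geq (1-\eps)^2(1-3\eps)^2 \geq 1 - 8\eps$, hence $\alpha(1-\eps)\beta(1 - 2\delta^2/\eps) \geq 1 - 8\eps - 2\delta^2/\eps$. Using $\eps \leq \delta^2/\eps$ (which holds since $\delta \geq \eps$) to absorb the $8\eps$ into the $\delta^2/\eps$ term yields $\alpha^2\vec\Sigma_t \succeq (1 - 10\delta^2/\eps)\vec I_d$, and so $\vec B_t \succeq (C_1 - 10)\delta^2/\eps \cdot \vec I_d$. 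For $C_1 \geq 22$ we have $C_1 - 10 \geq 12 \geq \tfrac{1}{2}C_1$, completing the proof. No step here is technically hard; the only point requiring care is passing from the stability bound (stated around the center $\mu$) to a bound around the reweighted mean $\mu_t$, which is cleanly handled by the $\E[(X-a)(X-a)^\top] \succeq \mathrm{Cov}$ trick and the $O(\delta)$ bound from stability.
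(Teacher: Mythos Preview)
Your proof is correct and follows essentially the same route as the paper's: decompose $P_t$ into its $G$- and $B$-components, drop the PSD $B$-term, use stability of $G$ (together with the invariant $\E_G[w_t]\geq 1-3\eps$ from Claim~\ref{cl:invariant-weight}) to lower-bound $\mathrm{Cov}_{G_{w_t}}$ by $(1-2\delta^2/\eps)\vec I_d$, and then absorb the $O(\eps)$ losses using $\eps\le\delta^2/\eps$. The only cosmetic difference is that you work with $\alpha\vec\Sigma_t=\E_P[w_t(X)(X-\mu_t)(X-\mu_t)^\top]$ directly (which slightly streamlines the normalization bookkeeping), whereas the paper first bounds $\vec\Sigma_t$ via the law of total variance and then multiplies by $(\E_P[w_t])^2$; your final constant $C_1-10$ versus the paper's $C_1-11$ is immaterial since both suffice for $C_1\geq 22$.
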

The proof of \Cref{cl:PSD} is provided in \Cref{sec:omitted_correctness}.
Although just showing that $\bB_t \succeq 0$ would suffice for this section, 
the slightly stronger bound of the above claim will be useful in \Cref{sec:low-memory-main}. 
\Cref{cl:PSD} follows from \Cref{cl:invariant-weight} and the stability of $G$. 
In particular, the stability of $G$ implies that $\overline{\vec \Sigma}_G \succeq (1 -\delta^2/ \epsilon) \bI_d$.
We now prove the following claim, which is the reason for having 
the multiplicative factor of $\E_{X \sim P}[w_t(X)]^2$ in the definition of $\bB_t$.  
\begin{claim} \label{cl:decreasing}
We have that $\vec B_{t+1} \preceq \vec B_t$ for every $t \in [\outerl]$.
\end{claim}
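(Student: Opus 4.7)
The plan is to express the non-scalar part of $\vec B_t$, namely $(\E_{X\sim P}[w_t(X)])^2 \vec \Sigma_t$, in a manifestly monotone form as a function of the weights, and then invoke the fact that \Cref{alg:reweighting} only shrinks weights pointwise ($w_{t+1}(x) \leq w_t(x)$ for every $x$). Since the $(1 - C_1\delta^2/\eps) \vec I_d$ term in the definition of $\vec B_t$ is independent of $t$, the desired inequality $\vec B_{t+1} \preceq \vec B_t$ reduces to the PSD inequality $W_{t+1}^2 \vec \Sigma_{t+1} \preceq W_t^2 \vec \Sigma_t$, where $W_t := \E_{X \sim P}[w_t(X)]$.

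The key step is to establish the following Lagrange-type identity: for any weight function $w : \R^d \to [0,1]$, letting $W = \E_P[w(X)]$ and $\vec \Sigma_w$ the covariance of the re-weighted distribution,
\begin{equation*}
W^2 \vec \Sigma_w \;=\; \frac{1}{2} \int\!\!\int w(x)\, w(y)\, (x-y)(x-y)^T \, dP(x)\, dP(y).
\end{equation*}
I would prove this by expanding $W^2 \vec \Sigma_w = W \cdot \E_P[w(X) XX^T] - \E_P[w(X) X]\E_P[w(X)X]^T$ as a double integral over $(x,y)$, then symmetrizing in $x \leftrightarrow y$ so that the $W \E_P[w XX^T]$ piece becomes $\int\!\!\int w(x)w(y) \tfrac{xx^T + yy^T}{2}\,dP(x)dP(y)$ and the outer-product piece becomes $\int\!\!\int w(x)w(y) \tfrac{xy^T + yx^T}{2}\,dP(x)dP(y)$, whose difference is exactly the stated integral of $(x-y)(x-y)^T$.

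Once this identity is in hand, the proof concludes immediately: since \Cref{alg:reweighting} sets $w_{t+1}(x) = w_t(x)(1 - \tilde\tau_t(x)/r)^\ell$ with $\tilde\tau_t \geq 0$ and $r$ chosen so that $\tilde\tau_t/r \leq 1$, we have $0 \leq w_{t+1}(x) \leq w_t(x)$ for all $x$, hence $w_{t+1}(x)w_{t+1}(y) \leq w_t(x) w_t(y)$ for all $(x,y)$. Because $(x-y)(x-y)^T \succeq 0$ for every pair $(x,y)$, integrating the pointwise PSD inequality against this kernel yields
\begin{equation*}
W_{t+1}^2 \vec \Sigma_{t+1}
\;=\; \tfrac{1}{2}\!\int\!\!\int w_{t+1}(x)w_{t+1}(y)(x-y)(x-y)^T dP(x)dP(y)
\;\preceq\; \tfrac{1}{2}\!\int\!\!\int w_t(x)w_t(y)(x-y)(x-y)^T dP(x)dP(y)
\;=\; W_t^2 \vec \Sigma_t,
\end{equation*}
and subtracting the common multiple of $\vec I_d$ gives $\vec B_{t+1} \preceq \vec B_t$. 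The only nontrivial step is verifying the symmetric integral representation, but this is pure algebra; there is no real obstacle since monotonicity of the mean or of $\vec \Sigma_t$ alone would fail (both the first and second moments can shift in uncontrolled ways), and it is precisely the prefactor $W_t^2$ that makes the quantity monotone.
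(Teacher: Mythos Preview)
Your proposal is correct and is essentially identical to the paper's proof: the paper also rewrites $(\E_{X\sim P}[w_t(X)])^2\vec\Sigma_t$ via the symmetric ``Lagrange'' representation $\tfrac{1}{2}\E_{X,Y\sim P}[w_t(X)w_t(Y)(X-Y)(X-Y)^T]$ and then uses $w_{t+1}\le w_t$ pointwise. Your additional commentary (verifying the identity and noting why the $W_t^2$ prefactor is what makes the quantity monotone) is fine and matches the spirit of the paper's one-line argument.
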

\begin{proof}
    We use the alternative definition of the covariance matrix:
    Let $X , Y$ be i.i.d.\ from $P$, then
    \begin{align*}
        \vec{\Sigma}_t = \frac{1}{2 (\E_{X \sim P}[w_t(X)])^2} \E_{X,Y \sim P}[w_t(X)w_t(Y)(X-Y)(X-Y)^T]\;.
    \end{align*}
    Since $w_{t+1}(x) \leq w_t(x)$ for all $x$, this completes the proof.
\end{proof}

\subsubsection{Reducing the Potential Function} \label{sec:mainproof}

Recall that each iteration of \Cref{alg:near-linear-only} can be implemented in near-linear time. 
Thus, it remains to show that the choice {$\outerl=C\log d \log(dR/\eps)$ }
suffices to guarantee correctness of our algorithm.
We now sketch the proof using a potential function argument.
Let $\Lambda_t$ be the vector in $\R^d$ containing the eigenvalues of $\bB_t$. 
Recall that our goal is to show that $\|\vec B_t\|_{2} = \|\Lambda_t\|_\infty = O(\delta^2/\epsilon)$ 
in $\polylog$ many iterations. 
Let $p := 2 \log d$. Since $\|x\|_p = \Theta(\|x\|_\infty)$ for any $x \in \R^d$, 
we are motivated to use the  potential function $\phi_t := \|\Lambda_t\|_p^p$. 
We now focus on showing that $\phi_t$ decreases rapidly.
Observe that for any $i \in \Z_+$, $\trace(\bB_t^i) = \|\Lambda_t\|_i^i$. 
We start with the following inequalities (and explain them directly below):
\begin{align}
    \phi_{t+1} &= \|\Lambda_{t+1}\|_{p}^p \leq \left(d^{\frac{1}{p(p+1)}} \|\Lambda_{t+1}\|_{p+1}\right)^p  \notag \\
    &= d^{\frac{1}{p+1}}\left(\|\Lambda_{t+1}\|_{p+1}^{p+1}\right)^{\frac{p}{p+1}}  \notag \\
                &= d^{\frac{1}{p+1}} (\trace( \vec B_{t+1}^{p+1}))^{\frac{p}{p+1}} \notag \\
                &\leq d^{\frac{1}{p}} \left(\trace(\vec M_t \vec B_{t+1}\vec M_t)\right)^{\frac{p}{p+1}} \label{eq:potential_ineq_main} \;,
    \end{align}
where the first line uses \Cref{Fact:normReln}, 
the third one uses $\trace(\bB_{t+1}^i) = \|\Lambda_{t+1}\|_i^i$, 
and the last line uses \Cref{fact:trace_ineqq} along with the fact that $\bB_{t+1} \preceq \bB_t$, 
which holds because removing points can only make their covariance smaller; 
see \Cref{sec:correctness} for more details.

Then the goal becomes to bound from above the term $\trace(\vec M_t \vec B_{t+1}\vec M_t)$. 
The claim is that $\trace(\vec M_t \vec B_{t+1}\vec M_t)$ is related to  
$\E_{X \sim P}[w_{t+1}(X) \tau_t(X)]$, and thus can be bounded by $c \lambda_t \|\bM_t\|_F^2$.
Using the guarantees of the Downweighting filter (\Cref{lem:AfterFiltering}), 
we  prove the following result:
\begin{restatable}{lemma}{CLProgress} \label{cl:progress}  
Consider the setting of \Cref{alg:near-linear-only} and assume that \Cref{cond:near-linear} holds. 
Then $\trace(\vec M_t \vec B_{t+1}\vec M_t) \leq  c \lambda_t \|\vec M_t\|_F^2$ 
for some $c$ of the form $C/C_2$, where $C_2$ 
is the constant used in  \Cref{line:lambda} 
and $C$ is some absolute constant.
\end{restatable}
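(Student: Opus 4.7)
The plan is to expand $\vec B_{t+1} = (\E_{X \sim P}[w_{t+1}(X)])^2 \vec \Sigma_{t+1} - (1 - C_1 \delta^2/\eps)\vec I_d$ inside the trace, obtaining $\trace(\vec M_t \vec B_{t+1} \vec M_t) = (\E_P w_{t+1})^2 \trace(\vec M_t \vec \Sigma_{t+1} \vec M_t) - (1 - C_1 \delta^2/\eps)\|\vec M_t\|_F^2$, and then to control each summand separately. For the first summand, I will reduce to a weighted average of the scores $g_t(x) = \|\vec M_t(x-\mu_t)\|_2^2$: since $\vec \Sigma_{t+1} \preceq \E_{X \sim P_{t+1}}[(X-\mu_t)(X-\mu_t)^T]$ (the second moment around any point dominates the covariance), \Cref{fact:trace_PSD_ineq} applied with the PSD matrix $\vec M_t^2$ gives $(\E_P w_{t+1})^2 \trace(\vec M_t \vec \Sigma_{t+1} \vec M_t) \leq (\E_P w_{t+1}) \cdot \E_{X \sim P}[w_{t+1}(X) g_t(X)] \leq \E_{X \sim P}[w_{t+1}(X) g_t(X)]$, using $\E_P w_{t+1} \leq 1$. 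I then split this expectation into inlier and outlier contributions.

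For the inliers, \Cref{cl:invariant-weight} gives $\E_G w_{t+1} \geq 1 - 3\eps$, so the stability-based \Cref{cl:triangle} (with $\vec U = \vec M_t$ and $b = \mu_t$) yields $\E_{X \sim G_{w_{t+1}}}[g_t(X)] \leq \|\vec M_t\|_F^2 (1 + O(\delta^2/\eps)) + \|\vec M_t(\mu - \mu_t)\|_2^2 + 2\delta\, \|\vec M_t\|_F^2 \|\mu - \mu_t\|_2$. I bound $\|\mu - \mu_t\|_2$ via the certificate \Cref{lem:certificate} applied to $P_t$: since $\lambda_{\max}(\vec \Sigma_t) \leq 1 + O(\lambda_t)$ and $\dtv(P_t, G) = O(\eps)$ by \Cref{cl:invariant-weight}, I obtain $\|\mu - \mu_t\|_2 = O(\delta + \sqrt{\eps \lambda_t})$. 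Because the iteration was not terminated, $\widehat\lambda_t > C_2 \delta^2/\eps$, which forces $\lambda_t \gtrsim \delta^2/\eps$, and hence $\|\mu - \mu_t\|_2^2 = O(\eps \lambda_t)$. Combined with the elementary inequality $\|\vec M_t\|_2 \leq \|\vec M_t\|_F$, both cross terms in the stability bound collapse to $O(\eps + 1/\sqrt{C_2})\,\lambda_t \|\vec M_t\|_F^2$, while the $\delta^2/\eps$ term becomes $O(1/C_2)\,\lambda_t \|\vec M_t\|_F^2$. Multiplying by $(1 - \eps)\E_G w_{t+1} \leq 1$, the inlier contribution is at most $\|\vec M_t\|_F^2 + O(1/C_2 + \eps)\,\lambda_t \|\vec M_t\|_F^2$.

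For the outliers, the definition of $\tau_t$ (the unsketched analogue of $\tilde\tau_t$) furnishes the pointwise bound $g_t(x) \leq \tau_t(x) + C_3 \|\vec M_t\|_F^2 \lambda_t/\eps$, which gives $\eps\,\E_B[w_{t+1}(X) g_t(X)] \leq \eps\,\E_B[w_{t+1}(X) \tau_t(X)] + C_3 \lambda_t \|\vec M_t\|_F^2$; the first summand is controlled by \Cref{lem:AfterFiltering} by $O(1/C_2)\,\lambda_t \|\vec M_t\|_F^2$. Adding the inlier and outlier bounds and subtracting $(1 - C_1\delta^2/\eps)\|\vec M_t\|_F^2$, the two $\|\vec M_t\|_F^2$ terms cancel exactly; the residual $C_1(\delta^2/\eps)\|\vec M_t\|_F^2$ is once more $O(1/C_2)\,\lambda_t \|\vec M_t\|_F^2$, so what remains is at most $c\,\lambda_t \|\vec M_t\|_F^2$ for a constant $c$ of the claimed form. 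The main obstacle is precisely this bookkeeping on the inlier side: the $\|\vec M_t\|_F^2$ produced by stability must cancel exactly against the $-\|\vec M_t\|_F^2$ coming from the identity shift in $\vec B_{t+1}$, while the cross terms $\|\vec M_t(\mu - \mu_t)\|_2^2$ and $\delta\,\|\vec M_t\|_F^2\|\mu - \mu_t\|_2$ must be converted into multiples of $\lambda_t\|\vec M_t\|_F^2$; this conversion is enabled by the stopping-rule hypothesis $\lambda_t \gtrsim \delta^2/\eps$, which provides the reservoir of spectral slack against which the $\delta^2/\eps$ perturbations are measured.
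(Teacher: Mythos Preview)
Your proposal is correct and follows essentially the same route as the paper's own proof: both expand $\vec B_{t+1}$, replace $\vec\Sigma_{t+1}$ by the second moment about $\mu_t$ via \Cref{fact:trace_PSD_ineq}, split $\E_{X\sim P}[w_{t+1}(X)g_t(X)]$ into inlier and outlier parts, handle the inliers via stability (\Cref{cl:triangle}/\Cref{cor:shift}) together with the certificate \Cref{lem:certificate}, and handle the outliers via the pointwise bound $g_t\le \tau_t + C_3\|\vec M_t\|_F^2\lambda_t/\eps$ combined with \Cref{lem:AfterFiltering}. The only cosmetic differences are that the paper invokes \Cref{cor:shift} directly (which packages your use of \Cref{cl:triangle}) and retains the factor $\E_{X\sim P}[w_{t+1}(X)]$ rather than dropping it via $\le 1$; neither changes the argument.
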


Before giving the details regarding \Cref{cl:progress}, 
we first show that it suffices to prove that the potential function 
decreases by a multiplicative factor. In the rest of the proof, 
we will assume that $c< 0.1$, which can be guaranteed by taking $C_2$ 
to be a sufficiently large constant (cf. \Cref{rem:value_of_c}). 
We continue with \Cref{eq:potential_ineq_main} as follows: 
\begin{align*}
    \phi_{t+1} &\leq d^{\frac{1}{p}} \left(\trace(\vec M_t \vec B_{t+1}\vec M_t)\right)^{\frac{p}{p+1}} \\
    &\leq d^{\frac{1}{p}} \left(c \|\Lambda_t\|_{\infty} \|\Lambda_t\|_p^p\right)^{\frac{p}{p+1}} \tag*{(using \Cref{cl:progress})}  \\
    &\leq d^{\frac{1}{p}} c^{\frac{p}{p+1}} \left(\|\Lambda_t\|_{p} \|\Lambda_t\|_p^p\right)^{\frac{p}{p+1}} \tag*{ (using $\|\Lambda_t\|_\infty \leq \|\Lambda_t\|_i$ for $i\geq 1$)}\\
    &= d^{\frac{1}{p}} c^{\frac{p}{p+1}} \|\Lambda_t\|_{p}^p\\
    &\leq 3 \sqrt{c} \|\Lambda_t\|_{p}^p \leq 0.9999 \phi_t \;,
\end{align*}
    where the last line uses that $d^{1/p} = \exp(\frac{\log d}{2 \log d}) \leq 3$, $p/(p+1) \geq 0.5$, and $c < 1$.
    We thus get the desired convergence.

The final step is to bound the number of iterations needed 
for \Cref{lem:certificate} to ensure that $\|\mu_t - \mu\|_2 = O(\delta)$. 
Concretely, due to our na\"ive pruning, at the beginning of the algorithm 
we have the upper bound $\phi_1 \leq d R^{O(\log d)}$. 
After $\outerl$ iterations, we have that $\phi_{\outerl} \leq 0.99^{\outerl}d R^{O(\log d)}$. 
Setting $\outerl $ as 
    \begin{align} \label{eq:choose\outerl}  %
        \outerl = C \log d \log \left(\frac{dR}{\delta^2/\eps} \right) 
    \end{align}
suffices to have that 
$\|\bB_{\outerl}\|_2  \leq (d O(\delta^2/\eps)^{2 \log d})^{\frac{1}{2 \log d}}  = O(\delta^2/\eps)$.
This implies that 
    \begin{align} \label{eq:bound_sigma} 
        \| \vec{\Sigma}_{\outerl} \|_2 \leq \frac{1}{\E_{X\sim P} [w_{\outerl}(X)]^2}\left( \|\bB_{\outerl}\|_2 + 1 \right) \leq 1 + \left( \frac{1}{(1-3\eps)^2}-1 \right) + O\left( \frac{\delta^2}{\eps} \right) \leq  1 + O\left( \frac{\delta^2}{\eps} \right) \;,
    \end{align}
    where we used that $\E_{X\sim P} [w_{\outerl}(X)])^2 \geq 1- 3\eps$, 
    $\delta \geq \eps$, and $\epsilon \leq \epsilon_0$. 
    An application of \Cref{lem:certificate} shows that the estimate 
    has error at most $\|\mu_t - \mu\|_2 = O(\delta)$. 
    This completes the proof of \Cref{thm:near-linear-only}. 
    The rest of the section focuses on proving \Cref{cl:progress}.

\paragraph{Proof Sketch of \Cref{cl:progress}} %
\label{par:details_regarding_cl:progress}

Before giving the full proof of \Cref{cl:progress}, we provide a brief proof sketch.
By the definition of $\bB_{t+1}$, we have the following (the full proof is deferred to the end of this section)
\begin{align*}
\trace(\vec M_t \vec B_{t+1}\vec M_t) 
\leq \E_{X \sim P}[w_{t+1}(X)]\E_{X \sim P}[w_{t+1}(X)g_t(X)] - \left(1-C_1 \frac{\delta^2}{\eps}\right)\|\vec M_t\|_F^2 \;.
\end{align*}
In order to bound from above $\E_{X \sim P}[w_{t+1}(X)g_t(X)]$, 
one can consider the contribution due to inliers (distribution $G$) 
and contribution due to outliers (distribution $B$). Using the stability of inliers and \Cref{cor:shift}, 
we have that $\E_{X \sim G}[w_{}(X)g_t(X)] \leq (1+c\lambda_t)\|\bM_t\|_F^2$, 
for any weight function $w$ satisfying the conditions of \Cref{cor:shift}. 
We know that $w_{t+1}$ satisfies them because of our invariant in \Cref{cl:invariant-weight}. 
Turning to the contribution of outliers, we want to bound 
$\eps \cdot \E_{X \sim B}[w_{t+1}(X)g_t(X)]$. 
By definition, we have that $g_t(x) \leq \tau_t(x) + C_3 \lambda_t \|\bM_t\|_F^2/\eps$, 
and thus we get that the desired expression is bounded from above 
by $\eps \E_{X \sim B}[w_{t+1}(X)\tau_t(X)] + C_3 \lambda_t \|\bM_t\|_F^2$. 
The first expression was bounded from above in \Cref{lem:AfterFiltering} 
by using the downweighting filter, and the second is small because of how $C_3$ is set in our algorithm.
     
This completes the proof sketch of \Cref{cl:progress}. We now provide the complete proof.

\begin{proof}[Proof of \Cref{cl:progress}]
We will bound the contribution of inliers and outliers to the quantity 
$\E_{X \sim P}[w_{t+1}(X)g_t(X)]$ from above. Recall from our notation 
that the decomposition into inliers and outliers is $P=(1-\eps)G+\eps B$.
    For the inliers, we use \Cref{cor:shift} with $\vec U = \bM_t$ and $b=\mu_t$ to obtain the following:
    \begin{align}
        \E_{X \sim G}[w_{t+1}(X)g_t(X)] &\leq \| \bM_t\|_F^2\left(1+\frac{\delta^2}{\eps} 
        + \|\mu_t-\mu\|_2^2 + 2 \delta \|\mu_t-\mu\|_2 \right)        
        \leq \| \bM_t\|_F^2\left(1+ c\lambda_t \right)\;, \label{eq:good_contr}
    \end{align}
    where the last inequality uses that, by the certificate lemma (\Cref{lem:certificate}), 
    every term except the first in the previous expression is less than a sufficiently small  fraction of $\lambda_t$.

Regarding the outliers, we decompose their contribution to $\E_{X \sim P}[w_{t+1}(X)g_t(X)]$ into two sets: 
(i) the set of points with  projection greater than the threshold $C_3 \| \bM_t\|_F^2{\lambda}_t/\eps$ 
used in  \Cref{line:thresholding} of the algorithm, and 
(ii) the set of points with smaller projection. 
Concretely, letting $L_t := \{ x \in \R^d \; : \; g_t(x) >C_3 \| \bM_t\|_F^2{\lambda}_t/\eps \}$, we have that
    \begin{align}
        \eps \E_{X \sim B}[w_{t+1}(X)g_t(X)] 
    &   =\eps \E_{X \sim B}[w_{t+1}(X) \tau_t(X) ] + \eps \E_{X \sim B}[w_{t+1}(X)g_t(X) \1\{x \not\in L_t \}] \notag \\
  &\leq c\lambda_t \| \bM_t\|_F^2 + \eps C_3 \| \bM_t\|_F^2{\lambda}_t/\eps \leq c'  \| \bM_t\|_F^2{\lambda}_t \;, \label{eq:bad_contr}
    \end{align}
    where the first inequality follows from \Cref{lem:AfterFiltering} 
    and the second inequality follows from the choice of $C_3$ in \Cref{alg:near-linear-only}. 

    We also use the following relation on $\vec \Sigma_{t+1}$:
    \begin{align}
        \vec \Sigma_{t+1} &= \E_{X \sim P_{t+1}}\left[(X-\mu_{t+1})(X-\mu_{t+1})^T\right] \nonumber\\
        \nonumber    &\preceq \E_{X \sim P_{t+1}}\left[(X-\mu_{t})(X-\mu_{t})^T\right] \\
        \label{eq:uppBdSigma}    &= \frac{1}{\E_{X \sim P}[w_{t+1}(X)]}\E_{X \sim P}[w_{t+1}(X)(X-\mu_{t})(X-\mu_{t})^T] \;.
    \end{align} 
    {Recalling the definition $\vec B_{t+1} = (\E_{X\sim P} [w_{t+1}(X)])^2 \vec \Sigma_{t+1} - \left(1 -  C_1 \frac{ \delta^2}{\epsilon} \right)\vec I_d$, 
    \Cref{eq:uppBdSigma} implies that $\vec B_{t+1} \preceq  \vec F_{t+1}$, where
    $\vec F_{t+1} := (\E_{X \sim P}[w_{t+1}(X)])\E_{X \sim P}[w_{t+1}(X)(X-\mu_{t})(X-\mu_{t})^T] - \left(1 -  C_1 \frac{ \delta^2}{\epsilon} \right)\vec I_d$.
    Using \Cref{fact:trace_PSD_ineq} along with the fact that $\vec{B}_{t+1} \succeq 0$ (\Cref{cl:PSD}), we get the following}:
    \begin{align*}
         \tr(\vec M_t \vec B_{t+1}\vec M_t) &= \tr(\vec M_t^2 \vec B_{t+1})  
         \leq  \tr(\vec M_t^2 \vec F_{t+1})  \tag*{(using $\tr(\vec{A}\vec{B}\vec{C}) = \tr(\vec{C}\vec{A}\vec{B})$)} \\
        &= \tr\left( \vec M_t \left(\E_{X \sim P}[w_{t+1}(X)] \E_{X \sim P}[w_{t+1}(X)(X-\mu_{t})(X-\mu_{t})^T] 
        - \left(1 -  C_1 \frac{ \delta^2}{\epsilon} \right)\vec I_d  \right) \vec M_t \right) \\
	&=\E_{X\sim P} [w_{t+1}(X)]  \E_{X \sim P}[w_{t+1}(X)\tr((X-\mu_{t})^T\vec M_t^2 (X-\mu_{t}))] 
	- \left(1 -  C_1 \frac{ \delta^2}{\epsilon} \right)\| \bM_t\|_F^2 \\
        &=   \E_{X\sim P} [w_{t+1}(X)]  \E_{X \sim P}[w_{t+1}(X)g_t(X)] - \left(1 -  C_1 \frac{ \delta^2}{\epsilon} \right)\| \bM_t\|_F^2\\
        &\leq \left( 1+c{\lambda}_t + c'{\lambda}_t  - (1 -  C_1 \delta^2/\eps ) \right)\| \bM_t\|_F^2 \tag*{(using \Cref{eq:good_contr,eq:bad_contr})}\\
        &\leq c'' \lambda_t \| \bM_t\|_F^2 \;.
    \end{align*}
    This concludes the proof.
\end{proof}

\section{Efficient Streaming Algorithm for Robust Mean Estimation} \label{sec:low-memory-main}

We now turn to the main focus of this paper and present a low-memory algorithm
for robust mean estimation.
Our algorithm works in two setups: 
(i) the single-pass streaming setting, where a set of i.i.d.\ samples 
from an $\epsilon$-corrupted distribution 
in total variation distance (\Cref{def:oblivious}) arrive \emph{one at a time} (\Cref{def:streaming}), 
and (ii) the strong-contamination model (\Cref{def:strongadv}), 
where the algorithm is allowed poly-logarithmically 
many passes over the input stream (defined below).  

\begin{definition}[Streaming Model in $k$ Passes]\label{def:streaming-multiple}
For a fixed set $S$, the elements of $S$ are revealed to the algorithm one at a time. 
This  process is repeated $k$ times. The sequence of elements in $S$ within each pass can be arbitrary.
\end{definition}

Our main result is the following theorem for the single-pass streaming model, 
which is a generalized version of \Cref{main-thm-intro}:
\begin{restatable}[Robust Mean Estimation in {Single-Pass Streaming Model}]{theorem}{MainThm}\label{th:main}
Let $d \in \Z_+$, $0<\failp<1$, $0<\eps<\eps_0$ for a sufficiently small constant $\eps_0$, and $\delta \geq \eps$.
Let $D$ be a distribution which is $(C\eps,\delta)$-stable with respect to the 
(unknown) vector $\mu \in \R^d$, for a sufficiently large constant $C>0$. 
Let $R$ be any radius such that $\pr_{X \sim D}[\|X -\mu\|_2 > R] \leq \eps$.
Let $P$ be a distribution with $\dtv(P,D)\leq \eps$. 
There exists an algorithm that given $\eps,\delta,\failp$ and  
\begin{align}\label{eq:sample_complex_in_theorem}
{n = O \left( R^2 \max\left(d,\frac{\eps}{\delta^2}, \frac{(1+\delta^2/\eps)d}{\delta^2 R^2}, \frac{\eps^2d}{\delta^4},\frac{R^2\eps^2}{\delta^2}, \frac{R^2\eps^4}{\delta^6} \right)  \polylog\left(d,\frac{1}{\eps},\frac{1}{\failp},R \right)  \right) }
\end{align} 
i.i.d.\ samples from $P$ in a stream according to the model of \Cref{def:streaming}, 
runs in time $n d   \,\polylog\left(d,1/\eps,1/\failp,R  \right)$, 
uses memory $d \, \polylog\left(d,1/\eps,1/\failp, R  \right)$, 
and returns a vector $\widehat{\mu}$ such that, with probability at least $1-\failp$, 
it holds that $\|\mu - \widehat{\mu}\|_2 = O(\delta)$.
\end{restatable}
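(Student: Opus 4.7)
The plan is to convert Algorithm~\ref{alg:near-linear-only} into a single-pass streaming procedure by storing only a succinct description of the current filter---the JL directions $\{v_{s,j}\}_{s\leq t,\, j\in[\innerl]}$, the accepted means $\mu_s$, the eigenvalue estimates $\widehat{\lambda}_s$, the thresholds $T_s$, and the exponents $\ell_s$ produced by the Downweighting filter---rather than any explicit weights over a stored dataset. Each item is either a vector in $\R^d$ or a scalar, and the total number of items is $\outerl \cdot \innerl = \polylog(d/\eps)$, so the whole description fits in $d\,\polylog(d/\eps)$ memory. Crucially, evaluating $w_t(x)$ on any freshly arriving sample $x$ requires only a sequential pass through this stored description; this lets us simulate sample access to the filtered distribution $P_t$ on the fly, exactly the shift-in-perspective anticipated in \Cref{sec:overview_of_techniques}.

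With this data structure in place, the algorithm runs for $\outerl = O(\log d \,\log(dR/\eps))$ rounds. In each round $t$, we draw a fresh batch of stream samples, evaluate $w_t$ on them, and use the resulting (re-weighted) batch to (i) estimate $\mu_t$, (ii) produce $\widehat{\lambda}_t \in [0.8\lambda_t, 1.2\lambda_t]$ by power iteration on $\vec B_t$, and (iii) compute the JL sketch vectors $v_{t,j} = \vec M_t z_{t,j} = \vec B_t^{\log d} z_{t,j}$. The core subroutine is the approximate matrix--vector product $v \mapsto \widehat{\vec \Sigma}_t v$: since the rough-pruning step of \Cref{line:roughfilter} guarantees all surviving points lie in an $O(R)$-ball, \Cref{fact:ver_cov} shows that a fresh batch of size $\tilde O(R^2/\eps'^2)$ yields $\|\widehat{\vec \Sigma}_t v - \vec \Sigma_t v\|_2 \leq \eps'\|\vec \Sigma_t\|_2 \|v\|_2$ with high probability. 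To implement the $\log d$-fold iterated multiplication defining $\vec M_t z_{t,j}$ without creating uncontrolled dependencies between the iterates and the noise, we use an \emph{independent} fresh batch at each level of the iteration; conditioning on all earlier batches, the vector being multiplied at level $i$ is deterministic with respect to the level-$i$ batch, and the concentration bound applies cleanly. A union bound over the $\outerl \cdot \innerl \cdot \log d$ such matrix--vector products handles every approximation needed.

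Having produced empirical versions of the quantities used in Algorithm~\ref{alg:near-linear-only}, we verify that a slightly relaxed form of \Cref{cond:near-linear} holds: $\widehat{\lambda}_t$ is a constant-factor approximation of $\|\vec B_t\|_2$, $\|\vec U_t\|_F^2$ is a constant-factor approximation of $\|\vec M_t\|_F^2$, and the scores $\tilde g_t(x)$ approximate $g_t(x)$ up to constants on every sample $x$ ever compared to the filter threshold. Since the correctness analysis of \Cref{sec:correctness}---in particular the invariant of \Cref{cl:invariant-weight}, the non-negativity of $\vec B_t$ from \Cref{cl:PSD}, and the potential-decrease of \Cref{cl:progress}---is quantitatively robust to constant multiplicative slack, the same argument drives $\lambda_{\max}(\vec \Sigma_{P_{\outerl}}) \leq 1 + O(\delta^2/\eps)$ in $\outerl$ rounds. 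A final fresh batch applied to $w_{\outerl}$ yields $\widehat\mu$, and \Cref{lem:certificate} gives $\|\widehat\mu - \mu\|_2 = O(\delta)$.

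The hard part will be the noise-propagation analysis for the $\log d$-depth iterated matrix--vector product that defines the JL sketch, together with the book-keeping that reconciles the per-round batch sizes with the sample complexity of \Cref{eq:sample_complex_in_theorem}. Each of the terms there corresponds to a distinct estimation task: $\tilde O((1+\delta^2/\eps)d/\delta^2)$ for estimating $\mu_t$ to accuracy $O(\delta)$, $\tilde O(R^2 d)$ for operator-norm covariance approximation driving the power iteration for $\widehat{\lambda}_t$, and the remaining $\eps^2 d/\delta^4$, $R^2 \eps^2/\delta^2$, $R^2 \eps^4/\delta^6$ factors tracking the uniform accuracy required of the JL-based scores so that the Downweighting filter of \Cref{alg:reweighting} chooses an exponent $\ell$ for which the two guarantees of \Cref{lem:filterguarantee} remain valid up to constants. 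Showing inductively that the constant-factor error accumulated across all $\outerl$ rounds does not destroy stability of the inlier re-weighting---so that the certificate lemma remains applicable at the end---is the principal technical step.
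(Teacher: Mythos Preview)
Your high-level architecture matches the paper's: store only the JL directions, the centers $\widehat\mu_s$, the eigenvalue estimates $\widehat\lambda_s$, and the filter exponents $\ell_s$; recompute $w_t(x)$ on demand; use an independent fresh batch at each of the $\log d$ levels of the iterated product $\widehat{\vec M}_t z=\prod_k\widehat{\vec B}_{t,k}z$; and run an approximate Downweighting filter. The noise-propagation bound you anticipate is exactly \Cref{lem:matrix_prop} and \Cref{lem:prod_concentration}, and the sample-complexity terms you list do correspond to \Cref{lem:final_step} and \Cref{lem:prod_concentration}.

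The genuine gap is the sentence ``the scores $\tilde g_t(x)$ approximate $g_t(x)$ up to constants on every sample $x$ ever compared to the filter threshold.'' In the batch algorithm this is precisely what the JL lemma gives on the $n$ stored points, and it is what powers \Cref{cl:relation_tau_1}. In the streaming setting, however, $w_{t+1}$ is a function on all of $\R^d$, and the potential argument requires the \emph{population} bound $\eps\,\E_{X\sim B}[w_{t+1}(X)\tau_t(X)]\le c\lambda_t\|\vec M_t\|_F^2$ (the analog of \Cref{lem:AfterFiltering}). The filter only controls $\E_{X\sim B}[w_{t+1}(X)\tilde\tau_t(X)]$, and there is no finite sample set on which a pointwise JL guarantee would let you transfer from $\tilde\tau_t$ to $\tau_t$: a sketch of size $\polylog(d/\eps)$ cannot give pointwise control over the possibly continuous support of $B$, and the distribution $P_{t+1}$ itself depends on the sketch, so ``union bound over the samples we will later draw'' does not help.

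The paper closes this gap with a cover argument based on pseudo-dimension (\Cref{sec:cover}, \Cref{cor:ran_cov_specific}). It observes that every function $x\mapsto w_{t+1}(x)\tilde\tau_t(x)$ that the algorithm could ever produce is parameterized by $O(d\outerl\innerl)$ reals and computable in $(dR/\eps)^{O(\log d)}$ arithmetic steps, so by the Goldberg--Jerrum VC bound and uniform convergence there is a \emph{fixed} finite set $S_{\text{cover}}$ of size $(dR/\eps)^{O(\log d)}$ on which expectations under $B$ are uniformly approximated for all filter states simultaneously. Choosing $\innerl=O(\log|S_{\text{cover}}|)=\polylog(d,R,1/\eps)$ then makes the pointwise JL approximation hold on $S_{\text{cover}}$, which suffices to transfer the filter's guarantee from $\tilde\tau_t$ to $\tau_t$ in expectation (\Cref{lem:remaining_thing}, \Cref{cl:RelationClaim}). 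Without this step, your proposal does not justify the $\polylog$ bound on $\innerl$, and hence does not establish the claimed memory bound. A secondary point you elide is that the filter's stopping test $\E_{X\sim P}[w_\ell(X)\tilde\tau_t(X)]\le 2T$ can no longer be evaluated exactly and must be replaced by an approximate binary search using the estimator of \Cref{cl:evaluate_cond} (\Cref{alg:reweighting2}, \Cref{lem:filterguarantee2}); this is straightforward but needs to be said.
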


Note that \Cref{main-thm-intro} in \Cref{sec:results} is a special case of \Cref{th:main} 
for the two important families of distributions: 
(i) subgaussian distributions with identity covariance, and 
(ii) distributions with bounded covariance. 
\begin{enumerate}
\item For subgaussian distributions with identity covariance, 
we have that $R=\Theta(\sqrt{d\log(1/\eps)})$, $\delta = O(\eps \sqrt{\log(1/\eps)}) $, 
and thus  ${n} = \tilde{O}\left( d^2/\eps^2  \right)$.
\item For distributions with covariance at most identity, we have that 
$R = \Theta(\sqrt{d/\eps})$, $\delta = O(\sqrt{\eps})$, and thus  
${n} = \tilde{O}\left( d^2/\epsilon \right)$.
\end{enumerate}

{In order to obtain a low-memory algorithm for the robust mean estimation problem, 
we start with an obstacle that one faces when trying to modify the existing \Cref{alg:near-linear-only} to that setting.}
The issue is that, since $n$ can be much larger than $d$, we cannot even store the weight function $w_t$.
Fortunately, this can be handled by freshly computing the scores $w_t(x)$ for any given $x$, 
whenever we need them. This  requires us to store only $\{ (\bU_{t}, \ell_t): t \in [\outerl]\}$, 
where $\bU_{t}$ is the Johnson-Lindenstrauss sketch at the iteration $t$, 
and $\ell_t$ is the corresponding count from the downweighting filter. 
This can be achieved with additional poly-logarithmic memory. 
Thus, \Cref{alg:near-linear-only} can  be readily extended to setting (ii), giving us  \Cref{thm:polylog_passes}.

\begin{corollary}[Robust Mean Estimation in Multiple Passes Streaming Model]\label{thm:polylog_passes}
Let $d \in \Z_+$, $0<\failp  <1$ and $0<\eps<\eps_0$ for a sufficiently small constant $\eps_0$, and $\delta \geq \eps$. 
Let $S$ be an $\eps$-corrupted version of a set that is $(C\eps,\delta)$-stable with respect 
to the (unknown) vector $\mu \in \R^d$, for a sufficiently large constant $C$. Denote by $n$ the cardinality of $S$.  
There exists an algorithm that {operates in the streaming model of \Cref{def:streaming-multiple} 
with $k=\polylog\left(d,1/\eps,1/\failp \right)$ and}, given $\eps,\delta,\failp$ and $T$, 
runs in time $n d   \,\polylog\left(d,1/\eps,1/\failp  \right)$, uses additional memory $d\, \polylog\left(d,1/\eps,1/\failp  \right)$, 
and finds a vector $\widehat{\mu}$ such that, with probability at least $1-\failp$, 
it holds $\|\mu - \widehat{\mu}\|_2 = O(\delta)$.
\end{corollary}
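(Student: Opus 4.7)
The plan is to execute Algorithm~\ref{alg:near-linear-only} essentially verbatim, using the multi-pass stream as a substitute for random access to the dataset, and never materializing the weight function $w_t$ explicitly. The key observation (already hinted at in the paragraph preceding the statement) is that $w_t$ admits a compact description: given the stored summaries $\{(\mu_s, \widehat{\lambda}_s, \bU_s, \ell_s)\}_{s < t}$ and the initial pruning center $\widehat{\mu}$ of Line~\ref{line:naive_approx}, the value $w_t(x)$ can be recomputed for any query point $x$ by unrolling the recursion $w_{s+1}(x) = w_s(x)\bigl(1 - \tilde{\tau}_s(x)/r\bigr)^{\ell_s}$, where $\tilde{g}_s(x) = \|\bU_s(x - \mu_s)\|_2^2$ and $\tilde{\tau}_s(x)$ uses the stored threshold. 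Each such evaluation costs $O(t \cdot d \cdot \innerl)$ time and no additional memory beyond the stored summaries.

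First, I would run the naive pruning of \Cref{claim:naivepruning} using $O(\log(1/\failp))$ samples from the very first pass to obtain $\widehat{\mu}$ satisfying $\|\widehat{\mu} - \mu\|_2 \leq 4R$. Then, inductively, at the $t$-th outer iteration I implement each primitive of Algorithm~\ref{alg:near-linear-only} as a streaming subroutine that reads the stream once (or a small number of times), evaluating $w_t(x)$ on-the-fly from the stored summaries as each sample $x$ arrives:
\begin{itemize}[leftmargin=*]
\item The mean $\mu_t$ and the normalizing constant $\E_{X\sim P}[w_t(X)]$ require one pass each.
\item By \Cref{remark:multiplication_impl}, a single matrix-vector product $\bB_t v$ is one pass, so the power iteration for $\widehat{\lambda}_t$ (Line~\ref{line:poweriter}) uses $O(\log d)$ passes, and computing all $\innerl$ rows $v_{t,j} = \bB_t^{\log d} z_{t,j}$ of $\bU_t$ (Line~\ref{line:calc_v}) uses $O(\log d)$ passes, multiplying all $\innerl$ vectors by $\bB_t$ in parallel within each pass.
\item The Frobenius norm $\|\bU_t\|_F^2$ needs no extra pass, and the downweighting filter of \Cref{alg:reweighting} performs a binary search over $\ell \in \{1,\dots,\ell_{\max}\}$ that requires $O(\log \ell_{\max}) = \polylog(d, R, 1/\eps)$ single-pass evaluations of $\E_{X \sim P}[w_\ell(X) \tilde{\tau}_t(X)]$.
\end{itemize}
After the filter terminates, I append the new tuple $(\mu_t, \widehat{\lambda}_t, \bU_t, \ell_t)$ to the stored summaries and proceed to iteration $t+1$.

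Summing up, one outer iteration costs $\polylog(d, R, 1/\eps, 1/\failp)$ passes, and since $\outerl = \polylog(d, R, 1/\eps)$ outer iterations suffice by \Cref{thm:near-linear-only}, the total number of passes is $k = \polylog(d, 1/\eps, 1/\failp)$ as required. The extra memory is dominated by the $\outerl$ stored summaries, each of size $O(\innerl d) = d \polylog(d, 1/\eps, 1/\failp)$, plus $O(\innerl d)$ workspace for the current power-iteration and JL vectors. Correctness and the error bound $\|\widehat{\mu} - \mu\|_2 = O(\delta)$ are inherited directly from \Cref{thm:near-linear-only}, because every arithmetic quantity computed by the streaming implementation is, by construction, identical to the quantity computed by the batch algorithm.

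The only subtlety (rather than a true obstacle) is verifying that at the moment $w_t(x)$ must be evaluated, all summaries needed to define it are already in memory; this is immediate from the inductive structure, since iteration $t$ only queries $w_t$, whose reconstruction depends solely on summaries from iterations $s < t$. A secondary bookkeeping issue is that the algorithm requires the covariance scaling factor $\E_{X\sim P}[w_t(X)]^2$ in the definition of $\bB_t$; this scalar is computed once per iteration in the same pass used for $\mu_t$ and stored alongside the other summaries. With these ingredients, the proof reduces to checking the pass and memory accounting described above, which is routine.
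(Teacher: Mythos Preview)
Your proposal is correct and follows essentially the same approach as the paper: the paper disposes of this corollary in the short paragraph immediately preceding it, noting that one need only store $\{(\bU_t,\ell_t):t\in[\outerl]\}$ and recompute $w_t(x)$ on demand, so that \Cref{alg:near-linear-only} extends directly to the multi-pass streaming setting. Your write-up is in fact more careful than the paper's sketch (you correctly observe that $\mu_t$, $\widehat{\lambda}_t$, the normalizer $\E_{X\sim P}[w_t(X)]$, and the pruning center $\widehat{\mu}$ must also be stored, and you explicitly account for the passes consumed by power iteration, the $\log d$ repeated multiplications by $\bB_t$, and the binary search in the downweighting filter), but the underlying idea is identical.
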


\noindent In the main body of this section, we prove Theorem~\ref{th:main}.

\subsection{Setup and Algorithm Description}\label{ssec:low-memory-alg}

Moving to the single-pass streaming model and \Cref{th:main} requires a change in perspective:
instead of having a corrupted dataset, we now have sample access 
to a distribution $P$ such that $\dtv(P,D) \leq \epsilon$, where $D$ is a stable distribution.
We will reweight this distribution using weights, $w_t(\cdot)$, 
that are now {\em functions on the whole $\R^d$} instead of a fixed dataset.
Thus, $P_t$ now denotes the reweighting of the (corrupted)  distribution $P$ with the weights $w_t$. 
Similarly $\mu_t, \vec \Sigma_t, \bB_t, \bM_t$ denote the quantities that pertain to the distribution $P_t$.
The goal of our algorithm remains essentially the same: 
obtain $P_t$ such that $\dtv(P_t,P) = O(\epsilon)$ and 
$\|\vec \Sigma_t\|_2 \leq 1 + O(\delta^2/\epsilon)$; 
\Cref{lem:certificate} would then imply that $\|\mu_t - \mu\|_2 = O(\delta)$. 
Before presenting the pseudocode of \Cref{alg:streaming}, 
we identify two problems that arise in generalizing 
our results from \Cref{sec:near_linear_main_body} 
and provide an overview of their solutions:

\paragraph{Calculating Scores} 
Recall that the only place where $\bM_t$ is used in \Cref{alg:near-linear-only} is 
\Cref{line:calc_v}, where $\bM_t$ is multiplied with the vectors $z_{t,j}$.
Let $z$ be an arbitrary vector.
Since $\bM_t = \bB^{\log d}$, in the previous section we were able to compute $\bM_t z$ 
by iteratively multiplying $z$ by $\bB_t$.
Since we now do not have access to $\vec B_t$, but only sample access to $P_t$,
we need a sufficiently fine approximation $\widehat{\bB}_t$ of $\bB_t$ 
(obtained using i.i.d.\ samples). 
The natural approach would then be to multiply $\widehat{\bB}_t$ with $z$ iteratively $\log d$ many times. 
Even though $\widehat{\bB}_t z$ can be computed in a streaming fashion (as outlined in the previous section), 
it is not possible to compute $(\widehat{\bB}_t)^{\log d} z$ without accessing the data $\log d$ times.
To circumvent this issue, we use a fresh sample approximation of $\vec B_t$ in every multiplication step. 
That is, we approximate $\bM_tz$ by $\widehat{\vec M}_t z$, where 
$\widehat{\vec M}_t := \prod_{j=1}^p \widehat{\vec B}_{t,j}$ and each $\widehat{\vec B}_{t,j}$
is computed on a different set of samples.
This approach crucially leverages the fact that 
in the contamination model of \Cref{def:oblivious},
outliers are added in a way that is oblivious to the inliers,
and therefore these datasets are statistically identical
and independent of each other.
We show in \Cref{sec:concentration} that the resulting $\widehat{\vec M}_t$
is a sufficiently accurate approximation of $\bM_t$.
Similarly, we need to modify the Downweighting filter, 
since its implementation using binary search requires performing checks 
of the form $\E_{X \sim P}[w(X)\tilde{\tau}(x)] > 2T$ 
and calculating the weighted mean exactly is no longer possible.
We propose a sample-efficient estimator to approximate 
that expectation (see \Cref{cl:evaluate_cond} in \Cref{sec:remainingparts}) 
and run an ``approximate'' variant of binary search (see \Cref{sec:correctness_streaming}).

\paragraph{Cover Argument} 
We now turn to the more technical issue 
of controlling the size of the JL-sketch, i.e., the number of rows, $\innerl$, 
of the matrix $\vec U_t \in \R^{\innerl \times d}$. 
For simplicity, assume $\widehat{\vec M}_t = \vec M_t$, 
and recall that $\tilde{\tau}(x)$ is the thresholded version of $\|\bU_t(x-\mu_t)\|_2^2$, 
as defined in  \Cref{line:thresholding} and $\tau(x)$ is the same score but using $\bM_t$.
The potential-based analysis in \Cref{sec:near_linear_main_body} requires 
that $\E_{X \sim P}[w_{t+1}(X){\tau}_t(X)]$ is small. 
However, the stopping condition of the Downweighting filter 
 implies only that $\E_{X \sim P}[w_{t+1}(X)\tilde{\tau}_t(X)]$ is small.
In \Cref{sec:near_linear_main_body}, the bound on the former 
was obtained from the bound on the latter 
by using that $\|\vec U_t  (x - \mu_t)\|_2 \approx \|\vec M_t  (x - \mu_t)\|_2$ 
pointwise in the support of $P$ (\Cref{cl:relation_tau_1}).

By the classical JL lemma, the  size of the JL sketch, $\innerl$, 
needs to be at most logarithmic in the size of the set $S$ 
where we require the pointwise approximation to hold. 
Thus, in the previous section, $\innerl$ scaled as $\log|S|=\log n$.
However, in the streaming model
where there is no such dataset, 
it is far from obvious how the analysis should proceed.
A n{\"a}ive approach would be to require the approximation 
to hold on a cover $\tilde{S}$ of the support of $P_t$. 
Since $|\tilde{S}|$ scales exponentially with $d$,  
the required bound on $\innerl$ would be $\log |\tilde{S}| = \Omega(d)$, 
which is too large for our purposes.
Luckily, we can still find a fixed set $S_{\text{cover}}$ such that the following holds: 
(i) $\log |S_{\text{cover}}| = \polylog(d/\epsilon)$, and 
(ii) the expectation of scores over $\cU(S_{\text{cover}})$ 
approximates the expectation of scores over $P$.
That is, as far as the expectations of the scores are concerned, 
$P$ can be approximated by the uniform distribution over $S_{\text{cover}}$.
Arguing as before,  if $\|\vec U_t (x-\mu_t)\| \approx \|\vec M_t  (x-\mu_t)\|$ for each $x \in S_{\text{cover}}$, 
then the downweighting filter also ensures that $\E_{X \sim P}[w_{t+1}(X)\tilde{\tau}_t(X)]$ is small.
Thus, $S_{\text{cover}}$ can serve as a proxy dataset (used only in the analysis) 
to ensure that the size of the JL sketch is sufficiently bounded, 
i.e., that $\log |S_{\text{cover}}| \leq C \polylog(d/\epsilon)$.	

Establishing the desired upper bound on the cardinality of $S_{\text{cover}}$ 
requires a somewhat more sophisticated argument that relies on the VC-dimension 
of a family of functions corresponding to the weight update rule. 
This result is stated in \Cref{sec:cover}. 

We now present the algorithm more formally. We start by clarifying the notation used.

\paragraph{Notation regarding \Cref{alg:streaming}:} 
The quantities involved in the algorithm and its analysis now 
are based on the underlying data distribution $P$ as well as its approximations. 
We note that $P_t, \mu_t, \vec \Sigma_t, \vec B_t, \vec M_t, \lambda_t$ 
are functionals of the distribution $P$ and are primarily used in the analysis. 
The parameters $\widehat{\lambda}_t, \widehat{\vec M_t}$ are approximations 
for $\|\vec B_t\|_2$ and $\vec M_t$ respectively that the algorithm forms 
using samples from $P_t$. Regarding score functions, $g_t(x)=\|\vec M_t(x- \mu_t)\|_2^2$ is as before.
 The computations however  use only the Johnson-Lindenstrauss versions 
 $\tilde{g}_t(x) := \frac{1}{\innerl} \sum_{i =1}^{\innerl} (v_{t,i}^T(x-\widehat{\mu}_t))^2$, 
 which can also be written as $\|\vec U_t(x-\widehat{\mu}_t)\|_2^2$ in matrix form, 
 by defining $\bU_t$ to have the vectors $\frac{1}{\sqrt{\innerl}}v_{t,i}$ as its rows.
Note that $\tilde{g}_t(x) $ is defined using $\widehat{\mu}_t$ instead of $\mu_t$.
Finally, we denote by $\tau_t(x) = g_t(x)\1\{g_t(x) > C_3 \|{\vec M}_t\|_F^2\lambda_t/\eps\}$ 
and $\tilde{\tau}_t(x) = \tilde{g}_t(x)\1\{\tilde{g}_t(x) > C_3 \|\bU_t\|_F^2\widehat{\lambda_t}/\eps\}$.
  
\begin{remark}
Recalling \Cref{lem:dtv-stab}, we may again treat the input distribution as a mixture 
$P=(1-\eps)G+\eps B$, where $G$ is a distribution that is $(C'\eps,\delta)$-stable 
with respect to $\mu$.
\end{remark}

\begin{algorithm}[h!]  
    \caption{Robust Mean Estimation In Single-Pass Streaming Model} 
    \label{alg:streaming}
    \begin{algorithmic}[1] 
      \Function{RobustMeanStreaming}{{$\delta, \eps,\failp$ and sample access to $P$}}
      \State  Let $R$ be  such that $\pr_{X \sim G}[\|X -\mu\|_2 > R] \leq \eps$.
      \State Let $P=(1-\eps)G+\eps B$. Without loss of generality, we assume that the points added by the adversary are within $O(R)$ from $\mu$ in Euclidean norm (see \Cref{footnote:1}).
      \State Let $C$ be a sufficiently large constant.
      \State {Let $\outerl = C \log d \log (d R/\eps)$. }
      \State {Let $\innerl = C \log^3 (d R/\eps) \log^2(1/(\tau\eps))$. }
      \State Let $r = C  dR^{2+4\log d}$.    
      \State Obtain a na\"ive estimation $\widehat{\mu}$ of $\mu$ such that $\|\widehat{\mu} - \mu\|_2 \leq 4R$. 
    \label{line:naive_est_2}
      \State {Let $w: \R^d \to [0,1]$ be the weight function. 
      \State Initialize $w_0(x) \gets \1\{\|x - \widehat{\mu}\|_2 \leq {5R}\}$ \new{for all $x \in \R^d$} and $\ell_1 \gets 0$.}
      \For{ $t \in [\outerl]$}  
      \State Define $w_t(x) = w_{t-1}(x)(1-\tilde{\tau}_t(x)/r)^{\ell_t}$.
        \State Let $P_{t}$ be the distribution of $P$ weighted by $w_t$, i.e., $P_{t}(x) = P(x)w_t(x)/ \E_{X \sim P}[w_t(X)]$. 
        \State Let $\mu_t$ be the mean of $P_t$.
        \State Let $\vec{\Sigma}_t$ be the covariance matrix of $P_t$.
        \State Let $\vec B_t = (\E_{X\sim P} [w_t(X)])^2 \vec \Sigma_t - \left(1 -  C_1 \frac{ \delta^2}{\epsilon} \right)\vec I_d$ and $\vec M_t = \vec B_t^{\log d}$.
        \State Compute an $O(\delta)$-accurate estimator $\widehat{\mu}_t$ of $\mu_t$ (see \Cref{lem:final_step}).
        \State Let $\tilde{n} = C'' {R^2 (\log d)^2  } \max\left(d, \frac{\eps^2d}{\delta^4},\frac{R^2\eps^2}{\delta^2}, \frac{R^2\eps^4}{\delta^6} \right)    \log\left(  \frac{d \outerl \log d}{\failp}\right)$. 
        \State For $k \in [\log d]$, denote by $\widehat{\bB}_{t,k}$  the empirical version of $\bB_t$ over $\tilde{n}$ fresh i.i.d.\ samples (see \Cref{sec:concentration} for more details). \label{line:low-memB1}
        \State Define $\widehat{\bM}_t := \prod_{k=1}^{\log d} \widehat{\bB}_{t,k}$ 
         \Comment{{$\widehat{\bM}_t$ is not stored in memory.}}
                \label{line:low-memB2}
        \State Let $\lambda_t =\|\vec B_t\|_2$ \new{and a constant-factor approximation $\widehat{\lambda}_t$ of it}.  
        \If{$ \widehat{\lambda}_t > C_2 \delta^2/\epsilon$}   \label{line:lambda1}
        \For{$j \in [\innerl]$}   \label{line:jl1}
              \State $z_{t,j} \sim \cU (\{ \pm 1\}^d)$. \label{line:choose_z} 
              \State $v_{t,j} \gets \widehat{\vec{M}}_t z_{t,j}$. \Comment{{See \Cref{remark:implementation} for efficient implementation.}} \label{line:multiply}
              \State {Store $v_{t,j}$ in memory.   }
              \label{line:multiplication}
        \EndFor
        \State Denote by $\bU_t$ the matrix with rows $\frac{1}{\sqrt{\innerl}}v_{t,j}$ for $j \in [\innerl]$.  
        \State \label{line:scoresdef1}Let $\tilde{g}_t(x)=\|\vec U_t(x-\widehat{\mu}_{t})\|^2_2$  and $\tilde{\tau}_{t}(x) = \tilde g_{t}(x) \1\{\tilde{g}_{t}(x) > C_3 \|\vec U_t\|_F^2 \widehat{\lambda}_t/\eps\}$.  \label{line:thresholding1}
        \State $\ell_{\max} \gets \left(\frac{dR}{\delta^2/\eps}\right)^{{C \log d}}$. \label{line:set_ell_max2}
        \State $\ell_t \gets \mathrm{DownweightingFilter}(P,w_t,\tilde \tau_t, R, c \widehat{\lambda}_t \|\vec U_t\|_F^2,\ell_{\max})$. \Comment{\Cref{alg:reweighting2}}
        \State{{Store $\ell_t$ in memory.}}
        \EndIf
        \EndFor
      \State \textbf{return} an $O(\delta)$ approximation $\widehat{\mu}_t$ of the mean $\mu_t$ of the distribution $P_t$ (see \Cref{lem:final_step}).   
      \label{final:estimate} \label{line:return}
    \EndFunction  
    \end{algorithmic}  
  \end{algorithm}

As already mentioned, \Cref{alg:streaming} uses two levels of approximation: 
the first level is approximating the true distributional quantities 
by taking samples, and the second is preserving the latter quantities using the JL sketch.
If both of these approximations are sufficiently accurate, 
the correctness of \Cref{alg:streaming} would follow similarly to \Cref{alg:near-linear-only}.  
Of course, the challenge is to ensure that these approximations hold over the entire distribution, 
while controlling the sample and memory complexity of the algorithm.
As we show in \Cref{sec:cover}, this can be achieved by restricting our attention 
to a finite set (cover) of sufficiently large cardinality.
Thus, the deterministic conditions that we require now also involve the cover set, 
which we denote by $S_{\text{cover}}$. The reader may think of $S_{\text{cover}}$ as a fixed set, 
which will be specified later on (\Cref{cor:ran_cov_specific}).

\begin{condition}[Deterministic Conditions for \Cref{alg:streaming}] %
\label{cond:streaming}
{Let $S_{\text{cover}}$ denote the cover of \Cref{cor:ran_cov_specific} for $\eps' = \poly(d,R,1/\eps)^{\log d}$.} 
Our condition consists of the following event:
\begin{enumerate}
\item \emph{Estimator $\widehat{\mu}_t$}: For all $t \in [\outerl]$, we have that 
$\snorm{2}{\widehat{\mu}_t - \mu_t} \leq \delta/100$. \label{as:mean_est}

\item For every $t \in [\outerl]$, if $\|\bB_t\|_2 \geq (C_1/2)\delta^2/\eps$ and 
$\E_{X \sim P}[w_t(X)]\geq 1-O(\eps)$, we have that:
\begin{enumerate}

   \item \emph{Spectral norm of $\bB_t$}: $\widehat{\lambda}_t \in  [\new{0.1} \lambda_t,  20 \lambda_t]$. \label{as:lambda_samples} 
   
    \item \emph{Frobenius norm}:  $\| \bU_t \|_F^2 \in \left[0.8 \| \bM_t\|_F^2, 1.2 \| \bM_t\|_F^2\right]$. \label{as:frob}

    \item \emph{Scores}: 
             $\tilde{g}_t(x) \geq 0.2 g_t(x) - 0.8(\delta^2/\eps^2)\|\bM_t\|_F^2$, for all $x \in S_{\text{cover}}$. \label{as:scores}
\end{enumerate}
    
 \item \emph{Stopping condition}: Let $T_t:=c\widehat{\lambda}_t\|\bU_t \|_F^2$. 
 {For every $w:\R^d \to [0,1]$, the algorithm has access to an estimator $f(w)$}
 for the quantity $\E_{X \sim P}[w(X)\tilde{\tau}_t(X)]$, such that $\widehat{F}(P)>T_t/2$ 
 whenever $\E_{X \sim P}[w(X)\tilde{\tau}_t(X)]>T_t$. This estimator is accurate 
 when called $O(\log(d)\log(dR/\eps))$ times in every iteration $t \in [\outerl]$.  \label{stopping_cond_as}
 \end{enumerate}%
\end{condition}

We note that the \Cref{stopping_cond_as} above is needed to evaluate 
the stopping condition in the downweighting filter. For every $t \in [\outerl]$, 
the stopping condition is evaluated at most $O(\log(\ell_{\max}))$ times, 
with $\ell_{\max}=O(dR^{2+\log d}/(\widehat{\lambda}_t \| \bU_t \|^2)$  
(using \Cref{lem:filterguarantee} with $r=C_4dR^{2+4 \log d} $ 
and $T:= O(\widehat{\lambda}_t \|\vec U_t\|_F^2)$). 
This means that we require the estimator in \Cref{stopping_cond_as} 
to be accurate on $O(\outerl \log(d)\log(dR/\eps))$ calls.

\subsection{Correctness of \Cref{alg:streaming}} \label{sec:correctness_streaming}

The analysis in this section is along the same lines as that of \Cref{sec:correctness}. 
The na\"ive estimation step of \Cref{line:naive_est_2} is the same 
as that used in \Cref{alg:near-linear-only} (see \Cref{sec:omitted_naiveprune}).

Given \Cref{cond:streaming}, we first show the correctness of \Cref{alg:streaming} 
and leave the task of establishing \Cref{cond:streaming} for \Cref{sec:concentration}. 
The proof of correctness would largely follow by our work done in \Cref{sec:correctness}. 
There are  two adjustments needed in these arguments. 

The first concerns \Cref{lem:filterguarantee}, since the algorithm cannot perform exact binary search. 
Instead, it can use the approximate oracle of \Cref{stopping_cond_as} of \Cref{cond:streaming}, 
resulting in a multiplicative constant in the final guarantee. 
For completeness, we prove correctness for this case in \Cref{sec:appendixnewDownweighting}.

\begin{restatable}{lemma}{Lemfilterguaranteenew}  \label{lem:filterguarantee2}
In the context of \Cref{alg:streaming}, if $(1-\eps)\E_{X \sim G}[w(X)\tilde{\tau}(X)] \leq  T$, 
$\|\tilde{\tau}\|_\infty \leq r$, and $\ell_{\max}>r/T$, then \Cref{alg:reweighting2} modifies 
the weight function $w$ to $w'$ such that 
$(i)$ $ (1-\eps) \E_{X \sim G}[w(X) - w'(X)] < \eps \E_{X \sim B}[w(X) - w'(X)]$, 
and $(ii)$ upon termination we have $\E_{X \sim P}\left[ w'(X) \tilde{\tau}(X)   \right] \leq 54T$.
Furthermore, if the estimator of  \Cref{line:estimator_black_box} is set to be that of \Cref{cl:evaluate_cond},  
the algorithm terminates after $O(\log(\ell_{\max}))$ iterations, 
each of which uses $O((R^2\eps/\delta^2)\log(1/\failp))$ samples, 
takes $O(n d)$ time and memory $O(\log(1/\failp))$. 
\end{restatable}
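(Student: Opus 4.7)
The plan is to adapt the proof of \Cref{lem:filterguarantee} (the exact batch version) to account for the inexact stopping condition that the streaming algorithm must use. The overall structure parallels the batch proof: we establish the asymmetric weight-removal inequality \emph{as long as} the algorithm has not yet terminated, and we use the termination criterion to deduce the post-termination bound on $\E_{X \sim P}[w'(X)\tilde{\tau}(X)]$. All the approximation slack is absorbed into the constant, producing $54T$ in place of $2T$.

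First, for the asymmetric removal statement, I would mimic the batch telescoping argument. The elementary inequality $1 - (1-y)^\ell \leq \ell y$ for $y \in [0,1]$, valid since $\|\tilde{\tau}\|_\infty \leq r$, bounds the inlier weight removal $(1-\eps)\E_{X \sim G}[w-w_\ell]$ by $\ell(1-\eps)\E_{X \sim G}[w\tilde\tau]/r \leq \ell T/r$. As long as the binary search has not halted, the one-sided oracle of \Cref{stopping_cond_as} certifies that $\E_{X \sim P}[w_\ell \tilde\tau]$ exceeds a constant multiple of $T$; since the inlier contribution is bounded by $T$, the outlier contribution $\eps \E_{X \sim B}[w_\ell \tilde\tau]$ must exceed $T$, and a matching telescoping lower bound on the outlier weight removal yields the strict inequality claimed in part (i).

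Second, for the stopping guarantee and resource accounting, the approximate binary search halts at some $\ell^* \leq \ell_{\max}$ where the estimator of \Cref{cl:evaluate_cond} first reports a value below the chosen threshold. The one-sided guarantee of \Cref{stopping_cond_as} then gives $\E_{X \sim P}[w_{\ell^*}\tilde\tau] \leq \alpha T$ for the oracle's slack factor $\alpha$; an additional bounded multiplicative factor enters from the discreteness of the binary search, namely the fact that passing from $w_\ell$ to $w_{\ell+1}$ changes $\E_{X \sim P}[w_\ell\tilde\tau]$ by at most a constant factor (since $\tilde\tau(x)/r \in [0,1]$ pointwise, each individual score contribution shrinks by a factor in $[0,1]$). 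Tracking these constants yields the explicit bound $54$. Existence of a halting $\ell^* \leq \ell_{\max}$ follows because the hypothesis $\ell_{\max} > r/T$ forces the weights to shrink enough to drive the expectation below the termination threshold, exactly as in the batch version. The $O(\log\ell_{\max})$ iteration bound is the standard binary search bound; each iteration invokes the estimator of \Cref{cl:evaluate_cond}, which by hypothesis uses $O((R^2\eps/\delta^2)\log(1/\failp))$ fresh samples, $O(nd)$ total time per iteration (reading each sample and evaluating $\tilde\tau(x)$ against the stored $\bU_t$ and $\widehat\mu_t$), and $O(\log(1/\failp))$ memory for the accumulator.

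The main obstacle is the interplay between the one-sided oracle and the binary search. In the batch setting the stopping predicate was exact, so the monotonicity in $\ell$ made the binary search trivially locate the true threshold crossing. Here the oracle may \emph{undershoot}, reporting a small value at some $\ell$ even though the true expectation is still moderately above the threshold, causing the search to halt earlier than the ideal step. The crux is to argue that this early halting is benign, i.e., the true expectation at the selected $\ell^*$ is still at most $54T$. This is achieved by pairing the oracle's one-sided slack with the per-step boundedness of the multiplicative change in $\E_{X \sim P}[w_\ell\tilde\tau]$, so that the reported value and the true expectation differ by only a constant factor at the halting step.
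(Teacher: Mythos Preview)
Your overall plan—adapt the batch telescoping argument and absorb the oracle's slack into the constant—is the right direction, but the execution has a genuine gap concerning how \Cref{alg:reweighting2} actually terminates and what it certifies at the returned index.

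You mischaracterize the halt: \Cref{alg:reweighting2} does not stop ``at the first $\ell$ where the estimator reports a value below the threshold''. It bisects $L$ down to at most two elements and then \emph{returns some $\ell^*$ with $4T\le f(\ell^*)\le 36T$}. Both sides of this range are essential. The upper bound $f(\ell^*)\le 36T$, combined with the oracle guarantee of \Cref{cl:evaluate_cond}, yields $\E_{X\sim P}[w_{\ell^*}(X)\tilde\tau(X)]<54T$, which is part (ii). The lower bound $f(\ell^*)\ge 4T$ yields $\E_{X\sim P}[w_{\ell^*}(X)\tilde\tau(X)]>2T$; by the pointwise monotonicity of $\ell\mapsto \E_{X\sim P}[w_\ell(X)\tilde\tau(X)]$ this gives $\E_{X\sim P}[w_{\ell'}(X)\tilde\tau(X)]>2T$ for \emph{every} $\ell'\le\ell^*$, which is exactly the hypothesis the per-step comparison in the proof of \Cref{lem:filterguarantee} needs for part (i). Your sketch never establishes this lower bound at the returned index: the phrase ``as long as the binary search has not halted, the oracle certifies that $\E_{X\sim P}[w_\ell\tilde\tau]$ exceeds a constant multiple of $T$'' does not describe a bisection that touches only $O(\log\ell_{\max})$ values of $\ell$, out of order, and with fresh randomness each time (so $\ell\mapsto f(\ell)$ is not even monotone).

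The ``per-step boundedness of the multiplicative change in $\E_{X\sim P}[w_\ell\tilde\tau]$'' you invoke is both unnecessary and, in the direction you need, false: the one-step ratio equals a weighted average of $(1-\tilde\tau(x)/r)\in[0,1]$ and can be arbitrarily close to $0$, so it cannot be used to transfer an upper bound from $\ell^*+1$ back to $\ell^*$ or to relate $f(\ell^*)$ and the true expectation. The paper's argument instead shows that the bisection maintains the invariant $L\cap L^*\neq\emptyset$ for $L^*:=\{\ell:6T\le \E_{X\sim P}[w_\ell(X)\tilde\tau(X)]\le 18T\}$, via a case split on whether the midpoint lies above, in, or below $L^*$ (using only monotonicity and the oracle's one-sided guarantees). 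This invariant guarantees that the final $L$ contains an element whose $f$-value lands in $[4T,36T]$, so the return step succeeds, and both conclusions then follow as above.
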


\begin{algorithm}  
    
    \caption{Downweighting Filter using Approximate Oracle} 
      \label{alg:reweighting2}
    \begin{algorithmic}[1]  
      \Function{DownweightingFilter}{$P,w,\tilde{\tau}, R, T,\ell_{\max}$}
      \State $r \gets C  dR^{2+4\log d}$.  %
      \State Denote by $f(\ell)$ an estimator close to $\E_{X \sim P}[w(X)(1-\tilde{\tau}(X)/r)^{\ell} \tilde{\tau}(X)]$ (see \Cref{cl:evaluate_cond} for details). \label{line:estimator_black_box}
      \State $L \gets \{1,2\ldots,  \ell_{\max}\}$
      \While{$|L|>2$}   %
      \State Let $\ell$ be the element in the middle of $L$. \label{line:powerell}
           \If{$f(\ell)>9T$}
           \State Discard all elements smaller than $\ell$ from $L$.
           \Else
           \State Discard all elements greater than $\ell$ from $L$.
           \EndIf
        \EndWhile
        \State \textbf{return} any $\ell$ of $L$ satisfying $4T \leq f(\ell) \leq 36T$.
    \EndFunction  
    \end{algorithmic}  
  \end{algorithm}

The second adjustment is regarding the analog of \Cref{lem:AfterFiltering},  
i.e., $\eps \E_{X \sim B}[w_{t+1} \tau_t(X)]$ is small 
(the bound on $\eps \E_{X \sim B}[w_{t+1} \tilde{\tau}_t(X)]$ 
follows from the stopping condition as before).
Since the support is unbounded, we use an argument based on a fixed cover 
to show that the downweighting filter succeeds 
with the JL-sketch of size $\innerl$. The statement is given below.

\begin{lemma} \label{lem:remaining_thing} 
Under the deterministic \Cref{cond:streaming} and the context of \Cref{alg:streaming}, 
we have that  $\E_{X \sim B}[w_{t+1}(X) \tau_t(X)] \leq 5\E_{X \sim B}[w_{t+1}(X)\tilde{\tau}_t(X)] 
+ c (\lambda_t/\eps) \|\bM_t\|^2_F$, where $c$ is of the form $C/C_2$ 
with $C$ being a sufficiently large constant and $C_2$ being the constant used in \Cref{alg:streaming}.
\end{lemma}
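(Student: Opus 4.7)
The argument mirrors \Cref{cl:relation_tau_1} from the batch setting but must pass through the cover $S_{\text{cover}}$, since the JL-approximation guaranteed by \Cref{cond:streaming} only holds pointwise on $S_{\text{cover}}$. I would proceed in two steps: first, derive a pointwise bound $\tau_t(x) \leq 5\tilde{\tau}_t(x) + c(\lambda_t/\eps)\|\bM_t\|_F^2$ valid for every $x \in S_{\text{cover}}$; and second, lift this pointwise bound to the expectation under $B$ via the cover's expectation-approximation property.

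\textbf{Step 1 (Pointwise bound on the cover).} Rearranging item~2c of \Cref{cond:streaming} gives, for every $x \in S_{\text{cover}}$,
\[
g_t(x) \;\leq\; 5\,\tilde{g}_t(x) + 4\,\frac{\delta^2}{\eps^2}\|\bM_t\|_F^2.
\]
I would then do a case analysis paralleling \Cref{cl:relation_tau_1}: if $g_t(x) \leq C_3\|\bM_t\|_F^2\lambda_t/\eps$ then $\tau_t(x)=0$ and there is nothing to show; otherwise $\tau_t(x)=g_t(x)$ and we split on whether $\tilde{g}_t(x)$ clears its threshold $C_3\|\bU_t\|_F^2\widehat{\lambda}_t/\eps$. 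If it does, then $\tilde{\tau}_t(x)=\tilde{g}_t(x)$ and the rearranged inequality yields $\tau_t(x) \leq 5\tilde{\tau}_t(x) + 4(\delta^2/\eps^2)\|\bM_t\|_F^2$ directly. If it does not, then $\tilde{\tau}_t(x)=0$ while $\tilde{g}_t(x) \leq C_3\|\bU_t\|_F^2\widehat{\lambda}_t/\eps$, so combining with the Frobenius/spectral approximations $\|\bU_t\|_F^2 \lesssim \|\bM_t\|_F^2$ and $\widehat{\lambda}_t \lesssim \lambda_t$ (items 2a--2b) together with the filter trigger $\widehat{\lambda}_t > C_2\delta^2/\eps$ (which, via item~2a, gives $\lambda_t \gtrsim (C_2/20)\,\delta^2/\eps$ so that $(\delta^2/\eps^2)\|\bM_t\|_F^2 \lesssim (\lambda_t/\eps)\|\bM_t\|_F^2/C_2$) produces the uniform bound $\tau_t(x) \leq c(\lambda_t/\eps)\|\bM_t\|_F^2$, with $c$ of the desired form $C/C_2$.

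\textbf{Step 2 (Lift to a $B$-expectation).} The cover $S_{\text{cover}}$ from \Cref{cor:ran_cov_specific} is constructed so that expectations under $P$ of the relevant weight-times-score functions are approximated by their averages over $\cU(S_{\text{cover}})$ up to a small additive error. Using the identity $\eps\,\E_B[\cdot] = \E_P[\cdot] - (1-\eps)\,\E_G[\cdot]$, I would first replace $\E_P[w_{t+1}\tau_t]$ by $\E_{\cU(S_{\text{cover}})}[w_{t+1}\tau_t]$ via the cover approximation, apply the Step~1 pointwise bound, and then reverse the approximation on the $\tilde{\tau}_t$-term to return to $\E_P[w_{t+1}\tilde{\tau}_t]$. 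Dividing by $\eps$ and absorbing both the cover-approximation error and the $G$-contribution to $\tau_t$ (which is controlled by \Cref{cor:shift} together with the weight invariant $\E_G[w_{t+1}] \geq 1-O(\eps)$, analogously to how \Cref{lem:AfterFiltering} was handled) into the slack term $c(\lambda_t/\eps)\|\bM_t\|_F^2$ yields the stated inequality.

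\textbf{Main obstacle.} The principal difficulty is that the cover's uniform expectation-approximation must apply simultaneously to the random functions $w_{t+1}\tau_t$ and $w_{t+1}\tilde{\tau}_t$, whose dependence on the JL matrix $\bU_t$, the mean estimate $\widehat{\mu}_t$, the threshold $C_3\|\bU_t\|_F^2\widehat{\lambda}_t/\eps$, and the downweighting exponent $\ell_t$ endows the underlying function class with nontrivial combinatorial complexity. Ensuring this uniformity across all $t \in [\outerl]$ is exactly what the VC-dimension-based cover construction deferred to \Cref{sec:cover} is meant to deliver, and the approximation error produced by that construction must be tight enough to be absorbed into the $c(\lambda_t/\eps)\|\bM_t\|_F^2$ slack without spoiling the $C/C_2$ form of $c$.
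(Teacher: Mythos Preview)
Your Step~1 is exactly the paper's approach: the pointwise bound $\tau_t(x)\le 5\tilde{\tau}_t(x)+c(\lambda_t/\eps)\|\bM_t\|_F^2$ on $S_{\text{cover}}$ is precisely \Cref{cl:RelationClaim}, and your case analysis matches its proof.

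Your Step~2, however, misidentifies what the cover does. In the paper's setup, \Cref{cor:ran_cov_specific} builds $S_{\text{cover}}$ so that the empirical average over $S_{\text{cover}}$ approximates expectations under the \emph{outlier} distribution $B$, not under $P$. Consequently the paper's proof never touches the decomposition $\eps\,\E_B=\E_P-(1-\eps)\E_G$: it simply replaces $\E_B[w_{t+1}\tau_t]$ by the cover average (incurring error $\eps' r'$), applies the pointwise Step~1 bound termwise, and then replaces the resulting cover average of $w_{t+1}\tilde{\tau}_t$ back by $\E_B[w_{t+1}\tilde{\tau}_t]$ (another $\eps' r'$). The choice of $\eps'$ absorbs the $O(\eps' r')$ slack into the $c(\lambda_t/\eps)\|\bM_t\|_F^2$ term, and the argument is done --- no $G$-contribution ever needs to be controlled.

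Your detour through $P$ is not fatal in principle: one \emph{could} redo the cover construction for $P$ instead of $B$ (the VC argument of \Cref{prop:uniform_theorem} applies to any distribution), re-establish \Cref{cond:streaming}(2c) on that new cover, and then run your $\E_P\to\E_G$ bookkeeping using \Cref{lem:goodpartsmall} to bound $(1-\eps)\E_G[w_{t+1}\tilde{\tau}_t]$. But as written, your Step~2 does not match the paper's \Cref{cond:streaming}/\Cref{cor:ran_cov_specific} (which are tied to $B$), and it introduces an extra moving part --- the $G$-term --- that the direct route avoids entirely.
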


The next section is dedicated to proving \Cref{lem:remaining_thing}. 
Here we just show that \Cref{lem:remaining_thing} 
suffices to prove the analog of \Cref{lem:AfterFiltering} below.

\begin{restatable}{lemma}{LemAfterFilt} \label{lem:after_filtering2}
Consider the context of \Cref{alg:streaming} and assume that \Cref{cond:streaming} holds. 
Then $\eps \cdot \E_{X \sim B}[w_{t+1}(X)\tau_t(X)] \leq c'\lambda_t \| \bM_t\|_F^2$, 
for some constant $c''$ of the form $C''/C_2$, where $C_2$ is the constant 
used in  \Cref{line:lambda1} and $C'$ is some absolute constant. 
\end{restatable}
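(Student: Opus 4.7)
The plan is to mirror the proof of Lemma~\ref{lem:AfterFiltering} from the batch setting, replacing the pointwise comparison between $\tilde\tau_t$ and $\tau_t$ (which in the batch setting relied on having a fixed finite dataset $S$ and Claim~\ref{cl:relation_tau_1}) with the cover-based bound given by Lemma~\ref{lem:remaining_thing}. Two additional approximations need to be carried through carefully: the sample-based approximation of $\vec B_t$ which gives $\widehat{\vec M}_t$ and $\widehat\lambda_t$, and the JL-based approximation of $\widehat{\vec M}_t$ by $\vec U_t$. Both are controlled by the estimates in Items~\ref{as:lambda_samples} and~\ref{as:frob} of \Cref{cond:streaming}.

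The first step is to bound $\E_{X\sim P}[w_{t+1}(X)\tilde\tau_t(X)]$. Invoking Lemma~\ref{lem:filterguarantee2} for the Downweighting filter with threshold $T_t = c\,\widehat\lambda_t\|\vec U_t\|_F^2$, upon termination we obtain $\E_{X\sim P}[w_{t+1}(X)\tilde\tau_t(X)]\le 54\,T_t$. Decomposing the expectation into inliers and outliers via $P = (1-\eps)G + \eps B$ then yields
\[
\eps\,\E_{X\sim B}[w_{t+1}(X)\tilde\tau_t(X)] \le 54\,T_t.
\]
Using \Cref{cond:streaming}, namely $\widehat\lambda_t \le 20\lambda_t$ and $\|\vec U_t\|_F^2 \le 1.2\|\vec M_t\|_F^2$, we convert this into $\eps\,\E_{X\sim B}[w_{t+1}(X)\tilde\tau_t(X)] \le c_1\,\lambda_t\,\|\vec M_t\|_F^2$ for a constant $c_1$ of the form $C_1'/C_2$ (recall $c$ inside $T_t$ is chosen of this form).

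The second step is to pass from $\tilde\tau_t$ to $\tau_t$ on the outlier part. Here I would apply Lemma~\ref{lem:remaining_thing}, which gives
\[
\E_{X\sim B}[w_{t+1}(X)\tau_t(X)] \le 5\,\E_{X\sim B}[w_{t+1}(X)\tilde\tau_t(X)] + c_2\,(\lambda_t/\eps)\,\|\vec M_t\|_F^2,
\]
where $c_2$ is of the form $C_2'/C_2$. Multiplying both sides by $\eps$ and substituting the bound from the first step gives
\[
\eps\,\E_{X\sim B}[w_{t+1}(X)\tau_t(X)] \le 5c_1\,\lambda_t\,\|\vec M_t\|_F^2 + c_2\,\lambda_t\,\|\vec M_t\|_F^2 \le c'\,\lambda_t\,\|\vec M_t\|_F^2,
\]
where $c'$ is of the form $C''/C_2$, as claimed.

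The main conceptual obstacle has actually been discharged earlier, inside Lemma~\ref{lem:remaining_thing}: in the streaming model we cannot assert a pointwise comparison between $\tilde g_t$ and $g_t$ on the entire support of $P$, because the algorithm never observes most of that support. That step requires the cover argument from Section~\ref{sec:cover} to reduce the comparison to a fixed set $S_{\text{cover}}$ of cardinality $|S_{\text{cover}}| = \exp(\polylog(dR/\eps))$, which is exactly what makes the JL sketch size $\innerl = \polylog(dR/\eps)$ of \Cref{cond:streaming} sufficient; additionally, since $\tilde g_t$ is defined using the approximate mean $\widehat\mu_t$ rather than $\mu_t$, one must absorb the $O(\delta)$ shift via $\|\widehat\mu_t - \mu_t\|_2 \le \delta/100$ together with Corollary~\ref{cor:shift}. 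Once Lemma~\ref{lem:remaining_thing} is in hand, the present lemma reduces to the bookkeeping combination of Lemmas~\ref{lem:filterguarantee2} and~\ref{lem:remaining_thing} sketched above.
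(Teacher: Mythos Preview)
Your proposal is correct and follows essentially the same approach as the paper: combine \Cref{lem:remaining_thing} with \Cref{lem:filterguarantee2}, using Items~\ref{as:lambda_samples} and~\ref{as:frob} of \Cref{cond:streaming} to convert $T_t=c\,\widehat\lambda_t\|\vec U_t\|_F^2$ into a constant multiple of $\lambda_t\|\vec M_t\|_F^2$. The paper's proof applies the two lemmas in the opposite order (first \Cref{lem:remaining_thing}, then the filter bound), but the logic is identical.
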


\begin{proof}
Denoting by $c,c',c''>0$ constants that are all multiples of  $1/C_2$, we have the following:
\begin{align*}
\eps \E_{X \sim B}[w_{t+1}(X){\tau}_t(x)] &\leq 
5\eps \E_{X \sim B}[w_{t+1}(X)\tilde{\tau}_t(x)] + c\lambda_t\|\bM_t\|^2_F
\leq 5c'\lambda_t\|\bM_t\|^2_F + c\lambda_t\|\bM_t\|^2_F \leq c'' \lambda_t\|\bM_t\|^2_F\;,
\end{align*}
where the first inequality uses \Cref{lem:remaining_thing} 
and the  second inequality uses \Cref{lem:filterguarantee2}.
\end{proof}

Letting $\phi_t:=\tr(\bM_t^2)$ denote the potential function, 
the above result allows us to follow the same steps as in \Cref{sec:mainproof} 
to prove that  $\phi_{t+1} \leq 0.9999\phi_t$ exactly as in \Cref{sec:mainproof}. 
Thus, we get that after $\outerl$ iterations, we have that 
$\|\vec \Sigma_t\|_2 \lesssim \delta^2/\eps$. 
Under \Cref{as:mean_est} of \Cref{cond:streaming}, 
we have that the final estimate $\widehat{\mu}_t$ satisfies that $\|\widehat{\mu}_t - \mu\|_2 = O(\delta)$.
This completes the proof of correctness of \Cref{alg:streaming}.

\subsubsection{Proof of \Cref{lem:remaining_thing} via a Cover Argument} \label{sec:cover}

To outline the idea of proving \Cref{lem:remaining_thing}, 
recall the proof in the setting of \Cref{sec:near_linear_main_body}. 
There, we just required that $\tilde{g}_t(x)/g_t(x) \in [0.8,1.2]$ 
for all samples $x$ in our dataset, which can be translated to some 
relation between $\tilde{\tau}(x)$ and $\tau(x)$. Then, 
since $B$ was the empirical distribution on $\eps n$ of these points, 
the desired condition followed. However, in our case we cannot use 
pointwise relationships, since the distribution $B$ may be continuous 
and the Johnson-Lindenstrauss argument might not work for the entire $\R^d$ with $\polylog(d)$ vectors. 
The idea is first to relate $\E_{X \sim B}[w_{t+1}(X){\tau}_t(X)]$ to a discrete expectation over $N$ 
(not too many) points from a fixed set, then use the relationship between $\tilde{\tau}$ and $\tau$ 
for these points, and finally relate that discrete expectation back to 
$\E_{X \sim B}[w_{t+1}(X)\tilde{\tau}_t(X)]$. The existence of a cover of a small size is stated in the following.

\begin{restatable}{lemma}{CorRanCovSpecific} \label{cor:ran_cov_specific}
Consider the setting of \Cref{alg:streaming}, where $B$ is the distribution 
of outliers supported in a ball of radius $R$ around $\mu$. 
Let $r':= \left( C dR^2 + 1 + C_1\delta^2/\eps  \right)^{{C \log d}}$ for sufficiently large constant $C$. 
{Denote by $\eps$ the contamination rate and let an arbitrary $\eps' \in (0,1)$.}
There exists a set $S_{\text{cover}}$ of $N= \frac{1}{\eps'^3} d^4 \outerl^2 \innerl^2 (dR\eps/\delta^2)^{O(\log d)}$
points $x_1,\ldots, x_N$ lying in the ball of radius $R$ around $\mu$, such that for all $t \in [\outerl]$, 
{for all choices of the vectors $z_{t,j}$ of  \Cref{line:choose_z} of \Cref{alg:streaming} it holds}
\begin{align*}
\abs[\Big]{\E_{X \sim B}\left[\frac{1}{r'}w_{t+1}(X)\tilde{\tau}_t(X)\right] 
- \frac{1}{N}\sum_{i=1}^N \frac{1}{r'}w_{t+1}(x_i)\tilde{\tau}_t(x_i)} &\leq \eps'  \\
\text{and} \quad 
\abs[\Big]{\E_{X \sim B}\left[\frac{1}{r'}w_{t+1}(X)\tau_t(X)\right] 
- \frac{1}{N}\sum_{i=1}^N \frac{1}{r'}w_{t+1}(x_i)\tau_t(x_i)} &\leq \eps'  \;.
\end{align*}
\end{restatable}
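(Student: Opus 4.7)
The plan is to use the probabilistic method: I would draw $N$ i.i.d.\ samples $x_1,\ldots,x_N$ from the outlier distribution $B$ (supported in the ball of radius $R$ around $\mu$) and show that, with positive probability, the resulting set $S_{\text{cover}}$ enjoys the claimed uniform approximation; existence of a deterministic cover then follows. Toward this end, I parametrize the full random state of \Cref{alg:streaming} by $\theta=\bigl((\bU_s,\widehat\mu_s,\widehat\lambda_s,\ell_s)\bigr)_{s\in[\outerl]}$ living in a bounded set $\Theta$ with $O(\outerl\innerl d)$ continuous coordinates (for which a priori bounds follow from the deterministic conditions) and $\outerl$ integer coordinates $\ell_s\in\{1,\ldots,\ell_{\max}\}$. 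The relevant functions $f_\theta^{(t)}(x) := \frac{1}{r'}w_{t+1}^\theta(x)\tilde\tau_t^\theta(x)$ (and analogously with $\tau_t^\theta$) take values in $[0,1]$ on the ball of radius $R$ by the choice of $r'$, via a direct extension of the estimate in \Cref{eq:bound_on_tau}.

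The argument then proceeds via an $\eta$-net plus union bound. Construct an $\ell_\infty$-net $\Theta_\eta\subseteq\Theta$ of cardinality at most $(\mathrm{range}/\eta)^{O(\outerl\innerl d)}\cdot\ell_{\max}^{\outerl}$, where the first factor nets the continuous coordinates and the second enumerates all integer choices of the exponents. For each $\theta\in\Theta_\eta$ and each $t\in[\outerl]$, Hoeffding's inequality gives
\[
\Pr\!\left[\abs[\big]{\tfrac{1}{N}\sum_{i=1}^N f_\theta^{(t)}(x_i)-\E_{X\sim B}[f_\theta^{(t)}(X)]}>\tfrac{\eps'}{2}\right]\le 2\exp(-N\eps'^2/8).
\]
Taking a union bound over $\Theta_\eta\times[\outerl]\times\{\tilde\tau_t,\tau_t\}$ and choosing $N$ of the claimed form forces the total failure probability strictly below one. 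For $\theta\notin\Theta_\eta$, I pass to the nearest net-point $\theta^*$ and invoke Lipschitz continuity of $\theta\mapsto f_\theta^{(t)}(x)$ (uniformly over $x$ in the ball) to incur an additional additive error at most $\eta\cdot L_{\mathrm{Lip}}$; picking $\eta=\eps'/(2L_{\mathrm{Lip}})$ absorbs this into $\eps'/2$. Substituting $L_{\mathrm{Lip}}\le (dR\eps/\delta^2)^{O(\log d)}$ delivers the desired sample count.

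The principal obstacle will be bounding $L_{\mathrm{Lip}}$: the weight $w_{t+1}^\theta$ is a product of up to $\outerl$ factors of the form $(1-\tilde\tau_s^\theta(x)/r)^{\ell_s}$ with exponents possibly as large as $\ell_{\max}=(dR\eps/\delta^2)^{C\log d}$. The saving grace is that each such factor lies in $[0,1]$ and has derivative bounded by $\ell_{\max}/r$ in $\tilde\tau_s^\theta(x)/r$, while each atom $\tilde\tau_s^\theta(x)$ is $\mathrm{poly}(R,\|\bU_s\|_{\mathrm{op}})$-Lipschitz in $(\bU_s,\widehat\mu_s)$ away from its thresholding discontinuity. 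Combining via the telescoping identity
\[
\prod_s a_s-\prod_s b_s=\sum_s\Bigl(\prod_{s'<s}a_{s'}\Bigr)(a_s-b_s)\Bigl(\prod_{s'>s}b_{s'}\Bigr)
\]
for $[0,1]$-valued factors then yields an overall Lipschitz constant of the claimed magnitude. The hard-thresholding indicator inside $\tilde\tau_s$ is not Lipschitz, so I would include its threshold as an additional scalar coordinate in $\theta$ and handle the boundary region $\{x : |\tilde g_s^\theta(x)-\mathrm{threshold}_s|\le\eta\}$ separately by an upper bound on its mass under $B$; this boundary term contributes an extra factor of $\eps'$ of slack, accounting for the $\eps'^{-3}$ (rather than $\eps'^{-2}$) dependence in the final sample complexity. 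A minor additional subtlety is ensuring that the approximations for both $\tilde\tau_t^\theta$ and $\tau_t^\theta$ hold simultaneously on the same $S_{\text{cover}}$, which costs only a factor of two in the union bound.
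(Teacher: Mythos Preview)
Your approach has a genuine gap at the threshold-handling step. You propose to control the discontinuity of the indicator in $\tilde{\tau}_s^\theta(x)=\tilde g_s^\theta(x)\1\{\tilde g_s^\theta(x)>\text{thr}_s(\theta)\}$ by bounding the $B$-mass of the boundary slab $\{x:|\tilde g_s^\theta(x)-\text{thr}_s(\theta)|\le\eta\}$. But $B$ is the (adversarial) outlier distribution; there is no anti-concentration assumption on it, and the lemma must hold for \emph{all} parameter choices $\theta$. In particular, $B$ may be entirely supported on the slab corresponding to some $\theta$, making that mass equal to $1$. The jump in $f_\theta^{(t)}$ across the threshold is not small either: the factor $(1-\tilde\tau_s(x)/r)^{\ell_s}$ in $w_{t+1}$ drops from $1$ to $(1-T_s/r)^{\ell_s}$, which is $\Theta(1)$ for the values of $\ell_s$ actually used, so $f_\theta^{(t)}$ can change by a constant on a set of full $B$-measure. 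Hence neither the first nor the third term in your triangle-inequality decomposition,
\[
|\E_B f_\theta-\E_B f_{\theta^*}|+|\E_B f_{\theta^*}-\widehat{\E}f_{\theta^*}|+|\widehat{\E}f_{\theta^*}-\widehat{\E}f_\theta|,
\]
is controlled by making $\eta$ small, and the $\eps'^{-3}$ accounting does not rescue it.

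The paper sidesteps Lipschitz analysis altogether. It observes that for every $t$ the map $x\mapsto\frac{1}{r'}w_{t+1}(x)\tilde\tau_t(x)$ (and likewise with $\tau_t$) is computed by a short straight-line program over the reals using only $+,-,\times,/$ and conditional jumps, parameterized by $k=O(d\,\outerl\max(\innerl,d))$ reals and running in $m=d\,\outerl\max(\innerl,d)\,(dR\eps/\delta^2)^{O(\log d)}$ steps (the indicators become comparisons; the powers $(1-\tilde\tau_s/r)^{\ell_s}$ are unrolled, which is why $\ell_{\max}$ enters $m$). Goldberg's theorem then gives pseudo-dimension $O(km)$ for this class, and the Anthony--Bartlett uniform-convergence bound yields, for any distribution $B$, a sample of size $N=\tilde O(km/\eps'^2)$ (loosened to $km/\eps'^3$) on which the empirical mean is $\eps'$-close to $\E_B$ uniformly over the class. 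This is distribution-free and handles the hard thresholds without any boundary-mass argument; that is precisely the ingredient your net-plus-Lipschitz scheme is missing.
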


We prove this result in \Cref{sec:omitted_cover}. 
Here we show how it implies the desired condition, following the  proof sketch from the start of this section.

\begin{proof}[Proof of \Cref{lem:remaining_thing}]
    Let $r':= \left( C dR^2 + 1 + C_1\delta^2/\eps  \right)^{{C \log d}}$ and $\eps' \in (0,1)$. Applying 
 \Cref{cor:ran_cov_specific}, let $S_{\text{cover}}$ be the corresponding cover of cardinality $N$.
From the guarantee of approximation of $\tilde{g}_t$ for every $x \in S_{\text{cover}}$, 
we get the following approximation for $\tilde{\tau}_t(x)$ for $x \in S_{\text{cover}}$ (proved in \Cref{sec:omitted_cover}).

\begin{restatable}{claim}{RelationClaim} \label{cl:RelationClaim}
Let $S$ be the cover of \Cref{cor:ran_cov_specific} with $r'$ and $\eps'$ as defined above.
Suppose that the deterministic condition \Cref{cond:streaming} holds.
If $x \in S_{\text{cover}}$, then $\tau_t(x) \leq 5\tilde{\tau}_t(x) + (18C_3 + 12/C_2)(\lambda_t/\eps)  \|\bM_t\|_F^2 $,
where $C_3$ and $C_2$ are the constants used in \Cref{alg:streaming}. 
\end{restatable}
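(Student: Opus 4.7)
The plan is to deduce the claim from the one-sided score approximation \Cref{as:scores} by a short case analysis on whether the two thresholds defining $\tau_t$ and $\tilde\tau_t$ are active at $x$. If $\tau_t(x)=0$, then the inequality is immediate since the right-hand side is nonnegative. So I assume throughout that $\tau_t(x) = g_t(x) > C_3\|\bM_t\|_F^2\,\lambda_t/\eps$. Rearranging the lower bound $\tilde g_t(x) \geq 0.2\,g_t(x) - 0.8(\delta^2/\eps^2)\|\bM_t\|_F^2$ from \Cref{as:scores} gives the key display
\[
g_t(x) \;\leq\; 5\,\tilde g_t(x) \;+\; 4\,\frac{\delta^2}{\eps^2}\,\|\bM_t\|_F^2.
\]
Because the algorithm is inside the branch $\widehat\lambda_t > C_2\,\delta^2/\eps$ and \Cref{as:lambda_samples} gives $\widehat\lambda_t = O(\lambda_t)$, I conclude $\delta^2/\eps = O(1/C_2)\,\lambda_t$, so the additive error in the display is $O(1/C_2)\,(\lambda_t/\eps)\,\|\bM_t\|_F^2$.

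Next, I would split on whether the threshold defining $\tilde\tau_t(x)$ fires. If $\tilde g_t(x) > C_3\|\bU_t\|_F^2\widehat\lambda_t/\eps$, then $\tilde\tau_t(x) = \tilde g_t(x)$ and the display immediately yields $\tau_t(x) \leq 5\,\tilde\tau_t(x) + O(1/C_2)(\lambda_t/\eps)\|\bM_t\|_F^2$, which is of the desired form. Otherwise $\tilde\tau_t(x) = 0$ and
\[
\tilde g_t(x) \;\leq\; C_3\|\bU_t\|_F^2\,\widehat\lambda_t/\eps \;\leq\; O(C_3)\,(\lambda_t/\eps)\,\|\bM_t\|_F^2,
\]
where I use the Frobenius comparison $\|\bU_t\|_F^2 \leq 1.2\,\|\bM_t\|_F^2$ from \Cref{as:frob} and the upper bound on $\widehat\lambda_t$ from \Cref{as:lambda_samples}. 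Plugging this into the first display bounds $\tau_t(x) = g_t(x)$ by $O(C_3 + 1/C_2)\,(\lambda_t/\eps)\|\bM_t\|_F^2$, which is precisely the claim in the case $\tilde\tau_t(x)=0$. The explicit constants $18 C_3 + 12/C_2$ in the statement are obtained by tracking the numerical factors ($5 \cdot 1.2 \cdot$ the spectral-approximation constant, etc.) through these two steps.

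The main obstacle I expect is not algebraic but a bookkeeping issue: the three approximations \Cref{as:lambda_samples}, \Cref{as:frob}, \Cref{as:scores} of \Cref{cond:streaming} are asserted only under the prerequisites $\|\bB_t\|_2 \geq (C_1/2)\delta^2/\eps$ and $\E_{X\sim P}[w_t(X)] \geq 1-O(\eps)$. The first is inherited from the branch condition $\widehat\lambda_t > C_2\,\delta^2/\eps$ together with the (one-sided) spectral approximation, provided $C_2$ is chosen sufficiently large compared to $C_1$, and the second is the weight-invariant that the streaming algorithm maintains throughout its iterations, analogous to \Cref{cl:invariant-weight} for the batch setting. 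Once these two are verified, the rest of the argument is the elementary case analysis above; indeed, the score cutoff $C_3\|\bM_t\|_F^2\lambda_t/\eps$ in the definition of $\tau_t$ was chosen precisely to dominate the ``slack'' term coming from the lossy, one-sided JL approximation in \Cref{as:scores}.
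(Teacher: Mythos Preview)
Your proposal is correct and follows essentially the same approach as the paper: a three-case analysis on which of the two thresholds fire, driven by the rearranged one-sided bound $g_t(x)\le 5\,\tilde g_t(x)+4(\delta^2/\eps^2)\|\bM_t\|_F^2$ from \Cref{as:scores}, together with $\|\bU_t\|_F^2\le 1.2\|\bM_t\|_F^2$ and the relation $\delta^2/\eps \lesssim \lambda_t/C_2$ coming from the branch condition $\widehat\lambda_t>C_2\delta^2/\eps$ and \Cref{as:lambda_samples}. Your remark about the prerequisites of \Cref{cond:streaming} is also well placed; the paper handles this implicitly via the invariant analogous to \Cref{cl:invariant-weight}.
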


Using \Cref{cl:RelationClaim} and  \Cref{cor:ran_cov_specific}, 
we obtain the following series of inequalities: 
\begin{align*}
&\E_{X \sim B}[w_{t+1}(X) \tau_t(X)] =  r' \E_{X \sim B}\left[\frac{1}{r'}w_{t+1}(X)\tau_t(X)\right] \\
&\leq \eps' r' +  r' \frac{1}{N} \sum_{i = 1}^N \frac{1}{r'}w_{t+1}(x_i)\tau_t(x_i)  \tag{using \Cref{cor:ran_cov_specific} for $\tau_t$} \\
&= \eps' r' +   \frac{1}{N} \sum_{i = 1}^N w_{t+1}(x_i)\tau_t(x_i)   \\
&\leq \eps'r' + (18 C_3+12/C_2)(\lambda_t/\eps)  \|\bM_t\|_F^2 
+ 5 \frac{1}{N} \sum_{i = 1}^N w_{t+1}(x_i)\tilde{\tau}_t(x_i) \tag{using \Cref{cl:RelationClaim} and $w_t \leq 1$}\\
&\leq 6\eps'r' + (18 C_3+12/C_2)(\lambda_t/\eps) \|\bM_t\|_F^2  
+ 5  \E_{X \sim B}[w_{t+1}(X)\tilde{\tau}_t(X)]   \tag{using \Cref{cor:ran_cov_specific}  for $\tilde{\tau}_t$ } \\
&= 5\E_{X \sim B}[w_{t+1}(X)\tilde{\tau}_t(X)] + (19C_3 + 12/C_2) (\lambda_t/\eps) \|\bM_t\|_F^2. \tag{using the definition of $\epsilon'$}
\end{align*}
{For the last line above, we want to choose $\eps'$ such that 
$\eps' \leq \frac{C_3\lambda_t}{\eps r'} \|\bM_t\|_F^2$. 
Since $\|\bM_t\|_F^2  \geq (C_2\delta^2/\eps)^{2\log d}$ (otherwise the algorithm has already terminated), 
it suffices to choose an $\epsilon'$ that satisfies $\eps' \gtrsim \frac{(C_2\delta^2/\eps)^{2\log d}}{\eps (CdR^2 + 1 + C_1 \delta^2/\eps)^{{C \log d}}}$.}
This gives an upper bound on the cardinality of the set $S_{\text{cover}}$, 
which gives the upper bound on the size of the JL-sketch, i.e., $\innerl$. 
We provide explicit calculations in  \Cref{remark:choice_of_L}. 
\end{proof}

 \subsection{Establishing  \Cref{cond:streaming}} \label{sec:concentration}

Throughout this section, we assume sample access to the distribution $P_t$. 
As mentioned earlier, \Cref{alg:reweighting} can simulate this 
by drawing a sample $x$ from $P$, calculating $w_t(x)$ 
(with poly-logarithmic cost in terms of running time and memory), 
and rejecting the sample with probability $1-w_t(x)$. 
With high probability, rejection sampling can increase the sample complexity 
by at most a constant factor because $\E[w_t(X)]\geq 1- O(\eps)$ (cf. \Cref{cl:invariant-weight}).

\subsubsection{\Cref{as:mean_est}} \label{sec:proof_of_estimator}
We establish \Cref{as:mean_est} in the following, which is proved in \Cref{sec:omitted_last}.

\begin{restatable}{lemma}{FinalStep} \label{lem:final_step}
In the setting of \Cref{alg:streaming}, there exist estimators $\widehat{\mu}_t$ such that, 
with probability at least $1-\failp$, for all $t \in [\outerl]$ we have that $\snorm{2}{\widehat{\mu}_t - \mu_t} \leq \delta/100$.
Furthermore, each $\widehat{\mu}_t$ can be computed  on a stream of
$n = O\left( \frac{R^2}{\delta^2/\eps} \log(\outerl/\failp) + \frac{d(1+\delta^2/\eps)}{\delta^2} \log(\outerl/\failp)\right)$ 
independent samples from $P_t$, in time $O(nd\log(\outerl/\failp))$ and using memory $O(d\log(\outerl/\failp))$.
\end{restatable}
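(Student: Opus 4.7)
The plan is to construct $\widehat{\mu}_t$ as the \emph{geometric median of batch means}. Partition the stream of $n$ i.i.d.\ samples from $P_t$ into $k := \Theta(\log(\outerl/\failp))$ disjoint batches, each of size $m := n/k$. I would process the stream in a single pass, maintaining only $O(d)$ words of memory for the running sum of the current batch (updated in $O(d)$ time per sample); upon completing each batch, emit and store its empirical mean $\widehat{\mu}_t^{(j)} \in \R^d$. After all $n$ samples are consumed, output
\[
\widehat{\mu}_t := \mathrm{argmin}_{\nu \in \R^d} \sum_{j=1}^k \|\nu - \widehat{\mu}_t^{(j)}\|_2,
\]
computed via standard iterative methods (e.g.\ Weiszfeld's algorithm) on the $k$ stored batch means. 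Total memory is $O(kd) = O(d \log(\outerl/\failp))$ and total processing time is $O(nd) + \mathrm{poly}(k,d)$, matching the stated bounds.

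The heart of the analysis is controlling $\tr(\vec{\Sigma}_t)$. By the invariant $\E_{X \sim G}[w_t(X)] \geq 1 - 3\eps$ from \Cref{cl:invariant-weight}, I may decompose $P_t = \alpha_t \tilde G_t + (1 - \alpha_t) \tilde B_t$ with $\alpha_t \geq 1 - O(\eps)$, where $\tilde G_t$ and $\tilde B_t$ are the $w_t$-reweighted inlier and outlier distributions. The mixture covariance decomposition yields
\[
\tr(\vec{\Sigma}_t) = \alpha_t \tr(\cov(\tilde G_t)) + (1-\alpha_t) \tr(\cov(\tilde B_t)) + \alpha_t(1-\alpha_t) \|\mu_{\tilde G_t} - \mu_{\tilde B_t}\|_2^2.
\]
By the $(C\eps,\delta)$-stability of $G$ (applying \Cref{cl:triangle} with $\vec U = \bI_d$ and $b = \mu$), $\tr(\cov(\tilde G_t)) \leq d(1 + \delta^2/\eps)$. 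By the initialization $w_0(x) = \1\{\|x - \widehat{\mu}\|_2 \leq 5R\}$, every sample with positive weight satisfies $\|x - \widehat{\mu}\|_2 \leq 5R$, hence $\tr(\cov(\tilde B_t)) \leq (10R)^2$ and $\|\mu_{\tilde G_t} - \mu_{\tilde B_t}\|_2 \leq 10R$. Combining, $\tr(\vec{\Sigma}_t) \lesssim d(1 + \delta^2/\eps) + \eps R^2$.

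For each batch mean, $\E[\|\widehat{\mu}_t^{(j)} - \mu_t\|_2^2] = \tr(\vec{\Sigma}_t)/m$; for a sufficiently large absolute constant $C_\star$, the choice $m = \Theta((d(1+\delta^2/\eps) + \eps R^2)/\delta^2)$ makes this at most $(\delta/C_\star)^2/4$. By Markov's inequality, each $\widehat{\mu}_t^{(j)}$ lies within $\delta/C_\star$ of $\mu_t$ with probability at least $3/4$. A Chernoff bound then shows that at least $3k/4$ of the $k$ batch means lie in this Euclidean ball, except with probability $\failp/\outerl$. Combined with the standard concentration property of the geometric median (if a strict majority of $k$ points lie in a ball of radius $r$, the geometric median lies within $O(r)$ of its center), this yields $\|\widehat{\mu}_t - \mu_t\|_2 \leq O(\delta/C_\star) \leq \delta/100$ by choosing $C_\star$ large enough. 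A union bound over $t \in [\outerl]$ gives the final probability $1 - \failp$.

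The main obstacle is the trace bound for $\vec{\Sigma}_t$ during intermediate iterations: prior to convergence $\mu_t$ may be far from $\mu$ and $\|\vec{\Sigma}_t\|_2$ may be as large as $\Omega(R^2)$, so the trivial estimate $\tr(\vec{\Sigma}_t) \leq d \|\vec{\Sigma}_t\|_2$ would contribute a factor $dR^2/\delta^2$, far in excess of the required $\eps R^2/\delta^2$. The resolution is to split the trace via the mixture structure of $P_t$ as above: the inlier covariance is near-identity by stability, and although the outlier covariance is potentially large in spectral norm, its \emph{trace} is bounded by a single factor of $R^2$ (not $d R^2$) thanks to the $\ell_2$-bounded support imposed by the initial pruning $w_0$. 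This replaces a spurious $d$ in the outlier term by a factor of $\eps$, matching the sample complexity claim.
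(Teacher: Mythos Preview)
Your proposal is correct and follows essentially the same approach as the paper: the key step is the trace bound $\tr(\vec{\Sigma}_t) \lesssim d(1+\delta^2/\eps) + \eps R^2$ via the mixture decomposition of $P_t$, which you carry out identically. The only minor difference is in the aggregation primitive---the paper computes batch means and then applies its \textsc{NaivePrune} procedure (\Cref{claim:naivepruning}) rather than the geometric median, but both are standard median-of-means boosters with the same sample, time, and memory profile.
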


\subsubsection{\Cref{as:lambda_samples,as:frob,as:scores} } \label{sec:itemsindetail}
Given that \Cref{as:mean_est} holds,
in this section, we show that \Cref{as:lambda_samples,as:frob,as:scores} of \Cref{cond:streaming}
hold with high probability  if sufficiently many samples from the underlying corrupted distribution $P$ are drawn. 
Let the  scores  $\widehat{g}_t(x):= \|\widehat{\bM}_t(x-\mu_t)\|_2^2$, 
where $\widehat{\bM}_t$ is the following sample-based estimator of $\bM_t$:
\begin{enumerate}
\item \label{step:1}Draw a batch $S_0$ of $\tilde{n}$ samples from $P$ 
and let the estimate $\widehat{W}_{t} = \E_{X \sim \cU(S_0)}[w_t(X)]$.
\item Let $P_t'$ be the distribution of the differences $(X-X')/\sqrt{2}$ for two independent $X,X' \sim P_t$.
\item Draw $\log d$ batches $S_1,\ldots, S_{\log d}$ of $\tilde{n}$ samples, each from $P_t'$. \label{it:draw_samples}
\item For $k \in [\log d]$, 
\begin{enumerate}
        \item Let $\widehat{\vec{\Sigma}}_{t,k} = \frac{1}{\tilde{n}}\sum_{x \in S_k} x x^T$. \label{step:sigma}
        \item Let $\widehat{\bB}_{t,k} = \widehat{W}_{t}^2 \widehat{\vec{\Sigma}}_{t,k} - (1-C_1\delta^2/\eps)\bI_d$.
\end{enumerate}
    \item \label{step:5} Return $\widehat{\bM}_t = \prod_{k=1}^{\log d} \widehat{\bB}_{t,k}$.
\end{enumerate}
{
\begin{remark}\label{remark:implementation}
\Cref{alg:streaming} does not need to calculate or store $\widehat{\bM}_t$ because it requires only that we can calculate products of $\widehat{\bM}_t$ with vectors $z$ as in \Cref{line:multiply}. This operation can be implemented in linear runtime and memory. Given the description of the estimator above, it suffices to show how to multiply $\widehat{\vec{\Sigma}}_{t,k}$ by a vector $z$ in linear time and memory. To this end, we observe that $\widehat{\vec{\Sigma}}_{t,k} z = \frac{1}{\tilde{n}}\sum_{x \in S_k} x (x^T z)$, thus by calculating the inner product $(x^T z)$ first, the result can be found in $O(nd)$ time in a streaming fashion.
\end{remark}}

We will show that \Cref{as:lambda_samples,as:frob,as:scores}  
of \Cref{cond:streaming} follow if 
$\|\widehat{\vec M}_t  - \bM_t\|_2 \leq 0.01 \min\left(\frac{\delta/\eps}{R},\frac{1}{\sqrt{d}}\right) \| \bM_t\|_F$ 
(cf. \Cref{lem:prod_concentration}) and 
$\|\widehat{\mu}_t - \mu_t\|_2 = O(\delta)$ (cf. \Cref{lem:final_step}). 

\begin{lemma} \label{lem:parts2and3}
Suppose that the estimators $\widehat{\bM}_t$ in \Cref{alg:streaming} 
are defined by the procedure as in Lines \ref{step:1} to \ref{step:5} above.
Let $C$ be a sufficiently large constant and assume that the dimension is 
$d = \Omega(1)$\footnote{This is without loss of generality as we could avoid the JL-sketch when $d = O(1)$.}.
{Further assume that $\|\widehat{\mu}_t - \mu\| \leq 0.01 \delta$.}
If $\tilde{n} \geq C {R^2 (\log d)^2  } \max\left(d, \frac{\eps^2d}{\delta^4},\frac{R^2\eps^2}{\delta^2}, \frac{R^2\eps^4}{\delta^6} \right)    \log\left(  \frac{ \outerl d \log d}{\failp}\right)$, 
then \Cref{as:lambda_samples,as:frob,as:scores} hold with probability at least $1-\failp$.
\end{lemma}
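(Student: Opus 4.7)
} The plan is to first establish concentration of each empirical covariance $\widehat{\vec{\Sigma}}_{t,k}$ (and of $\widehat{W}_t$) around its population version, then lift these single-factor bounds into concentration of the product $\widehat{\bM}_t = \prod_{k=1}^{\log d} \widehat{\bB}_{t,k}$ around $\bM_t$ in spectral norm with $\|\bM_t\|_F$ on the right-hand side, and finally deduce \Cref{as:lambda_samples,as:frob,as:scores} by triangle-inequality and JL arguments. For the first step, observe that because $w_t$ vanishes outside the Euclidean ball of radius $5R$ around $\widehat{\mu}$ (\Cref{line:naive_est_2}), the symmetrized distribution $P_t'$ of $(X-X')/\sqrt{2}$ is supported in a ball of radius $O(R)$ and has mean zero with covariance $\vec{\Sigma}_t$. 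Applying \Cref{fact:ver_cov} with a target relative error $\eta$ gives $\|\widehat{\vec{\Sigma}}_{t,k} - \vec{\Sigma}_t\|_2 \leq \eta\|\vec{\Sigma}_t\|_2$ once $\tilde n \gtrsim R^2 \|\vec{\Sigma}_t\|_2^{-1}\eta^{-2}\log(\outerl d \log d/\failp)$. A Hoeffding bound supplies $|\widehat{W}_t - W_t| \leq \eta$. Union-bounding over $k\in[\log d]$ and $t\in[\outerl]$ gives $\|\widehat{\bB}_{t,k}-\bB_t\|_2 \leq \eta'$ for a suitably smaller $\eta'$, simultaneously.

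The core technical step is the product concentration. I would use the telescoping identity
\[
\widehat{\bM}_t - \bM_t \;=\; \sum_{k=1}^{\log d} \Bigl(\prod_{j<k} \widehat{\bB}_{t,j}\Bigr)\,(\widehat{\bB}_{t,k}-\bB_t)\,\Bigl(\prod_{j>k} \bB_t\Bigr),
\]
and bound each summand by submultiplicativity, exploiting $\|\widehat{\bB}_{t,j}\|_2 \leq (1+\eta')\|\bB_t\|_2$ to obtain $\|\widehat{\bM}_t-\bM_t\|_2 \leq O(\eta' \log d)\,\|\bB_t\|_2^{\log d} = O(\eta' \log d)\,\|\bM_t\|_2$, as long as $\eta' \log d \lesssim 1$. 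Because $\bB_t \succeq 0$ (\Cref{cl:PSD}), we have $\|\bM_t\|_F \geq \|\bM_t\|_2$, so the bound becomes $\|\widehat{\bM}_t-\bM_t\|_2 \leq O(\eta' \log d)\,\|\bM_t\|_F$. Choosing $\eta' = c\,\min(\delta/(\eps R),\,1/\sqrt{d})/\log d$ gives the target bound $\|\widehat{\bM}_t-\bM_t\|_2 \leq 0.01\min(\delta/(\eps R),1/\sqrt{d})\|\bM_t\|_F$; tracing this back through the single-factor requirement explains each of the four terms inside the $\max(\cdot)$ of the sample-complexity formula, together with the extra $R^2(\log d)^2$ prefactor. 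The main obstacle is here: ensuring that operator-norm errors in the $\log d$ independent factors do not accumulate exponentially, and that the resulting absolute error is comparable to $\|\bM_t\|_F$ rather than the potentially much larger $\|\bM_t\|_2$. The PSD structure and the telescoping identity together with an $\eta'$ small enough to absorb a $\log d$ factor are what makes this possible.

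Given the product-norm bound, the three conditions follow by fairly standard arguments. For \Cref{as:lambda_samples}, $\widehat{\lambda}_t$ is produced by power iteration on a single empirical factor (e.g., $\widehat{\bB}_{t,1}$), and the single-factor bound from Step~1 immediately gives a constant-factor approximation, yielding $\widehat{\lambda}_t \in [0.1\lambda_t,20\lambda_t]$. For \Cref{as:frob}, the bound $\|\widehat{\bM}_t-\bM_t\|_F \leq \sqrt{d}\,\|\widehat{\bM}_t-\bM_t\|_2 \leq 0.01\|\bM_t\|_F$ (using $\mathrm{rank}\leq d$ and the choice of $\eta'$) gives $\|\widehat{\bM}_t\|_F^2 \in [0.99,1.01]\|\bM_t\|_F^2$, and \Cref{fact:jl} applied to the rows of $\widehat{\bM}_t$ then gives $\|\bU_t\|_F^2 \in [0.8,1.2]\|\bM_t\|_F^2$. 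For \Cref{as:scores}, \Cref{fact:jl} applied to the (fixed) point set $\{\widehat{\bM}_t(x-\widehat{\mu}_t): x\in S_{\text{cover}}\}$ with $L = \Omega(\log |S_{\text{cover}}|)$ gives $\|\bU_t(x-\widehat{\mu}_t)\|_2 \geq 0.9\|\widehat{\bM}_t(x-\widehat{\mu}_t)\|_2$, and the triangle inequality yields
\[
\|\widehat{\bM}_t(x-\widehat{\mu}_t)\|_2 \;\geq\; \|\bM_t(x-\mu_t)\|_2 - \|\widehat{\bM}_t-\bM_t\|_2 \cdot \|x-\widehat{\mu}_t\|_2 - \|\bM_t\|_2\,\|\widehat{\mu}_t - \mu_t\|_2.
\]
For $x\in S_{\text{cover}}$ we have $\|x-\widehat{\mu}_t\|_2 \leq O(R)$, so the second term is $\leq O((\delta/\eps))\|\bM_t\|_F$ and the third is $\leq (\delta/100)\|\bM_t\|_F$ by \Cref{as:mean_est}. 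Squaring and using $(a-b)^2 \geq \tfrac14 a^2 - b^2$ together with $\|\bM_t\|_2 \leq \|\bM_t\|_F$ delivers $\tilde g_t(x) \geq 0.2\, g_t(x) - 0.8\,(\delta^2/\eps^2)\|\bM_t\|_F^2$, as required.
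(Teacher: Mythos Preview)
Your overall architecture matches the paper's: establish single-factor concentration via \Cref{fact:ver_cov} and Hoeffding, lift to product concentration by a telescoping identity (this is exactly \Cref{lem:matrix_prop}), obtain $\|\widehat{\bM}_t-\bM_t\|_2 \le 0.01\min(\delta/(\eps R),1/\sqrt{d})\|\bM_t\|_F$, and then deduce the Frobenius and score conditions by the same $\sqrt{d}$-rank bound and JL arguments the paper uses. The two-step vs.\ one-step triangle inequality for \Cref{as:scores} is cosmetic.

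There is, however, a genuine gap in your treatment of \Cref{as:lambda_samples}. You write that $\widehat{\lambda}_t$ is obtained by ``power iteration on a single empirical factor (e.g., $\widehat{\bB}_{t,1}$).'' In the single-pass streaming model this is not implementable: power iteration requires multiplying by the \emph{same} matrix $\widehat{\bB}_{t,1}$ repeatedly, which means revisiting the same batch $S_1$ of $\tilde n\ge d$ samples $\Theta(\log d)$ times; storing $S_1$ would cost $\Omega(d^2)$ memory, and re-reading it is forbidden. The paper's fix is not to estimate $\|\bB_t\|_2$ directly, but to compute $v=\widehat{\bM}_t w$ for a single Gaussian $w$ (which uses each fresh batch exactly once), and set $\widehat{\lambda}_t=\|v\|_2^{1/\log d}$. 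Correctness then comes from (i) Gaussian anti-concentration (\Cref{fact:var-quadratic}) giving $\|v\|_2^2\in[0.2,10]\cdot\tr(\widehat{\bM}_t^2)$ with constant probability, (ii) your own Frobenius approximation $\tr(\widehat{\bM}_t^2)\approx\tr(\bM_t^2)$, and (iii) the sandwich $\|\bB_t\|_2^{2\log d}\le\tr(\bM_t^2)\le d\,\|\bB_t\|_2^{2\log d}$, after which taking the $(2\log d)$-th root absorbs the factor $d$. This is the one place where the proof genuinely differs from the batch-model argument, and your proposal misses it.
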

\begin{proof}
For now, we will assume that for all 
$t \in [\outerl]$,  $\|\widehat{\vec M}_t  - \bM_t\|_2 \leq 0.01  \min\left(\frac{\delta/\eps}{R},\frac{1}{\sqrt{d}}\right) \| \bM_t\|_F$ 
with the claimed sample complexity. This follows from  \Cref{lem:prod_concentration} and a union bound.
We will prove each of these conditions separately. 
\item
    \paragraph{Proof of \Cref{as:scores}:}
Denote $T:= 0.1 \delta/\eps$. Fix an iteration $t \in [\outerl]$. 
We will prove that the conditions hold in the $t$-th iteration 
with probability at least $1-\failp/\outerl$ and then a union bound will conclude the proof. 
    
{Define $\widehat{g}_t(x) := \|\vec M_t(x - \widehat{\mu}_t)\|_2^2$. 
Since $\| \widehat{\mu}_t - \mu_t\|_2 \leq 0.01\delta $, 
we have the following relation between $\widehat{g}_t$ and 
$g_t$:$  \|\vec M_t(x - \widehat{\mu}_t)\|_2^2 \geq 0.5 \|\vec M_t(x -\mu_t)\|_2^2 -  T^2\|\vec M_t\|_F^2$,
i.e., $\widehat{g}_t(x) \geq 0.5 g_t(x) - T^2 \| \vec M_t \|_F^2$.} 
We have that  $\|\widehat{\vec M}_t  - \bM_t\|_2 \leq 0.1 (T/R) \| \bM_t\|_F$ 
with probability at least $1-\failp/\outerl$ (see \Cref{lem:prod_concentration}).
This means that for any point $x$ with $\|x-\widehat{\mu}_t\|_2\leq 2R$, 
we have $\|\vec{M}_t (x-\widehat{\mu}_t)\|_2 \leq \|\widehat{\vec{M}}_t  (x-\widehat{\mu}_t)\|_2 + 0.2T \|\vec{M}_t\|_F$, 
which implies that 
$\|\vec{M}_t  (x-\widehat{\mu}_t)\|^2_2 \leq 2\|\widehat{\vec{M}}_t  (x-\widehat{\mu}_t)\|^2_2 + 0.08T^2\|\vec{M}_t\|_F^2$, 
or equivalently
\begin{align}
\|\widehat{\vec{M}}_t  (x-\widehat{\mu}_t)\|^2_2 \geq 0.5\widehat{g}_t(x) - 0.04T^2 \tr(\bM_t^2)\;. \label{eq:firststep}
\end{align}
The final step is taking the Johnson-Lindenstrauss sketch of $\widehat{\bM}_t$, 
which gives the matrix $\bU_t$ used in the definition of $\tilde{g}_t$. 
By repeating the proof of \Cref{cl:jl-cons} with $\widehat{\bM}_t$ in place of ${\bM}_t$, 
we get that if $\innerl = C \log((|S_{\text{cover}}|+d)\outerl/\failp)$, 
then $\tilde{g}_t(x) \geq 0.8 \|\widehat{\vec{M}}_t  (x-\widehat{\mu}_t)\|_2$ 
for all the points in the set $S_{\text{cover}}$ (the cover from \Cref{cor:ran_cov_specific}).
The value used for $\innerl$ in \Cref{alg:streaming} satisfies this condition (c.f. \Cref{remark:choice_of_L}). 
Combining this with \Cref{eq:firststep} and the relation between $\widehat{g}_t$ and $g_t$, 
we get that $\tilde{g}_t(x) \geq 0.4 \widehat{g}_t(x) - 0.04T^2 \| \vec M_t \|_F^2  \geq 0.2 g_t(x) - 0.44T^2 \|\vec M_t\|_F^2$.

\paragraph{Proof of \Cref{as:frob}:} 
Again, fix a $t \in [\outerl]$. We have that 
$\|\widehat{\vec M}_t  - \bM_t\|_2 \leq \frac{0.01}{\sqrt{d}}\| \bM_t\|_F$ 
with probability $1-\failp/\outerl$ using \Cref{lem:prod_concentration}. 
We thus have that  $ \|\vec M_t - \widehat{\vec M}_t \|_F \leq \sqrt{d} \|\vec M_t - \widehat{\vec M}_t \|_2 \leq 0.01 \|\vec M_t\|_F$, which implies
    \begin{align} \label{eq:abs_bound}
        \abs[\Big]{\|\vec M_t\|_F - \|\widehat{\vec M}_t \|_F} \leq 0.01 \|\vec M_t\|_F\;.
    \end{align}
It is easy to check that this is stronger than what we initially wanted. 
Indeed, squaring \Cref{eq:abs_bound} and using $\|\widehat{\vec M}_t \|_F \leq 1.01\|{\vec M}_t \|_F$ gives 
    \begin{align*}
        \|\widehat{\vec M}_t \|_F^2 \leq 2 \|\vec M_t\|_F \|\widehat{\vec M}_t \|_F - \|{\vec M}_t \|_F^2 
        + (0.01)^2 \|\vec M_t\|_F^2 <  1.1 \|\vec M_t\|_F^2\;,
    \end{align*}
which means that  $\|\widehat{\vec M}_t \|_F^2 - \|\vec M_t\|_F^2 \leq 0.1 \|\vec M_t\|_F^2$. 
For the other bound, \Cref{eq:abs_bound} implies
\begin{align*}
\|\vec M_t\|_F^2 &\leq 2 \|\vec M_t\|_F \|\widehat{\vec M}_t \|_F - \|\widehat{\vec M}_t \|_F^2+ (0.01)^2 \|\vec M_t\|_F^2 \\
&\leq 2 \|\widehat{\vec M}_t \|_F^2 + 2(0.01) \|\vec M_t\|_F \|\widehat{\vec M}_t \|_F - \|\widehat{\vec M}_t \|_F^2 + (0.01)^2\|\vec M_t\|_F^2 \\
&< 1.1 \|\vec M_t\|_F^2\;,
\end{align*}
which means that $\|\vec M_t\|_F^2 - \|\widehat{\vec M}_t \|_F^2  \leq 0.1 \|\vec M_t\|_F^2$. 
Therefore we obtain the following 
\begin{align}\label{eq:trace_approximation}
\abs{\|\vec M_t\|_F^2 - \|\widehat{\vec M}_t \|_F^2}  \leq 0.1 \|\vec M_t\|_F^2 \;.
\end{align}
Finally, the Johnson-Lindenstrauss step is exactly as described in the proof of  \Cref{as:scores}. 
    
\paragraph{Proof of  \Cref{as:lambda_samples}:}
Since we cannot access the same samples twice, 
the power-iteration algorithm now uses a different dataset in every step.
Let the matrix $\widehat{\bM}_t$ as in the beginning of \Cref{sec:concentration}. 
We have already shown in the previous paragraph that, 
with probability $1-\failp$, for all $t \in [\outerl]$, $\widehat{\bM}_t$ 
has Frobenius norm close to that of $\bM_t$ (\Cref{eq:trace_approximation}).
    For the rest of the proof, we condition on this event.    
    Consider the algorithm that calculates $v = \widehat{\bM}_t w$, 
    where $w \sim \cN(0,\bI_d)$ (this can be done in the streaming model by multiplying 
    with $\widehat{\vec B}_{t,k}$ iteratively; moreover this multiplication 
    can be implemented in time $O(\tilde{n}d)$). 
    We claim that the value $ \widehat{\lambda}_t = \|v\|_2^{1/\log d}$ 
    satisfies the desired relation. First, with at least constant non-zero probability 
    \begin{align} \label{eq:traceineq1}
        0.2 \tr(\widehat{\bM}_t^2) \leq  \|v\|_2^2 \leq 10\tr(\widehat{\bM}_t^2) \;,
    \end{align}
    where the one direction follows by Markov's inequality and the other one by \Cref{fact:var-quadratic}. 
    \Cref{eq:traceineq1,eq:trace_approximation} imply that ${0.1} \tr({\bM}_t^2) \leq \|v\|_2^2 \leq 11\tr({\bM}_t^2)$. 
    Furthermore, we have that
    \begin{align}\label{eq:traceineq3}
         \|v\|_2^{\frac{1}{\log d}} \leq (11 \tr({\bM}_t^2))^{\frac{1}{2\log d}} \leq \left(11  d \|\bB_t\|_2^{2 \log d} \right)^{\frac{1}{2\log d}} \leq 20 \|\bB_t\|_2  \;.
    \end{align}
    Similarly, for the lower bound, we have that
    \begin{align}\label{eq:traceineq4}
        \|v\|_2^{\frac{1}{\log d}} \geq ({0.1} \tr({\bM}_t^2))^{\frac{1}{2\log d}} \geq ({0.1})^{\frac{1}{2\log d}}  \|\bB_t\|_2 \geq ({0.1})  \|\bB_t\|_2\;,
    \end{align}
    where in the last inequality we assumed that the dimension is sufficiently large. Putting \Cref{eq:traceineq1,eq:trace_approximation,eq:traceineq3,eq:traceineq3} together, 
    with constant probability, we have that  $\new{0.1} \|\bB_t\|_2 \leq \|v\|_2^{1/\log d}\leq 20 \|\bB_t\|_2$.
    By repeating the procedure $O(\log(1/\tau'))$ times and taking the median, 
    we boost the probability of failure to $\tau'$. By union bound,  
    choosing $\tau'=\failp/\outerl$ makes the event hold for all iterations $t \in [\outerl]$ simultaneously with probability $1-\failp$.
\end{proof}

The remainder of this section is dedicated to showing that 
$\|\widehat{\vec M}_t  - \bM_t\|_2 \leq \min\left(\frac{\delta/\eps}{R},\frac{0.01}{\sqrt{d}}\right) \| \bM_t\|_F$.
We require the following lemma, which we prove in \Cref{sec:omitted_concentration}. 
We use $\prod_{i=1}^p \bB_i $ is to denote the matrix product $\bB_1 \bB_2 \cdots \bB_p$.

\begin{restatable}{lemma}{MatrixProp} \label{lem:matrix_prop} 
Let $\bA, \bB, \bB_1,\dots,\bB_p$ be symmetric $d\times d$ matrices and define $\bM = \bB^p, \bM_S = \prod_{i=1}^p \bB_i $. 
If $\|\bB_i - \bB\|_{2} \le \delta \|\bB\|_{2}$, then $\|\bM_S - \bB^p\|_{2} \leq p \delta(1 + \delta)^{p} \|\bB\|_{2}^p$. \label{prop:2}
\end{restatable}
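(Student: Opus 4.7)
The plan is to use a standard telescoping identity. Writing the difference between the two products as a telescoping sum along the factors of $\bM_S$, I obtain
\[
\bM_S - \bB^p \;=\; \sum_{k=1}^{p} \bB_1 \cdots \bB_{k-1}\,(\bB_k - \bB)\,\bB^{\,p-k},
\]
where the empty products on the left of the first term and the right of the last term are interpreted as the identity. This identity is verified by a direct index check: the $k$-th term accounts for swapping $\bB$ for $\bB_k$ at position $k$, and adjacent terms cancel pairwise, leaving $\bM_S - \bB^p$.

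From this identity, I apply the triangle inequality and submultiplicativity of $\|\cdot\|_2$ to get
\[
\|\bM_S - \bB^p\|_2 \;\le\; \sum_{k=1}^{p} \Bigl(\prod_{i=1}^{k-1}\|\bB_i\|_2\Bigr)\,\|\bB_k - \bB\|_2\,\|\bB\|_2^{\,p-k}.
\]
The hypothesis $\|\bB_i - \bB\|_2 \le \delta\|\bB\|_2$ yields two bounds I will plug in: $\|\bB_k - \bB\|_2 \le \delta\|\bB\|_2$ directly, and $\|\bB_i\|_2 \le (1+\delta)\|\bB\|_2$ by a further application of the triangle inequality $\|\bB_i\|_2 \le \|\bB\|_2 + \|\bB_i-\bB\|_2$.

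Substituting these into the bound gives
\[
\|\bM_S - \bB^p\|_2 \;\le\; \delta\,\|\bB\|_2^{\,p}\sum_{k=1}^{p}(1+\delta)^{k-1} \;\le\; p\,\delta\,(1+\delta)^{p-1}\|\bB\|_2^{\,p} \;\le\; p\,\delta\,(1+\delta)^{p}\|\bB\|_2^{\,p},
\]
which is the claimed inequality. There is no real obstacle here: the whole proof is a one-line telescoping identity followed by triangle and submultiplicativity; the only thing to be careful about is the bookkeeping of the empty products at the endpoints and the correct exponent on $(1+\delta)$, which is slightly loose (the $(1+\delta)^{p-1}$ bound is tight, and the final $(1+\delta)^p$ form matches the statement).
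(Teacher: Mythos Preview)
Your proof is correct and essentially identical to the paper's own argument: the same telescoping decomposition $\bM_S-\bB^p=\sum_k \bB_1\cdots\bB_{k-1}(\bB_k-\bB)\bB^{p-k}$, followed by submultiplicativity and the bound $\|\bB_i\|_2\le(1+\delta)\|\bB\|_2$, yielding $\sum_k \delta(1+\delta)^{k-1}\|\bB\|_2^p\le p\delta(1+\delta)^p\|\bB\|_2^p$. You even note the sharper exponent $(1+\delta)^{p-1}$, which the paper does not bother with.
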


We are now ready to prove our main technical result.
\begin{lemma} \label{lem:prod_concentration}
Assume that $\|\bB_t\|_2 \geq (C_1/2)\delta^2/\eps$ and $\E_{X \sim P}[w_t(X)]\geq 1-O(\eps)$ 
hold in the $t$-th iteration of \Cref{alg:streaming}. If $\widehat{W}$ and every $\widehat{\vec B}_{t,k}$ 
in the product $\widehat{\vec M}_t = \prod_{k=1}^{\log d}\widehat{\vec B}_{t,k}$ is calculated using 
\begin{align*}
\tilde{n} \geq C {R^2 (\log d)^2  } \max\left(d, \frac{\eps^2d}{\delta^4},\frac{R^2\eps^2}{\delta^2}, \frac{R^2\eps^4}{\delta^6} \right)   
 \log\left(  \frac{d \log d}{\failp}\right)
\end{align*}
samples, where $C$ is a sufficiently large constant,  we have that 
\begin{align*}
 \|\widehat{\vec M}_t  - \bM_t\|_2 \leq 0.01  \min \left(\frac{\delta/\eps}{R},\frac{1}{\sqrt{d}}\right) \| \bM_t\|_F\;,
\end{align*}
with probability at least $1-\failp$.
\end{lemma}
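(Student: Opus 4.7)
The plan is to reduce the product error to a per-factor spectral error using Lemma~\ref{lem:matrix_prop} with $p := \log d$. That lemma gives $\|\widehat{\vec M}_t - \vec M_t\|_{2} \leq p\delta'(1+\delta')^p \|\vec B_t\|_{2}^p$ whenever every factor satisfies $\|\widehat{\vec B}_{t,k} - \vec B_t\|_{2} \leq \delta'\|\vec B_t\|_{2}$. Since $\|\vec B_t\|_{2}^p = \|\vec M_t\|_{2} \leq \|\vec M_t\|_F$, the target bound $\|\widehat{\vec M}_t - \vec M_t\|_{2} \leq \alpha\|\vec M_t\|_F$, with $\alpha := 0.01\min(\delta/(R\eps), 1/\sqrt{d})$, reduces to requiring a per-factor relative accuracy of $\delta' = \Theta(\alpha/\log d)$, which makes $e^{p\delta'} = O(1)$.

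For each factor $k$, I decompose
\begin{align*}
\widehat{\vec B}_{t,k} - \vec B_t
= \bigl(\widehat{W}_t^{\,2} - (\E_P[w_t])^2\bigr)\widehat{\vec \Sigma}_{t,k} + (\E_P[w_t])^2 \bigl(\widehat{\vec \Sigma}_{t,k} - \vec \Sigma_t\bigr),
\end{align*}
and control each summand separately. The covariance error $\|\widehat{\vec \Sigma}_{t,k} - \vec \Sigma_t\|_{2}$ is bounded by applying Fact~\ref{fact:ver_cov} to the symmetric-difference distribution $P_t'$: its samples lie in a ball of radius $O(R)$, have zero mean, and have covariance exactly $\vec \Sigma_t$ with $\|\vec \Sigma_t\|_{2} = \Theta(1)$ (using the certificate lemma and the invariant $\E_P[w_t]\geq 1-O(\eps)$). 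This yields an $\eps''$-accurate estimate from $\tilde{n} \gtrsim (\eps'')^{-2} R^2 \log(d/\tau')$ samples for any target failure probability $\tau'$. The weight error is handled by Hoeffding applied to $\widehat{W}_t = \E_{\cU(S_0)}[w_t]$, which gives $|\widehat{W}_t - \E_P[w_t]| \lesssim \sqrt{\log(1/\tau')/\tilde{n}}$; since both quantities lie in $[1-O(\eps),1]$, this also controls $|\widehat{W}_t^{\,2} - (\E_P[w_t])^{2}|$.

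Converting to an absolute per-factor error and using the hypothesis $\|\vec B_t\|_{2} \geq (C_1/2)\delta^2/\eps$, I need both $\eps''$ and the Hoeffding accuracy to be of order $\delta' \cdot \delta^2/\eps$. Plugging in the two cases $\alpha = 0.01/\sqrt{d}$ (which forces $(\delta')^{-2} \asymp d(\log d)^2$) and $\alpha = 0.01\delta/(R\eps)$ (which forces $(\delta')^{-2} \asymp R^2\eps^2(\log d)^2/\delta^2$), and multiplying by the two prefactors $R^2$ (from the covariance-radius squared in Fact~\ref{fact:ver_cov}) and $(\eps/\delta^2)^2$ (from the scaling needed because $\|\vec B_t\|_{2}$ may be as small as $\delta^2/\eps$), produces exactly the four terms $d$, $R^2\eps^2/\delta^2$, $\eps^2 d/\delta^4$, and $R^2\eps^4/\delta^6$ inside the $\max$, each multiplied by the shared $R^2(\log d)^2$ factor. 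Samples from $P_t$ are obtained by rejection sampling from $P$, which inflates $\tilde n$ only by a constant since $\E_P[w_t]\geq 1-O(\eps)$; the $\log d$ batches $S_1,\dots,S_{\log d}$ are independent as in Step~\ref{it:draw_samples}, and a union bound with $\tau' = \tau/\log d$ over factors finishes the argument. The main obstacle is the bookkeeping in this third step: the amplification factor of $\log d$ from the product lemma, the worst-case regime $\|\vec B_t\|_{2} \asymp \delta^2/\eps$ in which relative accuracy is hardest to achieve, and the need to measure the error against $\|\vec M_t\|_F$ (not $\|\vec M_t\|_{2}$) all interact, and it is their simultaneous balancing that dictates the precise form of $\tilde n$ stated in the lemma.
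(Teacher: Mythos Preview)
Your overall plan coincides with the paper's: reduce to a per-factor bound via \Cref{lem:matrix_prop}, then control $\|\widehat{\vec B}_{t,k}-\vec B_t\|_2$ through Hoeffding on $\widehat W_t$ and \Cref{fact:ver_cov} on $\widehat{\vec\Sigma}_{t,k}$. There is, however, a concrete gap in how you handle $\|\vec\Sigma_t\|_2$.

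You assert $\|\vec\Sigma_t\|_2=\Theta(1)$ ``using the certificate lemma and the invariant $\E_P[w_t]\geq 1-O(\eps)$.'' The certificate lemma (\Cref{lem:certificate}) runs in the opposite direction: it deduces a mean-error bound \emph{from} an assumed covariance bound, and says nothing about $\|\vec\Sigma_t\|_2$ itself. Nothing in the hypotheses rules out $\|\vec\Sigma_t\|_2$ being large; the outlier part $B$ can place $\eps$-mass at distance $\Theta(R)$, and the only available relation is $\|\vec\Sigma_t\|_2\lesssim\|\vec B_t\|_2+1$. This matters because \Cref{fact:ver_cov}, rewritten for an absolute target $\eps''$, requires $\tilde n\gtrsim R^2\|\vec\Sigma_t\|_2/(\eps'')^2$; dropping the $\|\vec\Sigma_t\|_2$ factor underestimates $\tilde n$ whenever $\|\vec B_t\|_2$ is large. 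Your decision to fix the absolute target at the worst-case scale $\delta'\cdot\delta^2/\eps$ (rather than $\delta'\|\vec B_t\|_2$) makes this worse: once the correct $\|\vec\Sigma_t\|_2$ factor is restored, that choice yields a bound growing with $\|\vec B_t\|_2$, which is not uniform in $t$. Relatedly, your accounting with the single prefactor $(\eps/\delta^2)^2$ produces only the two terms $d\eps^2/\delta^4$ and $R^2\eps^4/\delta^6$, not the four you claim; the remaining two arise precisely from the regime $\eps/\delta^2\leq 1$ that your derivation misses.

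The paper's remedy is to keep everything relative to $\|\vec B_t\|_2$: set the covariance target to $\eps'=\Theta(\gamma\,\|\vec B_t\|_2/\|\vec\Sigma_t\|_2)$ in \Cref{fact:ver_cov}, so that $\tilde n\gtrsim R^2\|\vec\Sigma_t\|_2/(\gamma^2\|\vec B_t\|_2^2)$, and then bound the ratio via $\|\vec\Sigma_t\|_2\lesssim\|\vec B_t\|_2+1$ together with $\|\vec B_t\|_2\geq(C_1/2)\delta^2/\eps$. This produces the factor $(\eps/\delta^2)\max(1,\eps/\delta^2)$ rather than $(\eps/\delta^2)^2$, and combining it with the two values of $\gamma^{-2}$ gives all four terms in the stated $\max$.
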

\begin{proof}
Let $T:=\delta/\eps$ and $p:=\log d$ for brevity. 
Using \Cref{lem:matrix_prop} we have that 
$ \|\widehat{\vec M}_t  - \bM_t \|_2 \leq p\gamma e^{\gamma p} \|\bM_t \|_2 $, 
where $\gamma>0$ is such that
    \begin{align} \label{eq:opnorm_conc}
        \| \widehat{\bB}_{t,k} - \bB_{t} \|_2 \leq \gamma \|\bB_{t}\|_2
    \end{align}
    for all $k \in [p]$. Therefore, for the lemma to hold, 
    it suffices that $p\gamma e^{\gamma p}  \leq 0.01  \min\left( \frac{T \|\bM_t \|_F}{R\|\bM_t \|_2},\frac{1}{\sqrt{d}}\right)$.
For that, it suffices to choose $\gamma = \frac{0.01}{3p}\min\left( \frac{1}{\sqrt{d}},\frac{T \|\bM_t \|_F}{R\|\bM_t \|_2} \right)$. 
At this point, we also assume two things: 
First, that the estimate $\widehat{\vec \Sigma}_{t,k}$ (defined in step \ref{step:sigma}) 
is such that $\|\widehat{\vec \Sigma}_{t,k} - \vec{\Sigma}_t\|_2\leq \eps'\|\vec{\Sigma}_t\|_2$ 
for some $\eps'$ to be specified later on. Second, that we have an estimate $\widehat{W}_{t}$ 
for $\E_{X\sim P}[w_t(X)]$ such that 
    \begin{align}\label{eq:hoef}
         \widehat{W}_{t}=\E_{X\sim P}[w_t(X)] + \eta\;,
    \end{align}
with $|\eta| \leq \xi$ for some $\xi\leq 1$ to be decided later. 
By Hoeffding's inequality, if we compute $\widehat{W}$ as shown in Step \ref{step:1}, 
then $\log(2/\failp)/\xi^2$ samples suffice to guarantee that \Cref{eq:hoef} holds 
with probability $1-\failp/2$. We now focus on \Cref{eq:opnorm_conc}. We note that
\begin{align*}
\| \widehat{\bB}_{t,k} - \bB_{t} \|_2 
&= \left\|\left(\E_{X\sim P_t}[w_t(X)]+\eta \right)^2 \widehat{\vec \Sigma}_{t,k} - \E_{X\sim P_t}[w_t(X)]^2 \vec{\Sigma}_{t} \right\|_2 \\
&\leq \E_{X\sim P_t}[w_t(X)]^2 \|\widehat{\vec \Sigma}_{t,k} - \vec{\Sigma}_t\|_2 
+ (\eta^2 + 2 \eta \E_{X\sim P_t}[w_t(X)] ) \|\widehat{\vec \Sigma}_{t,k}\|_2 \\
&\leq \|\widehat{\vec \Sigma}_{t,k} - \vec{\Sigma}_t\|_2  + 3\xi (\|\widehat{\vec \Sigma}_{t,k} - \vec{\Sigma}\|_2 + \| \vec{\Sigma}_t\|_2) \\
&= (1+3\xi) \|\widehat{\vec \Sigma}_{t,k} - \vec{\Sigma}_t\|_2  + 3\xi \| \vec{\Sigma}_t\|_2  \;.
\end{align*}
By choosing $\xi = \min(1,\eps'/3)$ and $\eps' = \frac{1}{5}\gamma \|\bB_t\|_2/\|\vec{\Sigma}_t\|_2$, 
the above implies that \Cref{eq:opnorm_conc} holds.
Thus, it suffices to show that  $\|\widehat{\vec \Sigma}_{t,k} - \vec{\Sigma}_t\|_2\leq \eps' \|\vec{\Sigma}_t \|_2$ 
for our choice of $\eps'$. Note that \Cref{fact:ver_cov} is not directly applicable to the distribution $P_t$ 
since it does not have zero mean. This is why we are working with samples of the form $(X-X')/\sqrt{2}$. 
By \Cref{fact:ver_cov} with $\eps'$ set as above and $\tau=\failp/(2p)$, 
we have the following upper bound on the sufficient number of samples: 
\begin{align}
\tilde{n} &= C \frac{R^2 }{{\eps'}^2 \| \vec \Sigma_t \|_2} \log\left(  \frac{2pd}{\failp}\right) \notag\\
&\lesssim \frac{R^2 }{\gamma^2 }  \frac{\| \vec \Sigma_{t} \|_2}{\| \bB_{t} \|_2^2}\log\left(  \frac{pd}{\failp}\right) \notag \\
&\lesssim \frac{R^2 \eps }{ \delta^2 \gamma^2 }  \frac{\| \vec \Sigma_{t} \|_2}{\| \bB_{t} \|_2}\log\left(  \frac{pd}{\failp}\right) \tag{using $\|\bB_t\|_2 \geq (C_1/2) \delta^2/\eps$} \\
&\lesssim  \frac{R^2 \eps }{\delta^2\gamma^2 }  \max\left(1, \frac{\eps}{\delta^2}  \right)  \log\left(  \frac{pd}{\failp}\right) \label{eq:fourth_line} \\
&\lesssim \frac{R^2p^2 \eps  }{\delta^2} \max\left(d, \frac{R^2\eps^2}{\delta^2}\frac{\|\bM_t\|^2_2}{\|\bM_t\|^2_F}  \right)    \max\left(1, \frac{\eps}{\delta^2}  \right)\log\left(  \frac{pd}{\failp}\right)  \notag\\
&\leq\frac{R^2p^2 \eps  }{\delta^2} \max\left(d, \frac{R^2\eps^2}{\delta^2} \right)    \max\left(1, \frac{\eps}{\delta^2}  \right)\log\left(  \frac{pd}{\failp}\right) \tag{using $\|\bM_t\|_2\leq \|\bM_t\|_F$}  \\
&\leq \frac{R^2p^2 \eps  }{\delta^2} \max\left(d, \frac{\eps d}{\delta^2},\frac{R^2\eps^2}{\delta^2}, \frac{R^2\eps^3}{\delta^4} \right)    \log\left(  \frac{pd}{\failp}\right) \label{eq:sample_compl}\;,
\end{align}

\Cref{eq:fourth_line} is derived as follows: 
First we note that $ \| \vec \Sigma_t \|_2 \leq \frac{\| \vec B_t \|_2 + 1}{\E_{X \sim P}[w_t(X)]^2} \lesssim \| \vec B_t \|_2 + 1$, 
where the last inequality uses our assumption that 
$\E_{X \sim P}[w_t(X)]\geq 1-O(\eps)$. 
We combine this  with $\| \vec B_t \|_2 \gtrsim \delta^2/\eps$ as follows:
\begin{align*}
\frac{\| \vec \Sigma_t \|_2^2}{\| \vec B_t \|_2^2 } 
\lesssim \frac{(\| \vec B_t \|_2 + 1 )^2}{\| \vec B_t \|_2^2} \leq 2 + \frac{2}{\| \vec B_t \|_2^2} 
\lesssim 1 + \frac{1}{(\delta^2/\eps)^2} \;.
\end{align*}
Regarding the samples required to achieve \Cref{eq:hoef}, a sufficient number is
\begin{align*}
\tilde{n} &= C \log\left(\frac{2}{\failp}\right) \frac{1}{\xi^2} \\
&\lesssim  \log\left(\frac{1}{\failp}\right) \max\left( 1,\frac{1}{\eps'^2}  \right) \\
&\lesssim   \log\left(\frac{1}{\failp}\right) \max\left( 1,\frac{1}{\gamma^2}\frac{\|\vec \Sigma_t\|^2_2}{\| \bB_t\|_2^2}  \right) \\
&\lesssim \log\left(\frac{1}{\failp}\right) \max\left( 1,\frac{1}{\gamma^2}\max\left( 1,\frac{\eps^2}{\delta^4} \right)  \right) \\
&\lesssim \log\left(\frac{1}{\failp}\right) \max\left( 1,p^2\max\left(d,\frac{R^2\eps^2}{\delta^2}  \right)\max\left( 1,\frac{\eps^2}{\delta^4} \right)  \right)\\
&\lesssim \log\left(\frac{1}{\failp}\right)\max\left( p^2d, \frac{p^2R^2\eps^2}{\delta^2}, \frac{p^2d\eps^2}{\delta^4}, \frac{p^2R^2\eps^4}{\delta^6} \right)\;,
\end{align*}
which is smaller compared to the right-hand side of \Cref{eq:sample_compl}.
\end{proof}

\subsubsection{\Cref{stopping_cond_as}} \label{sec:remainingparts}

The following lemma establishes that the estimator of  \Cref{stopping_cond_as} 
is accurate when called once. By using a union bound on the maximum number of times that it can be called, 
we get the sample complexity requirement of 
$n = O\left((R^2/(\delta^2/\eps)) \polylog\left( d, R, \frac{1}{\eps},\frac{1}{\failp} \right)\right)$.

\begin{lemma} \label{cl:evaluate_cond}
{Consider the context of \Cref{alg:streaming}} and denote $T_t:= c\widehat{\lambda}_t \|\bU_t\|_F^2$. 
Given a weight function $w: \R^d \to [0,1]$, there exists an estimator $f(w)$ on 
$n=O(\frac{R^2}{\delta^2/\eps}\log(1/\failp))$ samples such that, 
if $\E_{X \sim P}[w(X)\tilde{\tau}_t(X)] >  T_t$, then with probability at least $1-\failp$, $f > T_t/2$. 
Similarly, $\E_{X \sim P}[w(X)\tilde{\tau}_t(X)] <  T_t$ implies $f < (3/2)T_t$. 
{Moreover, the estimator uses $O(\log(1/\failp))$ memory and runs in $O(n d)$ time}.
\end{lemma}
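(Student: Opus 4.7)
The natural choice is the empirical mean estimator
\[
f(w) \;=\; \frac{1}{n}\sum_{i=1}^{n} w(X_i)\,\tilde{\tau}_t(X_i),
\]
where $X_1,\dots,X_n$ are independent draws from $P$, taken \emph{after} the sketch $\bU_t$ and the estimate $\widehat{\mu}_t$ have been computed (so they are independent of the randomness in those quantities). The plan is to show that each summand is bounded in a controlled way, and then apply Bernstein's inequality.

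First I would bound the summands. Because the weights $w$ used by Algorithm~\ref{alg:streaming} satisfy $w(x)\le w_0(x) = \1\{\|x-\widehat{\mu}\|_2\le 5R\}$, we have $Y_i := w(X_i)\tilde{\tau}_t(X_i) = 0$ unless $\|X_i-\widehat{\mu}\|_2 \le 5R$. Combined with $\|\widehat{\mu}-\widehat{\mu}_t\|_2 = O(R)$ (from the na\"ive pruning and Assumption~\ref{as:mean_est}),
\[
\tilde{g}_t(X_i) \;=\; \|\bU_t(X_i-\widehat{\mu}_t)\|_2^2 \;\le\; \|\bU_t\|_2^2 \cdot O(R^2) \;\le\; O(R^2)\,\|\bU_t\|_F^2,
\]
so $0\le Y_i \le r'$ with $r' := C' R^2\,\|\bU_t\|_F^2$ for an absolute constant $C'$. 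Since $Y_i\ge 0$ and $Y_i\le r'$, we also get the crucial variance bound $\Var(Y_i) \le \E[Y_i^2] \le r'\,\E[Y_i]$.

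Next I would apply Bernstein's inequality to $\sum_i Y_i$. With $\mu := \E[Y_i]$, it gives, for any $s>0$,
\[
\Pr\!\left[|f(w)-\mu| > s\right] \;\le\; 2\exp\!\left(-\frac{n s^2}{2r'\mu + \tfrac{2}{3}r' s}\right).
\]
Setting $s = T_t/2$, this probability is at most $\failp$ provided $n \gtrsim (r'/T_t)\,(1+\mu/T_t)\log(1/\failp)$. For the regime $\mu \le 10\,T_t$ this is $n \gtrsim (r'/T_t)\log(1/\failp)$, which suffices to separate $\mu>T_t$ from $\mu<T_t$ by a gap of $T_t/2$. For the complementary regime $\mu > 10\,T_t$, the multiplicative Chernoff bound for non-negative bounded variables yields $\Pr[f(w) \le \mu/2] \le \exp(-\Omega(n\mu/r'))$, and since $\mu/r' \ge 10\,T_t/r'$ the same sample size guarantees $f(w)\ge \mu/2 > T_t/2$. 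Combining both regimes gives the two-sided conclusion of the lemma.

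It remains to translate the bound into the advertised sample complexity. Because the filter branch is only entered when $\widehat{\lambda}_t > C_2\,\delta^2/\eps$ (Line~\ref{line:lambda1}) and $T_t = c\,\widehat{\lambda}_t\,\|\bU_t\|_F^2$, we obtain
\[
\frac{r'}{T_t} \;=\; \frac{C'R^2\|\bU_t\|_F^2}{c\,\widehat{\lambda}_t\|\bU_t\|_F^2} \;\le\; O\!\left(\frac{R^2}{\delta^2/\eps}\right),
\]
so $n = O\!\bigl((R^2/(\delta^2/\eps))\log(1/\failp)\bigr)$ samples are enough. The estimator only maintains a running sum (constant memory, or $O(\log(1/\failp))$ in bits, which matches the statement once a simple median-of-means boosting is used to avoid a dependence on $\failp$ hidden in the variance), and processes each sample in $O(d\,\innerl)$ time, for $O(nd)$ total time under the convention that $\innerl = \polylog(d)$. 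I expect the main obstacle to be the $\mu \gg T_t$ regime: Bernstein alone degrades when $\mu/T_t$ is large, and handling it cleanly requires the separate multiplicative Chernoff argument sketched above; everything else is essentially bookkeeping.
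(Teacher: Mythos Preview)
Your proposal is correct and rests on the same core observation as the paper: since $w(x)\tilde{\tau}_t(x)\le \tilde{g}_t(x)\le O(R^2)\|\bU_t\|_F^2$ on the support of $w$, one has $\Var(Y_i)\le r'\,\E[Y_i]$ with $r'=O(R^2\|\bU_t\|_F^2)$, and the ratio $r'/T_t=O(R^2/(\delta^2/\eps))$ controls the sample complexity. The mechanics differ slightly. The paper applies Chebyshev to a single batch, but with the asymmetric target
\[
\frac{1}{n}\sum_i Y_i \;>\; \frac34\,\E[Y_i]-\frac14\,T_t,
\]
so that the allowed deviation $\tfrac14\E[Y_i]+\tfrac14 T_t$ automatically scales with $\mu$ when $\mu$ is large and with $T_t$ when $\mu$ is small; this removes your case split in one line. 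It then boosts to failure probability $\failp$ by majority vote over $O(\log(1/\failp))$ independent batches, which is where the $O(\log(1/\failp))$ memory in the statement comes from. Your Bernstein-plus-Chernoff route reaches the same bound directly (and in fact with $O(1)$ working memory rather than $O(\log(1/\failp))$), at the cost of the $\mu\le 10T_t$ versus $\mu>10T_t$ case analysis. Either argument is fine here; the paper's is marginally shorter, yours avoids the median step.
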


\begin{proof}

We show the first direction; the other one has a symmetric proof. 
Suppose $\E_{X \sim P}[w(X)\tilde{\tau}_t(X)] >  c\widehat{\lambda}_t \|\bU_t\|_F^2$. 
It suffices to show that with probability at least $0.9$ we have that 
\begin{align} \label{eq:whatwewant}
\frac{1}{n}\sum_{i=1}^N w(X_i)\tilde{\tau}_t(X_i) >
 \frac{3}{4} \E_{X \sim P}[w(X)\tilde{\tau}_t(X)] - \frac{1}{4}c\widehat{\lambda}_t \|\bU_t\|_F^2 \;,
\end{align}
as we can repeat the procedure $O(\log(1/\failp))$ times and take the majority 
vote to boost the probability to $1-\failp$. 
By Chebyshev's inequality, we have that with probability $0.9$ it holds that  
\begin{align*}
\frac{1}{n}\sum_{i=1}^n w(X_i)\tilde{\tau}_t(X_i) 
> \E_{X \sim P}[w(X)\tilde{\tau}_t(X)] - \sqrt{\frac{10\Var_{X \sim P}(w(X) \tilde{\tau}(X))}{n}} \;.
\end{align*}
Therefore, it suffices to have 
$\sqrt{\frac{10\Var_{X \sim P}(w(X) \tilde{\tau}(X))}{n}} 
\leq \frac{1}{4} \E_{X \sim P}[w(X)\tilde{\tau}_t(X)] + \frac{1}{4}c\widehat{\lambda}_t \|\bU_t\|_F^2$, 
and thus we need $n$ to be a sufficiently  large multiple of 
$\Var_{X \sim P}(w(X) \tilde{\tau}_t(X))/(\E_{X \sim P}[w(X)\tilde{\tau}_t(X)] + c\widehat{\lambda}_t \|\bU_t\|_F^2)^2$. 
For that, it suffices to choose
\begin{align*}
n = \Theta\left( \frac{\Var_{X \sim P}(w(X) \tilde{\tau}_t(X))}{ \E_{X \sim P}[w(X)\tilde{\tau}_t(X)] c\widehat{\lambda}_t \|\bU_t\|_F^2} \right) \;.
\end{align*}
We now focus on bounding by above the right-hand side.
Let $T_t':= C_3\widehat{\lambda}_t \|\bU_t\|_F^2$ be the threshold 
used in the definition of $\tilde{\tau}_t(x)=\tilde{g}_t(x)\1\{ \tilde{g}_t(x) \geq T_t'\}$. 
For the variance we have that 
    \begin{align}
        \Var(w(X) \tilde{\tau}_t(X)) &\leq \E_{X \sim P}[((w(X) \tilde{\tau}_t(X))^2] \notag\\
        &\leq \E_{X \sim P}[w(X) \tilde{\tau}^2_t(X)]  \notag\\
        &\lesssim  R^2 \|\bU_t \|_F^2 \E_{X \sim P}[ w(X)\tilde{\tau}_t(X)] \;, \label{eq:var}
    \end{align}
     where the last inequality uses that 
     $\E_{X \sim P_t}[ \tilde{\tau}^2_t(X)] = \E_{X \sim P_t}[ \tilde{g}_t^2(X) \1\{\tilde{g}_t(X) \geq T_t'\}]$ 
     and bounds from above the one of the two factors of $\tilde{g}_t$ as follows:
    \begin{align}
    \tilde{g}_{t}(x) = \|\bU_t(x-\mu_t)\|_2^2 \leq \| \bU_t \|_F^2 R^2 \;,
    \end{align}
    where $\bU_t$ is the matrix used in  \Cref{line:scoresdef1} of the algorithm.
    Using \Cref{eq:var}, the number of samples that suffice  can  now be bounded as follows:
    \begin{align*}
        \frac{\Var_{X \sim P}(\tilde{\tau}_t(X))}{\widehat{\lambda}_t\E_{X \sim P}[w(X)\tilde{\tau}_t(X)]  \|\bU_t\|_F^2} 
        \lesssim  \frac{  R^2 \|{\bU}_t \|_F^2 }{\widehat{\lambda}_t  \|\bU_t\|_F^2}
        \lesssim \frac{ R^2  }{ \delta^2/\eps } \;,
    \end{align*}
    where we used  that $\widehat{\lambda}_t >C_2\delta^2/\eps$ from \Cref{line:lambda1} of our algorithm.
\end{proof}

\section{Applications: Beyond Robust Mean Estimation}%
\label{sec:applications-main}

In this section, we develop robust streaming algorithms with near-optimal space complexity
for more complex statistical tasks, specifically 
for robust covariance estimation and robust stochastic optimization. 
The main idea enabling these applications is that these
tasks can be effectively reduced to robust mean estimation.

\subsection{Robust Covariance Estimation}
\label{sec:robust_covariance_estimation}
In this subsection, we study the problem of estimating the covariance matrix $\vec \Sigma$ 
of a distribution $D$, having access to $\eps$-corrupted samples from $D$ 
in the sense of \Cref{def:oblivious}. Let $X \sim D$ and the Kronecker product 
$Y = X \otimes X$. Note that $\E[Y] = \vec \Sigma^\flat$, where $\flat$ denotes the flattening operation. 
Then, using any robust mean estimation algorithm on this $d^2$-dimensional distribution, 
one efficiently compute a vector close to $\vec \Sigma^\flat$ in $\ell_2$-norm, 
which translates to a Frobenius-norm guarantee for $\vec \Sigma$. 
Of course, our mean estimator works as long as the distribution of $Y$ is stable. 
If $\cov[Y]$ is bounded from above by a multiple of the identity matrix, 
then $Y$ is $(\eps,O(\sqrt{\eps}))$-stable with respect to $\vec \Sigma^\flat$, 
and thus we get the following as a corollary of \Cref{th:main}:

\begin{theorem}[Robust Covariance Estimation for Distributions with Bounded Moments] \label{thm:cov-est}
Let a distribution $D$ with $\cov_{X \sim D}[X \otimes X] \preceq  \bI_{d^2}$ 
and denote by $\vec \Sigma$ its covariance matrix. Let $d \in \Z_+$,  $0<\tau<1$ 
and $0<\eps<\eps_0$ for a sufficiently small constant $\eps_0$. 
There exists an algorithm that given $\eps,\tau$ and a set of 
$n=(d^4/\eps)\polylog(d,1/\eps,1/\tau)$ samples in the single-pass streaming 
model of \Cref{def:streaming} from a distribution $Q$ with $\dtv(D,Q)\leq \eps$, 
runs in time $n d^2 \polylog\left(d,1/\eps,1/\tau  \right)$, 
uses memory $d^2 \polylog\left(d,1/\eps,1/\tau  \right)$, 
and outputs a matrix $\widehat{\vec \Sigma}$ such that 
$\| \widehat{\vec \Sigma} - \vec \Sigma \|_F = O(\sqrt{\eps})$, 
with probability at least $1-\tau$.
\end{theorem}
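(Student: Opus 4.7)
The plan is to reduce robust covariance estimation to robust mean estimation in the lifted space $\R^{d^2}$, and invoke \Cref{th:main} as a black box. Given a sample $X$, define the lifted vector $Y := X \otimes X \in \R^{d^2}$, whose expectation equals $\vec\Sigma^\flat$ by the setup of the theorem. Because the map $X \mapsto X \otimes X$ is deterministic, the pushforward is a contraction under total variation distance: if $\dtv(Q,D) \leq \eps$, then the induced distribution of $Y$ under $Q$ is within $\eps$ in total variation of its distribution under $D$. Thus a one-pass stream of TV-corrupted samples $X_i$ is converted on the fly, via $Y_i = X_i \otimes X_i$, into a one-pass stream of TV-corrupted samples of $Y$ whose inlier distribution has mean $\vec\Sigma^\flat$; no extra passes over the data are required.

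Next, I would verify the hypotheses of \Cref{th:main} in dimension $d^2$. By assumption $\cov_{X \sim D}[Y] \preceq \vec I_{d^2}$, so the ``bounded-covariance implies stability'' fact recorded immediately after \Cref{def:stability2} gives that the distribution of $Y$ is $(C\eps, \delta)$-stable with respect to $\vec\Sigma^\flat$ for $\delta = O(\sqrt{\eps})$. To instantiate the radius parameter $R$, I would use
\[
\E[\|Y - \vec\Sigma^\flat\|_2^2] \;=\; \tr(\cov_{X \sim D}[Y]) \;\leq\; d^2,
\]
so Markov's inequality yields $\pr[\|Y - \vec\Sigma^\flat\|_2 > R] \leq \eps$ for $R = O(d/\sqrt{\eps})$, which is the radius that can then be fed into \Cref{th:main}.

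Plugging the parameters $d \mapsto d^2$, $\delta = O(\sqrt{\eps})$ and $R = O(d/\sqrt\eps)$ into \Cref{eq:sample_complex_in_theorem}, the dominant term in the sample complexity is $R^2 \cdot d^2 = d^4/\eps$, and a routine check confirms that none of the other five terms inside the maximum exceeds this rate (each one is $O(d^4/\eps)$ or smaller after substitution). The runtime and memory guarantees of \Cref{th:main} in ambient dimension $d^2$ translate directly to the claimed $n d^2 \polylog$ runtime and $d^2 \polylog$ memory. Finally, letting $\widehat{v}$ denote the vector returned by the streaming mean estimator of \Cref{th:main}, the guarantee $\|\widehat{v} - \vec\Sigma^\flat\|_2 = O(\sqrt{\eps})$ (holding with probability at least $1-\tau$) becomes, after reshaping $\widehat{v}$ into a $d \times d$ matrix $\widehat{\vec\Sigma}$, exactly
\[
\|\widehat{\vec\Sigma} - \vec\Sigma\|_F \;=\; \|\widehat{v} - \vec\Sigma^\flat\|_2 \;=\; O(\sqrt{\eps}),
\]
since the Frobenius norm of a matrix equals the Euclidean norm of its flattening.

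The argument is essentially bookkeeping on top of \Cref{th:main}. The only non-trivial step is the sample-complexity arithmetic: one must substitute $d' = d^2$, $\delta = O(\sqrt\eps)$ and $R = O(d/\sqrt\eps)$ into each of the six terms inside the max in \Cref{eq:sample_complex_in_theorem} and check that $R^2 d' = d^4/\eps$ dominates. This is the main obstacle, but it is purely computational; all the algorithmic work has already been done in Sections~\ref{sec:near_linear_main_body} and~\ref{sec:low-memory-main}.
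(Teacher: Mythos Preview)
Your proposal is correct and follows essentially the same approach as the paper: the paper also reduces to robust mean estimation in $\R^{d^2}$ via $Y = X \otimes X$, notes that bounded covariance of $Y$ gives $(\eps,O(\sqrt{\eps}))$-stability, and invokes \Cref{th:main} as a black box (the paper presents the theorem as an immediate corollary without even writing out the sample-complexity arithmetic you spelled out).
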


For the special case when $D$ is Gaussian 
we have that the fourth moment tensor of $D$ is bounded:
\begin{fact}[see, e.g.,~\cite{cheng2019faster}]
\label{fact:fourth_moment_bound}
Let $X \sim \cN(0,\vec \Sigma)$ with $\vec \Sigma \preceq \bI_d$ and $Y = X \otimes X$. Then, $\cov[Y] \preceq 2\bI_{d^2}$.
\end{fact}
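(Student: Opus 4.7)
The plan is to compute $\cov[Y]$ entry-wise using Isserlis' (Wick's) theorem for Gaussian fourth moments, and then bound the resulting quadratic form by recognizing the two pieces as traces that are controlled by $\|\vec \Sigma\|_2^2 \le 1$.

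First I would write, for $X \sim \cN(0,\vec \Sigma)$ and $Y = X \otimes X$, the entries of $Y$ indexed by $(i,j)$ as $Y_{ij} = X_i X_j$. Isserlis' theorem gives $\E[X_i X_j X_k X_l] = \Sigma_{ij}\Sigma_{kl} + \Sigma_{ik}\Sigma_{jl} + \Sigma_{il}\Sigma_{jk}$, and since $\E[Y_{ij}] = \Sigma_{ij}$, the centered entries are
\[
\cov[Y]_{(i,j),(k,l)} \;=\; \Sigma_{ik}\Sigma_{jl} + \Sigma_{il}\Sigma_{jk}.
\]

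Next I would test positivity of $2 \bI_{d^2} - \cov[Y]$ via an arbitrary $v \in \R^{d^2}$. Reshape $v$ as a matrix $V \in \R^{d \times d}$ with $V_{ij} = v_{(i,j)}$ so that $\|V\|_F = \|v\|_2$. A direct index manipulation yields
\[
v^\top \cov[Y]\, v \;=\; \tr(\vec \Sigma V \vec \Sigma V^\top) + \tr(\vec \Sigma V \vec \Sigma V).
\]
The first term is a squared Frobenius norm, $\tr(\vec \Sigma V \vec \Sigma V^\top) = \|\vec \Sigma^{1/2} V \vec \Sigma^{1/2}\|_F^2$, which by the submultiplicativity $\|AB\|_F \le \|A\|_2 \|B\|_F$ is at most $\|\vec \Sigma^{1/2}\|_2^4 \|V\|_F^2 \le \|V\|_F^2$. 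The second term I would bound by Cauchy--Schwarz: $|\tr((\vec \Sigma V)(\vec \Sigma V))| \le \|\vec \Sigma V\|_F^2 \le \|\vec \Sigma\|_2^2 \|V\|_F^2 \le \|V\|_F^2$. Adding the two estimates gives $v^\top \cov[Y]\, v \le 2 \|V\|_F^2 = 2\|v\|_2^2$, i.e.\ $\cov[Y] \preceq 2 \bI_{d^2}$, as claimed.

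The only delicate point is the second trace, because $\vec \Sigma V \vec \Sigma V$ need not be of the form $AA^\top$ when $V$ is not symmetric, so one cannot simply call it a squared norm; instead one has to handle it by Cauchy--Schwarz applied to $\tr(AB) \le \|A\|_F \|B\|_F$, using $\|\vec \Sigma\|_2 \le 1$ to absorb the $\vec \Sigma$ factors. Everything else is bookkeeping from Isserlis' identity, and the factor of $2$ in the bound corresponds exactly to the two ``non-self-contraction'' Wick pairings.
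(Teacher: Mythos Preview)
Your argument is correct. The paper itself does not prove this fact; it is quoted as a known result with a citation to \cite{cheng2019faster}, so there is no ``paper's proof'' to compare against. Your Isserlis-based computation of $\cov[Y]_{(i,j),(k,l)} = \Sigma_{ik}\Sigma_{jl} + \Sigma_{il}\Sigma_{jk}$, the identification $v^\top \cov[Y]\,v = \tr(\vec\Sigma V \vec\Sigma V^\top) + \tr(\vec\Sigma V \vec\Sigma V)$, and the two bounds (the first via $\|\vec\Sigma^{1/2}V\vec\Sigma^{1/2}\|_F^2 \le \|V\|_F^2$, the second via Cauchy--Schwarz on the Frobenius inner product) are all valid and give exactly the constant $2$. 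One small remark: the equality $\tr(\vec\Sigma V \vec\Sigma V^\top) = \|\vec\Sigma^{1/2}V\vec\Sigma^{1/2}\|_F^2$ is perhaps most transparently seen by writing $A = \vec\Sigma^{1/2}V\vec\Sigma^{1/2}$ and using $\tr(AA^\top)$ together with cyclicity, which also shows $\tr(\vec\Sigma V \vec\Sigma V^\top)=\tr(\vec\Sigma V^\top \vec\Sigma V)$; you may want to make that step explicit.
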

Using the above fact, we have that the guarantees of \Cref{thm:cov-est} 
hold in the Gaussian case, giving an algorithm for $O(\sqrt{\eps})$-approximation in Frobenius norm. 
However, the information-theoretic lower bound for covariance estimation 
of the Gaussian distribution is of the order of $\eps$. 
We can plug-in our streaming robust mean estimation algorithm 
to the covariance estimator given in \cite{cheng2019faster}, and achieve the nearly-optimal error of $O(\eps \log(1/\eps))$. 
This algorithm creates a series of estimates $\widehat{\vec \Sigma}_i$. 
At the $(i+1)$-th step, all samples are multiplied by ${\widehat{\vec \Sigma}_i}^{-1/2}$ thus, 
given that $\widehat{\vec \Sigma}_i$ is a good approximation for $\vec \Sigma$, 
this makes the distribution of the transformed samples closer to $\cN(0,\bI_d)$, 
which in turn allows us to produce a better approximation $\widehat{\vec \Sigma}_{i+1}$ of $\vec \Sigma$.
The resulting guarantees are summarized in the following theorem.

\begin{theorem}[Robust Gaussian Covariance Estimation] 
\label{thm:covariance_application_better_error}
Let $Q$ be a distribution on $\R^d$ with $\dtv(Q,\cN(0,\vec \Sigma))\leq \eps$ 
and assume that $ \frac{1}{\kappa}\bI_d \preceq \vec \Sigma \preceq \bI_d$, for some $\kappa>0$.
There is a single-pass streaming algorithm that uses 
$n= (d^4/\eps^2) \polylog(d,\kappa,1/\eps,1/\tau)$
samples from $Q$, runs in time $n d^2 \polylog\left(d,\kappa,1/\eps,1/\tau  \right)$, 
uses memory $d^2 \polylog\left(d,\kappa,1/\eps,1/\tau  \right)$, 
and outputs a matrix $\widehat{\vec \Sigma}$ such that 
$\| \vec \Sigma^{-1/2}\widehat{\vec \Sigma}\vec \Sigma^{-1/2} - \bI_{d} \|_F 
= O(\eps \log(1/\eps))$, with probability at least $1-\tau$. 
\end{theorem}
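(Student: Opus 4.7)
The plan is to implement the iterative covariance-refinement framework of \cite{cheng2019faster} in a streaming fashion, invoking our streaming robust mean estimator (\Cref{th:main}) as a black-box subroutine at each iteration. Two features of that framework make it amenable to streaming: it terminates after only $K = O(\log\log \kappa + \log\log(1/\eps))$ outer iterations, and each iteration reads its samples only once, so the stream can be partitioned cleanly.

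First, I would recall the basic reduction to mean estimation: for $X\sim \cN(0,\vec\Sigma)$ the flattened second moment $\E[X\otimes X]$ equals $\vec\Sigma^\flat$, and \Cref{fact:fourth_moment_bound} shows that $\cov[X\otimes X]\preceq 2\,\vec I_{d^2}$ whenever $\vec\Sigma\preceq\vec I_d$. Feeding the lifted samples $Y_i = X_i\otimes X_i$ into \Cref{th:main} (in the bounded-covariance regime, with ambient dimension $d^2$) therefore yields, as in \Cref{thm:cov-est}, a streaming algorithm that produces a rough estimate $\widehat{\vec\Sigma}_0$ with relative Frobenius error $O(\sqrt\eps)$ using $\tilde O(d^4/\eps)$ samples and $\tilde O(d^2)$ memory.

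To sharpen this to $O(\eps\log(1/\eps))$, I would partition the stream into $K$ disjoint batches of size $\tilde O(d^4/\eps^2)$ and iteratively whiten. At iteration $i$, store the current estimate $\widehat{\vec\Sigma}_i$ in $O(d^2)$ memory; as each new sample $x$ arrives, compute on the fly the whitened lift $Z = (\widehat{\vec\Sigma}_i^{-1/2}x) \otimes (\widehat{\vec\Sigma}_i^{-1/2}x) \in\R^{d^2}$ and stream $Z$ into a fresh invocation of the mean estimator. Since whitening is a bijective map, the total-variation contamination level is preserved, and the whitened inliers are distributed as $\cN(0,\vec I_d)$ up to an $O(\eta_i)$ perturbation, where $\eta_i := \|\widehat{\vec\Sigma}_i^{-1/2}\vec\Sigma\widehat{\vec\Sigma}_i^{-1/2}-\vec I_d\|_F$. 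Reshaping the mean-estimator output and recoloring by $\widehat{\vec\Sigma}_i^{1/2}$ yields $\widehat{\vec\Sigma}_{i+1}$, whose relative error contracts from $\eta_i$ to roughly $O(\sqrt\eps\cdot\eta_i + \eps\log(1/\eps))$, matching the contraction of \cite{cheng2019faster}; after $K$ rounds the error reaches the fixed point $O(\eps\log(1/\eps))$. A union bound over $K=\polylog$ iterations controls the overall failure probability.

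The main obstacle I anticipate is propagating the stability and conditioning parameters cleanly across iterations. Specifically: (a) the Kronecker lift $Y = X\otimes X$ is sub-exponential rather than sub-Gaussian, so I must argue that within each refinement step the effective stability parameter $\delta$ passed to \Cref{th:main} yields the $\eps\log(1/\eps)$ rate rather than the bounded-covariance rate $\sqrt\eps$; this is precisely where the well-conditioning assumption $\vec\Sigma\succeq\kappa^{-1}\vec I_d$ enters, controlling the tail radius $R$ and the naive pruning bound of \Cref{th:main} to be $\polylog(d,\kappa,1/\eps)$ uniformly across iterations. (b) I must maintain the invariant that every intermediate $\widehat{\vec\Sigma}_i$ stays well-conditioned (within a constant factor of $\vec\Sigma$), so that $\widehat{\vec\Sigma}_i^{-1/2}$ can be formed and applied in $O(d^2)$ memory without precision blow-up; this follows inductively from the contraction bound on $\eta_i$ together with Weyl's inequality, but it is the key invariant that must be tracked carefully when the error in the final iterations becomes comparable to $\eps$.
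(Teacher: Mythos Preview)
Your proposal is correct and follows essentially the same route as the paper: plug the streaming mean estimator of \Cref{th:main} into the iterative whitening scheme of \cite{cheng2019faster}, partition the stream into $\polylog$ batches, and argue that after whitening the inliers are $(\eps,O(\sqrt{\eps\eta_i}+\eps\log(1/\eps)))$-stable so the contraction $\eta_{i+1}\lesssim \sqrt{\eps}\,\eta_i + \eps\log(1/\eps)$ holds. The paper handles your anticipated obstacle (a) exactly as you suggest, by showing (your contraction step made explicit) that a distribution with exponentially decaying tails and $\|\vec\Sigma-\vec I_d\|_2\leq\eta$ is $(\eps,O(\sqrt{\eps\eta}+\eps\log(1/\eps)))$-stable, which is what replaces Lemma~3.5 of \cite{cheng2019faster}; obstacle (b) is inherited unchanged from \cite{cheng2019faster}.
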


The reader is referred to \Cref{sec:appendix_applications} for more details 
on using \Cref{alg:streaming} to obtain \Cref{thm:covariance_application_better_error}.

\subsection{Stochastic Convex Optimization} 
\label{sec:app-optimization}

Here we explore the implications of \Cref{alg:streaming} in outlier-robust stochastic convex optimization. 
This subsection crucially leverages the prior works \cite{PSBR18,DKK+19-sever}, 
which apply robust mean estimation algorithms to perform robust stochastic optimization. 
In particular, we follow the framework of \cite{PSBR18}.

Concretely, we study the following generic optimization problem: 
Let a parameter space $\Theta$, sample space $\cZ$, 
and a loss function $f(\theta ; z ) : \Theta \times \cZ \to \R^+$. 
For an unknown distribution $D$ over $\cZ$, the goal is 
to minimize the associated \emph{risk} $\bar{f}(\theta) = \E_{z \sim D}[f(\theta ; z)]$, 
given sample access to the distribution $D$. We will occasionally just write $f(\theta)$ 
instead of $f(\theta;z)$ when no confusion arises. 
This setup is central in machine learning, 
since it captures a plethora of learning tasks. For example, 
$f$ can be a negative log-likelihood function for the learning problem of interest, 
e.g., square loss for linear regression and logistic loss for  logistic regression. 
In the robust version of the problem, the algorithm has access only 
to an $\eps$-corrupted version of $D$ in the sense of \Cref{def:oblivious}. 

We start by recalling a generic optimization algorithm 
that works whenever $\bar{f}$ is $\tau_\ell$-strongly convex 
and $\tau_u$-smooth, i.e., for all $\theta_1,\theta_2 \in \Theta$, we have that
\begin{align*}
    \frac{\tau_\ell}{2}\| \theta_1-\theta_2 \|_2^2 \leq \bar{f}(\theta_1)- \bar{f}(\theta_2)  - (\nabla\bar{f}(\theta_2))^T(\theta_1-\theta_2) \leq \frac{\tau_u}{2}\| \theta_1-\theta_2 \|_2^2 \;.
\end{align*}
We then give specific applications for robust linear regression and logistic regression.

The work of \cite{PSBR18} provides an analysis 
of projected gradient descent assuming oracle access to approximations of the gradient:
\begin{definition}[$(\alpha,\beta)$-gradient estimator] \label{def:grad_est}
A function $g(\theta)$ is an $(\alpha,\beta)$-gradient estimator for $\bar{f}$ 
if $\|g(\theta) - \nabla \bar{f}(\theta) \|_2 \leq \alpha \| \theta - \theta^*\|_2 + \beta$, for every $\theta \in \Theta$.
\end{definition}

Denoting by $\eta$ the step size of gradient descent, define the following parameter:
\begin{align} \label{eq:contraction}
\kappa := \sqrt{1- \frac{2 \eta \tau_\ell \tau_u}{\tau_\ell + \tau_u}} + \eta \alpha \;.
\end{align}

  \begin{algorithm}[tb]
    \caption{Robust Gradient Descent} 
    \label{alg:robustGD}
\begin{algorithmic}[1]
   \State {\bfseries Input:}  $g(\cdot)$,  $\tau$
   \For{$t=0$ {\bfseries to} $T-1$}
          \State $\theta^{t+1} = \arg\min_{\theta \in \Theta}  \left\| \theta^t - \eta g(\theta)\right\|_2^2 $
   \EndFor
\end{algorithmic}
\end{algorithm}

\begin{theorem}[\cite{PSBR18}]\label{thm:robust-gd} 
Let the domains $\Theta,\cZ \subset \R^d$, a distribution $D$ over $\cZ$, 
and a loss function $f:\Theta \times \cZ \to \R^+$ such that 
$\bar{f}(\theta):= \E_{z \sim D}[f(\theta;z)]$ is $\tau_\ell$-strongly convex and $\tau_u$-smooth. 
Let $g$ be an $(\alpha,\beta)$-gradient estimator with $\alpha<\tau_\ell$.
Let $\kappa$ from \Cref{eq:contraction} and $\theta^*$ be the minimizer of $\bar{f}$. 
Then \Cref{alg:robustGD}, initialized at $\theta^0$ with step size $\eta = 2/(\tau_\ell + \tau_u)$, after 
    \begin{align}
        T = \log_{\frac{1}{\kappa}} \left(\frac{(1-\kappa)\|\theta^0-\theta^*\|_2}{\beta} \right) \label{eq:T}
    \end{align}
iterations, returns a vector $\widehat{\theta}$ such that 
\begin{align}\label{eq:final_error}
  \|\widehat{\theta} - \theta^*\|_2 \leq \frac{2}{1-\kappa}\beta \;.
\end{align}    
\end{theorem}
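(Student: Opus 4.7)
The plan is to follow the standard convergence analysis of projected gradient descent for strongly convex and smooth functions, modified to account for the fact that our algorithm has access only to an inexact gradient $g(\theta)$ rather than $\nabla \bar f(\theta)$. The heart of the proof will be a one-step contraction inequality of the form
\[
\|\theta^{t+1} - \theta^*\|_2 \;\le\; \kappa \|\theta^{t} - \theta^*\|_2 + \eta \beta,
\]
which, when iterated $T$ times and combined with the choice of $T$ in \eqref{eq:T}, will yield the stated bound \eqref{eq:final_error}.

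First I would use the non-expansiveness of the projection operator onto the convex set $\Theta$ (noting that $\theta^* \in \Theta$ since it minimizes $\bar{f}$ over $\Theta$) to write
\[
\|\theta^{t+1} - \theta^*\|_2 \;\le\; \|\theta^{t} - \eta g(\theta^{t}) - \theta^*\|_2.
\]
Then I would split this by the triangle inequality into an ``ideal'' term $\|\theta^{t} - \eta \nabla\bar{f}(\theta^{t}) - \theta^*\|_2$, corresponding to a gradient step using the true gradient, and an ``error'' term $\eta \|g(\theta^{t}) - \nabla\bar{f}(\theta^{t})\|_2$. The error term is directly controlled by the $(\alpha,\beta)$-gradient estimator assumption in \Cref{def:grad_est}, giving an upper bound of $\eta\alpha\|\theta^{t} - \theta^*\|_2 + \eta\beta$.

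For the ideal term I would invoke the classical co-coercivity (Nesterov) bound for gradient descent on $\tau_\ell$-strongly convex and $\tau_u$-smooth functions: with the step size $\eta = 2/(\tau_\ell + \tau_u)$, the operator $\theta \mapsto \theta - \eta \nabla \bar{f}(\theta)$ is a contraction toward $\theta^*$ with factor exactly $\sqrt{1 - 2\eta \tau_\ell \tau_u/(\tau_\ell + \tau_u)}$. Using $\nabla \bar{f}(\theta^*) = 0$ (valid since $\theta^*$ is an unconstrained-style minimizer of $\bar{f}$, or more generally handling the projection via the first-order optimality condition) yields
\[
\|\theta^{t} - \eta \nabla\bar{f}(\theta^{t}) - \theta^*\|_2 \;\le\; \sqrt{1 - \tfrac{2\eta \tau_\ell \tau_u}{\tau_\ell + \tau_u}}\,\|\theta^{t} - \theta^*\|_2.
\]
Adding the two contributions gives the one-step recursion with contraction constant $\kappa$ as defined in \eqref{eq:contraction}; the assumption $\alpha < \tau_\ell$ guarantees $\kappa < 1$, which is what makes the iteration productive.

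Finally I would unroll the recursion: by induction,
\[
\|\theta^{T} - \theta^*\|_2 \;\le\; \kappa^{T} \|\theta^{0} - \theta^*\|_2 + \eta\beta\sum_{i=0}^{T-1}\kappa^{i} \;\le\; \kappa^{T} \|\theta^{0} - \theta^*\|_2 + \frac{\eta\beta}{1-\kappa}.
\]
Substituting the choice of $T$ from \eqref{eq:T} makes the first summand at most $\beta/(1-\kappa)$, and bounding $\eta \le 1$ (or more carefully folding the $\eta$ into the constant, noting $\eta\le 2/(\tau_\ell+\tau_u)$ and rescaling if needed) yields \eqref{eq:final_error}. The main obstacle, and the only place where any real care is required, is the co-coercivity step for the projected/constrained case: one must argue that the optimality condition at $\theta^*$ together with strong convexity and smoothness still delivers the sharp $\sqrt{1 - 2\eta\tau_\ell\tau_u/(\tau_\ell+\tau_u)}$ factor; everything else is bookkeeping via the triangle inequality and geometric summation.
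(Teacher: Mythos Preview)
The paper does not prove this theorem; it is quoted from \cite{PSBR18} and used as a black box. Your outline is the standard contraction argument for projected gradient descent with an inexact oracle and matches the analysis in the cited reference, so there is nothing to compare against within this paper.
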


If the distribution of the gradients has bounded covariance, 
then one can use the low-memory estimator of the previous sections in place of $g(\cdot)$. 
This bound on the covariance will not necessarily be known to the algorithm, 
thus we first need to strengthen the robust mean estimator 
so that it is adaptive to that unknown scale. 
This can be done using Lepski's method \cite{lepskii1991problem,birge2001alternative} 
(the details are deferred to \Cref{sec:lepski}). Having that version of the estimator at hand, 
we then obtain the following statement (see \Cref{sec:appendix_applications} for the proof):

\begin{restatable}{corollary}{CorRobustGD} \label{cor:robust-gd}  
In the setting of \Cref{thm:robust-gd}, suppose that the distribution 
of gradients satisfies $\cov[\nabla {f}(\theta)] \preceq \sigma^2 \bI_d$ 
with  $\sigma^2 = \alpha^2 \|\theta-\theta^*\|^2_2 + \beta^2$ for all $\theta \in \Theta$, 
where $\alpha \sqrt{\eps}<\tau_\ell$. Assume that the radius of the domain 
$\Theta$, $r:= \max_{\theta \in \Theta}\|\theta\|_2$ is finite.  
There exists a single-pass streaming algorithm that given 
$O(T (d^2/\eps)\log(1+\alpha r/\beta) \polylog(d,1/\eps,T/\tau,1+\alpha r/\beta ) )$ samples, 
runs in time $T n d \, \polylog (d,1/\eps,  T/\tau, 1+\alpha r/\beta )$, 
uses memory $d \, \polylog (d,1/\eps, T/\tau,1+\alpha r/\beta )$, 
and returns a vector $\widehat{\theta} \in \R^d$ such that 
$\| \widehat{\theta} - \theta^* \|_2 = O(\sqrt{\eps} \beta/(1-\kappa) )$ 
with probability at least $1-\tau$.
\end{restatable}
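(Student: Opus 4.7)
The plan is to use \Cref{alg:streaming} as a black-box gradient estimator inside the projected gradient descent framework of \Cref{thm:robust-gd}. Concretely, at iteration $t$ with current iterate $\theta^t$, the random vector $\nabla f(\theta^t; z)$ for $z \sim D$ has mean $\nabla \bar{f}(\theta^t)$ and covariance bounded by $\sigma_t^2 \bI_d$ with $\sigma_t^2 = \alpha^2 \|\theta^t - \theta^*\|_2^2 + \beta^2$. Feeding a fresh batch of $\eps$-corrupted samples $\nabla f(\theta^t; z_i)$ into our streaming robust mean estimator produces an estimate $g(\theta^t)$ satisfying $\|g(\theta^t) - \nabla \bar{f}(\theta^t)\|_2 = O(\sigma_t \sqrt{\eps})$ by the bounded-covariance case of \Cref{th:main}, which is precisely an $(\alpha', \beta')$-gradient estimator in the sense of \Cref{def:grad_est} with $\alpha' = O(\alpha \sqrt{\eps}) < \tau_\ell$ and $\beta' = O(\beta \sqrt{\eps})$.

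The main obstacle is that $\sigma_t$ is unknown to the algorithm, because $\|\theta^t - \theta^*\|_2$ is unknown; worse, \Cref{th:main} takes the stability parameter $\delta$ (and hence implicitly the covariance scale) as input. I would resolve this via Lepski's method (developed in \Cref{sec:lepski}): since $\beta \leq \sigma_t \leq \alpha r + \beta$, it suffices to run the mean estimator on a geometric grid of $O(\log(1 + \alpha r / \beta))$ candidate scales and output the estimate corresponding to the smallest scale whose confidence interval is consistent with those of all larger scales. This yields an adaptive estimator with the optimal error $O(\sigma_t \sqrt{\eps})$ at the cost of a logarithmic overhead in samples, runtime, and failure probability.

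With the adaptive gradient estimator in hand, I apply \Cref{thm:robust-gd}. By \Cref{eq:final_error}, after $T$ iterations (with $T$ given by \Cref{eq:T} using the parameters $\alpha', \beta'$) the algorithm returns $\widehat{\theta}$ satisfying $\|\widehat{\theta} - \theta^*\|_2 \leq 2\beta'/(1-\kappa) = O(\sqrt{\eps}\, \beta / (1-\kappa))$, which is the desired conclusion. Because the model is single-pass streaming, each of the $T$ outer iterations is assigned a disjoint sub-stream of fresh i.i.d.\ corrupted samples, so the estimates across iterations are independent conditional on the iterates; a union bound over $T$ iterations and over the $O(\log(1+\alpha r/\beta))$ Lepski candidates per iteration controls the total failure probability at $\tau$.

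Finally, for the resource bounds I would invoke \Cref{th:main} with failure probability $\tau / (T \log(1+\alpha r/\beta))$: each invocation needs $\tilde{O}(d^2/\eps)$ samples and $\tilde{O}(nd)$ time while using only $\tilde{O}(d)$ memory (absorbing polylog factors). Multiplying the per-iteration sample cost by the number of Lepski candidates and by $T$ reproduces the claimed sample complexity of $O(T (d^2/\eps) \log(1 + \alpha r/\beta) \polylog(d, 1/\eps, T/\tau, 1+\alpha r/\beta))$; the memory bound is preserved because at any given time we only store the current iterate $\theta^t$ plus the workspace of a single call to \Cref{alg:streaming}, and the running time follows directly from the per-iteration runtime times $T$.
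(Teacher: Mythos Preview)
Your proposal is correct and follows essentially the same approach as the paper: plug the Lepski-adaptive streaming mean estimator of \Cref{cor:low_memory_adaptive} into \Cref{alg:robustGD} as the gradient oracle, use fresh samples per iteration with failure probability $\tau/T$, and invoke \Cref{thm:robust-gd} with the resulting $(O(\alpha\sqrt{\eps}),O(\beta\sqrt{\eps}))$-gradient estimator. The only (harmless) slip is the upper bound on $\sigma_t$: since both $\theta^t$ and $\theta^*$ lie in $\Theta$, one has $\|\theta^t-\theta^*\|_2\le 2r$, so the correct range is $\beta\le\sigma_t\le 2\alpha r+\beta$, matching the paper's choice $B=2\alpha r+\beta$; this changes nothing asymptotically.
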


We now proceed to more specific applications, 
where we work out the parameters $\alpha,\beta$ for some distributions of interest.

\subsubsection{Linear Regression}

For linear regression, we assume the following generative model:
\begin{align}
    Y = X^T\theta^* + Z \;, \label{eq:lr}
\end{align}
where $\theta^* \in \R^d$ belongs in the ball $\|\theta^*\|_2 \leq r$, 
$X \sim D_x$,  $Z \sim D_Z$ independently, and $D_Z$ has zero mean. 
The loss function that we use in this case is $f(\theta) = \frac{1}{2}(Y-\theta^TX)^2$, 
and the risk function is 
\begin{align*}
\bar{f}(\theta) = \E_{(X,Y)}\left[f(\theta)\right] = 
\frac{1}{2}(\theta - \theta^*)^T \E_{X \sim D_x}[XX^T] (\theta - \theta^*) + \frac{1}{2} \Var(Z) \;.
\end{align*}
Letting $\lambda_{\max}(\E[XX^T])$ and $\lambda_{\min}(\E[XX^T])$ 
denote the largest and smallest eigenvalue of $\E[XX^T]$ respectively, 
it can be checked that for any $\tau_\ell \leq \lambda_{\min}(\E[XX^T])$ 
and $\tau_u \geq \lambda_{\max}(\E[XX^T])$, 
$\bar{f}$ is $\tau_\ell$-strongly convex and $\tau_\ell$-smooth.

Since we want the distribution of gradients to be stable, 
we impose the following sufficient conditions on the distributions $D_x$ and $D_Z$.

\begin{assumption} \label{as:cond-for-stable-grad}
    The random variables $X,Z$ are independent and satisfy the following conditions:
    \begin{enumerate}
        \item $\E_{Z \sim D_Z}[Z] = 0$
        \item $\Var_{Z \sim D_Z}[Z] \leq \xi^2$
        \item $\gamma \bI_d \preceq \E_{X \sim D_x}[X X^T] \preceq \sigma^2 \bI_d$.
        \item For some constant $C>0$, for every $v \in \cS^{d-1}$, $\E_{X \sim D_x}[(X^Tv)^4] \leq C \sigma^4$.
    \end{enumerate}
\end{assumption}

As shown below, these assumptions imply that the resulting distribution 
of the gradients has bounded covariance (and thus is stable with respect to its mean).
\begin{lemma}[see, e.g., \cite{DKK+19-sever}] \label{lem:bounded_cov_linear_reg}
For $D_x,D_Z$ satisfying \Cref{as:cond-for-stable-grad}, for every $\theta \in \Theta$, we have that 
$\cov[\nabla {{f}}(\theta)] \preceq (4 \sigma^2 \xi^2 + 4 C \sigma^4 \|\theta-\theta^*\|_2^2) \bI_d$.
\end{lemma}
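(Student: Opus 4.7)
The plan is to bound $\cov[\nabla f(\theta)]$ by the simpler second moment matrix $\E[\nabla f(\theta) \nabla f(\theta)^T]$, which is an upper bound in the PSD order since $\cov[A] = \E[AA^T] - \E[A]\E[A]^T \preceq \E[AA^T]$. The first step is to write $\nabla f(\theta)$ in a usable form. Since $f(\theta;X,Y) = \tfrac{1}{2}(Y - X^T\theta)^2$ and $Y = X^T\theta^* + Z$, we have $\nabla f(\theta) = -(Y - X^T\theta)X = (X^T u - Z)X$, where $u := \theta - \theta^*$. The goal then reduces to bounding $\E[(X^Tu - Z)^2 XX^T] \preceq (4\sigma^2\xi^2 + 4C\sigma^4 \|u\|_2^2)\bI_d$.

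Next I would expand the square and exploit the independence of $X$ and $Z$ together with $\E[Z] = 0$. Writing $(X^T u - Z)^2 = (X^T u)^2 - 2Z(X^Tu) + Z^2$, the cross term gives $\E[Z(X^Tu)XX^T] = \E[Z]\,\E[(X^Tu)XX^T] = 0$. The $Z^2$ term gives $\E[Z^2]\,\E[XX^T] \preceq \xi^2\sigma^2\bI_d$ using items (2) and (3) of \Cref{as:cond-for-stable-grad}. The remaining term is $\E[(X^Tu)^2 XX^T]$, which is the interesting one.

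To control $\E[(X^Tu)^2 XX^T]$ in the PSD order, I would test against an arbitrary unit vector $v \in \cS^{d-1}$ and write $v^T \E[(X^Tu)^2 XX^T] v = \E[(X^Tu)^2 (X^Tv)^2]$. By Cauchy--Schwarz this is at most $\sqrt{\E[(X^Tu)^4]\,\E[(X^Tv)^4]}$. Applying the fourth-moment hypothesis (item 4 of \Cref{as:cond-for-stable-grad}) to both factors, with $\E[(X^Tu)^4] = \|u\|_2^4 \,\E[(X^T u/\|u\|_2)^4] \leq C\sigma^4\|u\|_2^4$, yields $v^T \E[(X^Tu)^2 XX^T] v \leq C\sigma^4 \|u\|_2^2$. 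Since this holds for every unit $v$, we conclude $\E[(X^Tu)^2 XX^T] \preceq C\sigma^4\|u\|_2^2 \bI_d$.

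Combining the three pieces gives $\cov[\nabla f(\theta)] \preceq (\xi^2\sigma^2 + C\sigma^4\|u\|_2^2)\bI_d$, which is stronger than the stated bound (the constants $4$ in the statement are slack). There is no real obstacle here; the only subtlety is remembering to use Cauchy--Schwarz on scalar random variables rather than trying to argue directly on matrices, and being careful with the PSD ordering when passing from $\cov$ to $\E[\cdot\cdot^T]$. The entire argument is essentially a direct moment computation once the form $\nabla f = (X^Tu - Z)X$ is written down.
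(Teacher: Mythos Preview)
Your argument is correct. The paper does not supply its own proof of this lemma; it simply cites \cite{DKK+19-sever} for the result, so there is nothing to compare against beyond noting that your direct moment computation (bound $\cov$ by the raw second moment, expand $(X^Tu-Z)^2$, kill the cross term by independence and $\E[Z]=0$, and control $\E[(X^Tu)^2(X^Tv)^2]$ via Cauchy--Schwarz and the fourth-moment bound) is exactly the standard route and indeed yields the sharper constant $1$ in place of $4$.
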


Having \Cref{lem:bounded_cov_linear_reg} in hand, \Cref{cor:robust-gd} gives the following.

\begin{theorem}[Robust Linear Regression; full version of \Cref{thm:lr-simple}] \label{thm:lr}
Consider the linear regression model of \Cref{eq:lr} and suppose that 
\Cref{as:cond-for-stable-grad} holds. Let $0<\eps<\eps_0$ for a sufficiently small constant $\eps_0$. 
Assume that $C \sigma^2 \sqrt{\eps} < \gamma /2$. 
Let $\kappa,T$ as in \Cref{eq:contraction,eq:T} with $\tau_\ell = \gamma$, $\tau_u = \sigma^2$. 
There is an algorithm that uses $n = T\cdot (d^2/\eps)\log(1+r\sigma/\xi) \,\polylog\left(d,1/\eps, T/\tau,1+r\sigma/\xi \right)$ samples, 
runs in time $T n d \, \polylog (d,1/\eps, T/\tau,1+r\sigma/\xi )$, 
uses memory $d \, \polylog (d,1/\eps, T/\tau,1+r\sigma/\xi )$, 
and returns a vector $\widehat{\theta} \in \R^d$ such that 
$\| \widehat{\theta} - \theta^* \|_2 = O(\sigma \xi \sqrt{\eps}/(1-\kappa))$ 
with probability at least $1-\tau$.
\end{theorem}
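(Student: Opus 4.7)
The plan is to derive \Cref{thm:lr} as a direct instantiation of the generic streaming robust stochastic optimization result \Cref{cor:robust-gd}, by matching all the problem-specific parameters of the linear regression model to the abstract parameters $\tau_\ell, \tau_u, \alpha, \beta$ appearing there. First, I would verify the smoothness and strong convexity of the population risk: since $\bar f(\theta) = \tfrac{1}{2}(\theta-\theta^*)^T \E[XX^T] (\theta-\theta^*) + \tfrac{1}{2}\Var(Z)$, the Hessian is $\E[XX^T]$, which by \Cref{as:cond-for-stable-grad} satisfies $\gamma \vec I_d \preceq \E[XX^T] \preceq \sigma^2 \vec I_d$. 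Hence $\bar f$ is $\tau_\ell$-strongly convex and $\tau_u$-smooth with $\tau_\ell = \gamma$ and $\tau_u = \sigma^2$, matching the setting of \Cref{cor:robust-gd}.

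Next, I would use \Cref{lem:bounded_cov_linear_reg} to read off the per-$\theta$ bound on the covariance of the stochastic gradient. It gives $\cov[\nabla f(\theta)] \preceq (4\sigma^2\xi^2 + 4C\sigma^4\|\theta-\theta^*\|_2^2)\vec I_d$, which is exactly of the form $(\alpha^2 \|\theta-\theta^*\|_2^2 + \beta^2)\vec I_d$ required by \Cref{cor:robust-gd}, with $\alpha = 2\sqrt{C}\,\sigma^2$ and $\beta = 2\sigma\xi$. The hypothesis $\alpha\sqrt{\eps} < \tau_\ell$ of \Cref{cor:robust-gd} then translates, up to a constant factor absorbed into $C$, to $C \sigma^2 \sqrt{\eps} < \gamma/2$, which is precisely the assumption of the theorem. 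The domain $\Theta$ can be taken to be the Euclidean ball of radius $r$ around the origin (which contains $\theta^*$), so the radius parameter in \Cref{cor:robust-gd} is exactly $r$.

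At this point I would simply invoke \Cref{cor:robust-gd}: it produces a single-pass streaming algorithm returning $\widehat\theta$ with $\|\widehat\theta - \theta^*\|_2 = O(\sqrt{\eps}\,\beta/(1-\kappa)) = O(\sigma\xi\sqrt{\eps}/(1-\kappa))$, and the sample, time, and memory bounds of \Cref{cor:robust-gd} instantiated with these values of $\alpha,\beta$ become $n = T\cdot(d^2/\eps)\log(1+\alpha r/\beta)\polylog(\cdot) = T\cdot(d^2/\eps)\log(1 + r\sigma/\xi)\polylog(d,1/\eps,T/\tau,1+r\sigma/\xi)$, runtime $T n d\,\polylog(\cdot)$, and memory $d\,\polylog(\cdot)$, exactly as claimed. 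The number of iterations $T$ and contraction factor $\kappa$ are the ones defined in \eqref{eq:contraction}--\eqref{eq:T} with $\tau_\ell = \gamma, \tau_u = \sigma^2$, matching the statement.

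The main obstacle is almost entirely bookkeeping rather than new mathematics: all the nontrivial work (stability of the gradient distribution, the streaming robust mean estimator that adapts to the unknown scale $\sigma^2 = \alpha^2\|\theta-\theta^*\|^2 + \beta^2$ via Lepski's method, and the projected GD analysis with inexact gradient oracle) is already encapsulated in \Cref{lem:bounded_cov_linear_reg} and \Cref{cor:robust-gd}. One small care point is to ensure that the ratio $\alpha r/\beta$ used inside the $\log$ factor is well-defined and matches $r\sigma/\xi$ up to absolute constants (which can be hidden in $\polylog$), and to verify that taking $\Theta$ to be a ball of radius $r$ does not shrink the feasible region in a way that invalidates convergence of the projected iterates toward $\theta^*$; since $\theta^* \in \Theta$ and the projection is nonexpansive, this is immediate.
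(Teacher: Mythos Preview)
Your proposal is correct and follows essentially the same approach as the paper: set $\tau_\ell=\gamma$, $\tau_u=\sigma^2$, read off $\alpha=O(\sigma^2)$ and $\beta=2\sigma\xi$ from \Cref{lem:bounded_cov_linear_reg}, check that the hypothesis $\alpha\sqrt{\eps}<\tau_\ell$ matches $C\sigma^2\sqrt{\eps}<\gamma/2$, note $\alpha r/\beta = O(r\sigma/\xi)$, and invoke \Cref{cor:robust-gd}. The paper's proof is a terse version of exactly this bookkeeping.
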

\begin{proof}
In our case, we have that $\tau_\ell = \gamma$ and $\tau_u = \sigma^2$. 
Given the bound of \Cref{lem:bounded_cov_linear_reg}, 
we use \Cref{cor:robust-gd} with $\alpha = 2C \sigma^2$ and $\beta=2\sigma\xi$. 
The requirement from that corollary that $\alpha\sqrt{\eps}\leq \tau_\ell$ 
becomes $C \sigma^2 \sqrt{\eps} < \gamma /2$. Moreover, $\alpha r/\beta = O(r\sigma/\xi)$.
\end{proof}

\subsubsection{Logistic Regression}

We consider the joint distribution of $X \in \R^d,Y \in \{0,1\}$, 
where $X \sim D_x$ and $Y$ given $X$ is Bernoulli random variable:
\begin{align} \label{eq:log-reg}
    Y|X \sim \text{Bernoulli}(p), \quad \quad \text{with} \; p = \frac{1}{1+ e^{-x^T\theta^*}} \;.
\end{align}
The loss function we are minimizing in this case is the negative log-likelihood, 
which eventually can be written as 
$f(\theta) = -(\theta^T x)y + \Phi(\theta^T x)$, where $\Phi(t) := \log(1+e^{t})$. 
Regarding the strong convexity parameters, the Hessian of $\bar{f}$ can be shown to be
\begin{align}\label{eq:hessian}
    \nabla^2 \bar{f}(\theta)= \E_{X \sim D_x}\left[ \frac{e^{{\theta}^T X}}{(1+e^{{\theta}^T X})^2} XX^T\right] \;.
\end{align}
The parameter space $\Theta$ needs to be bounded in order for the eigenvalues 
of the Hessian to remain away from zero; 
we thus use $\Theta = \{\theta \in \R^d \; : \; \|\theta\|_2^2 \leq r  \}$ 
with $r>0$ being a universal constant. 
We also impose the following assumptions on the covariates.

\begin{assumption} \label{as:for-log-reg}
    We assume the following for the distribution of $X$: 
    \begin{enumerate}
        \item $\E[X] = 0$.
        \item (\emph{concentration}) For some constant $C>0$, $\E[XX^T] \preceq C^2 \bI_d $.
        \item (\emph{anti-concentration}) There exists constant $c_1>0$ and $c_2 \in (0,1/2)$ such that 
        for every unit vector $v$, $\pr_{X \sim D_x}[(v^T X)^2  > c_1 \|v\|_2^2] \geq c_2$.
    \end{enumerate}
\end{assumption}

Under these assumptions, we have the following:

\begin{lemma}[Lemma 4 in \cite{PSBR18}] \label{lem:bounded_cov_lemma}
Supposing that \Cref{as:for-log-reg} holds, for every $\theta \in \Theta$, 
we have that $\cov[\nabla {f}(\theta)] \preceq O(1) \bI_d $.
\end{lemma}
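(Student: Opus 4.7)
The plan is to exploit the fact that the ``residual'' appearing in the gradient of the logistic loss is uniformly bounded, so that the covariance of $\nabla f(\theta)$ is dominated by a scalar multiple of the covariance of $X$, which is assumed bounded.

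First I would compute the gradient explicitly. For $z=(x,y)$ with $y\in\{0,1\}$, we have $\nabla_\theta f(\theta;z) = -xy + \Phi'(\theta^T x)\,x = (\Phi'(\theta^T x) - y)\,x$, where $\Phi'(t) = e^t/(1+e^t) = \sigma(t)$ is the sigmoid. The key pointwise observation is that $\sigma(t)\in(0,1)$ and $y\in\{0,1\}$, so the scalar residual $r(\theta;z) := \Phi'(\theta^T x) - y$ satisfies $|r(\theta;z)|\le 1$ almost surely.

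Next I would bound the second-moment matrix. Writing $\cov[\nabla f(\theta)] \preceq \E[\nabla f(\theta)\,\nabla f(\theta)^T] = \E\bigl[\,r(\theta;Z)^2\, X X^T\,\bigr]$, and using $r(\theta;Z)^2 \le 1$ together with the fact that $X X^T \succeq 0$, we get $\E[r(\theta;Z)^2 X X^T] \preceq \E[X X^T]$. The concentration part of Assumption~\ref{as:for-log-reg} gives $\E[X X^T]\preceq C^2 \bI_d$, so combining the two inequalities yields $\cov[\nabla f(\theta)] \preceq C^2\,\bI_d = O(1)\,\bI_d$, uniformly in $\theta\in\Theta$. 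Note the argument does not use anti-concentration or the bound on $\|\theta\|_2$; those are only needed for the strong convexity of $\bar f$.

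There is essentially no obstacle here: the only subtle point is to make sure one uses the second-moment-matrix upper bound rather than the covariance directly (since subtracting the mean can only decrease the PSD order, $\cov[V] \preceq \E[VV^T]$ always holds, which is what lets us drop the mean of the gradient without additional work). Everything else is a one-line computation using that the logistic residual is bounded in $[-1,1]$.
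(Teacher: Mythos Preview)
Your proof is correct. The paper itself does not give a proof of this lemma; it simply cites it from \cite{PSBR18}, so there is no in-paper argument to compare against. Your derivation---computing $\nabla f(\theta;z)=(\sigma(\theta^Tx)-y)x$, noting the residual is bounded by $1$ in absolute value, and then using $\cov[V]\preceq \E[VV^T]\preceq \E[XX^T]\preceq C^2\bI_d$---is the standard and complete argument for this fact.
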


The above lemma shows that the distribution of $\nabla {f}(\theta)$ is $(\eps, O(\sqrt{\eps}))$-stable, 
and thus using our robust mean estimation algorithm one can get an $(\alpha,\beta)$-gradient estimator 
with $\alpha=0$ and $\beta=O(\sqrt{\eps})$. 
This proves the following (see \Cref{sec:appendix_applications} for a detailed proof):

\begin{restatable}[Robust Logistic Regression; full version of \Cref{thm:log-reg-simple}]{theorem}{thmlogistic} \label{thm:log-reg}
Consider the logistic regression model of \Cref{eq:log-reg} with the domain $\Theta$ 
of the unknown regressor being the ball of radius $r$, for some universal constant $r>0$, 
and suppose that \Cref{as:for-log-reg} holds. 
Assume that $0< \eps < \eps_0$ for a sufficiently small constant $\eps_0$. 
There is a single-pass streaming algorithm that uses 
$n =  (d^2/\eps) \,\polylog\left(d,1/\eps, 1/\tau \right)$ samples, 
runs in time $ n d \, \polylog (d,1/\eps, 1/\tau )$, 
uses memory $d \, \polylog (d,1/\eps, 1/\tau )$, 
and returns a vector $\widehat{\theta} \in \R^d$ such that 
$\| \widehat{\theta} - \theta^* \|_2 = O( \sqrt{\eps})$ 
with probability at least $1-\tau$. 
\end{restatable}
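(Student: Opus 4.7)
The plan is to apply \Cref{cor:robust-gd} to the logistic loss $f(\theta;(x,y)) = -(\theta^T x)y + \Phi(\theta^T x)$ with $\Phi(t) = \log(1+e^t)$, over the parameter domain $\Theta = \{\theta : \|\theta\|_2 \leq r\}$. The three things I need to verify are: (i) that $\bar{f}$ is $\tau_\ell$-strongly convex and $\tau_u$-smooth with $\tau_\ell, \tau_u = \Theta(1)$; (ii) that a $(0, O(\sqrt{\eps}))$-gradient estimator exists; and (iii) that plugging the resulting parameters into \Cref{cor:robust-gd} yields the claimed sample/memory/time complexity.

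For step (i), smoothness is immediate from \Cref{eq:hessian}: the scalar factor $e^{\theta^T X}/(1+e^{\theta^T X})^2 \leq 1/4$, so $\nabla^2\bar{f}(\theta) \preceq (C^2/4)\vec{I}_d$ by the concentration bound of \Cref{as:for-log-reg}, giving $\tau_u = O(1)$. Strong convexity is the only delicate part: for any unit $v$, anti-concentration provides an event $E_1 = \{(v^T X)^2 > c_1\}$ with $\P(E_1) \geq c_2$. On the other hand, Markov's inequality and the concentration bound give $\P(|\theta^T X| > M) \leq r^2 C^2/M^2$, which for a suitably large universal constant $M$ (depending only on $r, C, c_2$) is at most $c_2/2$. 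On the intersection event, $|\theta^T X| \leq M$ forces $e^{\theta^T X}/(1+e^{\theta^T X})^2 \geq c_3$ for an absolute constant $c_3 > 0$, so $v^T \nabla^2 \bar{f}(\theta) v \geq c_3 c_1 (c_2/2) =: \tau_\ell = \Omega(1)$.

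For step (ii), \Cref{lem:bounded_cov_lemma} states that $\cov[\nabla f(\theta)] \preceq O(1)\, \vec{I}_d$ uniformly over $\theta \in \Theta$. This means that for every fixed $\theta$, the distribution of $\nabla f(\theta;Z)$ (under $\eps$-corruption) has bounded covariance, so our streaming robust mean estimator from \Cref{th:main} (in its scale-adaptive form used inside \Cref{cor:robust-gd}) yields an estimate $g(\theta)$ with $\|g(\theta) - \nabla \bar{f}(\theta)\|_2 = O(\sqrt{\eps})$ with high probability, using $O(d^2/\eps)\, \polylog(d, 1/\eps, 1/\tau)$ samples, $d\,\polylog(\cdot)$ memory, and near-linear time. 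This is an $(\alpha, \beta)$-gradient estimator with $\alpha = 0$ and $\beta = O(\sqrt{\eps})$.

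For step (iii), I invoke \Cref{cor:robust-gd} with the parameters above. The condition $\alpha\sqrt{\eps} < \tau_\ell$ holds trivially. With $\eta = 2/(\tau_\ell + \tau_u) = \Theta(1)$ and $\alpha = 0$, the contraction factor $\kappa = \sqrt{1 - 2\eta\tau_\ell\tau_u/(\tau_\ell+\tau_u)}$ is a constant strictly less than $1$, so $1/(1-\kappa) = O(1)$. Since $r$ is a universal constant and $\|\theta^0 - \theta^*\|_2 \leq 2r$, the iteration count from \Cref{eq:T} is $T = O(\log(r/\beta)) = O(\log(1/\eps))$. The final error bound from \Cref{eq:final_error} becomes $\|\widehat{\theta}-\theta^*\|_2 \leq 2\beta/(1-\kappa) = O(\sqrt{\eps})$. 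Multiplying the per-iteration costs by $T$ and absorbing the logarithmic factor into $\polylog(d,1/\eps,1/\tau)$ yields the claimed total sample complexity $(d^2/\eps)\polylog(d,1/\eps,1/\tau)$, runtime $nd\polylog(\cdot)$, and memory $d\polylog(\cdot)$. The main obstacle, as noted, is the careful lower bound on $\tau_\ell$ via the anti-concentration assumption combined with a truncation argument to ensure the logistic derivative stays bounded away from zero on a constant-probability event; everything else is a direct specialization of the general framework.
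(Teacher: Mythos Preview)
Your proposal is correct and follows essentially the same route as the paper: you establish $\tau_u = O(1)$ from \Cref{eq:hessian} and the concentration bound, $\tau_\ell = \Omega(1)$ via the same anti-concentration plus Markov truncation argument (the paper takes the explicit threshold $M = 2rC^2/c_2$), invoke \Cref{lem:bounded_cov_lemma} to obtain a $(0,O(\sqrt{\eps}))$-gradient estimator, and then read off $\kappa$ bounded away from $1$, $T = O(\log(1/\eps))$, and the final error $O(\sqrt{\eps})$. The only cosmetic difference is that you go through \Cref{cor:robust-gd} while the paper applies \Cref{thm:robust-gd} with \Cref{alg:streaming} directly; since $\alpha=0$ makes the Lepski search in \Cref{cor:robust-gd} degenerate (the covariance scale is a fixed known constant), both routes coincide.
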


\subsection{Byzantine Adversary and Second-order Optimal Point}\label{sec:non_convex}

We now describe the application of our algorithm 
to the setting of robust distributed non-convex optimization.
As before, for a parameter space $\Theta \subset \R^d$, 
a loss function $f: \Theta \times \cZ \to \R^+$, 
and a distribution $D$ over $\cZ$, 
the goal is to approximately minimize $\bar{f}(\theta) = \E_{z \sim D}[f(\theta; z)]$. 
In this section, we consider the case when $D$ is a uniform distribution 
over $mn$ points $\{z_{i,j}: i \in [m], j \in [n]\}$ 
that are distributed over $m$ machines (workers), 
with each machine having access to $n$ samples.
Furthermore, we do not impose convexity constraints on $f$, 
and thus would restrict ourselves to finding a second-order stationary point, 
i.e., a stationary point $\widehat{\theta}$ 
such that the Hessian on $\widehat{\theta}$ is not too negative in any direction.

We now explain the distributed setup in more detail. 
There are $m$ \emph{workers} who have their own private samples, 
and a single \emph{master} machine which is responsible 
for collecting gradient estimates from the workers and updating the candidate vector iteratively. 
Concretely, the $i$-th worker has $n$ samples $\{z_{ij}\}_{j=1}^n$.
The master machine queries all workers with a parameter $\theta \in \Theta$, 
and each $i$-th worker responds with $g_i(\theta)$, 
where $g_i: \R^d \to \R^d$ is defined as follows: 
(i) if the $i$-th worker is honest, then $g_i(\theta)$ is the average of the gradients of $f$ at $\theta$ of their samples, 
i.e., $g_i(\theta):= (1/n)\sum_{j=1}^n \nabla f(\theta; z_{ij})$, and 
(ii) if the $i$-th worker is dishonest, then $g_i(\cdot)$ is an arbitrary function.
In our results, we require only that $(1 - \epsilon)$-fraction of workers are honest.
Recent work of~\cite{yin2019defending} provided an algorithm 
that uses a robust mean estimation algorithm on the gradients as a black-box procedure. 
In particular, the algorithm of \cite{yin2019defending}  requires only an access to the following oracle:
\begin{definition}[$\Delta$-inexact gradient]
We call the vector $v(\theta)$ a $\Delta$-inexact gradient of $\bar{f}$ 
at the point $\theta$ if $\|v(\theta) - \nabla \bar{f}(\theta) \|_2 \leq \Delta$.
\end{definition}
We assume that each worker machine has access to its own samples 
throughout the optimization process, and our goal is to reduce the memory requirement 
of the master machine.
Thus, we will use the  algorithm from \Cref{thm:polylog_passes} to calculate $\Delta$-inexact gradient 
for the master machine, which requires only an oracle access to the gradient estimates $\{g_i(\theta): i \in [m]\}$. 

\begin{assumption} \label{as:for-byzantine-algo} 
Let $\cI \subseteq [m]$ be the set of honest workers with $|\cI|\geq (1- \eps)m$.
\begin{enumerate}
    \item There exists $\delta$ with $0 \leq \eps \leq \delta\leq\delta_0$, for some sufficiently small $\delta_0$, 
    such that for every $\theta \in \Theta$, the set $\{g_i(\theta) \; | \; i \in \cI\}$ is $(C\eps,\delta)$-stable 
    with respect to $\nabla \bar{f}(\theta)$ for a large enough constant $C$.
    \item We assume that $\bar{f}$ is $L$-smooth and $\rho$-Hessian Lipschitz on $\Theta$, 
    i.e., for every $\theta_1,\theta_2 \in \Theta$ we have that 
    $\| \nabla \bar{f}(\theta_1) -  \nabla \bar{f}(\theta_2)\|_2 \leq L \| \theta_1 - \theta_2\|_2$ and 
     $\| \nabla^2 \bar{f}(\theta_1) -  \nabla^2 \bar{f}(\theta_2)\|_2 \leq \rho \| \theta_1 - \theta_2\|_2 \;.$
\end{enumerate}
\end{assumption}

We note that if the samples of honest workers are sampled 
i.i.d.\ from a distribution $P$, then the set $\{g_i(\theta): i \in \cI\}$ 
for a fixed $\theta \in \Theta$ will be stable with respect to $\nabla \bar{f}(\theta)$ 
with high probability, 
provided that the distribution of $\nabla f(\theta;Z)$ satisfies 
mild concentration under $Z\sim P$ and $m$ is sufficiently large. 
Using a standard cover argument with the smoothness properties of $f$, 
this can be extended to all $\theta \in \Theta$. 
We thus obtain the following theorem, under \Cref{as:for-byzantine-algo}.

\begin{theorem}
Suppose that \Cref{as:for-byzantine-algo} holds. Let $m$ denote the number of workers. 
Assume $0<\tau<1$, $\Delta := C' \delta<1$, for $C'$ a sufficiently large constant and define
\begin{align*}
Q := 2 \log \left( \frac{\rho(\bar{f}(\theta_0) - \inf_{\theta \in \R^d}\bar{f}(\theta))}{48 L\tau(\Delta^{6/5}d^{3/5}+ \Delta^{7/5}d^{7/10})   }  \right)\; , 
\quad 
        T_{th} := \frac{L}{ 384 (\rho^{1/2} + L (\Delta^{2/5}d^{1/5}+ \Delta^{3/5}d^{3/10})  } \;.
\end{align*}
There is an algorithm where the master, 
if initialized at $\theta_0$, does $T = \frac{2(\bar{f}(\theta_0) - \inf_{\theta \in \R^d}\bar{f}(\theta))}{3\Delta^2} Q \, T_{th}$ iterations, 
each running in $m d \,\polylog(d,1/\eps,T/\tau) $ time, 
uses $d \, \polylog(d,1/\eps,T/\tau)$ memory, 
and outputs a  vector $\widehat{\theta}$ such that, with probability $1-\tau$, 
$\|\nabla \bar{f}(\widehat{\theta}) \|_2 \leq 4 \Delta $ and 
$\lambda_{min}(\nabla^2 \bar{f}(\widehat{\theta})) \geq - \Delta^{2/5} d^{1/5}$.
\end{theorem}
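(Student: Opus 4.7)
The plan is to invoke the distributed Byzantine-robust non-convex optimization framework of~\cite{yin2019defending}, which reduces the search for a second-order stationary point to the availability of a $\Delta$-inexact gradient oracle at each iterate queried by the master. Their analysis shows that, given such an oracle and the $L$-smoothness and $\rho$-Hessian-Lipschitz hypotheses of~\Cref{as:for-byzantine-algo}, a perturbed-gradient style procedure run for $T = \tfrac{2(\bar{f}(\theta_0)-\inf \bar{f})}{3\Delta^2}\,Q\,T_{th}$ iterations (with $Q$ and $T_{th}$ as in the theorem statement) returns $\widehat{\theta}$ satisfying $\|\nabla \bar{f}(\widehat{\theta})\|_2\leq 4\Delta$ and $\lambda_{\min}(\nabla^2 \bar{f}(\widehat{\theta}))\geq -\Delta^{2/5}d^{1/5}$ with probability $\geq 1-\tau$. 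Thus the entire task reduces to constructing the required oracle in low memory on the master side.

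To build the oracle at a query point $\theta\in\Theta$, the master collects the $m$ gradient responses $\{g_i(\theta)\}_{i\in[m]}$ from the workers. By~\Cref{as:for-byzantine-algo}, the subset $\{g_i(\theta):i\in\cI\}$ coming from the honest workers is $(C\eps,\delta)$-stable with respect to $\nabla \bar{f}(\theta)$, and since $|\cI|\geq(1-\eps)m$ the whole set is an $\eps$-corrupted version of this stable set in the strong-contamination sense of~\Cref{def:strongadv}. The master can therefore feed the $m$ vectors into the multi-pass streaming estimator of~\Cref{thm:polylog_passes}: the workers store their private samples and can recompute $g_i(\theta)$ on demand, so the $\polylog(d,1/\eps)$ passes required by that algorithm come at no extra communication cost. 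This yields $v(\theta)$ with $\|v(\theta)-\nabla \bar{f}(\theta)\|_2 = O(\delta)\leq \Delta$ (by our choice $\Delta = C'\delta$ for a large enough $C'$), using master memory $d\,\polylog(d,1/\eps,1/\failp')$ and master runtime $md\,\polylog(d,1/\eps,1/\failp')$, where $\failp'$ is the per-call failure probability.

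Setting $\failp' = \tau/T$ and union-bounding over all $T$ oracle calls made by the outer loop guarantees that the gradient approximation is $\Delta$-inexact at every iterate simultaneously with probability $\geq 1-\tau$; this is exactly the hypothesis required to invoke~\cite{yin2019defending}, and since $T$ is polynomial in the problem parameters, the blow-up in memory and per-iteration runtime is only poly-logarithmic, matching the bounds claimed in the theorem. The exact expressions for $Q$ and $T_{th}$ in the statement are the ones arising from plugging $L$, $\rho$, $d$ and the error level $\Delta$ into~\cite{yin2019defending}'s convergence rate.

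The main subtlety I expect is ensuring that~\Cref{thm:polylog_passes} can legitimately be applied at \emph{every} iterate: the sequence of queried $\theta$'s is adaptive, depending on past (randomized) oracle outputs, while \Cref{as:for-byzantine-algo} only posits stability pointwise in $\theta$. The cleanest way to handle this is to draw the randomness of the robust mean estimator independently at each call, so that conditional on the trajectory up to iteration $t$, the stability requirement at the deterministic point $\theta_t$ is already given by the assumption and the estimator's $1-\failp'$ guarantee holds. An alternative, if one wishes to allow the adversary to adapt $g_i(\theta)$ to the master's past randomness, would be to ask for a slightly stronger uniform-in-$\theta$ stability assumption; either way the union-bound argument goes through with only $\polylog$ overhead, and no other step in the reduction requires new technical input beyond what~\Cref{thm:polylog_passes} and~\cite{yin2019defending} already provide.
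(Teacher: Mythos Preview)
Your proposal is correct and matches the paper's approach exactly: the paper also reduces to the framework of~\cite{yin2019defending}, supplies the required $\Delta$-inexact gradient oracle by running the multi-pass streaming estimator of \Cref{thm:polylog_passes} on the $m$ worker responses (which is legitimate since \Cref{as:for-byzantine-algo} makes the honest responses a stable set and the remaining $\eps m$ are arbitrary corruptions), and union-bounds over the $T$ iterations. Your discussion of the adaptivity issue is in fact more careful than the paper's own treatment, which simply posits stability for every $\theta\in\Theta$ and does not comment further.
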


\medskip

\section{Discussion} \label{sec:conc}
In this work, we gave the first efficient streaming algorithm 
with near-optimal space complexity for outlier-robust high-dimensional mean estimation.
As an application, we also obtained low-space streaming algorithms for a range of 
other robust estimation tasks. Our work is a first step towards understanding the space
complexity of high-dimensional robust statistics in the streaming setting.

Our work suggests a number of open problems.
First, the sample complexity of our mean estimation
algorithm is $\tilde{O}(d^2/\eps^2)$, 
while the information-theoretic optimum (without space constraints!) 
is $\tilde{O}(d/\eps^2)$. 
What is the {\em optimal} sample-space tradeoff? 
A similar question can be asked for the broader tasks of 
covariance estimation and stochastic optimization. 
A more general goal is to characterize 
the tradeoff between space complexity, number of passes, 
and sample size/runtime for other robust high-dimensional statistics tasks, 
e.g., clustering and learning of mixture models.

Finally, another research direction concerns the considered contamination model. 
Throughout this paper, we focused on the TV-contamination model. 
One can consider an even stronger contamination model with an adaptive adversary, 
where the outliers can be completely arbitrary (i.e., not follow any distribution),
and the adversary can additionally control the order in which the points are presented
in the stream.
Is it possible to obtain $\tilde{O}_{\eps}(d)$-space single-pass streaming algorithms
for robust mean estimation in the presence of such an adversary?
While our algorithms can be shown to work in this model
with a poly-logarithmic number of passes,
it is not clear whether a single-pass algorithm
with sub-quadratic space complexity exists in this setting.

\newpage

\bibliographystyle{alpha}
\bibliography{allrefs}

\newcommand{\etalchar}[1]{$^{#1}$}
\begin{thebibliography}{CDGW19b}

\bibitem[AB99]{AntBar99}
M.~Anthony and P.~L. Bartlett.
\newblock {\em Neural {{Network Learning}}: {{Theoretical Foundations}}}.
\newblock {Cambridge University Press}, first edition, November 1999.

\bibitem[Ach03]{achlioptas2003database}
D.~Achlioptas.
\newblock Database-friendly random projections: Johnson-lindenstrauss with
  binary coins.
\newblock {\em Journal of computer and System Sciences}, 66(4):671--687, 2003.

\bibitem[BCN18]{BottCurtNoce16}
L.~Bottou, F.~E. Curtis, and J.~Nocedal.
\newblock Optimization methods for large-scale machine learning.
\newblock {\em SIAM Review}, 60(2):223--311, 2018.

\bibitem[BDH{\etalchar{+}}20]{BakshiDHKKK20}
A.~Bakshi, I.~Diakonikolas, S.~B. Hopkins, D.~Kane, S.~Karmalkar, and P.~K.
  Kothari.
\newblock Outlier-robust clustering of gaussians and other non-spherical
  mixtures.
\newblock In {\em 61st {IEEE} Annual Symposium on Foundations of Computer
  Science, {FOCS} 2020}, pages 149--159. {IEEE}, 2020.

\bibitem[BDJ{\etalchar{+}}20]{BDJKKV20}
A.~Bakshi, I.~Diakonikolas, H.~Jia, D.~M. Kane, P.~K. Kothari, and S.~S.
  Vempala.
\newblock Robustly learning mixtures of k arbitrary gaussians.
\newblock {\em CoRR}, abs/2012.02119, 2020.

\bibitem[BDLS17]{BDLS17}
S.~Balakrishnan, S.~S. Du, J.~Li, and A.~Singh.
\newblock Computationally efficient robust sparse estimation in high
  dimensions.
\newblock In {\em Proceedings of the 30th Conference on Learning Theory, {COLT}
  2017}, pages 169--212, 2017.

\bibitem[BGVKS20]{banks2020pseudospectral}
J.~Banks, J.~Garza-Vargas, A.~Kulkarni, and N.~Srivastava.
\newblock Pseudospectral shattering, the sign function, and diagonalization in
  nearly matrix multiplication time.
\newblock In {\em FOCS 2020}, pages 529--540, 2020.

\bibitem[BHK20]{BlumEtAl2020}
A.~Blum, J.~Hopcroft, and R.~Kannan.
\newblock {\em Foundations of {{Data Science}}}.
\newblock {Cambridge University Press}, first edition, January 2020.

\bibitem[Bir01]{birge2001alternative}
L.~Birg{\'e}.
\newblock An alternative point of view on lepski's method.
\newblock {\em Lecture Notes-Monograph Series}, pages 113--133, 2001.

\bibitem[BNJT10]{Barreno2010}
M.~Barreno, B.~Nelson, A.~D. Joseph, and J.~D. Tygar.
\newblock The security of machine learning.
\newblock {\em Machine Learning}, 81(2):121--148, 2010.

\bibitem[BNL12]{BiggioNL12}
B.~Biggio, B.~Nelson, and P.~Laskov.
\newblock Poisoning attacks against support vector machines.
\newblock In {\em Proceedings of the 29th International Conference on Machine
  Learning, {ICML} 2012}, 2012.

\bibitem[Bot10]{Bottou10large-scalemachine}
L.~Bottou.
\newblock Large-scale machine learning with stochastic gradient descent.
\newblock In {\em COMPSTAT}, 2010.

\bibitem[CDGW19a]{CDGW19}
Y.~Cheng, I.~Diakonikolas, R.~Ge, and D.~P. Woodruff.
\newblock Faster algorithms for high-dimensional robust covariance estimation.
\newblock In {\em Conference on Learning Theory, {COLT} 2019}, pages 727--757,
  2019.

\bibitem[CDGW19b]{cheng2019faster}
Y.~Cheng, I.~Diakonikolas, R.~Ge, and D.~P. Woodruff.
\newblock Faster algorithms for high-dimensional robust covariance estimation.
\newblock In {\em COLT}, pages 727--757, 2019.

\bibitem[CDK{\etalchar{+}}21]{CDK+21}
Y.~Cheng, I.~Diakonikolas, D.~M. Kane, R.~Ge, S.~Gupta, and M.~Soltanolkotabi.
\newblock Outlier-robust sparse estimation via non-convex optimization.
\newblock {\em CoRR}, abs/2109.11515, 2021.

\bibitem[CDKS18]{CDKS18-bn}
Y.~Cheng, I.~Diakonikolas, D.~Kane, and A.~Stewart.
\newblock Robust learning of fixed-structure bayesian networks.
\newblock In {\em Advances in Neural Information Processing Systems 31: Annual
  Conference on Neural Information Processing Systems 2018, NeurIPS 2018},
  pages 10304--10316, 2018.
\newblock Full version available at https://arxiv.org/abs/1606.07384.

\bibitem[DHL19]{dong2019quantum}
Y.~Dong, S.~B. Hopkins, and J.~Li.
\newblock Quantum entropy scoring for fast robust mean estimation and improved
  outlier detection.
\newblock {\em NeurIPS}, 32:6067--6077, 2019.

\bibitem[DK19]{DK19-survey}
I.~Diakonikolas and D.~M. Kane.
\newblock Recent advances in algorithmic high-dimensional robust statistics.
\newblock {\em CoRR}, abs/1911.05911, 2019.

\bibitem[DKK{\etalchar{+}}16]{DKKLMS16}
I.~Diakonikolas, G.~Kamath, D.~M. Kane, J.~Li, A.~Moitra, and A.~Stewart.
\newblock Robust estimators in high dimensions without the computational
  intractability.
\newblock In {\em FOCS}, pages 655--664, 2016.

\bibitem[DKK{\etalchar{+}}17]{DKK+17}
I.~Diakonikolas, G.~Kamath, D.~M. Kane, J.~Li, A.~Moitra, and A.~Stewart.
\newblock Being robust (in high dimensions) can be practical.
\newblock In {\em ICML 2017}, pages 999--1008, 2017.

\bibitem[DKK{\etalchar{+}}19a]{DKK+19-sever}
I.~Diakonikolas, G.~Kamath, D.~Kane, J.~Li, J.~Steinhardt, and A.~Stewart.
\newblock Sever: {A} robust meta-algorithm for stochastic optimization.
\newblock In {\em ICML 2019}, pages 1596--1606, 2019.

\bibitem[DKK{\etalchar{+}}19b]{DKKPS19-sparse}
I.~Diakonikolas, S.~Karmalkar, D.~Kane, E.~Price, and A.~Stewart.
\newblock Outlier-robust high-dimensional sparse estimation via iterative
  filtering.
\newblock In {\em Advances in Neural Information Processing Systems 32: Annual
  Conference on Neural Information Processing Systems 2018, NeurIPS 2019},
  2019.

\bibitem[DKK{\etalchar{+}}20]{DKKLT21a}
I.~Diakonikolas, D.~M. Kane, D.~Kongsgaard, J., and K.~Tian.
\newblock List-decodable mean estimation in nearly-pca time.
\newblock {\em CoRR}, abs/2011.09973, 2020.

\bibitem[DKK{\etalchar{+}}21a]{DKKL+21}
I.~Diakonikolas, G.~Kamath, D.~M. Kane, J.~Li, A.~Moitra, and A.~Stewart.
\newblock Robustness meets algorithms.
\newblock {\em Commun. {ACM}}, 64(5):107--115, 2021.

\bibitem[DKK{\etalchar{+}}21b]{DKKLT21}
I.~Diakonikolas, D.~M. Kane, D.~Kongsgaard, J.~Li, and K.~Tian.
\newblock Clustering mixture models in almost-linear time via list-decodable
  mean estimation.
\newblock {\em CoRR}, abs/2106.08537, 2021.

\bibitem[DKP20]{DiakP20}
I.~Diakonikolas, D.~M. Kane, and A.~Pensia.
\newblock Outlier robust mean estimation with subgaussian rates via stability.
\newblock In {\em Advances in Neural Information Processing Systems 33: Annual
  Conference on Neural Information Processing Systems 2020, NeurIPS 2020},
  2020.

\bibitem[DKS18]{DKS18-list}
I.~Diakonikolas, D.~M. Kane, and A.~Stewart.
\newblock List-decodable robust mean estimation and learning mixtures of
  spherical gaussians.
\newblock In {\em Proceedings of the 50th Annual {ACM} {SIGACT} Symposium on
  Theory of Computing, {STOC} 2018}, pages 1047--1060, 2018.
\newblock Full version available at https://arxiv.org/abs/1711.07211.

\bibitem[DKS19]{DKS19}
I.~Diakonikolas, W.~Kong, and A.~Stewart.
\newblock Efficient algorithms and lower bounds for robust linear regression.
\newblock In {\em Proceedings of the Thirtieth Annual {ACM-SIAM} Symposium on
  Discrete Algorithms, {SODA} 2019}, pages 2745--2754, 2019.

\bibitem[DKSS21]{DiakonikolasKSS21}
I.~Diakonikolas, D.~M. Kane, A.~Stewart, and Y.~Sun.
\newblock Outlier-robust learning of ising models under dobrushin's condition.
\newblock In {\em Conference on Learning Theory, {COLT} 2021}, volume 134 of
  {\em Proceedings of Machine Learning Research}, pages 1645--1682. {PMLR},
  2021.

\bibitem[DKTZ20]{DiakonikolasKTZ20}
I.~Diakonikolas, V.~Kontonis, C.~Tzamos, and N.~Zarifis.
\newblock Non-convex {SGD} learns halfspaces with adversarial label noise.
\newblock In {\em NeurIPS 2020}, 2020.

\bibitem[GJ95]{goldberg1995bounding}
P.~W. Goldberg and M.~R. Jerrum.
\newblock Bounding the vapnik-chervonenkis dimension of concept classes
  parameterized by real numbers.
\newblock {\em Machine Learning}, 18(2):131--148, 1995.

\bibitem[HL18]{HL18-sos}
S.~B. Hopkins and J.~Li.
\newblock Mixture models, robustness, and sum of squares proofs.
\newblock In {\em Proceedings of the 50th Annual {ACM} {SIGACT} Symposium on
  Theory of Computing, {STOC} 2018}, pages 1021--1034, 2018.

\bibitem[Hub64]{Huber64}
P.~J. Huber.
\newblock Robust estimation of a location parameter.
\newblock {\em Ann. Math. Statist.}, 35(1):73--101, 03 1964.

\bibitem[JLT20]{JamLT20}
A.~Jambulapati, J.~Li, and K.~Tian.
\newblock Robust sub-gaussian principal component analysis and
  width-independent schatten packing.
\newblock 2020.
\newblock arxiv preprint at https://arxiv.org/abs/2006.06980.

\bibitem[KKM18]{KlivansKM18}
A.~R. Klivans, P.~K. Kothari, and R.~Meka.
\newblock Efficient algorithms for outlier-robust regression.
\newblock In {\em Conference On Learning Theory, {COLT} 2018}, pages
  1420--1430, 2018.

\bibitem[KSS18]{KSS18-sos}
P.~K. Kothari, J.~Steinhardt, and D.~Steurer.
\newblock Robust moment estimation and improved clustering via sum of squares.
\newblock In {\em Proceedings of the 50th Annual {ACM} {SIGACT} Symposium on
  Theory of Computing, {STOC} 2018}, pages 1035--1046, 2018.

\bibitem[LAT{\etalchar{+}}08]{Li-Science08}
J.Z. Li, D.M. Absher, H.~Tang, A.M. Southwick, A.M. Casto, S.~Ramachandran,
  H.M. Cann, G.S. Barsh, M.~Feldman, L.L. Cavalli-Sforza, and R.M. Myers.
\newblock Worldwide human relationships inferred from genome-wide patterns of
  variation.
\newblock {\em Science}, 319:1100--1104, 2008.

\bibitem[Lep91]{lepskii1991problem}
O.~V. Lepskii.
\newblock On a problem of adaptive estimation in gaussian white noise.
\newblock {\em Theory of Probability \& Its Applications}, 35(3):454--466,
  1991.

\bibitem[LM20]{LM20}
A.~Liu and A.~Moitra.
\newblock Settling the robust learnability of mixtures of gaussians.
\newblock {\em CoRR}, abs/2011.03622, 2020.

\bibitem[LRV16]{LaiRV16}
K.~A. Lai, A.~B. Rao, and S.~Vempala.
\newblock Agnostic estimation of mean and covariance.
\newblock In {\em FOCS}, 2016.

\bibitem[PF20]{PesmeF20}
S.~Pesme and N.~Flammarion.
\newblock Online robust regression via {SGD} on the l1 loss.
\newblock In {\em Advances in Neural Information Processing Systems 33: Annual
  Conference on Neural Information Processing Systems 2020, NeurIPS 2020},
  2020.

\bibitem[PJL20]{pensia2020robust}
A.~Pensia, V.~Jog, and P.~Loh.
\newblock Robust regression with covariate filtering: Heavy tails and
  adversarial contamination.
\newblock {\em arXiv preprint arXiv:2009.12976}, 2020.

\bibitem[PLJD10]{Pas-MG10}
P.~Paschou, J.~Lewis, A.~Javed, and P.~Drineas.
\newblock Ancestry informative markers for fine-scale individual assignment to
  worldwide populations.
\newblock {\em Journal of Medical Genetics}, 47:835--847, 2010.

\bibitem[PSBR20]{PSBR18}
A.~Prasad, A.~S. Suggala, S.~Balakrishnan, and P.~Ravikumar.
\newblock Robust estimation via robust gradient estimation.
\newblock {\em Journal of the Royal Statistical Society: Series B (Statistical
  Methodology)}, 82(3):601--627, 2020.

\bibitem[RPW{\etalchar{+}}02]{RP-Gen02}
N.~Rosenberg, J.~Pritchard, J.~Weber, H.~Cann, K.~Kidd, L.A. Zhivotovsky, and
  M.W. Feldman.
\newblock Genetic structure of human populations.
\newblock {\em Science}, 298:2381--2385, 2002.

\bibitem[SCV18]{SCV18}
J.~Steinhardt, M.~Charikar, and G.~Valiant.
\newblock Resilience: {A} criterion for learning in the presence of arbitrary
  outliers.
\newblock In {\em ITCS 2018}, pages 45:1--45:21, 2018.

\bibitem[SKL17]{SteinhardtKL17}
J.~Steinhardt, P.~W. Koh, and P.~S. Liang.
\newblock Certified defenses for data poisoning attacks.
\newblock In {\em NeurIPS}, pages 3520--3532, 2017.

\bibitem[SWS20]{ShahWS20}
V.~Shah, X.~Wu, and S.~Sanghavi.
\newblock Choosing the sample with lowest loss makes {SGD} robust.
\newblock In {\em AISTATS 2020,}, volume 108 of {\em Proceedings of Machine
  Learning Research}, pages 2120--2130, 2020.

\bibitem[TLM18]{TranLM18}
B.~Tran, J.~Li, and A.~Madry.
\newblock Spectral signatures in backdoor attacks.
\newblock In {\em NeurIPS 2018}, pages 8011--8021, 2018.

\bibitem[TPBR21]{TsaiPBR21}
C.~Tsai, A.~Prasad, S.~Balakrishnan, and P.~Ravikumar.
\newblock Heavy-tailed streaming statistical estimation.
\newblock {\em CoRR}, abs/2108.11483, 2021.

\bibitem[Tuk60]{Tukey60}
J.~W. Tukey.
\newblock A survey of sampling from contaminated distributions.
\newblock {\em Contributions to probability and statistics}, 2:448--485, 1960.

\bibitem[Ver10]{Ver10}
R.~Vershynin.
\newblock Introduction to the non-asymptotic analysis of random matrices.
\newblock {\em arXiv preprint arXiv:1011.3027}, 2010.

\bibitem[YCRB19]{yin2019defending}
D.~Yin, Y.~Chen, K.~Ramchandran, and P.~Bartlett.
\newblock Defending against saddle point attack in byzantine-robust distributed
  learning.
\newblock In {\em ICML}, pages 7074--7084, 2019.

\end{thebibliography}

\newpage
\appendix

\section{Omitted Proofs from \Cref{sec:prelim}: Technical Details Regarding Stability} 
\label{sec:omitted_prelims}

\label{sec:technical_details_regarding_stability}

\LemCert*
\begin{proof}
    Let $\dtv(P,G)=\alpha$. By \Cref{fact:dtv-decompose} we can write $P = (1-\alpha)G_0 + \alpha B$. We may assume without loss of generality $\alpha=\eps$, since we can always treat a part of the inliers as outliers. Denoting by $\mu_P, \vec{\Sigma}_P$ the mean and covariance of $P$, and using $\mu_{G_0}, \mu_B, \vec{\Sigma}_{G_0}, \vec{\Sigma}_{B}$ for the corresponding quantities of the other two distributions, we have that
    \begin{align*}
        \vec{\Sigma}_P = (1-\eps) \vec{\Sigma}_{G_0} + \eps \vec{\Sigma}_{B} + \eps(1-\eps) (\mu_{G_0} - \mu_B)(\mu_{G_0} - \mu_B)^T \;.
    \end{align*}
    Letting $v$ be the unit vector in the direction of $\mu_{G_0} - \mu_B$, we have that 
    \begin{align} \label{eq:variancedec}
        1+ \lambda \geq v^T \vec{\Sigma}_{P} v = (1-\eps) v^T \vec{\Sigma}_{G_0} v + \eps v^T \vec{\Sigma}_{B} v + \eps(1-\eps) (v^T(\mu_{G_0} - \mu_B))^2 \;.
    \end{align}
    The second term of the left-hand side is nonzero and the third one is just $\eps(1 - \eps)$ $\| \mu_{G_0} - \mu_B \|_2^2$. We now focus on the first term, which by adding and subtracting $\mu$ (the vector realizing the definition of stability for $G$) can be written as
    \begin{align} \label{eq:varianceterm}
        (1-\eps) \E_{X \sim G_0}[(v^T(X-\mu_{G_0}))^2]  = (1-\eps)\left(\E_{X \sim G_0}[(v^T(X-\mu))^2] - (v^T(\mu-\mu_{G_0}))^2  \right)\;.
    \end{align}
    We note that in the decomposition of \Cref{fact:dtv-decompose}, we can write $G_0(x)=w_0(x)G(x)$ with 
    \begin{align*}
        w_0(x) = \frac{1}{1-\eps}
        \begin{cases}
            P(x)/G(x) \;, &\text{if $G(x)> P(x)$}\\
            1 \;, &\text{otherwise} \;.
        \end{cases}
    \end{align*}
    Letting $h(x):=(1-\eps)w_0(x)$ we have that $h(x) \leq 1$ for all $x$ and $\E_{X \sim G}[h(X)]=1-\eps$, thus $G_0(x)=h(x)G(x)/(\int h(x)G(x)\d x)=:G_h(x)$. Returning to \Cref{eq:varianceterm}, this means that 
    \begin{align} \label{eq:stabilityappl}
        \E_{X \sim G_0}[(v^T(X-\mu))^2] = \E_{X \sim G_h}[(v^T(X-\mu))^2] = v^T\overline{\vec \Sigma}_{h,G} v \geq 1-\frac{\delta^2}{\eps} \;,
    \end{align}
    by applying stability. Similarly, the other term in \Cref{eq:varianceterm} is $(v^T(\mu-\mu_{G_0}))^2 \leq \delta^2$. Putting everything together, \Cref{eq:variancedec} becomes
    \begin{align*}
        1+\lambda &\geq (1-\eps)(1-\delta^2/\eps - \delta^2) + \eps(1-\eps)\| \mu_{G_0} - \mu_B \|_2^2 \\
            &\geq 1-3\delta^2/\eps + (\eps/2)\| \mu_{G_0} - \mu_B \|_2^2 \;,
    \end{align*}
    which yields $\| \mu_{G_0} - \mu_B \|_2 \lesssim \sqrt{\lambda/\eps} + \delta/\eps$. Then, writing $\mu_P = (1-\eps) \mu_{G_0} + \eps \mu_B$ and using stability follows that $\| \mu_P - \mu\|_2 \lesssim \delta + \sqrt{\lambda \eps}$.
\end{proof}

\LemDTVStab*
\begin{proof}
By \Cref{fact:dtv-decompose}, we have the decomposition 
$P = (1-\eps)G_0 + \eps B$, where $G_0(x)=\min\{G(x), P(x) \}/(1-\eps)$. 
We can write $G_0(x)  = w_0(x) G(x)$, where
    \begin{align*}
        w_0(x) = \frac{1}{1-\eps}
        \begin{cases}
            P(x)/G(x) \;, &\text{if $G(x)> P(x)$}\\
            1 \;, &\text{otherwise} \;.
        \end{cases}
    \end{align*}
    To see why the final claim is true, we consider a weight function $w:\R^d \to [0,1]$ 
    such that $\E_{X \sim G_0}[w(X)] \geq 1-\eps$ and examine the adjusted distribution ${G_0}_{w}$. We have that
    \begin{align*}
       {G_0}_{w}(x) = \frac{w(x)G_0(x)}{\int_{\R^d}w(x)G_0(x)\d x  } = \frac{(1-\eps)w(x) w_0(x)G(x)}{\int_{\R^d} (1-\eps)w(x)w_0(x)G(x) \d x} = \frac{h(x)G(x)}{\E_{X \sim G}[h(X)]} = G_h(x)\;,
    \end{align*}
    where we let $h(x):= (1-\eps)w(x)w_0(x)$. We have that $h(x) \leq 1$ point-wise and $\int_{\R^d} h(x)G(x) \d x = (1-\eps) \E_{X \sim G_0}[w(X)] \geq (1-\eps)^2 \geq 1- 2\eps$. Recalling that $G$ is $(2\eps,\delta)$-stable, the conclusion follows.
   \end{proof}

\ClTriangle*
\begin{proof}
We can write
\begin{align} \label{eq:twoterms}
\E_{X \sim G_w}\left[  \|\vec U (X-b)\|_2^2\right] 
&= \E_{X \sim G_w}[\|\vec U (X-\mu)\|_2^2 + \|\vec U (\mu-b)\|_2^2 + 2(X-\mu)^T\vec U^T \vec U(\mu - b)] \notag \\
&= \E_{X \sim G_w}\left[\|\vec U (X-\mu)\|_2^2 \right] + 
\|\vec U (\mu-b)\|_2^2 + 2(\mu_{w,G}-\mu)^T\vec U^T \vec U(\mu - b) \;.
\end{align}
We now focus on the first term. 
Let the spectral decomposition $\vec U^T \vec U = \tr( \vec U^T \vec U ) \sum_{i=1}^d \alpha_i v_i v_i^T$, 
where $\sum_{i=1}^d \alpha_i =1$ and $\alpha_i \geq 0$. We have that 
    \begin{align*}
        \E_{X \sim G_w}\left[\|\vec U (X-\mu)\|_2^2 \right] &= \tr\left(\vec U^T \vec U  \E_{X \sim G_w}[(X-\mu)(X-\mu)^T]\right) 
        = \tr(\vec U^T \vec U ) \sum_{i=1}^d \alpha_i \tr(v_i v_i^T \overline{\vec \Sigma}_{w,G}) \\
        &= \tr(\vec U^T \vec U ) \sum_{i=1}^d \alpha_i v_i^T \overline{\vec \Sigma}_{w,G} v_i
        = \tr(\vec U^T \vec U ) (1 \pm \delta^2/\eps) = \|\bU\|_F^2(1 \pm \delta^2/\eps) \;,
    \end{align*}
    where the second from the end relation is due to stability. 
    Regarding the last term of \Cref{eq:twoterms}, we have that
    \begin{align*}
        |(\mu_{w,G}-\mu)^T \vec U^T \vec U (\mu - b)| 
        &= |\tr(\vec U^T \vec U (\mu - b)(\mu_{w,G}-\mu)^T)|\leq \tr(\vec U^T \vec U ) \|(\mu - b)(\mu_{w,G}-\mu)^T\|_2 \\
        &= \|\bU\|_F^2\|\mu - b\|_2\|\mu_{w,G}-\mu \|_2 \leq \|\bU\|_F^2 \delta \|\mu - b\|_2 \;,
    \end{align*}
    where the last inequality uses stability condition for the mean.
\end{proof}

\CorShift*
\begin{proof}
    Beginning with the upper bound, we have the following inequalities:
    \begin{align*}
        \E_{X \sim G}[w(X)\tilde{g}(X)] &= \E_{X \sim G}[w(X)] \E_{X \sim G_w}[\tilde{g}(X)] \\
        &\leq \E_{X \sim G_w}[\tilde{g}(X)] \tag*{(Using $\tilde{g}(x) \geq 0$ and $w(x) \leq 1$)}\\ 
        &\leq \|\bU\|_F^2 (1+\delta^2/\eps) + \|\vec U (\mu - b)\|_2^2 + 2\delta \|\bU\|_F^2 \|b - \mu \|_2  \\
        &\leq \|\bU\|_F^2 \left(1+\delta^2/\eps + \|b - \mu \|^2_2 + 2\delta\|b - \mu \|_2\right) \;,
    \end{align*}
    where the second inequality from the end uses \Cref{cl:triangle}.    The lower bound is derived similarly:
    \begin{align*}
        \E_{X \sim G}[w(X)\tilde{g}(X)] &= \E_{X \sim G}[w(X)] \E_{X \sim G_w}[\tilde{g}(X)] \\
        &\geq (1-\eps) \E_{X \sim G_w}[\tilde{g}(X)] \\
        &\geq (1-\eps) \|\bU\|_F^2(1-\delta^2/\eps -2 \delta\|b - \mu \|_2)  \;,
    \end{align*}
    where we applied \Cref{cl:triangle} in the last step.
\end{proof}

\section{Omitted Proofs from~\Cref{sec:near_linear_main_body}} \label{sec:omitted_near_linear}

\subsection{Johnson-Lindenstrauss Sketch} \label{sec:omitted_JL}

\clJLCons*
\begin{proof}
    We show that the claim holds for a fixed iteration $t$ with probability $\failp/\outerl$. 
    Recall that $\tilde{g}_t(x)$ from \Cref{alg:near-linear-only}, can be written as 
    \begin{align*}
        \tilde{g}_t(x) = \| \bU_t(x-\mu_t)\|_2^2 = \frac{1}{\innerl}\sum_{j \in [\innerl]} (v_{t,j}^T(x-\mu_t))^2 = \frac{1}{\innerl}\sum_{j \in [\innerl]}(z_{t,j}^T{\vec M}_t(x_i-\mu_t))^2\;.
    \end{align*}
    Applying \Cref{fact:jl} with $\gamma=\failp/\outerl$ and $u_i = {\vec M}_t(x_i-\mu_t)$, 
    gives that choosing $\innerl = C \log(n \outerl/\failp)$ 
    suffices to guarantee that $\tilde{g_t}(x_i)/{g}_t(x_i) \in [0.8,1.2]$ 
    for every $x_i$ with probability $1-\failp$.

    We now show the second claim. 
    Again, fix a $t \in [\outerl]$. 
    Consider the orthonormal base $\{e_i\}_{i=1}^d$ of $\R^d$. 
    We apply \Cref{fact:jl} with $\gamma=\failp/\outerl$ and $u_i = {\vec M}_t e_i$, $i\in [d]$. 
    This yields that choosing $\innerl = C\log(d \outerl/\failp)$, we get that for all $i \in [d]$:
    \begin{align*}
        \frac{1}{\innerl} \sum_{j=1}^{\innerl} \tr(z_{t,j} z_{t,j}^T {\vec M}_t e_i e_i^T {\vec M}_t^T)  = \frac{1}{\innerl} \sum_{j=1}^{\innerl}(z_{t,j}^T u_i)^2
        =[0.8,1.2]\frac{1}{\innerl} \sum_{j=1}^{\innerl}\|u_i\|^2
        =[0.8,1.2] \tr({\vec M}_t^T {\vec M}_t e_i e_i^T )\;,
    \end{align*}
    with probability $1-\failp/\outerl$. Summing all these inequalities for $i=1,\ldots, d$ and noting that $\sum_{i=1}^d e_i e_i^T = \vec I_d$ gives that
    \begin{align*}
        \frac{1}{\innerl} \sum_{j=1}^{\innerl} \tr(z_{t,j} z_{t,j}^T {\vec M}_t^T {\vec M}_t) = [0.8,1.2] \tr({\vec M}_t^T {\vec M}_t) \;,
    \end{align*}
    which precisely means that $\frac{1}{\innerl}\sum_{j=1}^{\innerl}\|v_{t,j}\|^2 = [0.8,1.2] \| \bM_t\|_F^2$. To have both claims hold simultaneously, we can just apply \Cref{fact:jl} for all $n+d$ points, giving the result.
\end{proof}

\subsection{Proof of \Cref{lem:filterguarantee}} \label{sec:omitted_filter}

It is more useful to think of the algorithm in the following equivalent form.

\begin{algorithm}[h]  
    
    \caption{Downweighting Filter} 
      \label{alg:reweighting_equiv}
    \begin{algorithmic}[1]  
      \Function{DownweightingFilter}{$P,w,\tilde{\tau}, R, T,\ell_{\max}$}
      \State $r \gets C  dR^{2+4\log d}$.  %
      \State $w' \gets w, \ell \gets 1$.
      \While{$\E_{X \sim P}\left[ w'(X) \tilde{\tau}(X)   \right]> 2T$ and $\ell\leq \ell_{\max}$}   \label{line:loopcondition} %
            \State $\ell \gets \ell + 1$            \label{line:ell_update}    
            \State $w'(x) \gets w(x)(1 - \tilde{\tau}(x)/r)$\label{line:update} 
        \EndWhile
        \label{line:end_while_line} 
        \State \textbf{return} $w'$.
    \EndFunction  
    \end{algorithmic}  
  \end{algorithm}
  
\Lemfilterguarantee*
\begin{proof}
    We show correctness of \Cref{alg:reweighting_equiv}. We denote by $w_\ell$ the weight function at the $\ell$-th iteration of the filter, which is of the form $w_\ell(x) = w(x)(1-\tilde{\tau}(x)/r)^\ell$ for every $x \in \R^d$. To show the first claim, we fix an iteration $\ell$ for which the algorithm has not stopped yet and examine the loss in weight between that iteration and the $(\ell+1)$-th iteration. From the update rule $w_{\ell+1}(x) = w_\ell(x) (1-\tilde{\tau}(x)/r)$ we get that $w_{\ell}(x) - w_{\ell + 1}(x) =w_\ell(x)\tilde{\tau}(x)/r$. Thus, the weight removed in that iteration from the good  distribution is
    \begin{align*}
        (1-\eps)\E_{X \sim G}[w_{\ell}(X) - w_{\ell + 1}(X)] &= \frac{1-\eps}{r} \E_{X \sim G}[w_\ell(X)\tilde{\tau}(X)] \leq \frac{1}{r}T  \;, 
    \end{align*}
    while the weight removed from the bad distribution is
    \begin{align*}
        \eps \E_{X \sim B}[w_{\ell}(X) - w_{\ell + 1}(X)] &= \frac{1}{r} \eps \E_{X \sim B}[w_\ell(X)\tilde{\tau}(X)] \\
         &= \frac{1}{r} \left( \E_{X \sim P}[w_\ell(X)\tilde{\tau}(X)] -  (1-\eps)\E_{X \sim G}[w_\ell(X)\tilde{\tau}(X)]  \right) \\
         &> \frac{1}{r}T \;,
    \end{align*}
    where the last inequality uses that $(1-\eps)\E_{X \sim G}[w_\ell(X)\tilde{\tau}(X)] \leq (1-\eps)\E_{X \sim G}[w(X)\tilde{\tau}(X)] \leq T$ and the fact that since the algorithm has not terminated in the $\ell$-th iteration it must be true that $\E_{X \sim P}[w_\ell(X)\tilde{\tau}(X)] > 2T$. This completes the proof of the first claim.
    
    Regarding runtime, it suffices to show that for any $\ell>\frac{r}{e T}$, $\E_{X \sim P}\left[ w_\ell(X) \tilde{\tau}(X)   \right] \leq  2T$. This follows from the inequalities 
    \begin{align*}
        \E_{X \sim P}\left[ w_\ell(X) \tilde{\tau}(X)   \right] &= \E_{X \sim P}\left[ w(X)(1-\tilde{\tau}(X)/r)^\ell \tilde{\tau}(X)   \right] \\
        &\leq \E_{X \sim P}\left[ w(X)\exp(-\ell\tilde{\tau}(X)/r) \tau(X)   \right] \\
        &\leq \frac{r}{e \cdot \ell}\E_{X \sim P}\left[ w(X)   \right] \leq \frac{r}{e \cdot\ell} \leq T \;,
    \end{align*}
    where we used the fact that $xe^{-\alpha x} \leq 1/(e\cdot \alpha)$ for all $x \in \R$.
    By noting that $w(x)(1-\tilde{\tau}(x)/r)^\ell$ is monotonically decreasing as $\ell$ grows, we can improve the running time  by using a binary search implementation. This gives the logarithmic guarantee of our statement.
\end{proof}

\subsection{Proof of \Cref{lem:goodpartsmall}}  \label{sec:omitted_proof_of_lemma}
We state and prove a more general version of \Cref{lem:goodpartsmall} so that it can be also used in \Cref{sec:low-memory-main}. The difference is that we allow the scores to center points using a vector different from the true mean $\mu_t$ of $P_t$, so long as this vector is $O(\delta)$-close to $\mu_t$ in Euclidean norm. \Cref{lem:goodpartsmall} is obtained by using \Cref{lem:goodpartsmall_general} below with $\widehat{\mu}_t = \mu_t$.
\begin{restatable}{lemma}{lemGoodpartsmall_general} \label{lem:goodpartsmall_general}
    Consider the setting of \Cref{alg:near-linear-only} and the deterministic \Cref{cond:near-linear}. Moreover, let $\hat{\mu}_t$ be any vector in $\R^d$ with $\|\hat{\mu}_t-\mu_t\| = O(\delta)$ and define the functions 
    \begin{align} \label{eq:score_functions}
        &f_t(x) := \|\bM_t(x-\hat{\mu}_t)\|_2^2 ,  &&\tilde{f}_t(x) := \|\bU_t(x-\hat{\mu}_t)\|_2^2\\
        &h_t(x) := f_t(x) \1\{f_t(x) > C_3 \| \bM_t \|_F^2 \lambda_t/\eps \} , &&\tilde{h}_t(x) := \tilde{f}_t(x) \1\{\tilde{f}_t(x) > C_3 \| \bU_t \|_F^2 \widehat{\lambda}_t/\eps \} \;. \notag
    \end{align}
    We have that $\E_{X \sim G}[w_{t}(X)h_t(X)]$ and $ \E_{X \sim G}[w_{t}(X)\tilde{h}_t(X)]$ are bounded from above by $c \lambda_t \|\vec M_t\|_F^2 $     for some constant $c$ of the form $c=C/C_2$, where $C_2$  is the constant used in \Cref{line:lambda} and $C$ is some absolute constant.
\end{restatable}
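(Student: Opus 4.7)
The plan is to prove both bounds by combining \Cref{cor:shift}, the certificate lemma \Cref{lem:certificate}, and the invariant $\E_{X \sim G}[w_t(X)] \geq 1 - O(\eps)$ from \Cref{cl:invariant-weight} (which holds under \Cref{cond:near-linear}). Throughout I would use the filter-active lower bound $\widehat{\lambda}_t > C_2 \delta^2/\eps$, implying $\lambda_t = \Omega(C_2 \delta^2/\eps)$ and thus $\delta^2/\eps \leq O(\lambda_t/C_2)$.

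First, I would apply \Cref{cor:shift} with $\vec U = \bM_t$ and $b = \widehat{\mu}_t$ to get
\[
\E_G[w_t f_t] \leq \|\bM_t\|_F^2 \bigl(1 + \delta^2/\eps + \|\widehat{\mu}_t - \mu\|_2^2 + 2\delta\|\widehat{\mu}_t - \mu\|_2\bigr).
\]
Combining $\|\widehat{\mu}_t - \mu_t\|_2 = O(\delta)$ with \Cref{lem:certificate} applied to $P_t$ (whose covariance has spectral norm at most $1 + O(\lambda_t)$) yields $\|\widehat{\mu}_t - \mu\|_2 = O(\delta + \sqrt{\eps\lambda_t})$. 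Substituting this and using $\delta^2/\eps \leq O(\lambda_t/C_2)$ repeatedly to absorb each error term produces $\E_G[w_t f_t] \leq \|\bM_t\|_F^2(1 + O(\lambda_t/C_2))$.

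Second, to extract the thresholded claim, I would apply the analogous lower bound to the truncated weight $w'(x) := w_t(x)\1\{f_t(x) \leq T\}$, where $T = C_3\|\bM_t\|_F^2 \widehat{\lambda}_t/\eps$. By Markov, $\E_G[w_t\1\{f_t > T\}] \leq \E_G[w_t f_t]/T = O(\eps/(C_3\lambda_t))$, so $\E_G[w']$ is close to $\E_G[w_t]$. The hypothesis that $G$ is $(C\eps,\delta)$-stable for a sufficiently large constant $C$ then permits applying the lower-bound half of \Cref{cor:shift} to $w'$ at the enlarged stability scale, giving $\E_G[w' f_t] \geq \|\bM_t\|_F^2(1 - O(\lambda_t/C_2))$. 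Subtracting the two bounds yields
\[
\E_G[w_t h_t] = \E_G[w_t f_t] - \E_G[w' f_t] \leq O(\lambda_t/C_2)\,\|\bM_t\|_F^2 = c\lambda_t\|\bM_t\|_F^2,
\]
with $c$ of the form $C/C_2$. The bound on $\E_G[w_t \tilde{h}_t]$ follows by repeating the argument with $\bU_t$ in place of $\bM_t$ and invoking $\|\bU_t\|_F^2 = \Theta(\|\bM_t\|_F^2)$ and $\widehat{\lambda}_t = \Theta(\lambda_t)$ from \Cref{cond:near-linear}.

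The hard part will be the constant bookkeeping. Each of the error sources --- the stability tolerance $\delta^2/\eps$, the bias contribution from $\|\widehat{\mu}_t - \mu\|_2$, the Markov tail $\eps/(C_3\lambda_t)$, and the mass loss of $w'$ relative to $w_t$ --- must be shown to collapse to a constant multiple of $\lambda_t/C_2$ in every regime of $\lambda_t$ allowed by the filter-active condition. The most delicate case is when $\lambda_t$ is near its lower threshold $C_2\delta^2/\eps$, where the scaling of the stability parameter at the enlarged scale $C\eps$ must be tracked carefully to ensure the lower bound from \Cref{cor:shift} is not degraded below $(1 - O(\lambda_t/C_2))\|\bM_t\|_F^2$; this forces $C_2$ to be chosen as a sufficiently large multiple of all absolute constants appearing in \Cref{cor:shift}, \Cref{lem:certificate}, and the stability scaling of $G$.
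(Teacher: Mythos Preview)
Your overall architecture---upper-bound $\E_G[w_t f_t]$ via \Cref{cor:shift}, lower-bound $\E_G[w' f_t]$ via the same corollary applied to the truncated weight $w'=w_t\1\{f_t\le T\}$, and subtract---is exactly the paper's strategy. The gap is in the step where you justify that $w'$ is admissible for the lower bound.

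The lower half of \Cref{cor:shift} requires $\E_G[w']\ge 1-C\eps$ for the fixed stability constant $C$. Your Markov argument gives
\[
\E_G\bigl[w_t\,\1\{f_t>T\}\bigr]\;\le\;\frac{\E_G[w_t f_t]}{T}\;\le\;\frac{\|\bM_t\|_F^2\bigl(1+O(\lambda_t/C_2)\bigr)}{C_3\|\bM_t\|_F^2\,\lambda_t/\eps}\;=\;\frac{\eps}{C_3\lambda_t}\bigl(1+O(\lambda_t/C_2)\bigr).
\]
When $\lambda_t$ is near its floor $\Theta(C_2\delta^2/\eps)$ and $\delta$ is comparable to $\eps$, the leading term $\eps/(C_3\lambda_t)$ is of order $\eps^2/(C_3C_2\delta^2)=\Theta(1/(C_3C_2))$, a fixed constant independent of $\eps$. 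So the mass you remove from $w_t$ can be $\Omega(1)$, not $O(\eps)$, and $(C\eps,\delta)$-stability of $G$ no longer applies to $w'$. No enlargement of the constant $C_2$ repairs this, because the obstruction is the leading ``bulk'' term $\|\bM_t\|_F^2$ in $\E_G[w_t f_t]$, not the $O(\lambda_t/C_2)$ correction.

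The paper circumvents Markov entirely with a bootstrap. It first proves the subtraction inequality for an \emph{arbitrary} set $L$ already known to satisfy $\E_G[w_t\1\{X\in L\}]\le\eps$ (this is \Cref{lem:smallsetweight}, and its proof is precisely your upper-minus-lower argument). It then shows the particular level set $L_t=\{f_t>T\}$ has mass at most $\eps$ by a quantile trick: take $u^\ast$ to be the largest threshold whose tail has weighted mass $\ge\eps$, apply \Cref{lem:smallsetweight} to $\{f_t>u^\ast\}$, divide by its mass $\eps$ to bound the conditional mean of $f_t$ there by $(C'/C_2)\lambda_t\|\bM_t\|_F^2/\eps$, and conclude $u^\ast\le T$ once $C_3$ is set to match $C'/C_2$. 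This self-referential step is the missing idea in your proposal.
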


We  prove the result by using the facts from \Cref{sec:more_stabilityfacts}. 
For brevity, we will prove the results for $\tilde{h}_t$ by using the functions $\tilde{f}_t$ and the matrix $\bU_t$; the results for $h_t$ would follow by replacing $\tilde{f}_t$ and $\bU_t$ with $f_t$ and $\bM_t$ respectively and using that $\|\bU_t\|_F$ is close to $\|\bM_t\|_F$ (\Cref{as:frobenious_near_linear} of the deterministic condition).
We begin with \Cref{lem:smallsetweight}, which is a generalization of the following implication of stability: The $(\epsilon,\delta)$-stability of a distribution $G$ implies that $\E_{X \sim G}[(v^T(X-\mu))^2 \1\{X \in L\}] \leq 3\delta^2/\eps$ for any set $L$ with mass $\pr_{X \sim G}[X \in L] \leq \eps$ (see, for example, Proposition C.3 of \cite{pensia2020robust}).
The following lemma generalizes this to having a matrix in place of $v$.

\begin{restatable}{lemma}{LemSmallSetWeight} \label{lem:smallsetweight}
    Under the setting of \Cref{alg:near-linear-only}, the deterministic \Cref{cond:near-linear}, and using the notation of \Cref{eq:score_functions}, if $L_t \subseteq \R^d$ is a set with $\E_{X \sim G}[w_t(X)\1\{ X \in L_t\}] \leq \eps$, then we have that 
    \begin{align*}
        \E_{X \sim G}[w_t(X)\tilde{f}_t(X)\1\{ X \in L\}] \leq c \lambda_t \|\vec U_t\|_F^2 \;,
    \end{align*}
    for some constant $c$ of the form $c=C'/C_2$, where $C'$ is a sufficiently large constant. 
\end{restatable}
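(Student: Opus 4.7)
The plan is to spectrally decompose the PSD matrix $\bU_t^T \bU_t = \sum_{i=1}^d \sigma_i u_i u_i^T$ with $\sigma_i \geq 0$, orthonormal $u_i$, and $\sum_i \sigma_i = \|\bU_t\|_F^2$. This gives $\tilde f_t(x) = \sum_i \sigma_i (u_i^T(x-\hat\mu_t))^2$, so by linearity of expectation it suffices to show that for every unit vector $v$,
\begin{align*}
\E_{X \sim G}\left[w_t(X)(v^T(X-\hat\mu_t))^2 \1\{X \in L_t\}\right] \leq O(\delta^2/\eps) + O(\eps \|\mu-\hat\mu_t\|^2) \;.
\end{align*}
Summing over $i$ with weights $\sigma_i$ and then verifying that both additive terms are $O(\lambda_t/C_2)$ produces the target bound $c\lambda_t\|\bU_t\|_F^2$ with $c = C'/C_2$.

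To prove the single-direction bound, I first split $(v^T(X-\hat\mu_t))^2 \leq 2(v^T(X-\mu))^2 + 2(v^T(\mu-\hat\mu_t))^2$. The second piece contributes at most $2\eps\|\mu-\hat\mu_t\|^2$ after integration, using the hypothesis $\E_G[w_t(X)\1\{X \in L_t\}] \leq \eps$. The first piece is the crux and is handled by a subtraction trick: define $w'(x) := w_t(x)\1\{x\notin L_t\}$. By \Cref{cl:invariant-weight} we have $\E_G[w_t] \geq 1 - 3\eps$, which combined with the lemma's hypothesis gives $\E_G[w'] \geq 1 - 4\eps$. Choosing the stability constant $C \geq 4$ in the lemma hypothesis makes both $w_t$ and $w'$ valid test weights in \Cref{def:stability1}, so both $v^T \overline{\vec \Sigma}_{w_t,G} v$ and $v^T \overline{\vec \Sigma}_{w',G} v$ lie in $[1-\delta^2/\eps,1+\delta^2/\eps]$. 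Using the identity $\E_G[w(X)(v^T(X-\mu))^2] = \E_G[w(X)] \cdot v^T \overline{\vec \Sigma}_{w,G} v$, subtracting, and applying $|\E_G[w_t] - \E_G[w']| \leq \eps$ yields
\begin{align*}
\E_G\left[w_t(X)(v^T(X-\mu))^2 \1\{X \in L_t\}\right] \leq (1+\delta^2/\eps) - (1-4\eps)(1-\delta^2/\eps) = O(\delta^2/\eps) \;,
\end{align*}
where the final equality uses $\delta \geq \eps$ to absorb the $O(\eps)$ remainder.

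It remains to certify that the two contributions are each $O(\lambda_t/C_2)$. For the first, the branching condition $\widehat\lambda_t > C_2\delta^2/\eps$ in \Cref{line:lambda} combined with \Cref{as:lambda_near_linear} gives $\lambda_t \gtrsim C_2\delta^2/\eps$, hence $\delta^2/\eps = O(\lambda_t/C_2)$. For the second, I control $\|\mu - \hat\mu_t\|$ via the Certificate Lemma (\Cref{lem:certificate}) applied to $P_t$: \Cref{cl:invariant-weight} gives $\dtv(P_t,G) = O(\eps)$, and the definition of $\bB_t$ combined with $\E_P[w_t] \geq 1-3\eps$ gives $\|\vec \Sigma_t\|_2 = O(\lambda_t + 1)$, so $\|\mu_t - \mu\|^2 = O(\delta^2 + \eps\lambda_t)$; the triangle inequality with the hypothesis $\|\hat\mu_t - \mu_t\| = O(\delta)$ then yields $\|\mu - \hat\mu_t\|^2 = O(\delta^2 + \eps\lambda_t)$, so $\eps\|\mu-\hat\mu_t\|^2 = O(\eps\delta^2 + \eps^2\lambda_t) = O(\lambda_t/C_2)$ for small $\eps$ and large $C_2$. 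The main obstacle is the subtraction trick, which crucially requires that both $w_t$ and the restriction $w_t(x)\1\{x\notin L_t\}$ are admissible stability test weights; this is precisely what forces the constant $C$ in the $(C\eps,\delta)$-stability hypothesis to be chosen sufficiently large.
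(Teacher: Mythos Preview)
Your proof is correct and follows essentially the same approach as the paper: both define $w_t'(x)=w_t(x)\1\{x\notin L_t\}$, apply stability to $w_t$ and $w_t'$, subtract, and then control the residual $\|\hat\mu_t-\mu\|$ via the certificate lemma together with $\lambda_t\gtrsim C_2\delta^2/\eps$. The only cosmetic difference is that the paper invokes the matrix-level \Cref{cor:shift} (whose proof already performs your spectral decomposition of $\bU_t^T\bU_t$), whereas you unroll that corollary and work one direction $v$ at a time.
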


\begin{proof}
    Define the  new weight function $w_t'(x) = w_t(x) \1\{ x \not\in L_t \}$. We have assumed that the distribution $G$ is $(C''\eps,\delta)$-stable. Let $\eps' := C''\eps$ for brevity.  We have the following inequalities, which we explain below.
    \begin{align*}
        \E_{X \sim G}[w_t(X)\tilde{f}_t(X)\1\{ X \in L_t\}] &= \E_{X \sim G}[w_t(X)\tilde{f}_t(X)] - \E_{X \sim G}[w_t(X)\tilde{f}_t(X)\1\{ X \not\in L_t\}] \\
        &= \E_{X \sim G}[w_t(X)\tilde{f}_t(X)] - \E_{X \sim G}[w_t'(X)\tilde{f}_t(X)] \\
        &\leq \|\vec U_t \|_F^2 \left(1+\frac{\delta^2}{\eps'} + \|\widehat{\mu}_t - \mu\|^2_2 + 2\delta \|\widehat{\mu}_t - \mu\|_2 \right) \\
        &- (1-2\eps) \|\vec U_t \|_F^2 \left(1-\frac{\delta^2}{\eps'} - 2 \delta\|\widehat{\mu}_t - \mu\|_2 \right) \\
        &\leq \|\vec U_t \|_F^2\left(3 \frac{\delta^2}{\eps'}  + 4\delta\|\widehat{\mu}_t - \mu\|_2 + \|\widehat{\mu}_t - \mu\|^2_2 \right) \\
        &\leq \|\vec U_t \|_F^2\left(3 \frac{\delta^2}{\eps'}  + 4\delta\left(   \|\widehat{\mu}_t - \mu_t\|_2  + \|\mu_t - \mu\|_2  \right) + 4\|\widehat{\mu}_t - \mu_t\|^2_2 + 4\|\mu_t - \mu  \|_2^2 \right) \\
        &\leq \|\vec U_t \|_F^2 \left(3\frac{\delta^2}{\eps'}  + \delta O(\delta + \sqrt{\eps' \lambda_t}) + O(\delta^2 + \eps' \lambda_t)\right) \\
        &\leq c \|\vec U_t \|_F^2 \lambda_t \;.
    \end{align*}
We note that the third line above follows by applying \Cref{cor:shift} with $\bU= \bU_t$ 
and $b=\mu_{t}$ on both terms of the previous line. The fifth line uses the triangle inequality. 
The sixth line uses the assumption that $\| \widehat{\mu}_t - \mu_t \|_2 = O(\delta)$ 
as well as the certificate lemma (\Cref{lem:certificate}). Note that the required assumption 
$\dtv(P_t,P)=O(\eps)$ from that lemma is satisfied because 
$\E_{X \sim G}[w_t'(X)] \geq \E_{X \sim G}[w_t(X)]-\eps \geq 1- O(\eps)$ 
(see \Cref{cl:invariant-weight,cl:total-var-small}). 
    
    Regarding that last line, we recall \Cref{line:lambda} of \Cref{alg:near-linear-only}, which implies that $\lambda_t \gtrsim C_2\delta^2/\eps$. Thus the terms $\delta^2/\eps'$ can be bounded as $\delta^2/\eps' \lesssim 1/C_2$. Using that, it can be seen that we can choose $c = C'/C_2$ for some constant $C'>0$.
\end{proof}

We are now ready to prove our result.
\begin{proof}[Proof of \Cref{lem:goodpartsmall_general}]
    We first show the following.
    \begin{claim} \label{cl:weight_of_nonzero}
        Consider the setting of \Cref{alg:near-linear-only}, the notation of \Cref{eq:score_functions}, and assume that the deterministic \Cref{cond:near-linear} holds. Let $L_t = \{ x : \tilde{f}_t(x) > C_3 \| \bU_t \|_F^2 \widehat{\lambda}_t/\eps\}$. We have that $\E_{X \sim G}[w_t(X)\1\{X \in L_t \}] \leq \eps$.
    \end{claim}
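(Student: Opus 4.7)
My plan is to prove the claim by a direct application of Markov's inequality to the nonnegative score $\tilde{f}_t$ under the weighted distribution $G_{w_t}$, coupling it with the expectation bound provided by \Cref{cor:shift}.

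Since $\tilde{f}_t(x) > C_3 \|\bU_t\|_F^2 \widehat{\lambda}_t/\eps$ for every $x \in L_t$, Markov's inequality applied to the nonnegative quantity $w_t(X)\tilde{f}_t(X)$ gives
$$\E_{X \sim G}\bigl[w_t(X)\1\{X \in L_t\}\bigr] \;\leq\; \frac{\eps \cdot \E_{X \sim G}[w_t(X)\tilde{f}_t(X)]}{C_3\,\|\bU_t\|_F^2\,\widehat{\lambda}_t}.$$
The numerator is then controlled via \Cref{cor:shift} with $\vec{U} = \bU_t$ and $b = \widehat{\mu}_t$, yielding
$$\E_{X \sim G}[w_t(X)\tilde{f}_t(X)] \;\leq\; \|\bU_t\|_F^2\bigl(1 + \delta^2/\eps + \|\widehat{\mu}_t - \mu\|_2^2 + 2\delta\|\widehat{\mu}_t - \mu\|_2\bigr).$$
The hypothesis $\E_{X \sim G}[w_t(X)] \geq 1 - \eps$ needed to invoke \Cref{cor:shift} is exactly the invariant from \Cref{cl:invariant-weight} established earlier in the paper.

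Next, I would estimate $\|\widehat{\mu}_t - \mu\|_2$ via the triangle inequality $\|\widehat{\mu}_t - \mu\|_2 \leq \|\widehat{\mu}_t - \mu_t\|_2 + \|\mu_t - \mu\|_2$, combining the hypothesis $\|\widehat{\mu}_t - \mu_t\|_2 = O(\delta)$ of \Cref{lem:goodpartsmall_general} with the certificate lemma (\Cref{lem:certificate}), which gives $\|\mu_t - \mu\|_2 = O(\delta + \sqrt{\eps\,\lambda_t})$ (its total-variation hypothesis follows from \Cref{cl:total-var-small}). Thus $\|\widehat{\mu}_t - \mu\|_2 = O(\delta + \sqrt{\eps\,\lambda_t})$, and the parenthetical term above becomes $O\bigl(1 + \delta^2/\eps + \eps\,\lambda_t + \delta\sqrt{\eps\,\lambda_t}\bigr)$.

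The remaining step is to verify that this quantity is at most $C_3\,\widehat{\lambda}_t$, so that the ratio is at most $\eps$ as required. Here I would invoke the loop-entry condition $\widehat{\lambda}_t > C_2\,\delta^2/\eps$ from \Cref{line:lambda} of \Cref{alg:near-linear-only} together with the spectral approximation $\lambda_t \leq 1.25\,\widehat{\lambda}_t$ from \Cref{cond:near-linear}. Each of the terms $\delta^2/\eps$, $\eps\,\lambda_t$, and $\delta\sqrt{\eps\,\lambda_t}$ is then at most an $O(1/C_2)$ multiple of $\widehat{\lambda}_t$, so they are all dominated by $C_3\,\widehat{\lambda}_t$ once $C_2$ is chosen sufficiently large relative to $1/C_3$ (which is precisely the quantitative meaning of the choices at the top of \Cref{alg:near-linear-only}). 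The main obstacle is simply this constant bookkeeping: the constant ``$1$'' in the Corollary bound must be absorbed into $C_3\widehat\lambda_t$, and this is the reason the algorithm activates filtering only when $\widehat\lambda_t$ exceeds the base stability level $C_2\delta^2/\eps$ with the large constant $C_2$ coupled to $C_3$.
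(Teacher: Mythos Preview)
Your Markov-inequality argument has a genuine gap at the final step, and the issue is exactly where you flag it: the constant ``$1$'' coming from \Cref{cor:shift} cannot in general be absorbed into $C_3\widehat{\lambda}_t$. After cancelling $\|\bU_t\|_F^2$ and the factor $\eps$, you need
\[
1 + O\!\left(\frac{\delta^2}{\eps} + \eps\lambda_t + \delta\sqrt{\eps\lambda_t}\right) \;\leq\; C_3\,\widehat{\lambda}_t.
\]
The loop condition only guarantees $\widehat{\lambda}_t > C_2\,\delta^2/\eps$, so the right-hand side is at least $C_2C_3\,\delta^2/\eps$. But $\delta^2/\eps$ can be arbitrarily small: for identity-covariance subgaussian inliers one has $\delta = O(\eps\sqrt{\log(1/\eps)})$ and hence $\delta^2/\eps = O(\eps\log(1/\eps)) \ll 1$. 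No choice of the absolute constants $C_2,C_3$ can make $C_2C_3\,\delta^2/\eps \geq 1$ in that regime, so the inequality fails. Making $C_2$ large only controls the $O(\delta^2/\eps)$ and $O(\eps\lambda_t)$ terms relative to $\widehat{\lambda}_t$; it does nothing to the additive $1$.

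The paper avoids this by never bounding the full expectation $\E_{X\sim G}[w_t(X)\tilde{f}_t(X)]$ directly. Instead it first invokes \Cref{lem:smallsetweight}, which bounds $\E_{X\sim G}[w_t(X)\tilde{f}_t(X)\1\{X\in L\}]$ for any set $L$ of weighted mass at most $\eps$. That lemma writes the restricted expectation as a \emph{difference} of two applications of \Cref{cor:shift} (with weights $w_t$ and $w_t\1\{\cdot\notin L\}$), and in this difference the two leading ``$1$'' terms cancel, leaving only the $O(\delta^2/\eps + \ldots)$ pieces, which \emph{are} bounded by $c\lambda_t$ with $c = O(1/C_2)$. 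The paper then applies this to the threshold set $L_t^*$ of mass exactly $\eps$, computes the conditional mean of $\tilde{f}_t$ over $L_t^*$, and deduces that the threshold $u^*$ defining $L_t^*$ is at most $(C'/C_2)\|\bU_t\|_F^2\widehat{\lambda}_t/\eps = C_3\|\bU_t\|_F^2\widehat{\lambda}_t/\eps$, hence $L_t\subseteq L_t^*$. The subtraction step in \Cref{lem:smallsetweight} is the missing idea in your approach.
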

    \begin{proof} 
        Let $u^* := \arg\max \{ u : \E_{X \sim G}[w_t(X)\1\{\tilde{f}_t(X) > u \}] \geq \eps \}$ and the set $L_t^* = \{ x : \tilde{f}_t(x) > u^* \}$. It suffices to show that $u^* \leq C_3 \| \bU_t \|_F^2 \hat{\lambda}_t/\eps$ (because this would mean that $L_t \subseteq L_t^*$ ). 
        
        By \Cref{lem:smallsetweight}, we have that $\E_{X \sim G}[w_t(X)\tilde{f}_t(X)\1\{X \in L_t^* \}] \leq (C'/C_2) \lambda_t \| \bU_t \|_F^2$.
         If we define the new weights $w_t'(x) = w_t(x)\1\{x \in L_t^* \}$ and use them to normalize the distribution, we get that
        \begin{align*}
            \E_{X \sim {G_{w_t'}}}[\tilde{f}_t(X)] = \frac{1}{\E_{X \sim G}[w_t(X)\1\{X \in L_t^* \}]} \E_{X \sim G}[w_t(X)\tilde{f}_t(X)\1\{X \in L_t^* \}]
                \leq (C'/C_2) \| \bU_t \|_F^2\frac{\hat{\lambda}_t}{\eps} \;,
        \end{align*}
        where we used that the denominator is $\eps$. The fact that $\E_{X \sim {G_{w_t'}}}[\tilde{f}_t(X)] \leq (C'/C_2)  \| \bU_t \|_F^2 \hat{\lambda}_t/\eps$ means that at least one point in $L_t^*$ has $\tilde{f}_t(X)\leq (C'/C_2) \| \bU_t \|_F^2\hat{\lambda}_t/\eps$, which shows that   $u^* \leq (C'/C_2) \| \bU_t \|_F^2\hat{\lambda}_t/\eps$. Since the algorithm uses the value $C_3=(C'/C_2)$, the proof is completed.
    \end{proof}

    \Cref{lem:goodpartsmall_general} now follows by combining \Cref{cl:weight_of_nonzero} with \Cref{lem:smallsetweight}:
    \begin{align*}
        \E_{X \sim G}[w_t(X)\tilde{h}_t(X)] &=  \E_{X \sim G}\left[w_t(X)\tilde{f}_t(X)\1\left\{\tilde{f}_t(X) >  C_3\| \bU_t \|_F^2 \frac{\hat{\lambda}_t}{\eps}\right\} \right] \\
        &\leq (C'/C_2) \lambda_t \| \bU_t \|_F^2  \leq 2(C'/C_2) \lambda_t \| \bM_t \|_F^2 \;,
    \end{align*}
    where the last inequality is due to \Cref{as:frobenious_near_linear} of \Cref{cond:near-linear}. Letting $C=2C'$ we have that $ \E_{X \sim G}[w_t(X)\tilde{h}_t(X)] \leq C/C_2$ as claimed.
    As mentioned earlier, the same techniques lead to a similar bound on $\E_{X \sim G}[w_t(X)h_t(X)]$.
\end{proof}

\subsection{Proof of \Cref{claim:naivepruning}} \label{sec:omitted_naiveprune}

\NAIVEPRUNE*

We can use the following well-known fact (see, e.g.,  \cite{dong2019quantum}, for a proof):
\begin{restatable}{fact}{NAIVEPRUNE2}\label{fact:naive_prune}
    There is an algorithm $\mathrm{NaivePrune}$ with the following guarantees. Let $\eps' < 1/2$, and $\failp>0$. Let $S \subset \R^d$ be a set of $n$ points so that  there exists a $\mu \in \R^d$ and a subset $S' \subseteq S$ so that $|S'| \geq (1-\eps')n$, and $\|x - \mu\|_2 \leq R$ for all $x \in S'$. Then $\mathrm{NaivePrune}(S,R,\failp)$ runs in time $O(nd \log(1/\failp))$, uses memory $O(nd)$, and with probability $1-\failp$ outputs a set of points $T \subset S$ so that $S' \subseteq T$, and $\|x - \mu\|_2 \leq 4R$  for all $x \in T$.
\end{restatable}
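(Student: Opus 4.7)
The plan is to establish Fact B.2 by designing and analyzing a natural sampling-plus-filter algorithm for $\mathrm{NaivePrune}(S,R,\failp)$. The algorithm draws $k := \lceil \log(2/\failp)/\log(1/\eps') \rceil = O(\log(1/\failp))$ points $y_1,\ldots,y_k$ independently and uniformly at random from $S$ (note $k = O(\log(1/\failp))$ because $\eps' < 1/2$); for each $y_i$, it scans $S$ once to compute the ball $T_i := \{x \in S : \|x - y_i\|_2 \leq 2R\}$ together with its cardinality $|T_i|$; and it returns $T := \bigcup_{i : |T_i| \geq (1-\eps')n} T_i$, i.e.\ the union of all candidate balls passing a size threshold.

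For correctness I would rely on two triangle-inequality computations and one sampling amplification bound. Because $\eps' < 1/2$, $\pr[\forall i : y_i \notin S'] \leq (\eps')^k \leq \failp$, so with probability at least $1-\failp$ some $y_{i_0}$ lies in $S'$. On this event, (i) for every $x \in S'$ one has $\|x - y_{i_0}\|_2 \leq \|x - \mu\|_2 + \|\mu - y_{i_0}\|_2 \leq 2R$, so $S' \subseteq T_{i_0}$, $|T_{i_0}| \geq |S'| \geq (1-\eps')n$, and $T_{i_0}$ is one of the sets participating in the union defining $T$; hence $S' \subseteq T$. (ii) Any $y_i$ passing the threshold satisfies $T_i \cap S' \neq \emptyset$ (since $|S\setminus S'| \leq \eps' n < (1-\eps')n$), so for any witness $x^* \in T_i \cap S'$, $\|y_i - \mu\|_2 \leq \|y_i - x^*\|_2 + \|x^* - \mu\|_2 \leq 3R$, and therefore $\|x - \mu\|_2 \leq \|x - y_i\|_2 + \|y_i - \mu\|_2 \leq 5R$ for every $x \in T_i$.

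The one delicate step --- and the main (minor) obstacle --- is tightening the resulting constant from $5R$ to the claimed $4R$. The plan is to add a second refinement stage: for each surviving $y_i$, replace $y_i$ by a point $y_i'$ that is provably at distance at most $2R$ from $\mu$ (for instance, any point in $T_i$ that serves as a witness $x^*$ above, which lies in $S'$ and hence satisfies $\|y_i'-\mu\|_2 \leq R$; or, if verifying $x^* \in S'$ is not available, a convex combination of $y_i$ and such a near-witness chosen so that $\|y_i'-\mu\|_2 \leq 2R$), and then rerun the filter at radius $2R$ around $y_i'$. One triangle inequality then gives $S'\subseteq T_i'$ and $\|x-\mu\|_2\leq 4R$ for all $x\in T_i'$. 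The extra pass adds only $O(nd)$ per surviving candidate, preserving all stated complexity bounds; alternatively, a more careful single-stage analysis that balances the radii used for both inclusions yields the same constant.

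For the resource bounds, each of the $k = O(\log(1/\failp))$ iterations makes a single pass over the $n$ points and performs one Euclidean distance computation in $\R^d$ per point, which is $O(nd)$ arithmetic operations and $O(d)$ working memory per iteration on top of the storage of $S$. Summing over iterations yields the claimed $O(n d \log(1/\failp))$ running time, and memory is dominated by the storage of $S$, at $O(nd)$. Everything reduces to the two triangle-inequality calculations above, a standard amplification over $k$ i.i.d.\ samples, and a routine complexity accounting; no structure beyond the $\ell_2$-triangle inequality and $\eps' < 1/2$ is used.
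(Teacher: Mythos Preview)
The paper does not prove this fact; it merely cites \cite{dong2019quantum} for a proof. So there is no ``paper's proof'' to compare against, and your proposal should be assessed on its own terms.

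Your basic algorithm and its analysis are correct and standard: sampling $k=O(\log(1/\failp))$ centers, thresholding on $|T_i|$, and outputting the union indeed yields, with probability $1-\failp$, a set $T$ with $S'\subseteq T$ and $T\subseteq B(\mu,5R)$. The resource bounds are also correctly accounted for.

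The gap is in your passage from $5R$ to the stated $4R$. Your refinement asks to ``replace $y_i$ by a point $y_i'$ that is provably at distance at most $2R$ from $\mu$,'' but neither suggestion you give is implementable. You cannot pick ``any point in $T_i$ that serves as a witness $x^*\in S'$,'' because membership in $S'$ is unknown to the algorithm; and the ``convex combination of $y_i$ and such a near-witness'' presupposes you have already identified such a witness. The fallback ``a more careful single-stage analysis that balances the radii'' is a hand-wave: with the information available (namely, that a center passing the $|T_i|\ge(1-\eps')n$ test satisfies only $\|y_i-\mu\|\le 3R$), any ball around $y_i$ that is guaranteed to contain $S'$ must have radius at least $R+\|y_i-\mu\|$, and hence can extend to distance $R+2\|y_i-\mu\|\le 7R$ from $\mu$ in the worst case; tightening the output radius to $4R$ would require $\|y_i-\mu\|\le 3R/2$, which your test does not certify. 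So as written, the argument establishes the fact only with $5R$ in place of $4R$.

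This is a minor issue for the paper's purposes: the only downstream use is \Cref{claim:naivepruning} and the rough initialization in \Cref{line:naive_approx}/\Cref{line:roughfilter}, where any fixed constant multiple of $R$ would do after adjusting the filter radius in \Cref{line:roughfilter}. But if you want the constant $4$ as stated, you need either a different certification step that pins the center to within $3R/2$ of $\mu$, or to consult the cited reference for the precise construction.
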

\begin{proof}[Proof of \Cref{claim:naivepruning}]
The estimator draws a set $S=\{X_1,\ldots, X_k\}$ of $k$ samples from the distribution $P$. Letting $Y_i:= \1\{ \|X_i-\mu\|_2 > R \}$ we have that $\E[Y_i]\leq 2\eps \leq 1/5$. Thus, using Hoeffding bound,
\begin{align*}
    \pr\left[\frac{1}{k}\sum_{i=1}^k Y_i > 1/4 \right] \leq \pr\left[\frac{1}{k}\sum_{i=1}^k Y_i < \E\left[ \frac{1}{k}\sum_{i=1}^k Y_i\right] + 0.05 \right]  \leq 2 e^{-2k(0.05)^2} \;.
\end{align*}
Choosing $k=200\log(2/\failp)$ makes this probability at most $\failp$. Conditioning on that event, the fraction of points outside the ball is at most $\eps':=1/4$, thus running  $\mathrm{NaivePruning}(S,R,\failp)$ algorithm of \Cref{fact:naive_prune} and outputting any point from the returned set yields the desired estimator.
    
\end{proof}

\subsection{Omitted Proofs from \Cref{sec:correctness}} \label{sec:omitted_correctness}

\clinvariantweight*
\begin{proof}
    For every iteration, we denote by $\Delta w_t = w_{t} - w_{t+1}$, that is for every point $x \in \R^d$, $\Delta w_t(x) = w_{t}(x) - w_{t+1}(x)$ is the difference between the weights for the two consecutive iterations $t$ and $t+1$. 
    \begin{align*}
        \E_{X \sim G}[w_t(X)] &= \E_{X \sim G}[w_1(X)] - \sum_{i=1}^{t-1} \E_{X \sim G}[\Delta w_i(X)] \\
        &\geq 1- \eps - \sum_{i=1}^{t-1} \E_{X \sim G}[\Delta w_i(X)]   \tag{\Cref{cl:radius}}\\
        &\geq 1- \eps - \frac{\eps}{1-\eps} \sum_{i=1}^{t-1} \E_{X \sim B}[\Delta w_i(X)]  \tag{\Cref{lem:filterguarantee}}\\
        &\geq 1- \eps - \frac{\eps}{1-\eps} \left(\E_{X \sim B}[ w_1(X)]  - \E_{X \sim B}[ w_t(X)] \right)\\
        &\geq 1- \eps - \frac{\eps}{1-\eps}  \\
        &\geq 1- 3\eps \;,
    \end{align*}
    where the last line uses that $\eps\leq1/2$.
The proof of the second conclusion follows from \Cref{cl:total-var-small2} (stated  below).
\end{proof}

\begin{restatable}{claim}{clTotalVarSmall} \label{cl:total-var-small2}
    Let $\eps \leq 1/8$. If $\E_{X \sim G}[w_t(X)] \geq 1-3\eps$, then $\dtv(P_t,P) \leq 9\eps$. 
\end{restatable}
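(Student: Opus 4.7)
The plan is to work directly with the density of $P_t$ relative to $P$. Writing $W := \E_{X\sim P}[w_t(X)]$, we have by definition $P_t(x) = P(x)w_t(x)/W$, so
\[
2\dtv(P_t,P) \;=\; \int |P_t(x) - P(x)|\,\d x \;=\; \frac{1}{W}\int P(x)\,|w_t(x) - W|\,\d x \;=\; \frac{\E_{X\sim P}[|w_t(X)-W|]}{W}.
\]
Thus everything reduces to lower bounding $W$ and controlling the mean deviation of $w_t$ under $P$.

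First I would lower bound $W$. Since $P = (1-\eps)G + \eps B$ and $w_t \in [0,1]$, we have $W = (1-\eps)\E_{X\sim G}[w_t(X)] + \eps\, \E_{X\sim B}[w_t(X)] \geq (1-\eps)(1-3\eps) \geq 1 - 4\eps$, using only the hypothesis $\E_{X\sim G}[w_t(X)]\geq 1-3\eps$. In particular, $1 - W \leq 4\eps$, and $W \geq 1/2$ because $\eps \leq 1/8$.

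Next, for the mean deviation term, I would use a simple triangle inequality together with $0 \leq w_t \leq 1 \leq 1/W$:
\[
\E_{X\sim P}[|w_t(X)-W|] \;\leq\; \E_{X\sim P}[|w_t(X)-1|] + |1-W| \;=\; (1-W) + (1-W) \;=\; 2(1-W),
\]
since $|w_t(X)-1| = 1 - w_t(X)$ pointwise. Combining with the identity above gives
\[
\dtv(P_t,P) \;\leq\; \frac{1-W}{W} \;\leq\; \frac{4\eps}{1-4\eps} \;\leq\; 8\eps \;\leq\; 9\eps,
\]
where the third inequality uses $\eps\leq 1/8$. This is entirely a routine calculation and there is no real obstacle; the only thing to be careful about is that the bound $|w_t(X)-1|=1-w_t(X)$ requires $w_t\leq 1$ pointwise, which is built into \Cref{alg:near-linear-only} via the initialization and the multiplicative update $w_{t+1}(x)=w_t(x)(1-\tilde\tau_t(x)/r)$ with $\tilde\tau_t \geq 0$ and $r$ chosen so that $\tilde\tau_t/r \leq 1$.
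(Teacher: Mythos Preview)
Your proof is correct and follows essentially the same route as the paper: both write $2\dtv(P_t,P)=\E_{X\sim P}[|w_t(X)-W|]/W$ and then lower bound $W\geq 1-4\eps$ via the hypothesis on $\E_G[w_t]$. The only difference is cosmetic: the paper splits $P=(1-\eps)G+\eps B$ and bounds the $G$ and $B$ pieces separately (yielding $18\eps$ for the $L_1$ distance and hence $\dtv\leq 9\eps$), whereas you apply the single triangle inequality $|w_t-W|\leq (1-w_t)+(1-W)$ globally under $P$, which is slightly cleaner and even gives the marginally better constant $8\eps$.
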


Although we work with discrete distributions (the empirical distributions on the samples) in \Cref{sec:near_linear_main_body}, we prove the claim for continuous distributions because it will be useful in \Cref{sec:low-memory-main}.
\begin{proof}
    By definition $P_t(x) = w_t(x)P(x) / \E_{X \sim P}[w_t(X)]$. Letting $L:= \int_{\R^d} w_t(x)  P(x) \d x = \E_{X \sim P}[w_t(x)]$, we have that
    \begin{align*}
        \int_{\R^d} |P_t(x) - P(x)| \d x = (1-\eps)\int_{\R^d} G(x)  \abs[\Big]{\frac{w_t(x)-L}{L}} \d x  + \eps \int_{\R^d} B(x) \abs[\Big]{\frac{w_t(x)-L}{L}} \d x \;.
    \end{align*}
    We note that $1\geq L \geq (1-\eps)\E_{X \sim G}[w_t(x)] \geq 1-4\eps$ using \Cref{cl:invariant-weight}.  The second term can be bounded as
    \begin{align*}
        \eps \int_{\R^d} B(x) \abs[\Big]{\frac{w_t(x)-L}{L}} \d x \leq \frac{\eps}{L} \int_{\R^d} B(x) (w_t(x) + L) \leq \frac{2\eps}{1-4\eps} \leq 4\eps \;.
    \end{align*}
    For the first term, we have that
    \begin{align*}
        (1-\eps)\int_{\R^d} G(x)  \abs[\Big]{\frac{w_t(x)-L}{L}} \d x &\leq \frac{1-\eps}{L}  \int_{\R^d} G(x) (|1-w_t(x)| + |1-L|) \d x \\
        &\leq \frac{1}{1-4\eps} \left( 1- \E_{X \sim G}[w_t(X)] + 4\eps \right) \leq 14 \eps \;.
    \end{align*}
\end{proof}

\ClPSD*
\begin{proof}
    Using the simple fact that for random variables $X$, $Y$ it holds  $\Var(Y) \geq \E_{X}[\Var(Y|X)]$, we get that 
    \begin{align*}
        \vec \Sigma_t &\succeq (1-\eps) \E_{X \sim G_{w_t}}\left[(X-\mu_{G_{w_t}})(X- \mu_{G_{w_t}})^T\right]\\
                &= (1-\eps) \left( \E_{X \sim G_{w_t}}\left[(X-\mu)(X- \mu)^T\right]  - (\mu_{G_{w_t}}-\mu)(\mu_{G_{w_t}}-\mu)^T \right) \\
                &\succeq (1-\eps)\left( (1-\delta^2/\eps)\vec{I}_d - \delta^2 \vec{I}_d\right)  \tag{by stability  and \Cref{cl:total-var-small}}\\
                &\succeq (1-\eps) (1-2\delta^2/\eps)\vec{I}_d \\
                &\succeq (1-3\delta^2/\eps)\vec{I}_d \;,
    \end{align*}
    where we used that $\eps \leq \delta$. We recall the definition $\vec{B}_t = ( \E_{X \sim P_t}[w_t(X)] )^2 \vec{\Sigma}_t - (1-C_1 \delta^2/\eps) \vec{I}_d$ and bound the first term as follows
    \begin{align*}
         \left(\E_{X \sim P_t}[w_t(X)] \right)^2  \vec{\Sigma}_t  &\succeq (1-3\eps)^2 (1-3\delta^2/\eps)\vec{I}_d  \tag{\Cref{cl:total-var-small}}\\
        &\succeq (1-4\delta^2/\eps - 6\eps - 27 \delta^2 \eps)\vec{I}_d \\
        &\succeq (1-11\delta^2/\eps)\vec I_d\;,
    \end{align*}
    where the last line uses $\eps<1/6$ and $\eps \leq \delta$. Therefore, if we choose $C_1 \geq 22$, we get that $\bB_t \succeq (0.5C_1\delta^2/\eps)\vec{I}_d$.
\end{proof}

\ClRelationTauFirst*

\begin{proof}

    By \Cref{cond:near-linear} we have that for all the $n$ samples, $\tilde{g_t}(x)\geq 0.8 {g}_t(x)$.
    Recall the definitions $\tilde{\tau}_t(x) = \tilde{g}(x) \1\{\tilde{g}(x)>C_3\|\bU_t \|_F^2 \hat{\lambda}_t/\eps \}$ and $ {\tau}_t(x) = {g}(x) \1\{{g}(x)>C_3\|\bM_t \|_F^2{\lambda}_t/\eps \}$. We split into cases based on whether each of $g_t,\tilde{g}_t$ has been zeroed by its thresholding operation:
    \begin{itemize}
        \item If $\tau_t(x)$ has been zeroed, (i.e., $g_t(x)<C_3\|\bM_t \|_F^2{\lambda}_t/\eps$), the claim trivially holds since the left-hand side is zero.
        \item If  none of $\tilde{\tau}_t(x),{\tau}_t(x)$ has been zeroed, then  $\tilde{\tau}_t(x) = \tilde{g}_t(x)$ and ${\tau}_t(x) = {g}_t(x)$, thus the claim holds by the aforementioned fact that $\tilde{g_t}(x)\geq 0.8 {g}_t(x)$.
        \item If $\tilde{\tau}_t(x)$ has been zeroed but ${\tau}_t(x)$ has not, then the worst case  is $g_t(x)=(1/0.8)\tilde{g}_t(x)$. This means that in this case:
        \begin{align*}
            \tau_t(x) \leq \frac{1}{0.8}C_3 \frac{\hat{\lambda}_t}{\eps}|\bU_t \|_F^2 < 3 C_3 \frac{{\lambda}_t}{\eps}|\bM_t \|_F^2\;,
        \end{align*}
        where we used that $\hat{\lambda} \leq 1.2 \lambda_t$ and $\|\bU_t \|_F^2 \leq 1.2\|\bM_t \|_F^2$, due to \Cref{cond:near-linear}.
    \end{itemize}
\end{proof}

\section{Omitted Proofs from \Cref{sec:low-memory-main}}
\label{sec:omitted_low_memory}

\subsection{Omitted Proofs from \Cref{sec:correctness_streaming}}  \label{sec:appendixnewDownweighting}

\Lemfilterguaranteenew*

\begin{proof}
    Let the set $L^* = \{ \ell \in [\ell_{\max}] \; : \; 6T\leq \E_{X \sim P}[w(X)(1-\tilde{\tau}(X)/r)^{\ell}\tilde{\tau}(X)] \leq 18T\}$. The invariant is that throughout \Cref{alg:reweighting2}, the set $L$ maintained has non-zero intersection with $L^*$. This can be seen by examining cases about  $\ell$ in \Cref{line:powerell}. If $\ell \in L^*$, then $\ell$ is kept in $L$. If $\ell \not\in L^*$, then all elements discarded are not members of $L^*$ (for example if $\E_{X \sim P}[w(X)(1-\tilde{\tau}(X)/r)^{\ell}\tilde{\tau}(X)] > 18 T$, then by \Cref{cl:evaluate_cond} $f(\ell)>9T$ and we discard the lower half of $L$). Thus, at the end, $L$ has at most two elements with at least one of them belonging in $L^*$. This element would satisfy $3T \leq f(\ell) \leq 27T$. Thus the algorithm will definitely return some element. On the other hand, any element returned will satisfy $2T<\E_{X \sim P}[w(X)(1-\tilde{\tau}(X)/r)^{\ell}\tilde{\tau}(X)]<54T$. This has already shown part $(ii)$ of the lemma. For part $(i)$, it is more convenient to imagine let $\ell$ increased by one at each step until it reaches the value finally returned by the algorithm and consider the loss in weight between that and the next iteration, exactly as in the proof of \Cref{lem:filterguarantee}. That proof was using only the facts that for all $\ell'\leq \ell$, $2T<\E_{X \sim P}[w(X)(1-\tilde{\tau}(X)/r)^{\ell'}\tilde{\tau}(X)]$ (which we just showed above) and $\E_{X \sim G}[w(X)(1-\tilde{\tau}(X)/r)^{\ell'}\tilde{\tau}(X)]<T$ (which is true by assumption). The reason why $\ell_{\max} = r/T$ suffices is also shown identically to the proof of \Cref{lem:filterguarantee}.
\end{proof}

\subsection{Omitted Proofs from \Cref{sec:cover}} \label{sec:omitted_cover}
 We now focus on showing \Cref{cor:ran_cov_specific}. In order to avoid confusion with the fraction of outliers $\eps$, we use $\eps'$ for our accuracy parameter. We will use a uniform convergence result from \cite{AntBar99} combined with a powerful VC-dimension bound from \cite{goldberg1995bounding} for the class of functions that are computable by a small number of arithmetic operations. \cite{goldberg1995bounding} considers the class of concepts parameterized by $k$ real numbers, $\mathcal{F} = \{ h_a \}_{a\in \R^k}$, for which there exists an algorithm $\cA$ for calculating $h_a(x)$ that takes as input $x,a$ and each line of $\cA$ is one of the following:
 \begin{itemize}
 \item an arithmetic operation $+,-,\times$, and $/$ on two inputs or previously computed values,
 \item a jump to a different line of the algorithm conditioned on whether an input or a previously calculated value is greater than or equal to zero,
 \item output zero or one.
\end{itemize}
The parameters $a$ and the inputs $x$ consist of real numbers, and the model of computation assumed allows for arithmetic operations and the comparisons between reals to be done in constant time.
We refer the reader to \cite[Section 2]{goldberg1995bounding} for more details and relation with algebraic decision trees with bounded depths.
 The result from \cite{goldberg1995bounding} is that $\mathrm{VCdim}(\cF) =O(m k)$ where $k$ is the size of the parameterization and $m$ is the runtime of the algorithm $\cA$. 
 Using the bound on the VC dimension, we have the following result for the uniform convergence:

\begin{restatable}[\cite{goldberg1995bounding, AntBar99}]{proposition}{UniformConv} \label{prop:uniform_theorem}
    Let $\cF$ be a class of functions of the form $\cF = \{ h_a:\R^d \to [0,1] \; | \; a \in \R^k \}$, where for any $(a,x) \in \R^k \times \R^d$, $h_a(x)$ can be computed by an algorithm $\cA$ with runtime $m$ that takes as input $a,x$ and  is allowed to perform conditional jumps (conditioned on equality and inequality of real values) and execute the standard arithmetic operations on real numbers ($+,-,\times, /$) in constant time. For any distribution $D$ on $\R^d$ and any $\eps' \in (0,1)$,  there exist  $N = O\left(\frac{1}{(\eps')^2}(\log(km) + km\log(1/\eps')) \right)$ points $x_1,\ldots, x_N$ in $\R^d$ such that 
    \begin{align*}
        \sup_{h \in \cF}\abs[\Big]{\E_{X \sim D}[h(X)] - \frac{1}{N}\sum_{i=1}^Nh(x_i)} \leq \eps' \;.
    \end{align*}
\end{restatable}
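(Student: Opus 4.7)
The plan is to combine the VC-type dimension bound of \cite{goldberg1995bounding} with a classical uniform convergence theorem for real-valued function classes, and then invoke the probabilistic method to extract deterministic points. The statement is purely an existence claim, so I would sample $N$ points i.i.d.\ from $D$ and show that with positive probability they give a uniform $\eps'$-approximation over $\cF$.

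First, I would extend the Goldberg--Jerrum bound from VC dimension to \emph{pseudo-dimension}, since the functions in $\cF$ take values in $[0,1]$ rather than $\{0,1\}$. Recall that the pseudo-dimension of $\cF$ equals the VC dimension of the threshold class $\widetilde{\cF} := \{(x,t)\mapsto \mathbb{1}\{h_a(x) > t\} : a \in \R^k,\, t \in \R\}$ on the enlarged input space $\R^{d+1}$. A member of $\widetilde{\cF}$ can be computed by an algorithm that first runs $\cA$ to produce $h_a(x)$ (runtime $m$, parameters $a \in \R^k$), then subtracts $t$ and performs a single comparison (runtime $O(1)$, one extra parameter $t \in \R$). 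Applying the Goldberg--Jerrum bound to this algorithm of size $m+O(1)$ with $k+1$ real parameters, we conclude $\mathrm{Pdim}(\cF) = \mathrm{VCdim}(\widetilde{\cF}) = O(mk)$.

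Second, I would invoke the standard uniform convergence theorem for function classes of bounded pseudo-dimension mapping into $[0,1]$ (Anthony--Bartlett, Theorem 17.1/18.4): if $X_1,\dots,X_N \sim D$ are i.i.d., then
\begin{equation*}
\pr\!\left[\, \sup_{h \in \cF}\Big|\E_{X \sim D}[h(X)] - \tfrac{1}{N}\sum_{i=1}^N h(X_i)\Big| > \eps' \,\right] \;\leq\; \delta
\end{equation*}
whenever $N \gtrsim \frac{1}{(\eps')^2}\bigl(\mathrm{Pdim}(\cF)\log(1/\eps') + \log(1/\delta)\bigr)$. Substituting $\mathrm{Pdim}(\cF) = O(mk)$ and choosing a constant failure probability $\delta$ (e.g.\ $\delta = 1/2$, or equivalently $\delta = 1/\mathrm{poly}(km)$ to absorb the $\log(km)$ summand), this yields the target bound $N = O\bigl(\frac{1}{(\eps')^2}(\log(km) + km\log(1/\eps'))\bigr)$.

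Finally, since the bad event has probability strictly less than $1$ under the i.i.d.\ sampling, there must exist a deterministic realization $x_1,\ldots,x_N$ of the sample satisfying the desired uniform bound, completing the proof. The main potential obstacle is making sure the pseudo-dimension bound really follows from the VC bound in \cite{goldberg1995bounding}; the key observation is simply that thresholding is a cheap operation within the model of computation considered, so the algorithmic size for $\widetilde{\cF}$ stays within a constant factor of that for $\cF$. Everything else is a direct application of off-the-shelf results.
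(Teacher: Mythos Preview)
Your proposal is correct and follows essentially the same route as the paper: both reduce to the pseudo-dimension by passing to the subgraph/threshold class, apply Goldberg--Jerrum to bound its VC dimension by $O(km)$, invoke the Anthony--Bartlett uniform convergence machinery, and finish with the probabilistic method. The only cosmetic difference is that the paper explicitly routes through the covering-number bound (their Theorem~17.7 combined with Theorem~18.4) rather than citing the packaged sample-complexity statement directly.
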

For completeness, we show at the end of this section how this is derived from the statements of~\cite{AntBar99} and \cite{goldberg1995bounding}. We now apply this result to our case. We need to specify a family $\cF$ of functions broad enough to capture every $w_{t+1}\tilde{\tau}_t$ and $w_{t+1}\tau_t$ that could be encountered during the execution of \Cref{alg:streaming}. The factor $r$ used in the statement below is a normalization factor to make sure that the functions are in $[0,1]$.

\begin{lemma} \label{cl:family}
    Consider the setting of \Cref{alg:streaming}. Let $r':= \left( C dR^2 + 1 + C_1\delta^2/\eps  \right)^{C \log d}$. There exists a family $\cF$ of functions from $\R^d$ to $[0,1]$ such that: 
    \begin{enumerate}
        \item For every iteration $t$ of \Cref{alg:streaming}, we have that $\frac{1}{r'}w_{t+1}\tau_t\in \cF$ and $\frac{1}{r'}w_{t+1}\tilde{\tau}_t \in \cF$.
        \item Functions of $\cF$ are parameterized by at most $k=O(d \outerl \max(\innerl,d))$ real numbers, that is, $\cF$ has the form $\cF = \{ h_a : \R^d \to [0,1] \; | \; a \in \R^k \}$.
        \item For every $h_a \in \cF$ and $x \in \R^d$, $h_a(x)$ can be in $m = d \outerl \max(\innerl,d)(dR\eps/\delta^2)^{O(\log d)}$ steps in the model that allows conditional jumps and standard arithmetic operations on real numbers.
    \end{enumerate}
\end{lemma}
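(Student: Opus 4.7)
The plan is to define $\cF$ explicitly as the set of functions obtained by simulating the execution of \Cref{alg:streaming} with arbitrary choices of its internal state, parameterized by the state variables as real numbers, and then bound the size of the parameterization and the cost of evaluating a member of $\cF$ on a point $x$. Concretely, a function $h_a \in \cF$ will be indexed by $a = (\widehat{\mu}, \{(\bU_s, \widehat{\mu}_s, \widehat{\lambda}_s, \ell_s, \bM_s, \mu_s, \lambda_s)\}_{s \in [\outerl]}, t, b)$, where the block for each $s$ encodes the algorithmic state at iteration $s$, $t \in [\outerl]$ selects the iteration to look at, and $b \in \{0,1\}$ selects whether the score used is $\tilde{\tau}_t$ or $\tau_t$. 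This makes the two functions $\frac{1}{r'} w_{t+1} \tilde{\tau}_t$ and $\frac{1}{r'} w_{t+1} \tau_t$ produced by \Cref{alg:streaming} clearly elements of $\cF$ once $a$ is set to the realized internal state, yielding part~(1).

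For part~(2), counting parameters block by block gives $\widehat{\mu} \in \R^d$, plus per iteration $\bU_s \in \R^{\innerl \times d}$, $\widehat{\mu}_s \in \R^d$, $\widehat{\lambda}_s \in \R$, $\ell_s \in \R$ for the $\tilde{\tau}$ side and $\bM_s \in \R^{d\times d}$, $\mu_s \in \R^d$, $\lambda_s \in \R$ for the $\tau$ side. Summing over $s \in [\outerl]$ yields $k = O(\outerl(\innerl d + d^2)) = O(d \outerl \max(\innerl, d))$, as required. Representing $t$ and $b$ as two additional reals costs $O(1)$ and is absorbed into the bound.

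For part~(3), the evaluation algorithm computes $w_0(x) = \1\{\|x-\widehat{\mu}\|_2 \leq 5R\}$ (cost $O(d)$), then, for $s=1,\ldots,t+1$, computes $\tilde{g}_s(x) = \|\bU_s(x-\widehat{\mu}_s)\|_2^2$ and $\|\bU_s\|_F^2$ in $O(\innerl d)$ arithmetic operations (both are sums of squares of the coordinates of $\bU_s(x-\widehat{\mu}_s)$ and of the rows of $\bU_s$, respectively), assembles $\tilde{\tau}_s(x)$ using one inequality test, and multiplies $w_{s-1}(x)$ by $(1-\tilde{\tau}_s(x)/r)^{\ell_s}$. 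Since $\ell_s$ is an integer bounded by $\ell_{\max} = (dR/(\delta^2/\eps))^{C\log d}$, the exponentiation uses at most $\ell_{\max}$ multiplications in the Goldberg--Jerrum model. Finally, if $b=0$, it computes $\tau_t(x) = \|\bM_t(x-\mu_t)\|_2^2$ and $\|\bM_t\|_F^2$ in $O(d^2)$ operations and applies the indicator; otherwise it reuses the previously computed $\tilde{\tau}_t(x)$. The output is divided by $r'$ once. Summing over $s$, the total number of arithmetic operations is $m = O(\outerl(\innerl d + d^2 + \ell_{\max})) = d\outerl \max(\innerl,d) (dR\eps/\delta^2)^{O(\log d)}$.

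It remains to check that $h_a$ maps into $[0,1]$ so that the hypothesis of \Cref{prop:uniform_theorem} applies. Clearly $0 \leq w_{t+1}(x) \leq 1$, since each factor $(1-\tilde{\tau}_s/r)^{\ell_s}$ lies in $[0,1]$ by the choice $r = CdR^{2+4\log d}$ together with \Cref{eq:bound_on_tau}. For the score, using $\|x-\mu\|_2 \leq O(R)$ on the (effective) support, $\tau_t(x) \leq dR^2 \|\bB_t\|_2^{2\log d}$ and $\|\bB_t\|_2 = O(dR^2 + 1 + C_1\delta^2/\eps)$ (the trivial upper bound on the second moment of an $R$-bounded random vector), so $w_{t+1}(x)\tau_t(x) \leq r'$ for $C$ large enough in the definition of $r'$; the same bound holds for $\tilde{\tau}_t$ via \Cref{eq:bound_on_tau}. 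Dividing by $r'$ gives values in $[0,1]$, completing the construction. The only mildly delicate step is this normalization bookkeeping; the rest is a straightforward translation of the algorithmic description into the Goldberg--Jerrum arithmetic model.
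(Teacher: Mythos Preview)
Your construction and the paper's are essentially the same idea: encode the algorithm's state as a real parameter vector, then count parameters and bound evaluation cost in the Goldberg--Jerrum model. The parameter count and the $\ell_{\max}$-based runtime bound both match.

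There is, however, a genuine gap in your range argument. \Cref{prop:uniform_theorem} requires \emph{every} $h_a$ with $a\in\R^k$ to map into $[0,1]$, not just those $a$ realized by the algorithm. Your justification that $(1-\tilde{\tau}_s/r)^{\ell_s}\in[0,1]$ invokes \Cref{eq:bound_on_tau}, but that bound relies on the specific $\bU_s$ produced by the algorithm (in particular on $\|\bU_s\|_F^2\lesssim\|\bM_s\|_F^2$ and on $\|\vec\Sigma_s\|_2\le O(R^2)$); for an arbitrary $\bU_s\in\R^{\innerl\times d}$ in the parameterization, $\tilde{\tau}_s(x)$ can exceed $r$, making the factor negative and the product unbounded. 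The same issue applies to the final score $\tau_t$ for arbitrary $\bM_t$. The paper handles this by wrapping the raw expression $\tilde h$ with an indicator $\1\{\tilde h(x)\in(0,1)\}$ in the very definition of $h_a$; this costs $O(1)$ extra operations, keeps every $h_a$ in $[0,1]$, and leaves the algorithm's realized functions unchanged because for those your bounds do apply. Adding this clipping step (or equivalently, clipping each factor to $[0,1]$) repairs your argument.
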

\begin{proof}
Let $L'_2:= \max(\innerl,d)$. Every function in our family will be parameterized by $2\outerl+1$ scalars, $\{u_t \in \R \; : \; t \in [\outerl+1]\} \cup \{\ell_t \in \R : t\in [\outerl] \}$, and $(\outerl+1)(L'_2+1) $ vectors in $\R^d$, $\{a_t: t \in [\outerl+1]\}\cup \{v_{t,j}: t \in [\outerl+1], j \in [L'_2]\}$. For brevity, we denote by $\bV$ the tensor in $\R^{(\outerl+1) \times L'_2 \times d}$ having all the vectors $\bV_{t,j}=v_{t,j}$ and by $\bA$ the tensor in $\R^{(\outerl+1) \times d}$ with $\bA_{t}=a_t$, $t\in [\outerl+1]$. Similarly, denote by $u$ the vector $(u_1,\ldots, u_{\outerl+1})$ and let $\ell=(\ell_1,\ldots \ell_{\outerl})$. We define our class to be
\begin{align*}
    \cF = \left\lbrace h_{\ell,u,\bV,\bA} : \R^d \to [0,1]  \; : \;  u \in \R^{\outerl+1}, \ell \in \R^{\outerl}, \bV \in \R^{(\outerl+1) \times L'_2 \times d}, \bA \in \R^{(\outerl+1)\times d}    \right \rbrace \;,
\end{align*}
 which includes all functions of the form $h_{\ell,u,\bV,\bA}(x)= \tilde{h}_{\ell,u,\bV,\bA} \1\{ \tilde{h}_{\ell,u,\bV,\bA}(x) \in (0,1)\}$, where
 \begin{align}
    \tilde{h}_{\ell,u,\bV,\bA}(x) = &\1\left\{ \|x- \hat{\mu}\|_2 \leq 5R \right\} \frac{1}{r}
    \cdot \frac{\sum_{j=1}^{L'_2}(v_{\outerl+1,j}^T(x-a_{\outerl+1}))^2}{L'_2} \1 \left\lbrace \frac{\sum_{j=1}^{L'_2}(v_{\outerl+1,j}^T(x-a_{\outerl+1}))^2}{L'_2} > u_{\outerl+1} \right\rbrace \notag\\
    &\cdot \prod_{t =1}^{\outerl} \left(1-\frac{1}{r}\frac{\sum_{j=1}^{L'_2}(v_{t,j}^T(x-a_{t}))^2}{L'_2} \1\left\lbrace\frac{\sum_{j=1}^{L'_2}(v_{t,j}^T(x-a_t)))^2}{L'_2} > u_t \right\rbrace  \right)^{\ell_t}  \;. \label{eq:second_line}
\end{align}

We note that the radius $r':= \left( C dR^2 + 1 + C_1\delta^2/\eps  \right)^{C \log d}$ is an upper bound on the value that the functions $w_{t+1}\tau_t$ and $w_{t+1}\tilde{\tau}_t$ in \Cref{alg:streaming} can take: For $\tau_t(x)$ we have
\begin{align*}
    \tau_t(x) \leq \|\vec M_t(x-\mu_t)\|_2^2 \leq \|\vec M_t \|_2^2 \|x-\mu_t\|_2^2 \leq \|\vec \Sigma_t\|_2^{2\log d} R^2
    =O(   d R^{2 + 4 \log d})\;, 
\end{align*}
while for $\tilde{\tau}(x)$ we have the bounds
\begin{align*}
    \tilde{\tau}_t(x) &\leq \tilde{g}_t(x) \leq \| \bU_t(x-\mu_t)\|_2^2
    \lesssim R^2 \|\bU_t\|_2^2 \lesssim R^2 \|\bU_t\|_F^2 \\
    &\leq R^2 \frac{1}{\innerl}\sum_{j=1}^{\innerl} \| \bM_t z_{t,j}\|_2^2
    \leq d R^2 \|\bM_t\|_2^2 
    \leq d R^2 \|\bB_t\|_2^{2\log d}  \\
    &\leq d R^2 \left( \|\vec \Sigma_t\|_2 + 1 + C_1\delta^2/\eps  \right)^{2\log d} 
    \leq d  R^2 \left( C R^2 + 1 + C_1\delta^2/\eps  \right)^{2\log d} \\
    &\leq \left( C dR^2 + 1 + C_1\delta^2/\eps  \right)^{O(\log d)} \;.
\end{align*}

We check that $\cF$ can indeed implement the functions $w_{t+1}\tilde{\tau}_t$ used in \Cref{alg:streaming} for any $t \in [\outerl]$: Note that the scores $\tilde{g}_t$ used in the algorithm are  means of the form $\frac{1}{\innerl}\sum_{j=1}^{\innerl}(v_{t,j}^T(x-a_t))^2$. Thus, the first line of \Cref{eq:second_line} implements $\1\{\|x-\hat{\mu}\|_2\leq 5R \}\frac{1}{r} \tilde{\tau}_t$.
The purpose of the second line in \Cref{eq:second_line} is to match the operation of the Downweighting filter, which, in the $t$-th round multiplies $w_t$ with $(1-\tilde{\tau}_t(x)/r)^{\ell_t}$ for some power $\ell_t$. Finally, we note that $w_{t+1}\tau_t$ are implemented in $\cF$ by taking $v_{t,j}$ to be the rows of the matrix $\bM_t$ (this is why we need the sums to be on $\innerl' = \max(\innerl,d)$ terms in \Cref{eq:second_line}).

We need to specify the arithmetic complexity $m$ and the dimension of the parameterization $k$ of our family $\cF$. For the first, we have that for any $h\in \cF$ and $x \in \R^d$, the value $h(x)$ can be computed using $O(\outerl  d \innerl' \ell_{\max})$
standard arithmetic operations and jumps, where $\ell_{\max}$ is the maximum exponent that $\ell_t$ can have and is set to be $\ell_{\max}:=\left( \frac{dR}{\delta^2/\eps} \right)^{C \log d} $ in \Cref{line:set_ell_max} of \Cref{alg:streaming}. The $d\innerl'$ comes from the computation of the means $(1/\innerl')\sum_{j=1}^{\innerl'}(v_{t,j}^T(x-a_t))^2$ and the $\outerl$ comes from the fact that we have $\outerl$ factors in the expression of $h$.

Regarding the other parameter $k$, we have that every $h \in \cF$ is parameterized by $O(\outerl)$ scalars and $O(\outerl \innerl')$ $d$-dimensional vectors. Thus, $k=O(\outerl \innerl' d)$.
\end{proof}

We are now ready to prove \Cref{cor:ran_cov_specific}.
\CorRanCovSpecific*
\begin{proof}[Proof of \Cref{cor:ran_cov_specific}] 
    We use \Cref{prop:uniform_theorem} for the family $\cF$ of \Cref{cl:family} and plug  the upper bounds for the arithmetic complexity $m$ and the dimension of the parameters $k$. \Cref{prop:uniform_theorem} 
    states that $N$ can be chosen to be a multiple of  
    \begin{align*}
        \frac{1}{\eps'^2}(\log(km) + km\log(1/\eps')) \;.
    \end{align*}
Taking the much looser bound $N = \Theta(\frac{km}{\eps'^3})$ suffices for our purposes. 
Plugging in $k=O(d \outerl \max(\innerl,d))$, $m = d \outerl \max(\innerl,d)(dR\eps/\delta^2)^{O(\log d)}$ 
from \Cref{cl:family}, we get 
$km =d^2\outerl^2\max(d^2,\innerl^2) (dR\eps/\delta^2)^{O(\log d)} \lesssim d^4\outerl^2\innerl^2(dR\eps/\delta^2)^{O(\log d)}$.
\end{proof}

For completeness, we provide the proof of \Cref{prop:uniform_theorem}.
\begin{proof}[Proof of \Cref{prop:uniform_theorem}]
    We derive the result from the statements of~\cite{AntBar99} without explaining all of the definitions of the notions involved. Please see \cite{AntBar99} for more details.  Applying Theorem 17.7 \cite{AntBar99} with the loss function $\ell_h(x,y) = h(x)$ we obtain 
    \begin{align}\label{eq:uniform}
        \pr\left[ \sup_{h \in \cF}\abs[\Big]{\E_{X \sim D}[h(X)] - \frac{1}{N}\sum_{i=1}^Nh(X_i)} > \eps' \right] \leq 4 \cN_1(\eps'/8, \cF,2N)\exp(-\eps'^2N/32) \;,
    \end{align}
    where the probability is taken over a set of $N$ i.i.d.~points $X_1,\ldots, X_N$ drawn from $D$.

    To bound from above the covering number $\cN_1(\eps'/8, \cF,2N)$, we use Theorem 18.4 from \cite{AntBar99} which gives that $\cN_1(\eps'/8, \cF,2N) \leq e(d'+1)(16e/\eps')^{d'}$ where $d' = \mathrm{Pdim}(\cF)$ is the pseudo-dimension of $\cF$. From that, we conclude that choosing any 
    \begin{align*}
        N>\frac{32}{\eps'^2}\log(4e(d'+1))+\frac{32d'}{\eps'^2}\log(16e/\eps')
    \end{align*}
    makes the probability in \Cref{eq:uniform} less than 1.
    
    It remains to bound $d'$ from above, which can be done as follows. Define the \emph{subgraph class} associated to the family $\cF$
    \begin{align*}
        \cB_{\cF} :=  \left\lbrace B_h \; | \; h \in \cF \right\rbrace \;,
    \end{align*}
    where for any $h \in \cF$, $B_h:\R^{d+1} \to \{0,1\}$ is defined as $B_h(x,y) = \1\{ h(x) \geq y \}$.
    The pseudo-dimension is defined to be $\mathrm{Pdim}(\cF) = \mathrm{VCdim}(\cB_{\cF})$ (see Section 11.2 in \cite{AntBar99}). By Theorem 2.3 in \cite{goldberg1995bounding}, we have that $\mathrm{VCdim}(\cB_{\cF}) = O(km)$ since it $\cB_\cF$ functions that are parametrized by vectors of $\R^k$ (same as for family $\cF$) and the functions of $B_h(x,y)$ can be computed using at most $m+2$ operations ($m$ to compute $h$ and two to do the comparison with $y$ and threshold). Putting everything together, it suffices to choose   
    \begin{align*}
        N = C \frac{1}{\eps'^2}(\log(km) + km\log(1/\eps'))
    \end{align*}
in order to make the probability in \Cref{eq:uniform} less than 1. In that case, by probabilistic argument, there exists at least one set of $N$ points satisfying the desired event.
\end{proof}

\RelationClaim*
\begin{proof}
By \Cref{cond:streaming} we have that for all the $N$ samples of the cover, $\tilde{g_t}(x)\geq 0.2 {g}_t(x) - 0.8 (\delta^2/\eps^2) \|\bM_t \|_F^2$.
    Recall the definitions $\tilde{\tau}_t(x) = \tilde{g}(x) \1\{\tilde{g}(x)>C_3\|\bU_t \|_F^2\hat{\lambda}_t/\eps \}, {\tau}_t(x) = {g}(x) \1\{{g}(x)>C_3\|\bM_t \|_F^2{\lambda}_t/\eps \}$. We split into cases based on whether each of $g_t,\tilde{g}_t$ has been zeroed by their thresholding operation:
    \begin{itemize}
        \item If $\tau_t(x)$ has been zeroed, (i.e., $g_t(x)<C_3\|\bU_t \|_F^2{\lambda}_t/\eps$), the claim trivially holds since the left-hand side is zero.
        \item If  none of $\tilde{\tau}_t(x),{\tau}_t(x)$ has been zeroed, then  $\tilde{\tau}_t(x) = \tilde{g}_t(x)$ and ${\tau}_t(x) = {g}_t(x)$, thus the claim holds by the aforementioned fact that $\tilde{g_t}(x)\geq 0.2 {g}_t(x) -  0.8(\delta^2/\eps^2)\|\bM_t \|_F^2$.
        \item If $\tilde{\tau}_t(x)$ has been zeroed but ${\tau}_t(x)$ has not, then the worst case  is $\tilde{g_t}(x) = 0.2 {g}_t(x) -  0.8(\delta^2/\eps^2)\|\bM_t \|_F^2$. This means that in this case:
        \begin{align*}
            \tau_t(x) &\leq \frac{1}{0.2}C_3 \frac{\hat{\lambda}_t}{\eps}\|\bU_t \|_F^2  + 4\frac{\delta^2}{\eps^2}\|\bM_t \|_F^2 \\
            &< 18 C_3 \frac{{\lambda}_t}{\eps}\|\bM_t \|_F^2 + 4\frac{\delta^2}{\eps^2}\|\bM_t \|_F^2\\
            &\leq 18 C_3 \frac{{\lambda}_t}{\eps}\|\bM_t \|_F^2 + \frac{12}{C_2}\frac{\lambda_t}{\eps}\|\bM_t \|_F^2 \;,
        \end{align*}
        where in the second inequality we used that $ \hat{\lambda}_t \leq 3 \lambda_t$ and $\|\bU_t \|_F^2 \leq 1.2 \|\bM_t \|_F^2$ due to \Cref{cond:streaming}, and in the last inequality we used that $\delta^2/\eps < \hat{\lambda}_t/C_2$ and $\hat{\lambda}_t \leq 3 \lambda_t$ (\Cref{cond:streaming} again).
    \end{itemize}
\end{proof}

\begin{remark}[On the choice of $\outerl$ and $\innerl$] \label{remark:choice_of_L}
We comment on how the values for the number of iterations $\outerl$ and $\innerl$ that are used in \Cref{alg:streaming} are derived. First, the derivation of ${\outerl= C \log d \log(dR/(\delta^2/\eps))}$  for large enough constant $C$ is  identical to that of \Cref{sec:correctness} (see \Cref{eq:choose\outerl}).
We will thus focus on $\innerl$. We note that in the proof of \Cref{lem:remaining_thing} we use \Cref{cor:ran_cov_specific} with $\eps' \gtrsim \frac{(\delta^2/\eps)^{{2 \log d}}}{\eps (CdR^2 + 1 + C_1 \delta^2/\eps)^{{C \log d}}}$. This means that the cover $S_{cover}$ of that lemma has size bounded by
\begin{align*}
    |S_{cover}| \leq \frac{1}{\eps'^3}d^4\outerl^2 \innerl^2\left( \frac{dR}{\delta^2/\eps} \right)^{O(\log d)} 
    \lesssim \frac{ (CdR^2 + 1 + C_1 \delta^2/\eps)^{O(\log d)}}{(\delta^2/\eps)^{O(\log d)}} \innerl^2 \;.
\end{align*}
The analog of \Cref{cl:jl-cons}
thus  requires that $\innerl$ is multiple of $\log\left(\frac{|S_{cover}|+d}{\failp}  \right)$, where $\failp$ is the desired probability of failure. Note that we have the following (rough) bounds
\begin{align*}
    \log\left(\frac{|S_{cover}|+d}{\failp}  \right) &\lesssim \log \left( \frac{ (\innerl(CdR^2 + 1 + C_1 \delta^2/\eps)^{O(\log d)}}{\failp (\delta^2/\eps)^{O(\log d)}}  \right)\\
    &\lesssim \log^2(d) \log(CdR^2+1+C_1\delta^2/\eps)\log\left( \frac{1}{\failp \eps} \right) \log(\innerl) \;.
\end{align*}
Thus, we want to choose $\innerl$  such that it holds $\innerl \geq C \log^2(d) \log(CdR^2+1+C_1\delta^2/\eps)\log\left( \frac{1}{\failp \eps} \right)  \log \innerl$. Using the basic fact that for any $a>0$, $x \geq 2a \log a \Rightarrow x \geq a \log x$ with $a=C \log^2(d) \log(CdR^2+1+C_1\delta^2/\eps)\log\left( \frac{1}{\failp \eps} \right) )$, it suffices to choose any $L$ satisfying the following
\begin{align*}
    \innerl \geq { C }\log^2(d) \log\left(CdR^2+1+C_1\frac{\delta^2}{\eps} \right) \log\left( \frac{1}{\failp \eps} \right)\log\left(\log^2(d) \log\left(CdR^2+1+C_1\frac{\delta^2}{\eps} \right)\log\left( \frac{1}{\failp \eps} \right)   \right)   \;.
\end{align*}
We see that the choice in \Cref{alg:streaming} satisfies this condition.
\end{remark}

\subsection{Omitted Proofs from \Cref{sec:concentration}}  \label{sec:omitted_concentration}

\MatrixProp*
\begin{proof}

We have the following:
\begin{align*}
    \bB^p - \prod_{i=1}^p\bB_i  &= \sum_{i=0}^{p-1} \left( \left(\prod_{j=1}^i\bB_j\right)  \bB^{p-i}   -   \left(\prod_{j=1}^{i+1}\bB_j\right)  \bB^{p-i-1}\right) = \sum_{i=0}^{p-1} \left( \left(\prod_{j=1}^i\bB_j\right)  \left(\bB - \bB_{i+1}  \right)\bB^{p-i-1} \right) \;.
 \end{align*}
Using that $\|\bB_j\|_2 \leq (1 + \delta) \|\bB\|_2 $, we obtain the following bound:
\begin{align*}
\left\|\bB^p - \prod_{i=1}^p\bB_i\right\|_2 &\leq \sum_{i=0}^{p-1} \left\| \left(\prod_{j=1}^i\bB_j\right)  \left(\bB - \bB_{i+1}  \right)\bB^{p-i-1}  \right\|_2 \\
&\leq \sum_{i=0}^{p-1} \left( \left(\prod_{j=1}^i\|\bB_j\|\right)  \|\bB - \bB_{i+1}  \|_2 \|\bB\|_2^{p-i-1} \right) \\
&\leq \sum_{i=0}^{p-1} \|\bB\|_2^p (1 + \delta)^i \delta \leq p \delta (1 + \delta)^p \|\bB\|_2^p.
\end{align*}
\end{proof}

 \subsection{Omitted Proofs from \Cref{sec:proof_of_estimator}}  \label{sec:omitted_last}

 \begin{lemma} \label{lem:basic_estimator}
     For any $\delta,\failp \in (0,1)$ and any distribution  $D$ on $\R^d$ with mean $\mu$ and covariance matrix $\vec \Sigma$, there exists an estimator $\widehat{\mu}$ on $n=O\left( ({\tr(\vec \Sigma)}/{\delta^2}) \log(1/\failp)\right)$ i.i.d.~samples from $D$, such that $\| \widehat{\mu} - \mu \|_2 = O(\delta)$. Moreover, this $\widehat{\mu}$ can be computed in time $O(n d\log(1/\failp))$ and using memory $O(d \log(1/\failp))$.
 \end{lemma}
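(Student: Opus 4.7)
I will implement a standard geometric median-of-means estimator in a streaming fashion. Partition the incoming stream of $n$ samples into $k := C\log(1/\failp)$ disjoint batches of size $m := n/k = \Theta(\tr(\vec \Sigma)/\delta^2)$; process samples in a single pass, accumulating only the batch means (so memory is $O(kd) = O(d\log(1/\failp))$), and finally aggregate the $k$ batch means by a Hsu--Sabato style tournament.

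\textbf{Per-batch concentration.} For each batch $i\in[k]$, let $\widehat{\mu}_i$ be the empirical mean of its $m$ samples. Since $\E\bigl[\|\widehat{\mu}_i-\mu\|_2^2\bigr]=\tr(\vec \Sigma)/m$, Chebyshev's inequality gives $\pr\bigl[\|\widehat{\mu}_i-\mu\|_2\le \delta\bigr]\ge 3/4$ provided $m\ge 4\tr(\vec \Sigma)/\delta^2$. Next, let $E_i$ be the event that $\|\widehat{\mu}_i-\mu\|_2\le \delta$; the indicator variables $\1\{E_i\}$ are independent Bernoullis of mean at least $3/4$, so by Hoeffding's inequality at least $2k/3$ of them occur simultaneously with probability $\ge 1-\failp$ once $k\ge C'\log(1/\failp)$ for a suitable constant $C'$.

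\textbf{Aggregation rule and correctness.} For each $i\in[k]$, let $r_i$ be the median of $\{\|\widehat{\mu}_i-\widehat{\mu}_j\|_2 : j\ne i\}$, and return $\widehat{\mu}:=\widehat{\mu}_{i^\star}$ where $i^\star\in\argmin_i r_i$. Condition on the event that the index set $I:=\{j : E_j \text{ holds}\}$ satisfies $|I|\ge 2k/3$. For any $i\in I$, a strict majority of indices $j$ lie in $I$, and for every such $j$ we have $\|\widehat{\mu}_i-\widehat{\mu}_j\|_2\le 2\delta$ by the triangle inequality. Hence $r_i\le 2\delta$ for all $i\in I$, and in particular $r_{i^\star}\le 2\delta$. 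At least half of the $\widehat{\mu}_j$'s therefore lie within distance $2\delta$ of $\widehat{\mu}_{i^\star}$; since $|I|>2k/3$, these two sets of more than $k/2$ indices must intersect, producing some $j^\star\in I$ with $\|\widehat{\mu}_{j^\star}-\widehat{\mu}_{i^\star}\|_2\le 2\delta$. A final triangle inequality yields
\[
\|\widehat{\mu}-\mu\|_2 \le \|\widehat{\mu}_{i^\star}-\widehat{\mu}_{j^\star}\|_2 + \|\widehat{\mu}_{j^\star}-\mu\|_2 \le 2\delta+\delta = O(\delta).
\]

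\textbf{Resource accounting and main obstacle.} In the streaming implementation, only the running sum for the active batch plus the $k$ completed batch means need to be kept in memory, which is $O(kd)=O(d\log(1/\failp))$. Each of the $n$ arrivals contributes an $O(d)$ vector addition to the current running sum, and the tournament aggregation takes $O(k^2 d)$ time, for a total runtime of $O(nd+k^2d)\le O(nd\log(1/\failp))$ as stated. The proof is essentially routine; the only subtlety is to calibrate the constants in the Chebyshev step (mean squared error versus $\delta^2$) and in the Hoeffding step so that a strict $2k/3$ majority of $\delta$-good batch means can be guaranteed with probability $1-\failp$, which is what drives the factor of $\log(1/\failp)$ in both the sample and memory complexity.
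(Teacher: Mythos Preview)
Your proof is correct and follows essentially the same median-of-means strategy as the paper: compute $k=\Theta(\log(1/\failp))$ batch means on disjoint blocks of size $\Theta(\tr(\vec\Sigma)/\delta^2)$, use Markov/Chebyshev to show each batch mean is $\delta$-accurate with constant probability, and apply Hoeffding to guarantee a large majority of good batches. The only cosmetic difference is the aggregation rule---the paper invokes its \textsc{NaivePrune} primitive (Claim~\ref{claim:naivepruning}) and outputs any surviving point, while you use the Hsu--Sabato ``minimum median pairwise distance'' rule; both yield $O(\delta)$ error in $O(k^2 d)$ time and $O(kd)$ memory.
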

 \begin{proof}
 Let $X_1,\ldots, X_m$ be independent samples from $D$. We first show that the empirical mean $Y:=(1/m)\sum_{i=1}^m X_i$, is $\delta$-accurate with constant probability. 
 \begin{align*}
     \E[ \snorm{2}{ Y - \mu }^2] = \sum_{j=1}^d \E[(Y_j - \mu_j)^2]
     = \frac{1}{m}  \sum_{j=1}^d \vec  \Sigma_{jj} = \frac{\tr(\vec \Sigma)}{m} \;.
 \end{align*}
 By Markov's inequality, we get that
 \begin{align}\label{eq:emp_mean_is_accurate}
     \pr[\|Y-\mu\|_2^2 > \delta^2] \leq \frac{\tr(\vec \Sigma)}{ m \delta^2} \leq \frac{1}{20}\;,
 \end{align}
 where the last inequality is true if we choose  $m={20\tr(\vec \Sigma)}/{\delta^2}$. Having \Cref{eq:emp_mean_is_accurate} at hand, the probability of success of the above estimator can be boosted to $1-\failp$ by using \Cref{claim:naivepruning}. We use that claim with $G$ being the distribution of $Y$, $B=G$, $\eps=1/20$ and $R=\delta$. This completes the proof.
 \end{proof}

 As a corollary of the above, we obtain the estimators $\widehat{\mu}_t$ of \Cref{alg:streaming}.
\FinalStep*
\begin{proof}
We use the estimator of \Cref{lem:basic_estimator} with $\failp/\outerl$ in place of $\failp$. It remains to bound $\tr(\vec \Sigma_t)$. We have that $\dtv(P_t,G)=1-O(\eps)$, thus, by \Cref{fact:dtv-decompose} we can write $P_t = (1-\alpha)G_0 + \alpha B$, with $\alpha=O(\eps)$ and $G_0(x)=h(x)G(x)/(\int h(x)G(x)\d x)$ some weighted version of the inlier's distribution with $\E_{X \sim G}[h(X)]=1-\alpha$ (same argument that we have used before in the proof of \Cref{lem:certificate}). We have that
\begin{align*}
        \vec{\Sigma}_t = (1-\alpha) \vec{\Sigma}_{G_{0}} + \alpha \vec{\Sigma}_{B} + \alpha(1-\alpha) (\mu_{G_0} - \mu_{B})(\mu_{G_0} - \mu_{B})^T \;.
\end{align*}
Due to stability, the first term has $\vec{\Sigma}_{G_{0}} \preceq (1 + \delta^2/\eps)\bI_d$. For the second term we use that
\begin{align*}
    \tr(\vec{\Sigma}_{B}) = \E_{X \sim B}[\tr((X-\mu_B)(X-\mu_B)^T)] = \E_{X \sim B}[\snorm{2}{X-\mu_B}^2] \leq O(R^2) \;.
\end{align*}
We also bound the trace of the last term by $ O(\eps R^2)$. Therefore, we obtain that $\tr(\vec \Sigma_t) \lesssim d(1+\delta^2/\eps) + \eps R^2$.

\end{proof}

\section{Adaptive Choice of Upper Bound on Covariance}
\label{sec:lepski}

 In this section, we show that a simple procedure can be used 
 to make the algorithm adaptive to the scale of covariance 
 (such a procedure is useful for some of our applications in \Cref{sec:applications-main}).

As noted earlier, the definition of stability that we have used so far (\Cref{def:stability1,def:stability2}) 
was designed for distributions with covariance matrix comparable to $\bI_d$. 
In particular, if inliers satisfy $\cov[X] \preceq \bI_d$, 
then our algorithms result in error $O(\sqrt{\eps})$. 
In many practical cases, some of which are encountered in Section~\ref{sec:applications-main}, 
the inliers are much better concentrated, 
satisfying $\cov[X] \preceq \sigma \bI_d$, with $\sigma$ much smaller than $1$. 
In that case, the optimal asymptotic error is  $\Theta(\sigma \sqrt{\eps})$). 
If $\sigma$ is known beforehand, then a simple preprocessing step 
allows our algorithms to obtain the error $O(\sigma \sqrt{\eps})$.
We now describe a procedure using Lepski's method~\cite{lepskii1991problem,birge2001alternative} 
that can adapt to the setting when $\sigma$ is unknown. 
Concretely,  we consider the task of robustly estimating the mean $\mu$ 
of a distribution where inliers have bounded covariance, $\cov[X] \preceq \sigma^2 \bI_d$, 
but $\sigma$ is unknown to the algorithm.

Let $\mathrm{RobustMean}(\tilde{\sigma},\gamma)$ be any black-box robust mean estimation algorithm, 
where $\tilde{\sigma}$ is a guess for an upper bound on the covariance of inliers 
(ideally, we would like to use $\tilde{\sigma}=\sigma$) and $\gamma$ is the probability of failure. 
The procedure below tries different values for $\tilde{\sigma}$ 
in order to find a vector that is as good as the output of $\mathrm{RobustMean}$ 
when run with the best choice of $\tilde{\sigma}=\sigma$. 
The assumption made here is that $\sigma$ belongs in some known interval $[A,B]$. 

As a small note, a more explicit notation would be $\mathrm{RobustMean}(S,\tilde{\sigma},\gamma)$, 
where $S$ is the dataset used, but we omit $S$ because this depends on the data-access model: 
If a streaming model is assumed, then $S$ necessarily has to be different in each call of the algorithm, 
otherwise there is no need for using different datasets.

\begin{algorithm}[h!]  
    \caption{Adaptive search for $\sigma$} 
    \label{alg:lepski}
    \begin{algorithmic}[1] 
    \State \textbf{input:} $A,B,\gamma,r(\cdot)$  
    \State{Denote $\tilde{\sigma}_j := B/2^j$ for $j=0,1,\ldots,\log(B/A)$ and set $\gamma':=\gamma/\log(B/A)$.   }
      \State $J \gets 0$
      \State $\widehat{\mu}^{(0)} \gets \mathrm{RobustMean}(\tilde{\sigma}_0,  \gamma')$
      \While{ $\tilde{\sigma}_j \geq A$ and $\| \widehat{\mu}^{(J)} - \widehat{\mu}^{(j)} \|_2 \leq r(\tilde{\sigma}_J) + r(\tilde{\sigma}_j)$ for all $j =0,1,\ldots,J-1$}  \label{line:while}
      \State $J \gets J+1$.
      \State $\widehat{\mu}^{(J)} \gets \mathrm{RobustMean}(\tilde{\sigma}_J,  \gamma')$
      \EndWhile
      \State $\widehat{J} \gets J-1$\\   
      \textbf{return} $\widehat{\mu}^{(\widehat{J})}$
    \end{algorithmic}  
  \end{algorithm}

\begin{theorem}\label{thm:lepski}
Let $\mu \in \R^d$ $A,B>0$, $\sigma\in [A,B]$, and a non-decreasing function $r:\R^+ \to \R^+$. 
Suppose that $\mathrm{RobustMean}(\tilde{\sigma},\gamma)$ 
is a black-box algorithm which is guaranteed to return a vector $\widehat{\mu}$ 
such that $\|\widehat{\mu}-\mu\|_2\leq r(\tilde{\sigma})$ with probability $1-\gamma$, 
whenever $\tilde{\sigma} \geq \sigma$. Then, \Cref{alg:lepski} returns $\widehat{\mu}^{(\widehat{J})}$ such that, 
with probability at least $1-\gamma$, we have that $\|\widehat{\mu}^{(\widehat{J})} - \mu  \|_2 \leq 3 r( 2\sigma)$. 
Moreover, \Cref{alg:lepski} calls $\mathrm{RobustMean}$ $O(\log(B/A))$ times 
with desired failure probability set to $\gamma/\log(B/A)$ 
and using at most $ O(d \log(B/A))$ additional memory.
\end{theorem}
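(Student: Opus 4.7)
The plan is to run the standard Lepski-type argument, tailored to the way the while loop in \Cref{alg:lepski} is phrased. The first step is to fix a good event on which all ``reliable'' calls to $\mathrm{RobustMean}$ succeed. Since the values $\tilde{\sigma}_j=B/2^j$ are geometric and only those with $\tilde{\sigma}_j\ge A$ are ever queried, there are at most $\log_2(B/A)$ calls overall. On those calls with $\tilde{\sigma}_j\ge \sigma$, the oracle's guarantee gives $\|\widehat{\mu}^{(j)}-\mu\|_2\le r(\tilde{\sigma}_j)$ with probability $1-\gamma'$ each. The choice $\gamma'=\gamma/\log(B/A)$ together with a union bound yields a good event $\mathcal{E}$ of probability at least $1-\gamma$ on which every such call is accurate. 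I would condition on $\mathcal{E}$ for the rest of the argument.

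Next, let $j^{\star}:=\max\{j\ :\ \tilde{\sigma}_j\ge \sigma\}$, so by construction $\sigma\le \tilde{\sigma}_{j^{\star}}<2\sigma$ (because $\tilde{\sigma}_{j^{\star}+1}=\tilde{\sigma}_{j^{\star}}/2<\sigma$). I would then show that $\widehat{J}\ge j^{\star}$. For any $J\le j^{\star}$, every previously computed estimate $\widehat{\mu}^{(i)}$ with $i\le J$ satisfies $\|\widehat{\mu}^{(i)}-\mu\|_2\le r(\tilde{\sigma}_i)$ under $\mathcal{E}$, so by the triangle inequality $\|\widehat{\mu}^{(J)}-\widehat{\mu}^{(i)}\|_2\le r(\tilde{\sigma}_J)+r(\tilde{\sigma}_i)$ for all $i<J$. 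Thus the consistency check in \Cref{line:while} passes at every such $J$ and the loop is guaranteed to survive through $J=j^{\star}$, implying $\widehat{J}\ge j^{\star}$.

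The correctness bound then splits into two cases. If $\widehat{J}=j^{\star}$, then by $\mathcal{E}$ we directly have $\|\widehat{\mu}^{(\widehat{J})}-\mu\|_2\le r(\tilde{\sigma}_{j^{\star}})\le r(2\sigma)$ by monotonicity of $r$. If $\widehat{J}>j^{\star}$, the consistency check must have succeeded when $J=\widehat{J}$ (otherwise the loop would have exited one step earlier), so in particular
\[
\|\widehat{\mu}^{(\widehat{J})}-\widehat{\mu}^{(j^{\star})}\|_2\le r(\tilde{\sigma}_{\widehat{J}})+r(\tilde{\sigma}_{j^{\star}}).
\]
Combining with $\|\widehat{\mu}^{(j^{\star})}-\mu\|_2\le r(\tilde{\sigma}_{j^{\star}})$ via the triangle inequality and using that $\tilde{\sigma}_{\widehat{J}}\le \tilde{\sigma}_{j^{\star}}<2\sigma$ together with monotonicity of $r$, I obtain $\|\widehat{\mu}^{(\widehat{J})}-\mu\|_2\le r(\tilde{\sigma}_{\widehat{J}})+2r(\tilde{\sigma}_{j^{\star}})\le 3r(2\sigma)$, as desired.

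The resource bounds are immediate: the while loop iterates at most $O(\log(B/A))$ times (one increment per halving of $\tilde{\sigma}$ down to $A$), so $\mathrm{RobustMean}$ is invoked at most $O(\log(B/A))$ times, each with failure parameter $\gamma/\log(B/A)$; storing the estimates $\widehat{\mu}^{(0)},\dots,\widehat{\mu}^{(\widehat{J})}\in\R^d$ for use in subsequent consistency checks requires $O(d\log(B/A))$ additional memory. The main subtlety I anticipate is purely bookkeeping: being careful about off-by-one issues in the while-loop semantics (the check uses the current $J$ before incrementing, and $\widehat{J}=J-1$), and ensuring that the consistency condition witnessed when $J=\widehat{J}$ is indeed available to be invoked in the $\widehat{J}>j^{\star}$ case; none of this should pose a genuine mathematical difficulty.
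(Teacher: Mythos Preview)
Your proposal is correct and follows essentially the same approach as the paper: define the good event via a union bound over the at most $\log(B/A)$ reliable calls, show the loop survives through the critical index $j^\star$ (so $\widehat{J}\ge j^\star$), and then combine the consistency check between $\widehat{\mu}^{(\widehat{J})}$ and $\widehat{\mu}^{(j^\star)}$ with the triangle inequality and monotonicity of $r$. The only cosmetic difference is that you split into the cases $\widehat{J}=j^\star$ and $\widehat{J}>j^\star$, whereas the paper treats them uniformly (the first case is trivial since the distance is zero); your extra care with the while-loop bookkeeping is accurate and does not change the argument.
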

\begin{proof}
For $j=0,1,\ldots, \log(B/A)$, denote by $\cE_j$ the event that $\| \widehat{\mu}^{(j)} - \mu \|_2 \leq r( \tilde{\sigma}_j)$. 
Let  $J$ be the index corresponding to the value of the unknown parameter $\sigma$, 
i.e., $\tilde{\sigma}_{J+1} \leq \sigma \leq \tilde{\sigma}_{J}$.
    Conditioned on the event $\cap_{j=0}^J \cE_j$, we have that 
    $\| \widehat{\mu}^{(j)} - \mu \|_2 \leq r(\tilde{\sigma}_j)$ for all $j=0,1,\ldots, J$. 
    Using the triangle inequality, this gives that 
    $\|\widehat{\mu}^{(J)} - \widehat{\mu}^{(j)} \|_2 \leq r( \tilde{\sigma}_J ) + r(\tilde{\sigma}_j)$. 
    This means that the stopping condition of  \Cref{line:while} is satisfied on round $J$ and thus, 
    if $\widehat{\mu}^{(\widehat{J})}$ denotes the vector returned by the algorithm, we have that
    \begin{align*}
        \|\widehat{\mu}^{(\widehat{J})} - \widehat{\mu}^{(J)} \|_2 
        \leq r( \tilde{\sigma}_{\widehat{J}}) + r(\tilde{\sigma}_J) \leq 2r(\tilde{\sigma}_J) \leq 2 r(2\sigma)\;,
    \end{align*}
    where the first inequality uses that the condition of  \Cref{line:while}, 
    the second uses that $r$ is non-decreasing and $ \tilde{\sigma}_{\widehat{J}} \leq \tilde{\sigma}_{J}$, 
    and the last one uses that $J$ was defined to be such that $\tilde{\sigma}_{J+1} \leq \sigma \leq \tilde{\sigma}_{J}$ 
    so multiplying $\sigma$ by $2$ makes it greater than $\tilde{\sigma}_{J}$. 
    Using the triangle inequality once more, we get $\|\widehat{\mu}^{(\widehat{J})} - \mu  \|_2 \leq 3 r( 2\sigma)$.
    Finally, by union bound on the events $\cE_j$, the probability of error is upper bounded by 
    $\sum_{j=0}^{J} \gamma' \leq \gamma$. 
    The additional memory requirement of this algorithm is to store $\{\widehat{\mu}_j: j \in \{0,\dots,\log(B/A)\} \}$.
\end{proof}

We now state the implications that \Cref{thm:lepski} has for \Cref{alg:near-linear-only,alg:streaming}, given in \Cref{sec:near_linear_main_body,sec:low-memory-main}:
\begin{corollary}
Let $A,B>0$. In the setting of \Cref{thm:polylog_passes}, 
let $\sigma>0$ be such that the  scaled version $S' = \{x/\sigma\; : \; x \in S\}$ 
of the dataset $S$ is $(C\eps,\delta)$-stable with respect to $\mu/\sigma$. 
Assuming that $\sigma \in [A,B]$, there exists an algorithm that given $S,\eps,\delta,\failp,A,B$ (but not $\sigma$), 
accesses each point of $S$ at most $\polylog\left(d,1/\eps,1/\failp,B/A \right)$ times, 
runs in time $n d  \,\polylog\left(d,1/\eps,1/\failp,B/A  \right)$, 
uses additional memory $d\, \polylog\left(d,1/\eps,1/\failp,B/A  \right)$, 
and outputs a vector $\widehat{\mu}$ such that, with probability at least $1-\failp$, 
it holds $\|\mu - \widehat{\mu}\|_2 = O(\sigma\delta)$.
\end{corollary}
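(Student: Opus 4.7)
The plan is to apply Lepski's method (\Cref{alg:lepski}, analyzed in \Cref{thm:lepski}) using the multi-pass streaming estimator of \Cref{thm:polylog_passes} as the base routine. Concretely, define $\mathrm{RobustMean}(\tilde\sigma,\gamma')$ to scale each streamed point to $x/\tilde\sigma$ on the fly, invoke the algorithm of \Cref{thm:polylog_passes} on the scaled stream with stability target $\delta':=\sqrt{\delta^2+\eps}$ and confidence $\gamma'$, and rescale the returned estimate by $\tilde\sigma$. Since \Cref{alg:lepski} calls this base routine $O(\log(B/A))$ times with $\gamma'=\failp/\log(B/A)$, and only needs to additionally store the $O(\log(B/A))$ candidate estimates $\widehat\mu^{(j)}\in\R^d$, the claimed per-point pass count, runtime, and $d\,\polylog(d,1/\eps,1/\failp,B/A)$ memory bounds follow directly from the corresponding bounds in \Cref{thm:polylog_passes} (noting that $\polylog(1/\gamma')$ only introduces a $\polylog(\log(B/A))$ factor).

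The main technical step is verifying the hypothesis of \Cref{thm:lepski}: for every $\tilde\sigma\geq\sigma$, with probability $\geq 1-\gamma'$ the routine $\mathrm{RobustMean}(\tilde\sigma,\gamma')$ produces $\widehat\mu$ with $\|\widehat\mu-\mu\|_2\leq r(\tilde\sigma)$ for a non-decreasing $r$. Write $T=S/\sigma$ (the given $(C\eps,\delta)$-stable set with respect to $\mu/\sigma$), $T'=S/\tilde\sigma=\alpha T$, and $\alpha=\sigma/\tilde\sigma\in(0,1]$. For any weight $w$ with $\E_{\cU(T)}[w]\geq 1-C\eps$, the stability of $T$ immediately gives
\[
\snorm{2}{\mu_{w,\cU(T')}-\mu/\tilde\sigma}=\alpha\,\snorm{2}{\mu_{w,\cU(T)}-\mu/\sigma}\leq\alpha\delta\leq\delta,
\]
while for the centered second moment (using $X-\mu/\tilde\sigma=\alpha(Y-\mu/\sigma)$ for $X=\alpha Y\in T'$)
\[
\snorm{2}{\overline{\vec\Sigma}_{w,\cU(T')}-\vec I_d}=\snorm{2}{\alpha^2\overline{\vec\Sigma}_{w,\cU(T)}-\vec I_d}\leq(1-\alpha^2)+\alpha^2\delta^2/\eps\leq(\eps+\delta^2)/\eps,
\]
so $T'$ is $(C\eps,\delta')$-stable with $\delta'=\sqrt{\delta^2+\eps}$. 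Invoking \Cref{thm:polylog_passes} at stability target $\delta'$ therefore returns $\widehat\nu$ with $\|\widehat\nu-\mu/\tilde\sigma\|_2=O(\delta')$; rescaling gives $r(\tilde\sigma):=C\,\tilde\sigma\sqrt{\delta^2+\eps}$, which is non-decreasing. Plugging into \Cref{thm:lepski} yields $\|\widehat\mu^{(\widehat J)}-\mu\|_2\leq 3\,r(2\sigma)=O(\sigma\sqrt{\delta^2+\eps})=O(\sigma\delta)$, where the last equality uses the natural regime $\delta=\Omega(\sqrt\eps)$ (matching, in particular, the bounded-covariance setting $\delta=\Theta(\sqrt\eps)$ that motivates this adaptive search).

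The main obstacle I expect is the slight looseness in the stability parameter after scaling: when $\tilde\sigma\gg\sigma$, the scaled covariance shrinks well below $\vec I_d$, so the $1-\alpha^2$ term above dominates and $T'$ inherits only the weaker $(C\eps,O(\sqrt\eps))$-stability rather than the full $\delta$ of $T$. This is harmless for Lepski's guarantee because that guarantee only uses $r$ at $\tilde\sigma\in[\sigma,2\sigma]$, where $\alpha\in[1/2,1]$ and the resulting degradation is only a constant factor absorbed into the $3r(2\sigma)$ bound of \Cref{thm:lepski}. One bookkeeping point is that the stability target passed to the base routine must be a valid upper bound at \emph{every} grid scale $\tilde\sigma_j$; since $\sqrt{\delta^2+\eps}$ does not depend on $\tilde\sigma$, a single conservative target $\delta'$ suffices across the $O(\log(B/A))$ base calls, preserving both the correctness and the claimed resource bounds.
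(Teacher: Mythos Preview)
Your approach is the same as the paper's: wrap the multi-pass estimator of \Cref{thm:polylog_passes} inside Lepski's search (\Cref{alg:lepski}, \Cref{thm:lepski}), scaling the stream by $1/\tilde\sigma$ and rescaling the output. The resource accounting is identical. Where you go further than the paper is in actually checking the stability of the scaled set $S/\tilde\sigma$; the paper's proof simply writes $r(\tilde\sigma)=C'\sigma\delta$ and invokes \Cref{thm:lepski} without this verification.

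Your stability computation is correct, and the degradation you find is genuine: for $\alpha=\sigma/\tilde\sigma\ll 1$ the covariance condition in \Cref{def:stability1} forces $\delta'\gtrsim\sqrt{\eps}$, so the best uniform target you can hand to the base routine is $\delta'=\sqrt{\delta^2+\eps}$, yielding the non-decreasing $r(\tilde\sigma)=C\tilde\sigma\sqrt{\delta^2+\eps}$ and final error $O(\sigma\sqrt{\delta^2+\eps})$. This equals the stated $O(\sigma\delta)$ only when $\delta\gtrsim\sqrt\eps$; the corollary as written only assumes $\delta\geq\eps$, so strictly speaking you have not proved it in full generality. That said, the paper's own argument has the same (unacknowledged) gap, and the entire Lepski section is explicitly motivated by the bounded-covariance regime $\delta=\Theta(\sqrt\eps)$, which is also how the downstream applications (\Cref{cor:robust-gd}) use it. So your restriction is honest and harmless for the paper's purposes.

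One small correction: your remark that ``Lepski's guarantee only uses $r$ at $\tilde\sigma\in[\sigma,2\sigma]$'' is misleading. The while-loop in \Cref{alg:lepski} compares against $r(\tilde\sigma_j)$ at \emph{every} scale, and the proof of \Cref{thm:lepski} conditions on the events $\|\widehat\mu^{(j)}-\mu\|_2\le r(\tilde\sigma_j)$ for all $j\le J$, i.e., at every $\tilde\sigma_j\ge\sigma$. Your actual fix --- passing the uniform $\delta'=\sqrt{\delta^2+\eps}$ so that the base routine succeeds at every grid scale --- is exactly what is needed; the justification should be that this uniform $\delta'$ makes $r(\tilde\sigma_j)$ a valid bound at all $j$, not that only the scales near $\sigma$ matter.
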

\begin{proof}
In order to use the search method of \Cref{alg:lepski}, 
we define the procedure $\mathrm{RobustMean}(\tilde{\sigma},  \gamma)$ to be the following:
\begin{itemize}
\item Let $\tilde{S}= \{ x/\tilde{\sigma} \; : \; x \in S\}$.
\item Let $\tilde{\mu}$ be the vector found by the estimator of \Cref{thm:polylog_passes} 
on $\tilde{S}$ using $\gamma$ for the desired probability of failure.
\item Return  $\tilde{\sigma} \tilde{\mu}$.
\end{itemize}
\Cref{thm:lepski} with $r(\tilde{\sigma})=C'\sigma \delta$, for a sufficiently large $C'>0$, 
implies the correctness. In terms of resources used, \Cref{alg:lepski} calls the robust mean 
estimation algorithm at most $\log(B/A)$ times, 
and thus the running time gets multiplied by $\log(B/A)$. 
We also need to store one vector for each call, thus $d \log(B/A)$ additional memory suffices.
\end{proof}

\begin{corollary} \label{cor:low_memory_adaptive}
Let $A,B>0$. In the setting of \Cref{th:main},  
let $\sigma>0$ be such that the distribution $D'$ of the points $X/\sigma$, 
$X \sim D$  is $(C\eps,\delta)$-stable with respect to $\mu$. 
Assuming that $\sigma \in [A,B]$, there exists an algorithm that given 
\begin{align}\label{eq:sample_complex}
n =O \left( R^2 \max\left(d,\frac{\eps}{\delta^2}, \frac{(1+\delta^2/\eps)d}{\delta^2 R^2}, \frac{\eps^2d}{\delta^4},\frac{R^2\eps^2}{\delta^2}, \frac{R^2\eps^4}{\delta^6} \right)   \polylog\left(d,\frac{1}{\eps},\frac{1}{\failp},R,\frac{B}{A} \right)   \right)  
\end{align}
samples in a stream according to the model of \Cref{def:streaming}, 
and given  the parameters $\eps,\delta,\failp,A,B$ (but not $\sigma$), 
runs in time  $n d  \,\polylog\left(d,1/\eps,1/\failp,R,B/A  \right)$, 
uses additional memory $d\, \polylog\left(d,1/\eps,1/\failp,R,B/A  \right)$, 
and returns a vector $\widehat{\mu}$ such that, 
with probability at least $1-\failp$, it holds $\|\mu - \widehat{\mu}\|_2 = O(\sigma\delta)$.
\end{corollary}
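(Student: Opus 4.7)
The plan is to instantiate the generic adaptive scale-search framework of \Cref{thm:lepski} (via \Cref{alg:lepski}), using the single-pass streaming estimator of \Cref{th:main} as the black-box \texttt{RobustMean} subroutine. Concretely, I would define \texttt{RobustMean}$(\tilde{\sigma},\gamma)$ to consume a dedicated fresh sub-stream from the input: as each sample $X$ arrives, divide it by $\tilde{\sigma}$ on the fly, feed the rescaled sample into the algorithm of \Cref{th:main} run with parameters $(\delta,\gamma)$ and radius $R/\tilde{\sigma}$, and return $\tilde{\sigma}$ times the returned estimate. The $\eps$-contamination in total variation distance is preserved by this bijective rescaling, so the rescaled sub-stream is an $\eps$-corruption of the clean rescaled distribution.

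For any trial scale $\tilde{\sigma}\geq \sigma$, the clean rescaled distribution has covariance bounded above by $(\sigma/\tilde{\sigma})^2\bI_d\preceq \bI_d$. Combining the hypothesized $(C\eps,\delta)$-stability of $X/\sigma$ with respect to $\mu/\sigma$ with the elementary fact that bounded covariance implies $(\eps,O(\sqrt{\eps}))$-stability, I would verify that $X/\tilde{\sigma}$ remains $(C'\eps,O(\delta))$-stable with respect to $\mu/\tilde{\sigma}$ up to universal constants. \Cref{th:main} then yields an estimate $\hat{\nu}$ with $\|\hat{\nu}-\mu/\tilde{\sigma}\|_2 = O(\delta)$ with probability $1-\gamma$, whence $\|\tilde{\sigma}\hat{\nu}-\mu\|_2 = O(\tilde{\sigma}\delta)$. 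Defining the monotone error function $r(\tilde{\sigma}):=C'\tilde{\sigma}\delta$ and applying \Cref{thm:lepski} with interval $[A,B]$ and overall failure budget $\failp$ then directly produces $\hat{\mu}$ satisfying $\|\hat{\mu}-\mu\|_2 \leq 3\,r(2\sigma) = O(\sigma\delta)$ with probability at least $1-\failp$.

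Resource accounting composes routinely: the Lepski loop makes at most $O(\log(B/A))$ calls to \texttt{RobustMean}, each a fresh invocation of \Cref{th:main} on an independent sub-stream of the size prescribed by \eqref{eq:sample_complex_in_theorem} with per-call failure parameter $\failp/\log(B/A)$. The total sample complexity, runtime, and per-call memory therefore inherit only multiplicative $\polylog(B/A)$ factors, which are absorbed into the $\polylog(d,1/\eps,1/\failp,R,B/A)$ term of \eqref{eq:sample_complex}. The only persistent cross-call state is the $O(\log(B/A))$ previously produced estimates $\hat{\mu}^{(j)}\in\R^d$ required by the consistency test on \Cref{line:while} of \Cref{alg:lepski}, which contributes $O(d\log(B/A))$ to the memory budget.

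The main obstacle, and the step requiring careful bookkeeping, is the preservation of the stability parameter under over-scaling $\tilde{\sigma}/\sigma > 1$. Mean-shift bounds contract by a factor of $\sigma/\tilde{\sigma}\leq 1$, which is harmless, but the centered-second-moment condition formally degrades since $(\sigma/\tilde{\sigma})^2\bI_d$ sits at spectral distance $1-(\sigma/\tilde{\sigma})^2$ from $\bI_d$. Two ways to resolve this are: (i) pass an adaptive stability parameter $\max(\delta,C''\sqrt{\eps})$ to \Cref{th:main}, which exactly matches every application in \Cref{sec:applications-main} since those all work in the bounded-covariance regime $\delta=\Theta(\sqrt{\eps})$; or (ii) observe that only ratios $\tilde{\sigma}/\sigma\in[1,4]$ actually enter the final error bound $r(2\sigma)$, in which case the spectral perturbation $1-(\sigma/\tilde{\sigma})^2$ is a universal constant absorbed into the $\delta^2/\eps$ slack of \Cref{def:stability1} using $\delta\geq\eps$. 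Either route establishes the required $(C'\eps,O(\delta))$-stability of the rescaled distribution, after which the correctness proof of \Cref{th:main} and the Lepski analysis go through verbatim.
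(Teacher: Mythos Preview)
Your overall approach---instantiate \Cref{thm:lepski} with the single-pass estimator of \Cref{th:main} as the black box, rescaling samples on the fly and setting $r(\tilde{\sigma})=C'\tilde{\sigma}\delta$---is exactly what the paper does (implicitly; the paper gives no separate proof for this corollary and the argument is inherited verbatim from the multi-pass corollary immediately preceding it). The resource accounting is also the same.

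You go further than the paper in flagging the over-scaling issue, and your route (i) is the correct resolution: all downstream uses of this corollary (in \Cref{cor:robust-gd}) are in the bounded-covariance regime $\delta=\Theta(\sqrt{\eps})$, where $X/\tilde\sigma$ having covariance $\preceq(\sigma/\tilde\sigma)^2\bI_d\preceq \bI_d$ already gives $(\eps,O(\sqrt{\eps}))=(\eps,O(\delta))$-stability for every $\tilde\sigma\geq\sigma$. The paper never spells this out, but it is the implicit assumption.

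Your route (ii), however, does not work. The Lepski analysis in \Cref{thm:lepski} conditions on the event $\bigcap_{j=0}^{J}\cE_j$, i.e., it requires the guarantee $\|\widehat{\mu}^{(j)}-\mu\|_2\leq r(\tilde\sigma_j)$ at \emph{every} level $j$ with $\tilde\sigma_j\geq\sigma$, not just at ratios $\tilde\sigma/\sigma\in[1,4]$; without this, the consistency check on \Cref{line:while} can fail early and the algorithm may stop at an arbitrarily coarse scale. Moreover, even restricting to $\tilde\sigma/\sigma\in[1,4]$, the spectral perturbation $1-(\sigma/\tilde\sigma)^2$ is an absolute constant (up to $15/16$), while the slack in \Cref{def:stability1} is $\delta^2/\eps$; the inequality $\delta\geq\eps$ only gives $\delta^2/\eps\geq\eps$, which is far too small to absorb a constant when $\delta\ll\sqrt{\eps}$ (e.g., the subgaussian case $\delta=\eps\sqrt{\log(1/\eps)}$). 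So route (ii) should be dropped; route (i) already suffices for the corollary as it is used in the paper.
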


Finally, we note that a similar search procedure can be used 
for designing algorithms that are adaptive to the parameter $\delta$ 
when $\sigma$ is known. However, we will not need this generalization for our applications.

\section{Omitted Details from \Cref{sec:applications-main}} \label{sec:appendix_applications}

\subsection{Proof Sketch of \Cref{thm:covariance_application_better_error}}

We describe how   \Cref{alg:streaming} can be plugged in the algorithm of  \cite{cheng2019faster}. We outline the analysis and describe in more detail only the parts from \cite{cheng2019faster} that need to be changed. The algorithm is Algorithm 1 from \cite{cheng2019faster}, which remains unchanged. This uses Algorithm 2 as a subroutine, which we replace by our estimator of \Cref{alg:streaming}.

Regarding the analysis, the proof in \cite{cheng2019faster} uses two claims that state correctness of the black-box robust mean estimator: Lemma 3.4 and Lemma 3.5. For our case, Lemma 3.4 is replaced by our \Cref{th:main} specialized to bounded covariance distributions (also see part 2 of \Cref{main-thm-intro} which says that the sample complexity of \Cref{alg:streaming} for that case is $\tilde{O}(d^2/\eps)$). 

Lemma 3.5 in \cite{cheng2019faster} also holds when using our estimator. We restate this as a claim below and provide a proof:
\begin{claim} \label{cl:translate_lemma35}  
Let $D$ be a distribution supported on $\R^d$ with unknown mean $\mu^*$ and covariance $\vec \Sigma$. Let $0<\gamma<1$, $0<\eps<\eps_0$ for some universal constant $\eps_0$ and $\delta=O(\sqrt{\tau \eps} + \eps \log(1/\eps))$ for some $\tau = O(\sqrt{\eps})$. Suppose that $D$ has exponentially decaying tails and $\vec \Sigma$ is close to the identity matrix $\| \vec \Sigma - \bI_d\|_2 \leq \tau$. Denote $R:=\sqrt{(d/\eps)(1+\delta^2/\eps)}$. \Cref{alg:streaming} uses
    \begin{align} \label{eq:samplecompl2}
         n =  \tilde{O}\left( R^2 \max\left(d,\frac{\eps}{\delta^2}, \frac{(1+\delta^2/\eps)d}{\delta^2 R^2}, \frac{\eps^2d}{\delta^4},\frac{R^2\eps^2}{\delta^2}, \frac{R^2\eps^4}{\delta^6} \right)  \right)      
     \end{align}
samples drawn from $D$ and outputs  a hypothesis vector $\hat{\mu}$ such that  $\| \hat{\mu} - \mu^* \|_2 = O(\delta)$, with probability $1-\gamma$. Moreover, this is done in $nd\,\polylog(d,1/\eps,1/\gamma)$
 time and $d\,\polylog(d,1/\eps,1/\gamma)$ space.
\end{claim}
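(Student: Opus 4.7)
\medskip

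\noindent\textbf{Proof plan for \Cref{cl:translate_lemma35}.} The strategy is to reduce the claim to \Cref{th:main} by verifying that the distribution $D$ in the hypothesis is $(C\eps,\delta)$-stable with respect to $\mu^*$ for the stated value of $\delta=O(\sqrt{\tau\eps}+\eps\log(1/\eps))$. Once stability is established, the conclusion on the output error $\|\widehat\mu-\mu^*\|_2=O(\delta)$, the runtime, and the memory bound follow directly from \Cref{th:main}, and the sample-complexity expression in \Cref{eq:samplecompl2} is exactly the instantiation of \Cref{eq:sample_complex_in_theorem} with $R=\sqrt{(d/\eps)(1+\delta^2/\eps)}$. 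So the entire content of the claim is the stability statement plus checking that the radius $R$ satisfies $\pr_{X\sim D}[\|X-\mu^*\|_2>R]\leq\eps$, which is an immediate consequence of Markov's inequality applied to $\E[\|X-\mu^*\|_2^2]=\tr(\vec\Sigma)\leq (1+\tau)d$, exactly as in \Cref{cl:radius}.

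\smallskip

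\noindent\textbf{Stability argument.} Fix any weight function $w:\R^d\to[0,1]$ with $\E_{X\sim D}[w(X)]\geq 1-\eps$, and let $D_w$ denote the corresponding reweighting. We must control (i) $\|\mu_{w,D}-\mu^*\|_2$ and (ii) $\|\overline{\vec\Sigma}_{w,D}-\bI_d\|_2$. For (i), the exponential tail assumption ensures subgaussian-type concentration of linear projections of $X-\mu^*$, so a standard truncation argument (integrating the tail bound $\pr[|v^T(X-\mu^*)|>t]\leq 2e^{-ct^2/\|\vec\Sigma\|_2}$ against any weight $w$ with $\E[1-w(X)]\leq\eps$) yields $|v^T(\mu_{w,D}-\mu^*)|\lesssim \sqrt{1+\tau}\,\eps\sqrt{\log(1/\eps)}\lesssim \eps\log(1/\eps)$, uniformly over unit $v$; this is the first half of the stability bound. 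For (ii), we decompose
\begin{align*}
\overline{\vec\Sigma}_{w,D}-\bI_d=\bigl(\overline{\vec\Sigma}_{w,D}-\vec\Sigma\bigr)+\bigl(\vec\Sigma-\bI_d\bigr).
\end{align*}
The second summand has operator norm at most $\tau$ by hypothesis. For the first summand, the same subgaussian/truncation analysis controlling the contribution of the removed $\eps$ mass to the second moment gives $\|\overline{\vec\Sigma}_{w,D}-\vec\Sigma\|_2\lesssim (1+\tau)\,\eps\log(1/\eps)$, so in total $\|\overline{\vec\Sigma}_{w,D}-\bI_d\|_2\lesssim \tau+\eps\log(1/\eps)$. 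To match the form $\delta^2/\eps$ required by \Cref{def:stability1}, we pick $\delta$ with $\delta^2/\eps\gtrsim \tau+\eps\log(1/\eps)$, i.e.\ $\delta=O(\sqrt{\tau\eps}+\eps\log(1/\eps))$, which is exactly the quantity in the statement. Since $\tau=O(\sqrt\eps)$ by assumption, this $\delta$ is indeed $\geq\eps$ (as required by \Cref{def:stability1}) and at most a small constant.

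\smallskip

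\noindent\textbf{Conclusion and main obstacle.} Plugging this $(C\eps,\delta)$-stable distribution into \Cref{th:main} with the indicated $R=\sqrt{(d/\eps)(1+\delta^2/\eps)}$ yields the stated error, sample complexity, runtime and memory guarantees. The hyperparameters $\eps,\delta$ are known to \Cref{alg:streaming} because $\tau$ is an input and $\delta$ is computed from $\tau,\eps$. The main technical point (and the only place where the hybrid dependence $\sqrt{\tau\eps}+\eps\log(1/\eps)$ arises) is the second-moment bound under reweighting: the direct analogue from \Cref{fact:stabilityfromsamples} for subgaussian/identity-covariance distributions would give $\delta=O(\eps\sqrt{\log(1/\eps)})$ but ignores the covariance perturbation $\tau$; conversely the bounded-covariance analysis gives $\delta=O(\sqrt\eps)$ but ignores the exponential tails. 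The bulk of the work is therefore the careful two-term decomposition of $\overline{\vec\Sigma}_{w,D}-\bI_d$ above, combined with the standard exponential-tail truncation lemma for the reweighted mean and second moment. Since an essentially identical stability argument is already carried out for the analogous Lemma~3.5 of \cite{cheng2019faster}, no new ideas are needed here—only a re-derivation in the notation of \Cref{def:stability1} followed by invoking \Cref{th:main} as the streaming black box in place of the batch estimator used in \cite{cheng2019faster}.
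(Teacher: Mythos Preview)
Your proposal is correct and matches the paper's own proof essentially line for line: both establish $(C\eps,\delta)$-stability by first noting that the exponential tails give stability with respect to $\mu^*$ and $\vec\Sigma$ with parameter $O(\eps\log(1/\eps))$, then apply the triangle-inequality decomposition $\|\overline{\vec\Sigma}_{w,D}-\bI_d\|_2\leq\|\overline{\vec\Sigma}_{w,D}-\vec\Sigma\|_2+\|\vec\Sigma-\bI_d\|_2$ to absorb the extra $\tau$ into the stability parameter, and finish by invoking \Cref{th:main} together with \Cref{cl:radius} for the radius $R$. The only cosmetic difference is that you write the subgaussian tail bound explicitly whereas the paper simply asserts the resulting stability parameter.
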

\begin{proof}
    Since $D$ has exponentially decaying tails, we know that $D$ is stable with respect to its mean $\mu^*$ and covariance $\vec \Sigma \preceq O(1)\bI_d$ with parameter $\delta=O(\eps \log(1/\eps))$ (this follows from the tails of the distribution and \Cref{def:stability1}). That is, for any weight function $w:\R^d \to [0,1]$ with $\E_{X \sim D}[w(X)] \geq 1-\eps$ we have that 
    \begin{align*}
        \snorm{2}{ \mu_{w,D} -  \mu} \leq \delta \quad \text{and} \quad
        \snorm{2}{\overline{\vec \Sigma}_{w,D} - \vec{\Sigma}} \leq \frac{\delta^2}{\eps} \;.
    \end{align*}
    We claim that $D$ is $(\eps,O(\sqrt{\tau \eps} + \eps \log(1/\eps)))$-stable in the sense of \Cref{def:stability1} (the difference from what written above is that \Cref{def:stability1} uses identity matrix in place of $\vec \Sigma$). This can be seen by using triangle inequality:
    \begin{align}
        \snorm{2}{\overline{\vec \Sigma}_{w,D} - \vec{I}_d } 
        \leq \snorm{2}{\overline{\vec \Sigma}_{w,D} - \vec{\Sigma}} + \snorm{2}{ \vec{\Sigma} - \vec{I}_d} \leq \frac{1}{\eps}\left( \delta + \sqrt{\eps \tau} \right)^2 \;.
    \end{align}
    The proof is concluded by recalling the guarantee of \Cref{alg:streaming} for $(\eps,O(\sqrt{\tau \eps} + \eps \log(1/\eps)))$-stable distributions and using \Cref{cl:radius} for the value of $R$.
\end{proof}
We also note that \cite{cheng2019faster} uses a fast matrix inversion and multiplication procedure for calculating the rotated versions $Y={\hat{\vec \Sigma}_i}^{-1/2}X$ of the samples $X$. In our case, the run-time of our robust mean-estimation procedure exceeds that of these methods, thus we do not need to use them. We can instead use any numerically stable method that has running time up to $\tilde{O}(d^6)$ and approximates the result within error $\poly(\eps \kappa/d)$ (see, e.g., \cite{banks2020pseudospectral}).
Finally, since \Cref{cl:translate_lemma35} is essentially used for the $d^2$-dimensional distributions of the points $Y \otimes Y$, we get the $d^4$ factor in the final sample complexity, as well as the $d^2$ factors in the time and space complexity.

\subsection{Omitted Proofs from Section~\ref{sec:app-optimization}}

\CorRobustGD*

\begin{proof}
This follows by using the estimator of \Cref{cor:low_memory_adaptive} 
in place of $g(\cdot)$ in \Cref{alg:robustGD}. 
The known bounds for $\sigma$, $A \leq\sigma\leq B$ are $A=\beta$ and $B=2\alpha r + \beta$, 
thus $B/A \leq 1+2\alpha r/\beta$. The distribution of the scaled gradients $\frac{1}{\sigma}\nabla {f}(\theta)$ 
is $(C\eps,O(\sqrt{\eps}))$-stable. For these parameters, $n$ from \Cref{eq:sample_complex} 
gives $n=(d^2/\eps)\polylog(d,1/\eps,\tau', 1+\alpha r/\beta)$, 
where $\tau'$ is the desired probability of failure for each call of the estimator. 
Setting $\tau'=\tau/T$ ensures that the estimates of all rounds 
are successful with probability $1-\tau$. Successful estimates of the gradients 
are within $O(\sigma \delta) = O((\alpha \|\theta-\theta^*\|_2 + \beta)\sqrt{\eps})$ 
from the true one in Euclidean norm, thus in every round 
we have an $(\sqrt{\eps}\alpha,\sqrt{\eps}\beta)$-gradient estimation (in the sense of \Cref{def:grad_est}). 
Finally, \Cref{thm:robust-gd} requires the condition $\alpha\sqrt{\eps}<\tau_\ell$. 
Assuming that this is true, that theorem concludes the proof.
\end{proof}

\thmlogistic*
\begin{proof}
The algorithm is that of \Cref{thm:robust-gd} using the estimator 
of \Cref{alg:streaming} in place of $g(\cdot)$ in \Cref{alg:robustGD}. 
The distribution of the gradients is $(C\eps,O(\sqrt{\eps}))$-stable 
because of \Cref{lem:bounded_cov_lemma}. For these stability parameters, 
a sufficient number of samples is  $(d^2/\eps) \,\polylog\left(d,1/\eps, T/\tau \right)$ 
(see \Cref{eq:sample_complex_in_theorem} with $\delta=O(\sqrt{\eps})$ and $R=O(\sqrt{d})$), 
where $T$ is the number of iterations over which take a  union bound. 
It thus remains to specify the parameters $\tau_\ell,\tau_u,k,T$.

Using \Cref{as:for-log-reg}, we can calculate bounds on the parameters $\tau_{\ell},\tau_u$. 
For $\tau_{\ell}$, let $v$ be a unit vector from $\R^d$. 
Let the event $\cE_{v,\theta} := \{(v^T X)^2  \geq c_1 \text{ and } |\theta^T X| \leq  2rC^2/c_2 \}$, 
where $c_1,c_2,C$ are the constants from \Cref{as:for-log-reg}. 
The probability of the complement of this event is
\begin{align*}
    \pr[\cE_{v,\theta}^c] \leq \pr[(v^T X)^2   < c_1] + \pr[|\theta^T X| > 2rC^2/c_2]
    \leq 1-c_2 + c_2/2 \leq 1-c_2/2 \;,
\end{align*}
where the first term is bounded using the anti-concentration property 
and the second is bounded using the concentration property along with Markov's inequality. 
Thus, using the formula of \Cref{eq:hessian} for the Hessian, we have that
\begin{align*}
    v^T \nabla^2 \bar{f}(\theta) v \geq \pr[\cE_{v,\theta}] \E_{X \sim D_x}\left[ \frac{e^{{\theta}^T X}}{(1+e^{{\theta}^T X})^2} (v^TX)^2 \; \middle| \; \cE_{v,\theta} \right] 
    \geq 0.5c_2 \frac{e^{2rC^2/c_2}}{(1+e^{2rC^2/c_2})^2} c_1 \;.
\end{align*}
Regarding the upper bound $\tau_u$, 
using the bounded covariance property 
we get that $ v^T \nabla^2 \bar{f}(\theta) v \leq C^2 \sup_{a \in \R} e^a/(1+e^a)^2= C^2/4$. 
Therefore, we can choose the values
\begin{align*}  
\tau_{\ell} = 0.5c_2 \frac{e^{2rC^2/c_2}}{(1+e^{2rC^2/c_2})^2} c_1 \quad \text{and} \quad \tau_{u}= C^2/4 \;,
\end{align*}
for \Cref{alg:robustGD}.
    The guarantees of our mean estimator imply that 
    we have an $(0,O(\sqrt{\eps}))$-gradient estimator (in the sense of \Cref{def:grad_est}). 
    Regarding the value of $\kappa$, we use \Cref{eq:contraction} with $a =0$. 
    Since that $\tau_{\ell}, \tau_u$ are positive constants, 
    this means that $\kappa$ is bounded away from $1$. 
    Therefore, we have that the factor $1/(1-\kappa)$ appearing 
    in the final error (\Cref{eq:final_error}) is $O(1)$ and the number of iterations from \Cref{eq:T} 
    are upper bounded by $T \lesssim \log_2(\|\theta_0 - \theta^*\|_2/\sqrt{\eps}) \lesssim \log(1/\eps)$, 
    where we used that the radius of the domain $\Theta$ is $r=O(1)$.
\end{proof}

\section{Bit Complexity of \Cref{alg:streaming}}
\label{app:bit-complexity}

Until this point, we have assumed that our algorithms 
could save real numbers exactly in a single memory cell 
and perform calculations involving reals in $O(1)$ time. 
Thus, by saying that \Cref{alg:streaming} uses extra memory 
at most $d \, \polylog(d,R,1/\eps,1/\tau)$, 
we meant that it needs to store only that many real numbers. 
We now describe how the algorithm would work in the most realistic word RAM model, 
where finite precision numbers can be stored in registers of predetermined word size 
and operations like addition, subtraction and multiplication are performed in $O(1)$ time.  
We now show that the previous bound of $d \, \polylog(d,R,1/\eps,1/\tau)$, 
 worsened only by another poly-logarithmic factor, holds for the total number 
of bits that need to be stored.
We begin by clarifying how the input is given to the algorithm.

\begin{definition}[Single-Pass Streaming Model with Oracle Access for Real Inputs]
Let $S$ be a fixed set of points in $\R^d$.
The elements of $S$ are revealed one at a time to the algorithm as follows: 
For each point of $S$ that is about to be revealed, 
the algorithm is allowed to query as many bits as it wants 
from that point with whatever order it wants. 
The process then continues with the next point in the stream. 
Each point of $S$ is presented only once to the algorithm in the aforementioned way.
\end{definition}

In the reminder of this section, we use the same notation as in \Cref{th:main}. 
We assume $R\leq M$ and that $\| \mu \|_2 \leq M$, 
for some $M=(d/\eps)^{\polylog(d/\eps)}$ 
(otherwise, the estimation of the mean with extra memory of the order $d \polylog(d/\eps)$ becomes impossible).
The modified algorithm for this model is the following: 
Every input point $X$ is ignored if found to have norm greater than $2M$. 
Otherwise, it is deterministically rounded to an $X'$ 
so that their difference $X-X':=\eta(X)$ has norm at most $\eta$, 
for some $\eta \leq O(\min\{\delta, \frac{\delta^2}{\eps R}, R  \})$ 
(see below for more on this choice of $\eta$). 
The exact same algorithm as \Cref{alg:streaming} is run on these rounded points. 

\paragraph{Correctness} First, we note the rejection step removes less 
than an $\eps$-fraction of the input, thus the resulting distribution 
has not changed more than $\eps$ in total variation distance from the original one. 
Moreover, the distribution of the rounded points has essentially 
the same stability property required by our theorem. 
Concretely, if we choose the rounding error $\eta$ to be 
$\eta = O(\min\{\delta, \frac{\delta^2}{\eps M},M  \})$, 
then it can be seen (\Cref{lem:stability_rounded} below) 
that the distribution of the rounded points is $(\eps,O(\delta))$-stable 
and $\pr_{X'}[\|X'-\mu\|_2 = O(R)]\geq 1-\eps$, 
which are the only assumptions needed for \Cref{alg:streaming} 
to provide an accurate estimate up to $O(\delta)$ error.

\begin{lemma} \label{lem:stability_rounded}
Fix $0<\eps<1/2$ and $\delta\geq \eps$. 
Let $G$ be an $(\eps,\delta)$-stable distribution with respect to some vector $\mu \in \R^d$ 
and assume $G$ is a distribution such that $\|X-\mu\|_2 \leq M$ almost surely for some $M>0$. 
For any deterministic function $\eta: \R^d \to \R^d$ with $\|\eta(x)\|_2 \leq \eta$ 
for all $x$ in the support of $G$, if $G'$ denotes the distribution of the points $X'=X+\eta(X)$, 
where $X \sim G$, then $G'$ is $(\eps,O(\delta + \eta + \sqrt{\eps \eta M}))$-stable 
with respect to $\mu$.
\end{lemma}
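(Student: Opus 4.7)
The plan is to reduce stability of $G'$ to stability of $G$ by pulling back any weight function $w':\R^d\to[0,1]$ on $G'$ to a corresponding weight $w$ on $G$ via the shift map. Concretely, I would define $w(x) := w'(x+\eta(x))$. Since $\eta$ is deterministic, the pushforward identity yields $\E_{X\sim G}[w(X)] = \E_{X'\sim G'}[w'(X')] \geq 1-\eps$ and, more generally, $\E_{X'\sim G'_{w'}}[f(X')] = \E_{X\sim G_w}[f(X+\eta(X))]$ for any measurable $f$. Consequently the stability of $G$ (applied to $w$) directly delivers $\|\mu_{w,G}-\mu\|_2\leq \delta$ and $\|\overline{\vec \Sigma}_{w,G}-\vec I_d\|_2\leq \delta^2/\eps$, which will serve as the base estimates to be perturbed by the rounding.

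For the mean, writing $\mu_{w',G'} = \E_{X\sim G_w}[X+\eta(X)] = \mu_{w,G}+\E_{X\sim G_w}[\eta(X)]$ and using $\|\eta(x)\|_2\leq \eta$ pointwise gives $\|\mu_{w',G'}-\mu\|_2 \leq \delta + \eta$, which is at most the target $O(\delta+\eta+\sqrt{\eps\eta M})$. For the covariance, I would set $Y := X-\mu$ and expand
\begin{align*}
(Y+\eta(X))(Y+\eta(X))^T - YY^T = Y\eta(X)^T + \eta(X)Y^T + \eta(X)\eta(X)^T,
\end{align*}
so that for every unit vector $u$,
\begin{align*}
\bigl| u^T(\overline{\vec \Sigma}_{w',G'}-\overline{\vec \Sigma}_{w,G})u\bigr|
\leq 2\eta\,\E_{X\sim G_w}[|u^T Y|] + \eta^2.
\end{align*}

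The key observation is that we control $\E_{X\sim G_w}[|u^T Y|]$ by the trivial pointwise bound $|u^TY|\leq \|X-\mu\|_2\leq M$, giving $\|\overline{\vec \Sigma}_{w',G'}-\overline{\vec \Sigma}_{w,G}\|_2\leq 2\eta M+\eta^2$. Combined with the stability estimate on $G$ via the triangle inequality, this yields
\begin{align*}
\|\overline{\vec \Sigma}_{w',G'}-\vec I_d\|_2 \leq \delta^2/\eps + 2\eta M + \eta^2.
\end{align*}
Setting $\delta' := C(\delta+\eta+\sqrt{\eps\eta M})$ for a suitable absolute constant $C$, expanding $(\delta')^2/\eps \geq C^2(\delta^2/\eps + \eta^2/\eps + \eta M)$, and matching each of the three error terms against the corresponding term in the bound above (using $\eps\leq 1$ to absorb $\eta^2$ into $\eta^2/\eps$) gives $\|\overline{\vec \Sigma}_{w',G'}-\vec I_d\|_2 \leq (\delta')^2/\eps$, completing the stability check.

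The routine mean bound is essentially immediate; the delicate part is the covariance step, and the main obstacle is choosing the right bound on the cross term $\E[|u^TY|]$. Using stability to bound this by $\sqrt{1+\delta^2/\eps}=O(1)$ would only introduce an extra additive $O(\eta)$ term in $\delta'^2/\eps$, which cannot in general be absorbed into the target $\eta M$ unless $M=\Omega(1)$; employing the crude deterministic bound $M$ instead is precisely what produces the $\sqrt{\eps\eta M}$ contribution advertised in the statement, and this is where the interaction between the rounding radius $\eta$ and the support radius $M$ becomes visible in the final stability parameter.
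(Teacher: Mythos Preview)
Your proposal is correct and follows essentially the same approach as the paper: pull back the weight function via $w(x)=w'(x+\eta(x))$, apply stability of $G$ with this $w$, and bound the perturbation terms in the mean and second moment via the pointwise bounds $\|\eta(x)\|_2\le\eta$ and $\|x-\mu\|_2\le M$, arriving at the identical estimate $\delta^2/\eps+2\eta M+\eta^2$ for the covariance deviation. If anything, you are more explicit than the paper about the pullback of the weight function (the paper silently uses the same symbol $w$ on both $G$ and $G'$), and your closing remark explaining why the crude bound $M$ rather than the stability bound $\sqrt{1+\delta^2/\eps}$ is the right choice here is a nice addition.
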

\begin{proof}
We check the two conditions for stability. 
Let a weight function $w:\R^d \to [0,1]$ with $\E_{X \sim G}[w(X)] \geq 1-\eps$ 
and let $\delta'=O(\delta + \eta + \sqrt{\eps \eta M})$. We have that
\begin{align*}
    \| \mu_{w,G'} - \mu \|_2 &\leq \| \mu_{w,G'} - \mu_{w,G}\|_2 + \| \mu_{w,G} - \mu \|_2 \\
    &\leq \snorm{2}{\int_{\R^d} (x + \eta(x)) \frac{w(x)G(x)}{\E_{X \sim G}[w(X)]} \d x  - \mu_{w,G} } + \delta \\
    &\leq \snorm{2}{\int_{\R^d} \eta(x) \frac{w(x)G(x)}{\E_{X \sim G}[w(X)]} \d x } + \delta \\
    &\leq \eta + \delta \leq \delta' \;,
\end{align*}
where first inequality uses the triangle inequality and the second one uses the stability of $G$.
Regarding the second stability condition, we have the following:
\begin{align*}
    \snorm{2}{\overline{\vec \Sigma}_{w,G'} - \bI_d } &\leq  \snorm{2}{\overline{\vec \Sigma}_{w,G'} - \overline{\vec \Sigma}_{w,G}} + \snorm{2}{\overline{\vec \Sigma}_{w,G} - \bI_d }\\
    &\leq \snorm{2}{\int_{\R^d} (x-\mu+\eta(x))(x-\mu+\eta(x))^T \frac{w(x)G(x)}{\E_{X \sim G}[w(X)]} \d x} + \frac{\delta^2}{\eps} \\
    &\leq \snorm{2}{\int_{\R^d} (x-\mu)\eta(x)^T \frac{w(x)G(x)}{\E_{X \sim G}[w(X)]} \d x} + \snorm{2}{\int_{\R^d} (x-\mu)^T\eta(x) \frac{w(x)G(x)}{\E_{X \sim G}[w(X)]} \d x} \\
    &+ \snorm{2}{\int_{\R^d} \eta(x)\eta(x)^T \frac{w(x)G(x)}{\E_{X \sim G}[w(X)]} \d x} +  \frac{\delta^2}{\eps}  \\
    &\leq 2 M \eta + \eta^2 +  \frac{\delta^2}{\eps} \leq \frac{\delta'^2}{\eps}\;,
\end{align*}
where  we used stability of $G$, triangle inequality and the bounds $\|x-\mu\|_2 \leq M$, $\|\eta(x)\|_2 \leq \eta$.
\end{proof}

\paragraph{Total Bits of Memory Used} 
In order for the differences $X-X':=\eta(X)$ to have $\|\eta(X)\|_2 \leq \eta$ for all $X$, 
it is sufficient to round every coordinate to absolute error $O(\eta/\sqrt{d})$. 
Recall that by our assumption on the a priori bound on the norm of the true mean 
and the way that we reject input points of large norm, 
we know that all points surviving will have norm at most $2M$. 
Thus, each coordinate of such points  can be stored in a word of $O(\log(M d/\eta))$ bits 
after being rounded to accuracy $\eta$. Therefore, each $d$-dimensional point 
that the algorithm will need to manipulate can be stored using $d$ registers of size $O( \log(Md/\eta))$. 
However, we need to show that the results of all intermediate calculations can be calculated in low memory.
We show the following result to this end:
\begin{claim}
In the context of \Cref{th:main}, given a stream of $d$-dimensional points, where each coordinate has bit complexity $B$,
\Cref{alg:streaming} can be implemented in a word RAM machine 
using $d \, \polylog(d,1/\eps,1/\failp,R)$ many of registers 
of size $B \, \polylog(d,1/\eps,1/\failp,R)$.
\end{claim}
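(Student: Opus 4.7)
The plan is to trace through Algorithm~\ref{alg:streaming} step-by-step, enumerate every intermediate numerical quantity it manipulates, and show that (i) all stored quantities fit in $d \cdot \polylog(d,R,1/\eps,1/\failp)$ registers of size $B \cdot \polylog(d,R,1/\eps,1/\failp)$ bits, and (ii) every arithmetic step can be executed within the same memory budget to sufficient precision. The analysis in Sections~\ref{sec:correctness_streaming} and~\ref{sec:concentration} already tolerates additive errors of order $\poly(\eps/(dR))^{\log d}$ in all sample estimates, scores, and weight evaluations (this is how we established Conditions~\ref{cond:streaming} in the first place), so it suffices to carry out every computation to that target precision. The outcome is that finite-precision implementation introduces only an $O(1)$ blow-up in both $B$ and the degree of the $\polylog$ factors.

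First, for the \emph{persistent state}: across the execution, the algorithm only retains, for each round $t\in[\outerl]$, the JL sketch matrix $\bU_t\in\R^{\innerl\times d}$ (equivalently the rows $v_{t,j}$), the mean estimate $\widehat{\mu}_t\in\R^d$, the scalars $\widehat{\lambda}_t$ and $\ell_t$, plus the na\"ive pivot $\widehat{\mu}$. Counting gives $\outerl\cdot(\innerl+1)\cdot d + O(\outerl)$ real numbers, and since $\outerl,\innerl=\polylog(d,R,1/\eps,1/\failp)$, the register count is $d\cdot\polylog(d,R,1/\eps,1/\failp)$, as required. For the \emph{per-entry bit complexity} of these stored quantities: each $\widehat{\mu}_t$ is a sample mean of $\tilde n$ inputs, so its coordinates use $O(B+\log\tilde n)=B\cdot\polylog$ bits; each entry of $\bU_t$ is obtained as the product $\widehat{\bB}_{t,1}\cdots\widehat{\bB}_{t,\log d}z_{t,j}$. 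Computing this iteratively, after one multiplication by $\widehat{\bB}_{t,k}$ (whose entries have $O(B+\log\tilde n)$ bits, since $\widehat{\bB}_{t,k}$ is a sample covariance plus an $O(1)$ shift), a vector with $b$-bit entries becomes a vector with $O(b+B+\log(d\tilde n))$-bit entries. Iterating $\log d$ times gives $B\cdot\polylog$ bits per coordinate of $v_{t,j}$.

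Next, for the \emph{transient computations} invoked per input sample. Rejection sampling requires, given a new point $x$, evaluating $w_t(x)=\prod_{j=1}^{t}(1-\tilde{\tau}_j(x)/r)^{\ell_j}$. Each $\tilde\tau_j(x)=\|\bU_j(x-\widehat{\mu}_j)\|_2^2\,\Ind\{\cdot\}$ is computed from the stored $(\bU_j,\widehat{\mu}_j,\widehat{\lambda}_j)$ using $O(\innerl d)$ multiplications and additions whose operands are $B\cdot\polylog$-bit numbers, producing a $B\cdot\polylog$-bit value. The exponentiation with $\ell_j \leq\ell_{\max}=(dR/\eps)^{O(\log d)}$ would be disastrous in na\"ive bit complexity, but we only need $w_t(x)$ to additive precision $\poly(\eps/(dR))^{\log d}$; equivalently, we need $\log w_t(x)$ to additive precision of the same order. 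Computing $\ell_j\cdot\log(1-\tilde{\tau}_j(x)/r)$ in fixed-precision arithmetic and summing over $j\le t$ uses $\polylog(d,R,1/\eps,\log\ell_{\max})=B\cdot\polylog$ bits, after which we exponentiate (e.g.\ by Taylor truncation) to obtain $w_t(x)$ to the required accuracy. Likewise the power-iteration step for $\widehat{\lambda}_t$, the approximate oracle $f(\ell)$ in \Cref{alg:reweighting2}, and the binary search for $\ell_t$ all reduce to $\tilde O(1)$ calls to the matrix-vector product routine and scalar arithmetic on $B\cdot\polylog$-bit values, with auxiliary workspace of $d$ registers.

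The main obstacle will be the \emph{precision accounting for the iterated product} $\widehat{\bM}_t z = \widehat{\bB}_{t,1}\cdots\widehat{\bB}_{t,\log d}z$: each factor has spectral norm bounded by $\poly(R)$ (the support radius), so the intermediate vectors can grow by a factor of $R^{O(\log d)}$, and a precision loss by the same factor is possible if rounding is done naively. I would handle this by maintaining each intermediate vector in a floating-point representation with a $B\cdot\polylog$-bit mantissa and $\polylog(d,R,1/\eps)$-bit exponent; the relative error per multiplication is then $2^{-B\cdot\polylog}$, which compounds over $\log d$ steps to $2^{-B\cdot\polylog}\cdot\log d$, still well below the $\poly(\eps/(dR))^{\log d}$ slack afforded by \Cref{lem:prod_concentration}. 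A parallel argument handles the sample-covariance step (only scalar sums), the JL-based score $\tilde g_t(x)$ (a sum of $\innerl$ squared inner products), and the weighted-sum estimator $f(\cdot)$ in \Cref{cl:evaluate_cond} (scalar sums). Putting these pieces together yields the required bound on registers and bits per register, completing the proof.
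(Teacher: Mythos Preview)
Your proposal is essentially correct and follows the same high-level strategy as the paper: enumerate the persistent state (the JL sketches $\bU_t$, the centers $\widehat{\mu}_t$, the scalars $\widehat{\lambda}_t,\ell_t$), bound the register count and bit width for each, and then argue that every per-sample operation (matrix-vector products, score evaluation, weight evaluation) can be carried out within the same budget.

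Two differences are worth noting. First, for computing the weight $w_t(x)=\prod_j (1-\tilde{\tau}_j(x)/r)^{\ell_j}$ you take logarithms, multiply by $\ell_j$, sum, and re-exponentiate; the paper instead uses exponentiation by squaring directly on the base $1-\tilde{\tau}_j(x)/r\in[0,1]$ (its Claim on exponentiation), which avoids having to compute $\log$ and $\exp$ in finite precision. Both work; your ``additive error on $\log w_t$ is equivalent to additive error on $w_t$'' is not literally true, but the direction you need (accuracy on $\log$ implies accuracy on $w$, since $w\le 1$) does hold.

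Second, there is one loose end you should close explicitly. You only argue that $w_t(x)$ can be \emph{evaluated} to additive error $\eta$, but rejection sampling with the rounded weight $\widehat{w}_t$ produces samples from $P_{\widehat{w}_t}$ rather than $P_{w_t}$. The paper handles this via a short total-variation bound (its Claim~\ref{claim:small_tvd}): if $\sup_x|\widehat{w}_t(x)-w_t(x)|\le\xi$ and $\E_P[w_t]\ge 1/2$, then $\dtv(P_{\widehat{w}_t},P_{w_t})\le 8\xi$, so taking $\xi=\Theta(\failp/N)$ (where $N$ is the number of samples drawn in that round) costs only an extra $\failp$ in failure probability. Your blanket appeal to ``the analysis already tolerates additive errors'' does not quite cover this, since the deterministic conditions are stated for samples from $P_{w_t}$, not from a perturbed distribution.
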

\begin{proof}[Proof Sketch]
Multiplying two numbers of bit complexity $B_1$ and $B_2$ 
may result in bit complexity $B_1+B_2$. Adding $k$ numbers of bit complexity $B$, 
may make the resulting bit complexity $B+\log(k)$. 
We need to check that every step of the algorithm 
performs calculations that cannot cause the bit complexity to grow by more than poly-log factors.

\Cref{line:naive_est_2} performs only comparisons and counting. 
Regarding \Cref{line:multiplication}: As pointed out at the beginning of \Cref{sec:low-memory-main}, 
the vector $v_{t,j} \gets \widehat{\vec{M}}_t z_{t,j}$ is calculated by multiplying $z_{t,j}$ by $\widehat{\bB}_{t,k}$ 
for $k=1,\ldots, \log d$ iteratively. Consider a single iteration, say the first one. 
Performing $\widehat{\bB}_{t,k} z_{t,j}$ involves calculating $\frac{1}{n} \left(\sum_{x} x x^T\right)z_{t,j}$ 
(see \Cref{sec:itemsindetail}), which can be done as $\frac{1}{n} \sum_{x} x (x^Tz_{t,j})$, 
i.e., calculating the inner products $x^Tz_{t,j}$ first). A single inner product of that form 
is just a sum of $d$ numbers of bit complexity $B$ with appropriate signs, 
thus the bit complexity increases only by $O(\log d)$. 
Finally, multiplying by $x$ and taking the mean over for all of the $x$'s 
can  add only another $O(B + \log(n))$. Since the number of iterations 
of such calculations is $\log d$, the final result has the claimed bit complexity.

Regarding the Downweighting filter (\Cref{alg:reweighting2}), 
there are a couple of places where the weights $w_t$ are involved in calculations. 
We note that \Cref{alg:reweighting}  stores only the counts $\ell_t$, 
which fit in registers of size $\log(\ell_{\max}) = \polylog(d,1/\eps,R)$.
These counts are used to calculate $w_t(x)$ as 
$w_t(x) = \prod_{t' \leq t} (1-\tilde{\tau}_{t'}(x)/r)^{\ell_{t'}}$ whenever there is such a need. 
An exact calculation would require operations of the form $x^y$, 
for some $x \in [0,1]$ and $y \in [\ell_{\max}]$, i.e., exponentiation of a real number.
In fact, as we will show later on, instead of calculating $w_t(x)$ with perfect accuracy, 
it suffices to use an approximate value of $w_t(x) \pm \eta$ for some error $|\eta| < \poly(1/d,1/R,\eps,\failp)^{\log d}$. 
This will allow us to calculate a good enough approximation in 
$\polylog(d,R,1/\eps,1/\failp)$ bits as follows: 
we can use exponentiation by squaring algorithm for calculating $w_t(x)$'s 
and round the result in each step to make it fit into  our registers.
We first explain this in more detail below. 
\begin{claim} \label{claim:exponentiation}
Let $x \in [0,1]$, $y \in \Z_+$, and assume that both $x,y$ have bit complexity $B$. 
The power $x^y$ can be calculated up to a rounding error of $2^{-\Omega(B)}$ 
in the word RAM model that uses registers of size $2B$. 
Furthermore, this can be done in $O(B)$ standard arithmetic operations.
\end{claim}
\begin{proof}
We can use exponentiation by squaring: 
This consists of writing $x$ in binary as $b_k \cdots b_0$ 
for $k= \log y$ and calculating the sequence 
$r_{k+1},\ldots, r_0$ as $r_{k+1}=1$, $r_i=r^{2}_{k+1} x^{b_i}$ for $i=k,\ldots, 0$. 
We assume every $r_i$ gets rounded to $2B$ bits.
Because of the rounding, we incur error $2^{-2B}$ in each round. 
However, the error of the previous rounds gets amplified, 
since the result of that round (true value plus error) gets squared.
We consider one such iteration to see how that sequence of errors grows: 
In the $t$-th iteration, let $\mathrm{res}_{t-1}$ denote the true result (before rounding) 
from the previous round and $\eta_t$ the rounding error of that round 
(i.e., $r_t = \mathrm{res}_t + \eta_t)$. Then, we have that
\begin{align*}
    \mathrm{res}_{t} + \eta_{t} := (\mathrm{res}_{t-1} + \eta_{t-1})^2 + 2^{-2B}
    \leq \mathrm{res}_{t-1}^2 + \eta_{t-1}^2 + 2  \eta_{t-1} + 2^{-2B}\;,
\end{align*}
where w.l.o.g. we assume that $\mathrm{res}_t \leq 1$ always. 
Thus, the rounding error grows as 
$\eta_t \leq \eta_{t-1}^2 + 2\eta_{t-1} + 2^{-2B} \leq 3 \eta_{t-1} + 2^{-2B}$. 
In the first round, we start with rounding error of $2^{-2B}$. 
Thus, after $k = \log(y) = B$ rounds, 
the final error is $\eta_k \leq  2^{-\Omega(B)}$.
\end{proof}

We continue with examining how fine approximations for $w_t$ are needed. 
First, in \Cref{sec:itemsindetail}, we use the estimator $\widehat{W}_{t} = \E_{X \sim \cU(S_0)}[w_t(X)]$, 
which we require to be $\eta$-close to $\E_{X \sim P}[w_t(X)]$ (\Cref{eq:hoef}) 
for some $\eta > \poly(1/d,1/R,\eps,\failp)$. Therefore, when calculating $w_t$, 
it suffices to round the intermediate results to error $\eta$. 
This would mean using \Cref{claim:exponentiation} with $B = O(\log((1/d,1/R,\eps,\failp))$.

Second, the weights $w_t$ are also used in evaluating the stopping condition of the Downweighting filter. \Cref{line:estimator_black_box} of that filter is implemented using the estimator of \Cref{cl:evaluate_cond}. 
As it can be seen in \Cref{eq:whatwewant}, it suffices to use rounded versions 
of $w_t$ in $(1/n)\sum_{i=1}^N w_t(X_i)\tilde{\tau}_t(X_i)$, 
as long as it does not change the result by an additive factor 
of $ c \widehat{\lambda_t} \| \bU_t \|_F^2$, for a small constant $c$. 
Since  $\tilde{\tau}_t(X_i)=O(d R^{2+4 \log d} )$ (\Cref{eq:bound_on_tau}) 
and $\widehat{\lambda_t} \| \bU_t \|_F^2 > (\delta^2/\eps)^{\Theta(\log d)}$ 
(otherwise, the algorithm has terminated), we can again round $w_t$ 
up to error $\poly(1/d,1/R,\eps)^{\log(d)}$. This means that the results 
of these calculations fit into $\polylog(d,R,1/\eps,R)$ bits. 

Finally, there are two places in \Cref{alg:streaming} 
where we need to simulate samples from the weighted distribution 
$P_{w_t}$: (i) \Cref{line:low-memB1}, whose implementation 
is outlined in \Cref{sec:concentration} and (ii) \Cref{final:estimate}. 
We focus on the first one since the argument for the other case is identical.
To simulate $P_{w_t}$, we use rejection sampling, 
as described at the beginning of \Cref{sec:concentration}, 
with the only difference that we use the rounded versions of the weights $w_t$. 
We are thus essentially simulating samples from a slightly different distribution $P_{\widehat{w}_t}$. 
However, this is close to $P_{w_t}$ in total variation distance, as shown below.

\begin{claim} \label{claim:small_tvd}
Let $P$ be a distribution on $\R^d$ and let $P_w$ denote the weighted version of $P$ 
according to the function $w: \R^d \to [0,1]$, i.e., $P_w(x) = w(x)P(x)/\int_{\R^d} w(x)P(x) \d x$. 
For any $w,\widehat{w}: \R^d \to [0,1]$ such that $\int_{\R^d} w(x)P(x) \d x \geq 1/2$ 
and $\sup_{x \in \R^d}| \widehat{w}(x) - w(x) | \leq \xi$ with $\xi\leq 1/8$, 
it holds that $\dtv(P_{\widehat{w}},P_{w}) \leq 8\xi$.  
\end{claim}
\begin{proof}
First, letting the normalization factors $\widehat{C}:= \int_{\R^d} \widehat{w}(x)P(x) \d x$ 
and $C:= \int_{\R^d} {w}(x)P(x) \d x$, we note that 
$|C-\widehat{C}| \leq \xi$. Letting $\Delta w(x):= \widehat{w}(x) - w(x)$ and 
$\Delta C:= \widehat{C}- C$, we have that
\begin{align*}
    \dtv(P_{\widehat{w}},P_{w}) &= \frac{1}{2} \int_{\R^d} \left| P_{\widehat{w}}(x) - P_{w}(x) \right| \d x 
    = \frac{1}{2} \int_{\R^d} \left| \frac{w(x) + \Delta w(x)}{C + \Delta C} - \frac{w(x)}{C} \right| P(x) \d x \\
    &\leq \frac{1}{2} \int_{\R^d}  \left| \frac{C w(x) + C \Delta w(x) -C w(x) - \Delta C w(x)}{C^2 + C \Delta C}  \right| P(x) \d x \\
    &\leq 4 \int_{\R^d}  \left|  C \Delta w(x) - \Delta C w(x) \right| P(x) \d x 
    \leq 8   \xi \;,
\end{align*}
where in the last line, we first use that $C^2 + C\Delta C \geq 1/4 - \xi \geq 1/8$ 
(since $1/2\leq C \leq 1$ and $0\leq \xi \leq 1/8$)
and then we use that 
$\left|  C \Delta w(x) - \Delta C w(x) \right| \leq |\Delta w(x)| + |\Delta C| \leq 2 \xi$.
\end{proof}

As $N$ (more than one) samples are drawn from $P_t$ in the $t$-th iteration (see \Cref{sec:concentration}), 
we require the joint distribution of these $N$ samples from $P_{w_t}$ and $P_{\hat{w}_t}$ 
to be within total variation $\failp$ (the probability under which the conclusion of \Cref{lem:parts2and3} holds true). 
This bound on the total variation distance implies that \Cref{lem:parts2and3} 
continues to hold for $P_{\hat{w}_t}$ with an additional failure probability of $\failp$. 
To do this, we use \Cref{claim:small_tvd} with $\xi = \Theta(\failp/N)$, 
which means that these rounded weights have bit complexity 
$\Theta(\log \xi) = \polylog(d,R,1/\eps,1/\failp)$. 
\end{proof}

\end{document}